\documentclass[preprint,12pt]{elsarticle}
\journal{Journal of Pure and Applied Algebra}


\usepackage{amsmath}
\usepackage{graphicx,epsfig,color}
\usepackage{xypic}
\xyoption{rotate}
\input xy
\xyoption{all}
\usepackage{verbatim}
\usepackage{amssymb}
\usepackage{eucal}
\usepackage{enumerate}
\usepackage[shortlabels]{enumitem}

\usepackage{multicol}
\usepackage{manfnt}
\usepackage[hyperindex,breaklinks]{hyperref}

\usepackage{xspace}
\usepackage{amsfonts}
\usepackage{mathrsfs}
\usepackage[all,2cell]{xy}
\UseAllTwocells
\SilentMatrices

\usepackage{bbm}
\DeclareMathOperator{\Mat}{\mathsf{Mat}}

%
%
\mathcode`:="003A  
\mathcode`;="003B  
\mathcode`?="003F  
\mathcode`|="026A  
\mathcode`<="4268  
\mathcode`>="5269  

\mathchardef\ls="213C    
\mathchardef\gr="213E    
\mathchardef\uparrow="0222  
\mathchardef\downarrow="0223  




\newenvironment{Iff-RL}{\textbf{($\Rightarrow$)} }{\bigskip}
\newenvironment{Iff-LR}{\textbf{($\Leftarrow$)} }{}




\newcommand{\m}{\mathit}
\def \: {\ensuremath{\colon}}
\newcommand{\mb}{\mathbb}


\newcommand\id{\m{id}}

\newcommand\funF{\mathcal{F}}

\newcommand\T{\mathcal{T}}





\def \catC {\mathbb{C}}
\def \Set  {\mathbf{Set}}




\def \N {\mathbb{N}} 
\def \Z {\mathbb{Z}} 
\def \Q {\mathbb{Q}} 
\def \T {\mathbb{T}} 

\def \PROP {\mathbf{PROP}} 
\def \PRO {\mathbf{PRO}} 
\def \Perm {\mathbb{P}} 
\def \F {\mathbb{F}}
\def \Fop {\F^{\m{op}}}

\def \Vect {\Mat{\Z}} 

\def \AB {\mathbb{AB}}

\newcommand{\Span}[1]{\mathsf{Span}(#1)}
\newcommand{\Cospan}[1]{\mathsf{Cospan}(#1)}

\def \wmon {\mf{Mon}^w}

\def \bcom {\mf{Com}^b}

\def \tns {\oplus}

\newcommand{\eql}[1]{\ \overset{\text{#1}}=\ }

\def \SV {\mathbb{SV}} 
\newcommand{\sem}[1]{\mathcal{Sem}_{#1}} 

\newcommand{\initVect}{!} 

\DeclareMathSymbol{\reversedExclMark}{\mathord}{operators}{"3C}
\newcommand{\finVect}{\reversedExclMark} 

\newcommand\Gmult{\lower5pt\hbox{$\includegraphics[width=20pt]{}$}}
\newcommand\Gcomult{\lower5pt\hbox{$\includegraphics[width=20pt]{}$}}
\newcommand\Gunit{\lower5pt\hbox{$\includegraphics[width=16pt]{graffles/Gunit.pdf}$}}
\newcommand\Gcounit{\lower5pt\hbox{$\includegraphics[width=16pt]{}$}}
\newcommand\twoGcounit{\lower5pt\hbox{$\includegraphics[width=16pt]{}$}}

\newcommand\Bmult{\lower5pt\hbox{$\includegraphics[width=20pt]{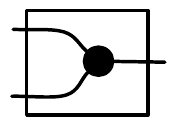}$}}
\newcommand\Bcomult{\lower5pt\hbox{$\includegraphics[width=20pt]{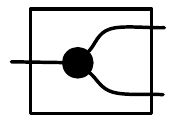}$}}
\newcommand\Bunit{\lower5pt\hbox{$\includegraphics[width=16pt]{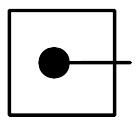}$}}
\newcommand\Bcounit{\lower5pt\hbox{$\includegraphics[width=16pt]{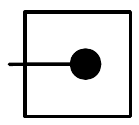}$}}

\newcommand\Wmult{\lower5pt\hbox{$\includegraphics[width=20pt]{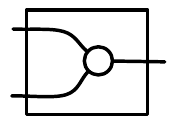}$}}
\newcommand\Wcomult{\lower5pt\hbox{$\includegraphics[width=20pt]{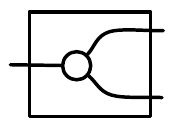}$}}
\newcommand\Wunit{\lower5pt\hbox{$\includegraphics[width=16pt]{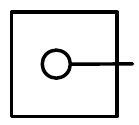}$}}
\newcommand\Wcounit{\lower5pt\hbox{$\includegraphics[width=16pt]{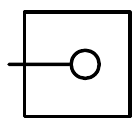}$}}

\newcommand\ASepUno{\lower3pt\hbox{$\includegraphics[width=28pt]{graffles/ASep1.pdf}$}}
\newcommand\ASepDue{\lower3pt\hbox{$\includegraphics[width=24pt]{graffles/ASep2.pdf}$}}
\newcommand\Idnet{\lower3pt\hbox{$\includegraphics[width=20pt]{}$}}
\newcommand\symNet{\lower3pt\hbox{$\includegraphics[width=20pt]{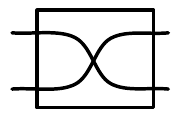}$}}

\newcommand{\pullbackcorner}[1][d]{\save*!/#1+1.0pc/#1:(1,-1)@^{|-}\restore}
\newcommand{\pushoutcorner}[1][d]{\save*!/#1-1.0pc/#1:(-1,1)@^{|-}\restore}
\newcommand\BlackX{\lower5pt\hbox{\includegraphics[width=30pt]{graffles/BlackX.pdf}}}
\newcommand\BSep{\lower5pt\hbox{\includegraphics[width=35pt]{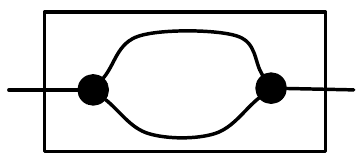}}}
\newcommand\WSep{\lower5pt\hbox{\includegraphics[width=35pt]{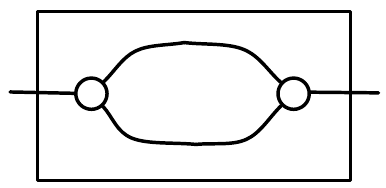}}}

\newcommand{\matrixOneOne}{\left(%
                \begin{array}{c}
                 \!\! 1 \!\!\\
                  \!\! 1 \!\!
                \end{array}\right)}

\newcommand{\matrixOneOneFlat}{\left(%
                \begin{array}{cc}
                \!\!  1 \!& \! 1 \!\!
                \end{array}\right)}

\newcommand{\matrixOne}{\left(%
                \begin{array}{c}
                \!\! 1 \!\!
                \end{array}\right)}

\newcommand{\matrixZero}{\left(%
                \begin{array}{c}
                \!\!  0 \!\!
                \end{array}\right)}

\newcommand{\matrixNull}{\left(%
                \begin{array}{c}
                \!\!
                \end{array}\right)}




\newcommand{\tr}[1]{\xrightarrow{#1}}    
\newcommand{\tl}[1]{\xleftarrow{#1}}    


\newcommand{\op}{op}


\def \PID {\mathsf{R}}
\def \PROPR {\mathbb{R}} 

\newcommand \HA[1] {\mathbb{HA}_{\scriptscriptstyle #1}}
\newcommand \IH[1] {\mathbb{IH}_{\scriptscriptstyle #1}}
\def \ABR {\HA{\PID}}
\def \IBR {\IH{\PID}} 
\def \ABRop {\ABR^{\m{op}}} %

\def \frPID {\mathsf{k}} 
\def \Mon {\mathbb{M}}
\def \Com {\mathbb{C}}
\def \wmon {\Mon}
\def \bcom {\Com}
\def \To {\Rightarrow}
\def \VectR {\Mat{\PID}}
\newcommand \SVH[1] {\mathbb{SV}_{\scriptscriptstyle #1}}
\def \SVR {\SVH{\frPID}}
\DeclareMathOperator{\RModule}{\mathsf{Mod}}
\DeclareMathOperator{\FRModule}{\mathsf{FMod}}
\DeclareMathOperator{\Kernel}{Ker}
\DeclareMathOperator{\Cokernel}{Coker}
\newcommand \RMod[1] {\RModule{#1}} 
\newcommand \FRMod[1] {\FRModule{#1}} 
\newcommand \Ker[1] {\Kernel(#1)} 
\newcommand \Coker[1] {\Cokernel(#1)} 
\def \VectRop {\Mat{\PID}^{\m{op}}}
\def \VectSpFr {\FRMod{\frPID}}
\def \poi {\,\ensuremath{;}\,}
\newcommand \coc[1] {{#1}^{\star}}
\newcommand \refl[1] {{#1}^{R}}
\def \df {\ \ensuremath{:\!\!=}\ }
\def \dfop {\ \ensuremath{=\!\!:}\ }
\def \minus {\ensuremath{\!-\!\!}}
\newcommand \restr[2] {#1_{\upharpoonright #2}} 
\newcommand \tra {\ensuremath{\mathcal{T}}} 
\newcommand \pn {\ensuremath{\mathcal{N}}} 

\newcommand\scalar{\!\lower5pt\hbox{$\includegraphics[width=22pt]{}$}\!}
\newcommand\coscalar{\!\lower5pt\hbox{$\includegraphics[width=22pt]{}$}\!}
\newcommand\unitscalar{\lower4pt\hbox{$\includegraphics[width=22pt]{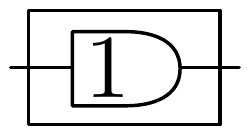}$}}
\newcommand\scalarminusone{\lower8pt\hbox{$\includegraphics[width=30pt]{}$}}
\newcommand\antipode{\lower4pt\hbox{$\includegraphics[width=20pt]{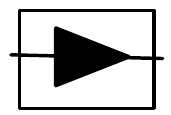}$}}
\newcommand\antipodeop{\lower3pt\hbox{$\includegraphics[width=22pt]{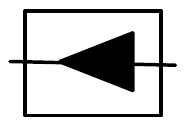}$}}
\newcommand\antipodesquare{\lower3pt\hbox{$\includegraphics[width=22pt]{}$}}
\newcommand\circuitAdots{\lower6pt\hbox{$\includegraphics[width=30pt]{}$}}

\newcommand\wcounitn{\lower5pt\hbox{$\includegraphics[width=25pt]{}$}}
\newcommand\bcounitn{\lower5pt\hbox{$\includegraphics[width=25pt]{}$}}
\newcommand\lccn{\lower5pt\hbox{$\includegraphics[width=25pt]{}$}}
\newcommand\rccn{\lower5pt\hbox{$\includegraphics[width=25pt]{}$}}
\newcommand\idncircuit{\lower5pt\hbox{$\includegraphics[width=25pt]{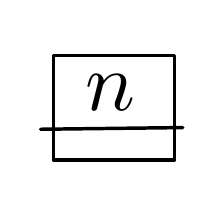}$}}
\newcommand\circuitrbcounits{\lower5pt\hbox{$\includegraphics[width=25pt]{}$}}
\newcommand\lccB{\lower5pt\hbox{$\includegraphics[width=25pt]{}$}}
\newcommand\rccB{\lower5pt\hbox{$\includegraphics[width=25pt]{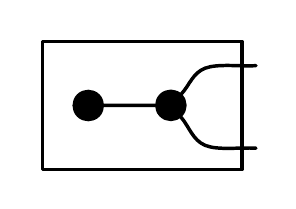}$}}
\newcommand\IdBcounitc{\lower5pt\hbox{$\includegraphics[width=20pt]{}$}}
\newcommand\BcounitId{\lower5pt\hbox{$\includegraphics[width=20pt]{}$}}
\newcommand\symNetTwoOne{\lower7pt\hbox{$\includegraphics[width=25pt]{}$}}
\newcommand\nscalar{\!\!\lower5pt\hbox{$\includegraphics[width=35pt]{}$}\!\!}
\newcommand\Wmultstar{\!\lower5pt\hbox{$\includegraphics[width=20pt]{}$}\!}
\newcommand\scalarstar{\!\!\lower7pt\hbox{$\includegraphics[width=35pt]{}$}\!\!}
\newcommand\twoBcounit{\lower5pt\hbox{$\includegraphics[width=15pt]{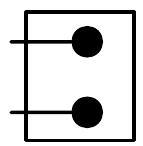}$}}

\newcommand\Wcomm{\!\lower8pt\hbox{$\includegraphics[width=33pt]{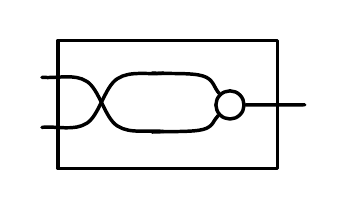}$}\!}
\newcommand\delay{\!\lower6pt\hbox{$\includegraphics[width=25pt]{}$}\!}
\newcommand\circuitX{\!\lower6pt\hbox{$\includegraphics[width=25pt]{}$}\!}
\newcommand\idcircuit{\lower3pt\hbox{$\includegraphics[width=20pt,height=13pt]{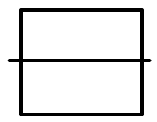}$}}
\newcommand\antipodeantipode{\lower7pt\hbox{$\includegraphics[width=30pt]{}$}}
\newcommand\antipodeopantipodeop{\lower8pt\hbox{$\includegraphics[width=30pt]{}$}}
\newcommand\BcomultSingleAntipode{\lower9pt\hbox{$\includegraphics[width=21pt]{}$}}

\setlength{\multicolsep}{6.0pt plus 2.0pt minus 1.5pt}

\newcommand{\vlist}[1]{\mathbf{#1}}
\newcommand{\xx}{\vlist{x}}
\newcommand{\yy}{\vlist{y}}
\newcommand{\zz}{\vlist{z}}
\newcommand{\uu}{\vlist{u}}
\newcommand{\vv}{\vlist{v}}
\newcommand{\ww}{\vlist{w}}



%
%
\newtheorem{theorem}{Theorem}[section]
\newtheorem{proposition}[theorem]{Proposition}
\newtheorem{corollary}[theorem]{Corollary}
\newtheorem{lemma}[theorem]{Lemma}

\newtheorem{definition}[theorem]{Definition}

\newtheorem{example}[theorem]{Example}

\newtheorem{remark}[theorem]{Remark}

\newproof{proof}{Proof}



\newcommand{\spaceFull}{\lower4pt\hbox{$\includegraphics[height=15pt]{}$}}
\newcommand{\spaceZero}{\lower4pt\hbox{$\includegraphics[height=15pt]{}$}}
\newcommand{\spaceYaxis}{\lower4pt\hbox{$\includegraphics[height=15pt]{}$}}
\newcommand{\spaceXaxis}{\lower4pt\hbox{$\includegraphics[height=15pt]{}$}}
\newcommand{\scalarktwo}{\lower5.5pt\hbox{$\includegraphics[height=18pt]{}$}}
\newcommand{\spacekonektwo}{\lower5pt\hbox{$\includegraphics[height=18pt]{}$}}

\def \SupSpan {\mathsf{Sp}}
\def \SupCospan {\mathsf{Cp}}
\def \IBRw {\IBR^{\scriptscriptstyle \SupSpan}} 
\def \IBRb {\IBR^{\scriptscriptstyle \SupCospan}} 

\newcommand\matrixTwoROneC[2]{{\tiny\begin{pmatrix}  #1 \\ #2 \end{pmatrix}}}

\def \PJ {\mathbb{J}} 
\def \PS {\mathbb{S}} 
\def \Top {\T^{\op}} 

\begin{document}

\begin{frontmatter}

\title{Interacting Hopf Algebras}

\author[ens]{Filippo Bonchi}
\ead{filippo.bonchi@ens-lyon.fr}
\author[south]{Pawe{\l} Soboci\'{n}ski}
\ead{ps@ecs.soton.ac.uk}
\author[nijmegen]{Fabio Zanasi}
\ead{fzanasi@cs.ru.nl}

\address[ens]{ENS de Lyon, Universit\'{e} de Lyon, CNRS, INRIA, France}
\address[south]{University of Southampton, United Kingdom}
\address[nijmegen]{Radboud University of Nijmegen, Netherlands}

\begin{keyword} PROP \sep distributive law \sep Frobenius algebra \sep Hopf algebra \sep string diagram \sep linear algebra
\end{keyword}


\begin{abstract}
We introduce the theory $\IBR$ of interacting Hopf algebras, parametrised over a principal ideal domain $\PID$. The axioms of $\IBR$ are derived using Lack's approach to composing PROPs: they feature two Hopf algebra and two Frobenius algebra structures on four different monoid-comonoid pairs. This construction is instrumental in showing that $\IBR$ is isomorphic to the PROP of linear relations (i.e. subspaces) over the field of fractions of $\PID$.
\end{abstract}
\end{frontmatter}

\section{Introduction}

We introduce the theory of Interacting Hopf Algebras, characterising linear relations. Its equations are obtained via Lack's composition of PROPs~\cite{Lack2004a}.

Diagrammatic formalisms are widespread in various fields, including computer science, control theory, logic and quantum information~\cite{BaezRosetta}.
Several recent approaches~\cite{Abramsky2004,Coecke2008,Coecke2012,Selinger2009,Ghica13,Pavlovic13,Fiore2013,Bonchi2014b,Bonchi2015,Fong2015,Baez2014a,brunihierarchical} consider diagrams rigorously as the arrows of a \emph{symmetric monoidal theory} (SMTs). By SMT we mean a presentation of a PROP: a set of generators---the \emph{syntax} of diagrams---together with a set of \emph{equations} that, in conjunction with the usual laws of symmetric monoidal categories, give the notion of diagram equality. Of particular importance are SMT featuring both algebraic and coalgebraic structure, subject to compatibility conditions: notable examples are Frobenius algebras and bialgebras whose equations witness an \emph{interaction} between a commutative monoid and cocommutative comonoid.

Lack~\cite{Lack2004a} showed that several such situations can be understood as arising from \emph{PROP composition} where a \emph{distributive law}---a notion closely related to standard distributive laws between monads~\cite{Street_MonadsI}---witnesses the interaction. The beauty of this approach is that one can consider distributive laws to be responsible for the newly introduced equations, resulting in a pleasantly modular account of the composite algebraic theory.
For example, the equations of (strongly separable) Frobenius algebra~\cite{Carboni1987} can be obtained in this way.  Another example is the theory of bialgebras: here monoids and comonoids interact through a different distributive law, thus yielding different equations.

\medskip
Our chief original contribution is the study of the interaction of the PROP $\ABR$ of Hopf algebras, parametrised over a principal ideal domain $\PID$, and its opposite $\ABRop$. As in the case of the PROP of commutative monoids and its opposite, two different distributive laws can be defined, yielding $\IBRw$ and $\IBRb$ respectively. 
Our main theory of interest
$\IBR$ is the result of merging together these two equational theories.
These ingredients constitute the topmost face in the following commutative cube in the category of PROPs.
\begin{equation}\label{eq:cube}
\tag{\mancube}
\raise30pt\hbox{$
\xymatrix@=5pt{
& {\ABR + \ABRop} \ar[dd]_(.3){\cong}|{\hole}
\ar[dl] \ar[rr] & & {\IBRw} \ar[dl] \ar[dd]^{\cong} \\
{\IBRb} \ar[rr] \ar[dd]_{\cong}  & & {\IBR} \ar@{.>}[dd] \\
& {\VectR+ \VectRop} \ar[dl] \ar[rr]|(.57){\hole} & & {\Span {\VectR}} \ar[dl] \\
{\Cospan {\VectR}} \ar[rr] & & {\SVR}
}$}
\end{equation}
The bottom face of~\eqref{eq:cube} describes the linear algebraic nature of our SMTs. First, $\ABR$ is isomorphic to the PROP $\VectR$ of $\PID$-matrices. Second, since the equations of $\IBRw$ and $\IBRb$ arise from distributive laws, these SMTs isomorphic to PROPs of spans and cospans of $\PID$-matrices, respectively --- these latter PROPs exist because  $\VectR$ has pullbacks and pushouts whenever $\PID$ is a principal ideal domain. The isomorphism between $\IBR$ and $\SVR$ follows from the fact that the top and the bottom faces of~\eqref{eq:cube} are pushouts. $\SVR$ is the PROP of linear relations over the field $\frPID$ of  $\PID$-fractions: an arrow $n \to m$ is a $\frPID$-linear subspace of $\frPID^{n}\times \frPID^{m}$, composition is relational.

We contend that $\IBR$ is a \emph{canonical syntax for (finite dimensional) linear algebra}: linear transformations, spaces, kernels, etc. are all represented faithfully in the graphical language. This perspective will be pursued in the paper: several proofs mimic---at the diagrammatic level---familiar techniques such as Gaussian elimination. We believe that that the string-diagrammatic treatment of linear algebra is of cross-disciplinary benefit: indeed, some applications of the theory herein have already been developed; see below.


\paragraph{Applications and related work}

For different choices of $\PID$, the theory of interacting Hopf algebras has several applications in diverse disciplines. A particularly interesting instance is the polynomial ring $\PID=\mathbb{R}[x]$: $\mathbb{IH}_{\scriptscriptstyle \mathbb{R}[x]}$ is a string-diagrammatic account of \emph{signal-flow graphs}, which are foundational structures of control theory and signal processing that capture behaviour defined via recurrence relations/differential equations.  $\mathbb{IH}_{\scriptscriptstyle \mathbb{R}[x]}$ provides a formal syntax and semantics, a sound and complete equational theory and an analogue of Kleene's theorem~\cite{Kleene} stating that all rational behaviours can be denoted within $\mathbb{IH}_{\scriptscriptstyle \mathbb{R}[x]}$. The interested reader is referred to~\cite{Bonchi2014b,Bonchi2015,Fong2015}.

After the submission of~\cite{Bonchi2014b} and the appearance of an earlier version of this manuscript on arXiv (\url{http://arxiv.org/abs/1403.7048}), Baez and Erbele~\cite{Baez2014a} independently gave an equivalent presentation of $\mathbb{IH}_{\scriptscriptstyle \mathbb{R}[x]}$. The main difference is our use of distributive laws, which 
 enables us to obtain $\IBR \cong \SVR$ using universal properties as well as the span/cospan factorisations in $\IBR$.

An earlier conference version of this work appeared in~\cite{BialgAreFrob14} and only considered the theory $\mathbb{IH}_{\scriptscriptstyle \mathbb{Z}_2}$, 
 which also has significant applications.
First, it is closely related to the algebra of stateless connectors~\cite{Bruni2006}, modeling concurrent interactions of software components. Second, it is the phase-free fragment of the ZX-calculus~\cite{Coecke2008,Coecke2009a}, an SMT for interacting quantum observables which originated in the research programme of categorical quantum mechanics~\cite{Abramsky2004,Abramsky2008:CQM}. Completeness for ZX has been intensively studied in recent years~\cite{Perdrix_completenessZX15,ZXIncomplete14,Backens-ZXcompleteness2} and our work yields a free model $\SV_{\scriptscriptstyle \mathbb{Z}_2}$ for the phase-free fragment. 
Our modular analysis also gives new insights about the algebra of quantum theories: while the 
  Frobenius structures have traditionally been regarded as being fundamental,  our
 construction reveals that the basic blocks are Hopf algebras, and the Frobenius equations arise by their composition.

\paragraph{Synopsis} Section~\ref{sec:background} provides the background on SMTs and composing PROPs.
In Section~\ref{sec:theorymatr} we recall the theory of Hopf Algebras on a principal domain $\PID$ and show that it presents the PROP of $\PID$-matrices. Section~\ref{sec:distrLawPullback} introduces a mild generalisation of Lack's technique for composing PROPs, which is needed to accommodate the case of interacting Hopf algebras.

In Section~\ref{sec:ibrw} we introduce the theories of interacting Hopf algebras for span and cospans of $\PID$-matrices. First, $\IBRw$ and its compact closed structure are introduced (Subsection~\ref{sec:cc}). Then, Subsection~\ref{sec:completeness} is devoted to proving that $\IBRw$ presents the PROP of spans of matrices. Finally, in Subsection~\ref{sec:IBRbCospan} we also give the presentation $\IBRb$ for cospans of matrices.

Section~\ref{sec:cubetop} concerns the theory of interacting Hopf algebras for linear subspaces. To obtain the characterisation we show that the bottom face of $\eqref{eq:cube}$ is a pushout (Subsection~\ref{sec:cubebottom}) and that the rear faces commute (Subsection~\ref{sec:cubeback}). 

Section~\ref{sec:instances} is an example of our construction: the theory of interacting Hopf algebras for rational subspaces.

\section{Background}\label{sec:background}

\paragraph{Notation} 
$\catC[a,b]$ is the set of arrows from $a$ to $b$ in a small category $\catC$, composition of
$f \: a \to b$, $g\: b \to c$ is written $f\poi g \: a \to c$.  We will sometimes write
$a\tr{f}b$ or $a\tr{f \in \catC}b$ for $f \: a \to b$ in $\catC$. When names are unnecessary we simply write $\tr{\in \catC}$ or $\tr{}$ if $\catC$ is clear from the context. For $\catC$ symmetric monoidal, $\tns$ is its monoidal product and $\sigma_{a,b} \: a \tns b \to b \tns a$ is the symmetry associated with $a,b \in \catC$. 
Given $\catC$ with pullbacks, its span bicategory has the objects of $\catC$ as $0$-cells, spans of arrows of $\catC$ as $1$-cells and span morphisms as $2$-cells. We denote with $\Span{\catC}$ the \emph{category} obtained by identifying the isomorphic $1$-cells and forgetting the $2$-cells. Dually, if $\catC$ has pushouts we can form its bicategory of cospans and denote with $\Cospan{\catC}$ the category obtained by identifying the isomorphic $1$-cells and forgetting the $2$-cells.

\subsection{PROPs}\label{ssec:propsbackground}

A one-sorted \emph{symmetric monoidal theory} (SMT) is determined by $(\Sigma, E)$ where $\Sigma$ is the \emph{signature}: a set of \emph{generators} $o \: n\to m$ with \emph{arity} $n$ and \emph{coarity} $m$ where $m,n\in\N$. The set of $\Sigma$-terms is obtained by combining generators in $\Sigma$, the unit $\id \: 1\to 1$ and the symmetry $\sigma_{1,1} \: 2\to 2$ with $;$ and $\tns$. This is a purely formal process: given $\Sigma$-terms $t \: k\to l$, $u \: l\to m$, $v \: m\to n$, we construct new $\Sigma$-terms $t \poi u \: k\to m$ and $t \tns v \: k+n \to l+n$.  The set $E$ of \emph{equations} contains pairs of $\Sigma$-terms of the form $(t,t':k\to l)$; the only requirement is that $t$ and $t'$ have the same arity and coarity as $\Sigma$-terms.

SMTs are presentations of PROPs~\cite{MacLane1965,Lack2004a} (\textbf{pro}duct and \textbf{p}ermutation categories). A PROP is a strict symmetric monoidal category with objects natural numbers, where $\tns$ on objects is addition. Morphisms between PROPs are identity-on-objects strict symmetric monoidal functors. PROPs and their morphisms form the category $\PROP$.
Any SMT $(\Sigma,E)$ freely generates a PROP by letting the arrows $n\to m$ be the set of $\Sigma$-terms $n\to m$ modulo the laws of symmetric monoidal categories and the (smallest congruence containing the) equations $t=t'$ for any $(t,t')\in E$. There is a natural graphical representation of these terms as arrows of monoidal categories (see~\cite{Selinger2009}): we will commonly refer to these \emph{string diagrams}
as \emph{circuits}.

For example, let $(\Sigma_M,E_M)$ be the SMT of commutative monoids. The signature $\Sigma_M$ contains two generators:  multiplication --- which we depict as a circuit $\Wmult \: 2 \to 1$ --- and unit, represented as $\Wunit \: 0 \to 1$.
 Graphically, the generation of $\Sigma_M$-terms amounts to ``tiling'' $\Wmult$ and $\Wunit$ together with the circuit $\symNet$ ($\sigma_{1,1}\: 2 \to 2$) and $\Idnet$ ($\id_1 \: 1 \to 1$). Equations $E_M$ assert associativity \eqref{eq:wmonassoc}, commutativity \eqref{eq:wmoncomm} and identity \eqref{eq:wmonunitlaw}.
 \begin{multicols}{3}\noindent
\begin{equation}
\label{eq:wmonunitlaw}
\tag{A1}
\lower11pt\hbox{$\includegraphics[height=1cm]{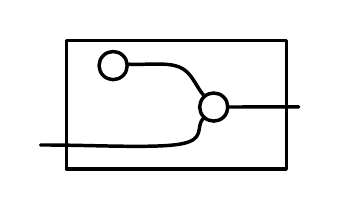}$}
\!\!\!
=\!
\lower5pt\hbox{$\includegraphics[height=.6cm]{graffles/idcircuit.pdf}$}
\end{equation}
\begin{equation}
\label{eq:wmoncomm}
\tag{A2}
\lower5pt\hbox{$\includegraphics[height=.6cm]{graffles/Wmult.pdf}$}
\!
=
\!\!\!\!
\lower11pt\hbox{$\includegraphics[height=1cm]{graffles/Wcomm.pdf}$}
\end{equation}
\begin{equation}
\label{eq:wmonassoc}
\tag{A3}
\lower12pt\hbox{$\includegraphics[height=1cm]{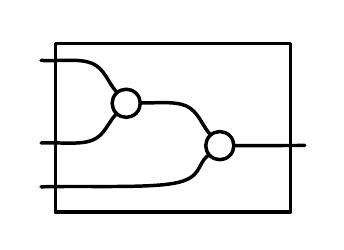}$}
\!\!\!
=
\!\!\!
\lower12pt\hbox{$\includegraphics[height=1cm]{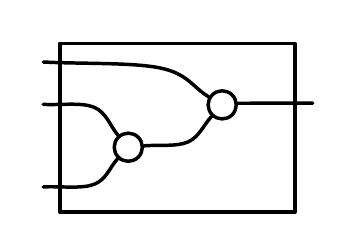}$}
\end{equation}
\end{multicols}
Let $\wmon$ denote the PROP freely generated by $(\Sigma_M,E_M)$. 
For later reference, we also introduce the PROP $\bcom$ of commutative comonoids, generated by the signature consisting of circuits $\Bcomult$, $\Bcounit$ and the following equations.
\begin{multicols}{3}\noindent
\begin{equation}
\label{eq:bcomonunitlaw}
\tag{A4}
\lower11pt\hbox{$\includegraphics[height=1cm]{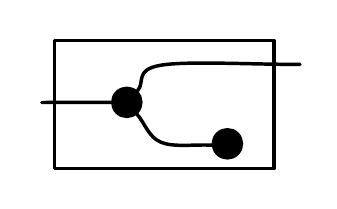}$}
\!\!\!
=
\!
\lower7pt\hbox{$\includegraphics[height=.7cm]{graffles/idcircuit.pdf}$}
\end{equation}
\begin{equation}
\label{eq:bcomoncomm}
\tag{A5}
\lower5pt\hbox{$\includegraphics[height=.6cm]{graffles/Bcomult.pdf}$}
\!
=
\!\!\!
\lower11pt\hbox{$\includegraphics[height=1cm]{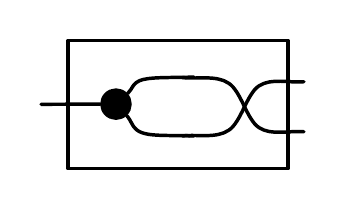}$}
\end{equation}
\begin{equation}
\label{eq:bcomonassoc}
\tag{A6}
\lower11pt\hbox{$\includegraphics[height=1cm]{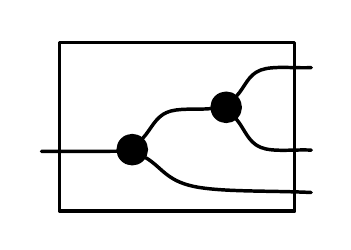}$}
\!\!\!
=
\!\!\!
\lower11pt\hbox{$\includegraphics[height=1cm]{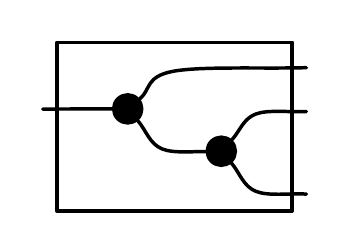}$}
\end{equation}
\end{multicols}

Modulo the white vs. black colouring---which will be justified later---the circuits of $\bcom$ can be seen as those of $\wmon$ ``reflected about the $y$-axis''. This observation yields $\bcom \cong (\wmon {})^{op}$. 

\begin{remark}[Models of a PROP] \label{rmk:models}
The assertion that $(\Sigma_M, E_M)$ \emph{is the SMT of commutative monoids}---and similarly for other SMTs appearing in our exposition---can be made precise using the notion of \emph{model} (sometimes also called algebra) of a PROP. 
Given a strict symmetric monoidal category $\catC$, a model of a PROP $\T$ in $\catC$ is a symmetric strict monoidal functor $\funF \: \T \to \catC$. Then $\mathsf{Model}(\T,\catC)$ is the category whose objects are the models of $\T$ in $\catC$. 

Turning to commutative monoids, there is a category $\mathsf{Monoid}(\catC)$ whose objects are the commutative monoids in $\catC$, i.e., objects $x \in \catC$ equipped with arrows $x \tns x \to x$ and $I \to x$, satisfying the usual equations.
Given any model $\funF \: \Mon \to \catC$, it follows that $\funF(1)$ is a commutative monoid in $\catC$: this yields a functor $\mathsf{Model}(\Mon, \catC) \to \mathsf{Monoid}(\catC)$. Saying that $(\Sigma_M, E_M)$ is the SMT of commutative monoids means that this functor is an equivalence natural in $\catC$. 
We shall not focus on models as they are not necessary for our applications: for us, the theory $\IBR$ of Interacting Hopf Algebras is more interesting as a diagrammatic language to express and reason about linear systems.  
\end{remark}

We will sometimes adopt the language of formal logic and refer to the free PROPs which arise from SMTs --- e.g.\ to the examples above --- as \emph{syntactic} PROPs in order to distinguish them from \emph{semantic} PROPs: an example of the latter is $\F$ where arrows $n\to m$ are functions 
\[\{0,\dots, n-1\}\to \{0, \dots, m-1\}.\] 
 There is an isomorphism $\wmon \cong \F$ that takes circuits (syntax) $c \in \wmon[n,m]$ to functions (semantics) of type $\{0,\dots, n-1\}\to \{0, \dots, m-1\}$. For instance, $\Wmult \tns \Wunit \: 2 \to 2$ maps to the function $f \: \{0,1\} \to \{0,1\}$ constant at $0$.

As observed by Lack~\cite{Lack2004a}, PROPs can also be seen as objects of a certain coslice category. To explain this, we need to introduce PROs: these are strict monoidal (i.e.\ not necessarily symmetric) categories with objects the natural numbers and addition as monoidal product. Morphisms of PROs are strict monoidal identity-on-objects functors. The PRO of permutations $\Perm$, where $\Perm[n,m]$ is empty if $n\neq m$ and otherwise consists of permutations on an $n$-element set, is of particular interest. PROPs can be understood as some of the objects of the coslice category $\mathbb{P}/\PRO$, where $\PRO$ is the category of PROs and their morphisms. In fact, PROPs define a full subcategory since morphisms of PROPs are those morphisms of PROs that preserve the permutation structure. Working in the coslice is intuitive: e.g. $\mathbb{P}$ is the initial PROP and to compute $\mathbb{T}_1+\mathbb{T}_2$ in $\PROP$ one identifies the permutations of $\mathbb{T}_1$ and $\mathbb{T}_2$. For SMTs, a useful observation is that if $\mb{T}_1$ is presented by $(\Sigma_1,E_1)$ and $\mb{T}_2$ by $(\Sigma_2,E_2)$, then $\mb{T}_1 + \mb{T}_2$ is presented by $(\Sigma_1 \uplus \Sigma_2, E_1\uplus E_2)$.

\subsection{Composing PROPs}\label{ssec:composingPROP}
The sum $\mb{T}_1 + \mb{T}_2$ is not a typical way of combining theories: more usual is to quotient $\mb{T}_1 + \mb{T}_2$ by equations that express some compatibility between structures in $\mb{T}_1$ and $\mb{T}_2$. This is a standard pattern in algebra: e.g.\ a ring is given by a monoid and an abelian group, subject to equations that ensure that the former distributes over the latter. Similarly, bialgebras and Frobenius algebras describe two different ways of combining a monoid and a comonoid. 

In~\cite{Lack2004a} Lack shows how these phenomena can be understood as arising from the operation of composing PROPs;  we now give a brief account. 
 As shown by Street~\cite{Street_MonadsI}, the theory of monads can be developed in an arbitrary bicategory. Similarly to how small categories are monads in the bicategory of spans in $\Set$---see e.g.~\cite{RosebrRWood_fact}---PROPs are monads on $\mathbb{P}$ in the bicategory $\mathsf{Prof}(\mathbf{Mon})$ of strict monoidal categories and profunctors~\cite{Lack2004a}.
%
PROPs $\T_1$ and $\T_2$ can be composed via a distributive law $\lambda \colon \T_2 \poi \T_1 \to \T_1  \poi \T_2$ between the associated monads, and  $\lambda$ makes $\T \poi  \T_2$ into a PROP whose arrows can be seen as formal \emph{pairs} $n\tr{f \in \T_1}z\tr{g \in \T_2}m$ of an arrow in $\T_1$, then one of $\T_2$.
A key observation is that (the graph of) $\lambda$ gives a set of equations $\tr{\in \T_2}\tr{\in \T_1} = \tr{\in \T_1}\tr{\in \T_2}$. In fact, if $\T_1$ and $\T_2$ are syntactic then $\T_1 \poi  \T_2$ is presented by the equations of $\T_1 +  \T_2$ together with those obtained from $\lambda$. 

For example, composing PROPs $\bcom$ and $\wmon$ of commutative comonoids and monoids yields the PROP of commutative bialgebras. First observe that circuits of $\bcom$ correspond to arrows of $\Fop$, because $\bcom \cong (\wmon {})^{\op} \cong \F^{\op}$. We can then express a distributive law $\lambda \: \wmon \poi \bcom \To \bcom \poi \wmon$ as having the type $\F \poi \Fop \To \Fop \poi \F$.
This amounts to saying that $\lambda$ maps \emph{cospans} $n \tr{f\in \F}\tl{g\in \F} m$ to \emph{spans} $n \tl{p\in \F}\tr{q\in \F} m$.
Defining this mapping via (chosen) pullback
satisfies the conditions of distributive laws~\cite{Lack2004a}. 
One can now read the relevant equations from pullback squares in $\F$. For instance:
\[
\xymatrix@R=5pt@C=15pt{
& 1 & && & & 1 \ar[dr]^{\Bcounit} &\\
\ar[ur]^{} 2 & & 0 \ar[ul]_{} & \ar@{|=>}[r] & & \ar[ur]^{\Wmult} 2 \ar[dr]_{\twoBcounit} & & 0 \\
& \ar[ul]^{} 0 \pullbackcorner \ar[ur]_{} & & && & 0 \ar[ur]_{\lower4pt\hbox{$\includegraphics[height=.5cm]{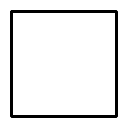}$}} & }
\quad
\lower20pt\hbox{$
\text{yields}
\quad
\Wmult ; \Bcounit = \twoBcounit \poi \lower4pt\hbox{$\includegraphics[height=.5cm]{graffles/idzerocircuit.pdf}$}
$}
\]
where 
the second diagram is obtained from the pullback by applying the isomorphisms  $\F \cong \Mon$ and
$\Fop \cong \Com$. In fact, the equations $\bcom \poi \wmon$ arise from (those of $\bcom + \wmon$ and) just four pullback squares (see \cite[\S 5.3]{Lack2004a}) that yield:
\begin{multicols}{2}
\noindent
\begin{equation}
\label{eq:unitsl}
\lower2pt\hbox{$
\tag{A7}
\lower5pt\hbox{$\includegraphics[height=.6cm]{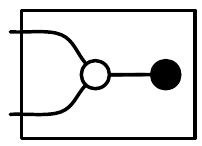}$}
=
\lower5pt\hbox{$\includegraphics[height=.6cm]{graffles/lunitsr.pdf}$}
$}
\end{equation}
\begin{equation}
\label{eq:unitsr}
\tag{A9}
\lower5pt\hbox{$\includegraphics[height=.6cm]{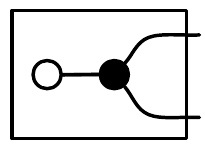}$}
=
\lower5pt\hbox{$\includegraphics[height=.6cm]{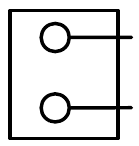}$}
\end{equation}
\begin{equation}
\label{eq:bialg}
\tag{A8}
\lower6pt\hbox{$\includegraphics[height=.6cm]{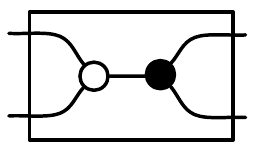}$}
=
\lower11pt\hbox{$\includegraphics[height=.9cm]{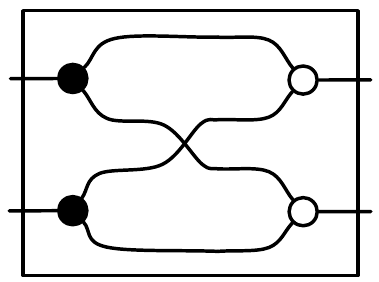}$}
\end{equation}
\begin{equation}
\label{eq:bwbone}
\tag{A10}
\lower4pt\hbox{$\includegraphics[height=.5cm]{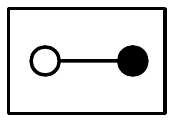}$}
=
\lower4pt\hbox{$\includegraphics[height=.5cm]{graffles/idzerocircuit.pdf}$}
\end{equation}
\end{multicols}
 Therefore $\bcom \poi \wmon$ is the free PROP of commutative bialgebras, obtained as the quotient of $\bcom + \wmon$ by \eqref{eq:unitsl}-\eqref{eq:bwbone}. Furthermore, each circuit $c \: n \to m$ can be factorised as $n\tr{\in \Com}\tr{\in \Mon}m$ and the SMT of commutative bialgebras is a presentation of the PROP $\Span{\F} \cong \Fop \poi \F$ of spans.

There is a dual presentation of $\Cospan{\F}$. The distributive law, of type $\bcom \poi \wmon\To \wmon\poi \bcom$, is defined by pushout in $\F$. Its equations are presented by the PROP of \emph{strongly separable Frobenius algebras}~\cite{Carboni1987}. We refer to~\cite{Lack2004a} for the details. Interestingly, the Frobenius equations also appear in our development (Section~\ref{sec:ibrw}), though for different reasons.

\section{Hopf Algebras: the Theory of Matrices}\label{sec:theorymatr}

In this section we recall the folklore presentation of the PROP of matrices over a principal ideal domain $\PID$. The resulting theory $\ABR$ of \emph{$\PID$-Hopf algebras} is constructed in a modular fashion, by composing PROPs.
First, let $\PROPR$ be the PROP generated by the signature consisting of \emph{scalars} $\scalar$ for each $k \in \PID$ and the following equations, where $k_1,k_2$ range over $\PID$.
\begin{multicols}{2}\noindent
\begin{equation}
\label{eq:unitscalar}\tag{A11}
\lower6pt\hbox{$\includegraphics[height=.6cm]{graffles/unitscalar.pdf}$}
=
\lower6pt\hbox{$\includegraphics[height=.6cm]{graffles/idcircuit.pdf}$}
\end{equation}
\begin{equation}
\label{eq:scalarmult}
\tag{A12}
\lower6pt\hbox{$\includegraphics[height=.7cm]{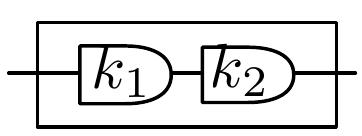}$}
=
\lower6pt\hbox{$\includegraphics[height=.7cm]{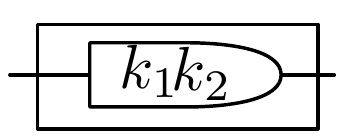}$}
\end{equation}
\end{multicols}
Our building blocks in this section are $\wmon$, $\bcom$ (introduced in Section~\ref{ssec:propsbackground}) and $\PROPR$, which we compose together using distributive laws of PROPs. 

\begin{lemma}~
\label{lemma:threelaws}
\begin{itemize}
  \item There is a distributive law $\sigma \: \wmon \poi \PROPR \To \PROPR \poi \wmon$ yielding a PROP $\PROPR \poi \wmon$ presented by the equations of $\PROPR + \wmon$ and, for all $k \in \PID$:
       \begin{multicols}{2}\noindent
\begin{equation}
\label{eq:scalarwmult}
\tag{A13}
\lower11pt\hbox{$\includegraphics[height=.9cm]{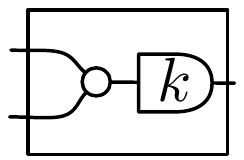}$}
=
\lower11pt\hbox{$\includegraphics[height=.9cm]{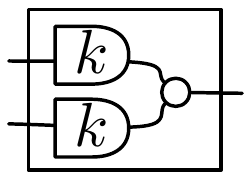}$}
\end{equation}
\begin{equation}
\label{eq:scalarwunit}
\tag{A14}
\lower7pt\hbox{$\includegraphics[height=.7cm]{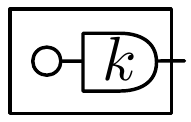}$}
=
\lower7pt\hbox{$\includegraphics[height=.7cm]{graffles/Wunit.pdf}$}
\end{equation}
\end{multicols}
  \item There is a distributive law $\tau \: \PROPR \poi \bcom \To \bcom\poi\PROPR$ yielding a PROP $\bcom\poi\PROPR$ presented by the equations of $\bcom + \PROPR$ and, for all $k \in \PID$:
\begin{multicols}{2}
\noindent
\begin{equation}
\label{eq:scalarbcomult}
\tag{A15}
\lower10pt\hbox{$\includegraphics[height=.9cm]{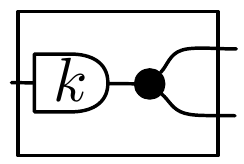}$}
=
\lower10pt\hbox{$\includegraphics[height=.9cm]{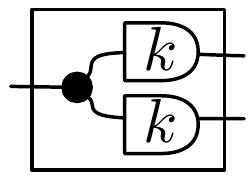}$}
\end{equation}
\begin{equation}
\label{eq:scalarbcounit}
\tag{A16}
\lower7pt\hbox{$\includegraphics[height=.7cm]{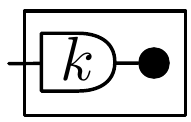}$}
=
\lower7pt\hbox{$\includegraphics[height=.7cm]{graffles/Bcounit.pdf}$}
\end{equation}
\end{multicols}
\end{itemize}
\end{lemma}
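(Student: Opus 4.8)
The plan is to recognise both items as instances of Lack's composition of PROPs via distributive laws (Section~\ref{ssec:composingPROP}), and to note that the second law is the $\op$-dual of the first, so that only $\sigma$ needs to be built by hand. For the duality: since $\PID$ is commutative, reversing composition leaves the presentation of $\PROPR$ by the single generator $\scalar$ and the equations \eqref{eq:unitscalar}--\eqref{eq:scalarmult} unchanged, so there is an isomorphism $\PROPR^{\op}\cong\PROPR$ fixing each $\scalar$; together with $\bcom\cong\wmon^{\op}$ (Section~\ref{ssec:propsbackground}), applying $(-)^{\op}$ carries a distributive law $\sigma\colon\wmon\poi\PROPR\To\PROPR\poi\wmon$ to one of type $\PROPR^{\op}\poi\wmon^{\op}\To\wmon^{\op}\poi\PROPR^{\op}$ and hence, after transporting along the two isomorphisms, to $\tau\colon\PROPR\poi\bcom\To\bcom\poi\PROPR$; correspondingly \eqref{eq:scalarbcomult}--\eqref{eq:scalarbcounit} are exactly the $y$-axis reflections (with colours swapped) of \eqref{eq:scalarwmult}--\eqref{eq:scalarwunit}. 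From here on I concentrate on $\sigma$ and on the presentation of $\PROPR\poi\wmon$.

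To define $\sigma$ I would use the isomorphism $\wmon\cong\F$: a circuit $n\to z$ of $\wmon$ is a function $\phi\colon\{0,\dots,n-1\}\to\{0,\dots,z-1\}$. Dually, \eqref{eq:unitscalar}--\eqref{eq:scalarmult} together with the laws of symmetric monoidal categories normalise every circuit of $\PROPR$ to ``a diagonal of scalars followed by a permutation'', so that---up to the permutations identified when composing PROPs---a circuit $z\to z$ of $\PROPR$ amounts to a tuple of scalars $\mathbf{k}=(k_0,\dots,k_{z-1})\in\PID^{z}$, one per wire. The distributive law pushes such a block of scalars leftwards through the $\wmon$-circuit, duplicating it at each $\Wmult$ and erasing it at each $\Wunit$; writing $\phi^{*}\mathbf{k}$ for the tuple $\mathbf{k}$ pulled back along $\phi$, this reads
\[
\sigma\colon(\phi,\mathbf{k})\ \longmapsto\ (\phi^{*}\mathbf{k},\,\phi),\qquad \phi^{*}\mathbf{k}=(k_{\phi(0)},\dots,k_{\phi(n-1)}).
\]
On generators $\sigma$ is precisely the rewriting prescribed by \eqref{eq:scalarwmult} and \eqref{eq:scalarwunit}; the first things to check are that it is well defined (independent of the circuit representing $\phi$) and natural in $n$ and $z$.

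It then remains to check that $\sigma$ is a distributive law and to read off the presentation of $\PROPR\poi\wmon$. Against the coherence axioms this reduces to routine facts: $\phi\mapsto\phi^{*}$ is (contravariantly) functorial, i.e.\ $(\psi\comp\phi)^{*}=\phi^{*}\comp\psi^{*}$ and $\id^{*}=\id$, and is compatible with the (co)monoid operations named by the generators; and pulling back along $\phi$ commutes with the pointwise multiplication of scalars and with $\tns$. Together these say that $\sigma$ respects the units and multiplications of both $\wmon$ and $\PROPR$. Lack's theorem (Section~\ref{ssec:composingPROP}) then yields the PROP $\PROPR\poi\wmon$, whose arrows are formal pairs $n\tr{\in\PROPR}n\tr{\in\wmon}m$, presented by the equations of $\PROPR+\wmon$ together with the finitely many equations that record $\sigma$ on generators, namely \eqref{eq:scalarwmult}--\eqref{eq:scalarwunit}.

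I expect the only genuinely non-mechanical step to be this last reading-off: one must verify that \eqref{eq:scalarwmult}--\eqref{eq:scalarwunit}, on top of the equations of $\PROPR+\wmon$ and the symmetric monoidal laws, already entail \emph{every} instance of the graph of $\sigma$---i.e.\ that pushing an arbitrary $\wmon$-circuit's worth of scalars leftward is derivable from the generator-level rules alone. This is an induction on the structure of the $\wmon$-circuit, using associativity, commutativity and unitality of $\Wmult$ and $\Wunit$ and naturality of the symmetries to reassociate and to slide scalars past wire crossings; no new idea is needed, but the bookkeeping---in particular the interaction with the identified permutations---is where the argument stops being purely formal. Applying $(-)^{\op}$ to everything above then delivers $\tau$ and the presentation of $\bcom\poi\PROPR$.
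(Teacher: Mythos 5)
Your proposal is correct, but it reaches the result by a genuinely different route than the paper. The paper argues in the opposite direction: it defines the candidate composite as the quotient of $\PROPR+\wmon$ by \eqref{eq:scalarwmult}--\eqref{eq:scalarwunit}, observes that these equations let every circuit of that quotient be factorised as a $\PROPR$-circuit followed by a $\wmon$-circuit, uniquely up to permutation, and then invokes Lack's Theorem~4.6, which converts this factorisation property directly into the distributive law $\sigma$ together with the stated presentation --- so the distributive-law axioms are never checked by hand and no separate argument is needed that the generator-level equations generate the whole graph of the law. You instead construct $\sigma$ explicitly (scalar tuples pulled back along functions, via $\wmon\cong\F$ and the wreath-like normal form of $\PROPR$), verify the axioms directly, and then need the extra induction showing that \eqref{eq:scalarwmult}--\eqref{eq:scalarwunit} entail every instance of the graph; this costs more verification, but it buys a concrete description of the composite (a pair of a scalar tuple and a function) that the paper never writes down, and the two inductions involved are of comparable difficulty to the paper's uniqueness-of-factorisation claim. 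Your derivation of $\tau$ by dualising $\sigma$ along $\PROPR^{\op}\cong\PROPR$ (commutativity of $\PID$) and $\bcom\cong\wmon^{\op}$ is a genuine economy over the paper's ``the proof of the second statement is similar''. One small orientation slip: to absorb the residual permutation of a $\PROPR$-circuit into the $\wmon$-component of a pair in $\wmon\poi\PROPR$, the permutation must sit at the middle interface, so you want the normal form \emph{permutation followed by diagonal}, not ``diagonal of scalars followed by a permutation'' as written; since scalars slide past crossings in either direction this is immediately repaired and does not affect the argument.
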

\begin{proof} For the first statement, let $\T$ be the PROP obtained through quotienting $\PROPR + \wmon$ by \eqref{eq:scalarwmult} and \eqref{eq:scalarwunit}. Then $\PROPR$ and $\wmon$ are subcategories of $\T$ and equations \eqref{eq:scalarwmult} and \eqref{eq:scalarwunit} yield a representation of each circuit of $\T$ as one of $\PROPR$ followed by one of $\wmon$, which is unique up-to-permutation. This factorisation, by \cite[Th. 4.6]{Lack2004a}, induces the required distributive law of PROPs. The proof of the second statement is similar.
\qed \end{proof}

We now combine the distributive laws of Lemma~\ref{lemma:threelaws} and $\lambda \: \wmon \poi \bcom \To \bcom \poi \wmon$ introduced in Section~\ref{ssec:composingPROP} to build the composite PROP $\bcom \poi \PROPR \poi \wmon$.
\begin{proposition} There is a distributive law $\theta \: \wmon \poi (\bcom \poi \PROPR) \To (\bcom \poi \PROPR) \poi \wmon$ yielding $\bcom \poi \PROPR \poi \wmon$ presented by the equations of $(\PROPR \poi \wmon) + (\bcom\poi\PROPR) + (\bcom \poi \wmon)$.
\end{proposition}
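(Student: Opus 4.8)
The plan is to obtain $\theta$ by pasting the three distributive laws already in hand and then invoke the standard machinery for composing PROPs. Since $\tau$ makes $\bcom \poi \PROPR$ into a PROP, the natural candidate for $\theta \: \wmon \poi (\bcom \poi \PROPR) \To (\bcom \poi \PROPR) \poi \wmon$ is the pasting of $\lambda$ whiskered on the right by $\PROPR$ --- a $2$-cell $\wmon \poi \bcom \poi \PROPR \To \bcom \poi \wmon \poi \PROPR$ --- with $\sigma$ whiskered on the left by $\bcom$ --- a $2$-cell $\bcom \poi \wmon \poi \PROPR \To \bcom \poi \PROPR \poi \wmon$. By the theory of iterated (composable) distributive laws (equivalently, by a factorisation argument invoking \cite[Th.~4.6]{Lack2004a}, exactly as in the proof of Lemma~\ref{lemma:threelaws}), this pasting is a distributive law of $\wmon$ over the composite monad $\bcom \poi \PROPR$ --- making $\bcom \poi \PROPR \poi \wmon$ into a PROP whose arrows are the formal triples $\tr{\in \bcom}\tr{\in \PROPR}\tr{\in \wmon}$ --- provided the three laws $\sigma$, $\tau$, $\lambda$ satisfy the Yang--Baxter (hexagon) compatibility: the two $2$-cells $\wmon \poi \PROPR \poi \bcom \To \bcom \poi \PROPR \poi \wmon$, one applying $\sigma$ first and $\tau$ last, the other $\tau$ first and $\sigma$ last (with $\lambda$ in the middle in both), must agree. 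Granting this, the presentation claim is immediate: the equations read off from $\theta$ are generated by those of $\lambda$ and $\sigma$, so the presentation of $\bcom \poi \PROPR \poi \wmon$ is the union of the presentations of $\bcom \poi \PROPR$, of $\wmon$, and of the $\theta$-equations, which is visibly the same set of axioms --- namely \eqref{eq:wmonunitlaw}--\eqref{eq:scalarbcounit} --- as the union of the presentations of $\PROPR \poi \wmon$, $\bcom \poi \PROPR$ and $\bcom \poi \wmon$.

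So everything reduces to the Yang--Baxter check. Because $\wmon$, $\PROPR$ and $\bcom$ are freely presented and the three laws are compatible with the monad multiplications and units, the hexagon reduces to a finite verification: it suffices to check it on triples $(g_M, g_R, g_C)$ with $g_M \in \{\Wmult, \Wunit\}$, $g_R = \scalar$ for $k \in \PID$, and $g_C \in \{\Bcomult, \Bcounit\}$ (the cases with identities and symmetries being trivial). In each case one rewrites the diagram $g_M \poi g_R \poi g_C$ into $\bcom\poi\PROPR\poi\wmon$-form along the two routes of the hexagon --- using \eqref{eq:scalarwmult}--\eqref{eq:scalarwunit} for $\sigma$, \eqref{eq:scalarbcomult}--\eqref{eq:scalarbcounit} for $\tau$, and \eqref{eq:unitsl}--\eqref{eq:bwbone} for $\lambda$ --- and checks that the two results agree by the axioms \eqref{eq:wmonunitlaw}--\eqref{eq:scalarbcounit}. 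For the generic triple $(\Wmult, \scalar, \Bcomult)$, both routes produce, after one use of the bialgebra law \eqref{eq:bialg} and of the scalar laws \eqref{eq:scalarwmult} and \eqref{eq:scalarbcomult}, the same diagram: two copies of $\Bcomult$, the evident rewiring, two copies of $\scalar$, two copies of $\Wmult$. The degenerate triples --- those involving $\Wunit$ or $\Bcounit$ --- collapse using \eqref{eq:bwbone}, \eqref{eq:scalarwunit} and \eqref{eq:scalarbcounit} and are easier still.

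There is no conceptual obstacle; the work is entirely the bookkeeping of the hexagon, and the one delicate point is that $\sigma$, $\tau$ and $\lambda$ are ``one-to-many''. Passing $\Wmult$ past a scalar via $\sigma$ duplicates the scalar, and passing $\Wmult$ past $\Bcomult$ via $\lambda$ replaces a single generator of each sort by a composite of two comultiplications and two multiplications (symmetrically for $\tau$). Consequently, in chasing the two sides of the hexagon one is comparing $2$-cells built from composite rather than atomic arrows, so the comparison has to be organised via naturality and the coherence of the component laws rather than a naive symbol-by-symbol match; arranging the induction so that it correctly tracks these duplications is the only real care needed. Faithfulness of the inclusions $\wmon \hookrightarrow \bcom\poi\PROPR\poi\wmon$ and $\bcom\poi\PROPR \hookrightarrow \bcom\poi\PROPR\poi\wmon$ then follows routinely from the factorisation, as in Lemma~\ref{lemma:threelaws}. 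Once the hexagon is verified, the existence of $\theta$ and the claimed presentation follow at once.
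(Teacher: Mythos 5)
Your proposal is correct and follows essentially the same route as the paper: you define $\theta$ as the pasting $\lambda_{\PROPR} \poi \bcom\sigma$, reduce everything to the Yang--Baxter compatibility of $\lambda$, $\sigma$, $\tau$ (Cheng's iterated distributive laws), verify the hexagon by case analysis on circuits/generators, and conclude that the presentation is the union of the three pairwise presentations. The only difference is expository: the paper additionally notes the equivalent pasting $\varphi = \PROPR\lambda \poi \tau_{\wmon}$ to make explicit that the equations of all three laws appear, whereas you recover $\tau$'s equations through the presentation of the composite $\bcom\poi\PROPR$, which amounts to the same thing.
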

\begin{proof} In \cite{Cheng_IteratedLaws} Cheng shows that the natural transformation $\theta\df \lambda_{\PROPR} \poi \bcom \sigma$  (or, equivalently, the natural transformation $\varphi \df \PROPR \lambda \poi \tau_{\wmon} \: (\PROPR \poi \wmon) \poi \bcom \To \bcom \poi (\PROPR \poi \wmon)$) is a distributive law yielding the monad $\bcom \poi \PROPR \poi \wmon$ provided that the three distributive laws $\lambda$, $\sigma$ and $\tau$ satisfy the Yang-Baxter compatibility condition. This is given by commutativity of the following diagram, which can be easily verified by case analysis on the circuits of $\wmon \poi \PROPR \poi \bcom$.
\[\xymatrix@R=2pt{
& \wmon \poi\bcom \poi\PROPR \ar[r]^{\lambda_\PROPR} & \bcom \poi\wmon \poi\PROPR \ar[dr]^{\bcom \sigma} &\\
\wmon \poi\PROPR \poi\bcom \ar[ur]^{\wmon \tau} \ar[dr]_{\sigma_{\bcom}} & & & \bcom \poi\PROPR \poi\wmon\\
& \PROPR \poi\wmon \poi\bcom \ar[r]^{\PROPR \lambda} & \PROPR \poi \bcom \poi\wmon \ar[ur]_{\tau_{\wmon}} &\\
}\]

As shown in \cite{Cheng_IteratedLaws}, the multiplication for the monad $\bcom \poi \PROPR \poi \wmon$ --- and thus composition in the PROP $\bcom \poi \PROPR \poi \wmon$ --- is equivalently defined by $\theta$ or $\varphi$. This means that the equations holding in $\bcom \poi \PROPR \poi \wmon$ are all those given by the distributive laws composing $\theta$ and $\varphi$, that is, $\lambda$, $\sigma$ and $\tau$. By the presentation of these laws given in Section~\ref{ssec:composingPROP} and Lemma~\ref{lemma:threelaws}, it follows that $\bcom \poi \PROPR \poi \wmon$ can be presented by the equations of $(\PROPR \poi \wmon) + (\bcom\poi\PROPR) + (\bcom \poi \wmon)$.
\qed
\end{proof}

The PROP $\PROPR$ only accounts for the multiplicative part of $\PID$. In order to describe also its additive component, and thus faithfully capture $\PID$-matrices, we need to quotient $\bcom \poi \PROPR \poi \wmon$ by two more equations.

\begin{definition} The PROP $\ABR$ is defined as the quotient of $\bcom \poi \PROPR \poi \wmon$ by the following equations, for all $k_1,k_2 \in \PID$:
\begin{multicols}{2}
\noindent
\begin{equation}
\label{eq:zeroscalar}
\tag{A17}
\lower7pt\hbox{$\includegraphics[height=.7cm]{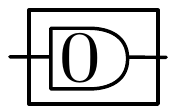}$}
=
\!\!
\lower11pt\hbox{$\includegraphics[height=1cm]{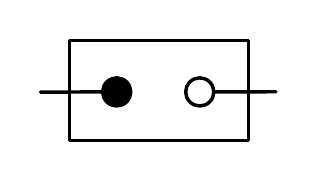}$}
\end{equation}
\begin{equation}
\label{eq:scalarsum}
\tag{A18}
\lower12pt\hbox{$\includegraphics[height=1cm]{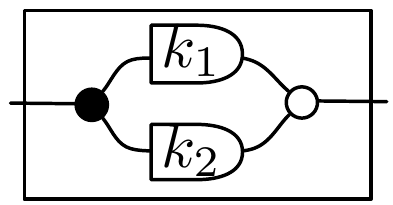}$}
=
\lower8pt\hbox{$\includegraphics[height=.8cm]{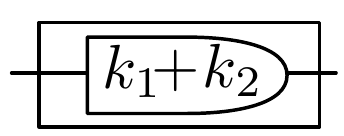}$}
\end{equation}
\end{multicols}
\end{definition}

\begin{remark} The name ``$\PID$-Hopf algebra'' is justified by the principal ideal domain $\Z$. Indeed, as we shall see in Section~\ref{sec:instances}, the presenatation of $\HA{\Z}$  consists of the usual equations of (commutative) Hopf algebras (see e.g.~\cite{Fiore2013,HopfRef1,HopfRef2}). Indeed, $\HA{\Z}$ can be presented by equations~\eqref{eq:wmonassoc}-\eqref{eq:bwbone} and those \eqref{eq:unitscalar}-\eqref{eq:scalarsum} where $k$ ranges over $\{-1,0,1\}$: $\scalarminusone$ is the \emph{antipode},  which we write $\antipode \df\! \scalarminusone$. The well-known \emph{Hopf law} holds in $\ABR$:
  \begin{equation} \label{eq:hopf}
  \tag{Hopf}
  \lower11pt\hbox{$\includegraphics[height=1.1cm]{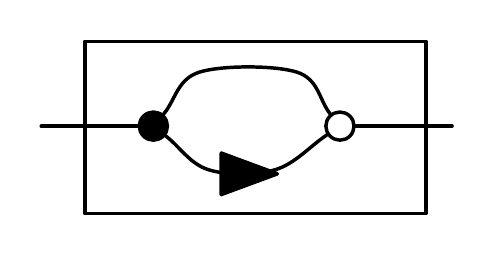}$}
  = \lower10pt\hbox{$\includegraphics[height=1cm]{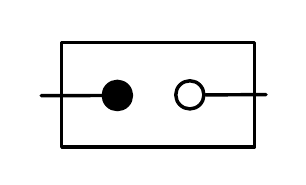}$}
  = \lower11pt\hbox{$\includegraphics[height=1.1cm]{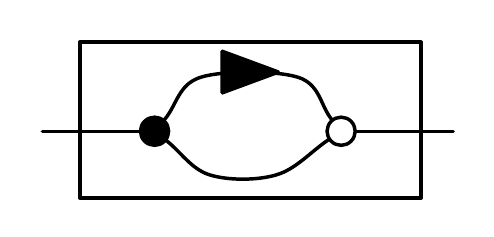}$}.
   \end{equation}
   \end{remark}

Any circuit in $\bcom \poi \PROPR \poi \wmon$, and therefore also any circuit in $\ABR$, can be factorised as $\tr{\in \bcom}\tr{\in \PROPR}\tr{\in \wmon}$. Moreover, by \eqref{eq:zeroscalar}-\eqref{eq:scalarsum}, we can assume that any port on the left $(i)$ has exactly one connection with any port on the right, and by \eqref{eq:scalarmult} and $(ii)$ that any such connection passes through exactly one scalar $\scalar$. In diagrams we will typically omit to draw $1$ scalars, by virtue of~\eqref{eq:unitscalar}, and omit the $0$ scalar by~\eqref{eq:zeroscalar}, leaving the ports in question disconnected. A circuit $b \poi s \poi w$ satisfying $(i)$ and $(ii)$ is said to be in \emph{matrix form} -- in such circuits we say that there is a \emph{$k$-path from $i$ to $j$} if $k$ is the scalar on the path from the $i$th port on the left to the $j$th port on the right, assuming a top-down enumeration. Circuits in matrix form have an obvious representation as $\PID$-matrices, as illustrated below.

\begin{example}\rm\label{ex:matrixform} Consider the circuit $t \in \ABR[3,4]$ (on the right) and its representation as a $4 \times 3$ matrix (on the left).

\begin{minipage}[c]{0.5\textwidth}
  \centering \includegraphics[width=70pt]{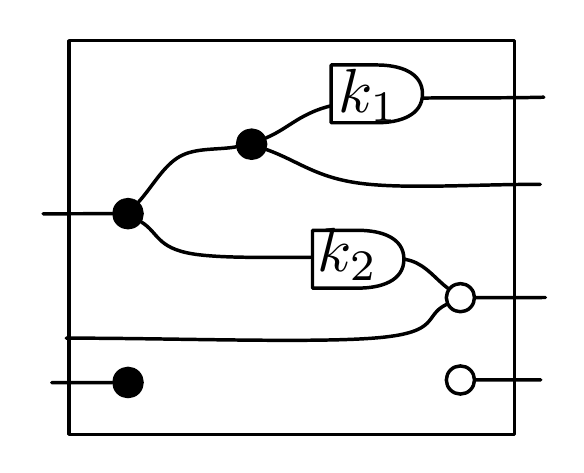}
 \end{minipage}
 \begin{minipage}[c]{0.3\textwidth}
 \centering
  $M = {\scriptsize \left(%
\begin{array}{ccc}
  k_1 & 0 & 0 \\
  1 & 0 & 0 \\
  k_2 & 1 & 0 \\
  0 & 0 & 0
\end{array}\right)}$
 \end{minipage}

\noindent
Note $M_{ij}=k$ exactly when there is a $k$-path from $j$ to $i$.
\end{example}

\noindent We will often write $\circuitAdots$ for the circuit, in matrix form, corresponding to a matrix $A$. We now make the matrix semantics of circuits in $\ABR$ formal: write $\VectR$ for the PROP whose arrows $n \to m$ are $m \times n$-matrices over $\PID$,  where $;$ is matrix multiplication and $A \tns B$ is the matrix $\tiny{\left(%
                \begin{array}{cc}
                 \!\!\! A \!\! & \!\! 0 \!\!\!\\
                 \!\!\! 0 \! \!& \!\! B\!\!\!
                \end{array}\right)}$. The symmetries are permutation matrices. Given matrices $A \: n \to z$, $B \: m \to z$, $C \: r \to n$ and $D \: r \to m$, we write $(A\, | \, B) \: n + m \to z$ and $(\frac{C}{D}) \: z \to n + m$ for the matrices given by universal property of the biproduct $n+m$: the notation reflects the way these matrices are constructed.

\begin{definition}\label{def:sem}
The morphism $\sem{\ABR} \: \ABR \to \VectR$ is defined inductively:
\begin{align*}
                    \Wunit \mapsto\  \initVect &&
                    \Bcounit\mapsto\ \finVect
                     &&
                   \Wmult \mapsto   {\scriptsize\left(%
                \begin{array}{cc}
                \!\!\!  1 \! &\!\! 1 \!\!\!
                \end{array}\right)}
                &&
                 \Bcomult \mapsto\ \tiny{\left(%
                \begin{array}{c}
                 \!\! 1 \!\!\\
                 \!\! 1\!\!
                \end{array}\right)}
                &&
                \scalar \mapsto\ {\scriptsize  \left(%
                \begin{array}{c}
                 \!\!\! k\!\!\!
                \end{array}\right)}
            \end{align*}
            \vspace{-.6cm}
            \begin{align*}
                   s\tns t  \mapsto  \sem{\ABR}(s) \tns\sem{\AB}(t)
                   &&
                  s \poi t \mapsto\ \sem{\ABR}(s) \poi \sem{\AB}(t)
                  &&
                 \end{align*}
                 where $\initVect \: 0 \to 1$ and $\finVect \: 1 \to 0$ are the unique arrows given by univeral properties of $0$ in $\VectR$. 
                 It follows that $\sem{\ABR}$ is well-defined, as it respects the equations of $\ABR$. 
\end{definition}

The following folklore result is of central importance for the original technical developments in this paper.
\begin{proposition}\label{prop:ab=vect} $\sem{\ABR} \: \ABR \to \VectR$ is an isomorphism of PROPs.
\end{proposition}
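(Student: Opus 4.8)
The plan is to establish that $\sem{\ABR}$ is an isomorphism by exhibiting a two-sided inverse, using the matrix-form normal form for circuits of $\ABR$ already described above. Since $\sem{\ABR}$ is an identity-on-objects strict symmetric monoidal functor, it suffices to show that its action on hom-sets $\ABR[n,m] \to \VectR[n,m]$ is a bijection for all $n,m$. Surjectivity is immediate from the matrix-form discussion: given any $m \times n$ matrix $M$ over $\PID$, the circuit $\circuitAdots$ obtained by placing a scalar $\scalar$ on the wire from port $j$ to port $i$ whenever $M_{ij} = k$ (composed appropriately with copies of $\Bcomult$, $\Bcounit$, $\Wmult$, $\Wunit$) is a circuit in matrix form, and one checks from Definition~\ref{def:sem} — using that $\sem{\ABR}$ is monoidal and functorial, and that $\sem{\ABR}$ sends $\Bcomult, \Bcounit, \Wmult, \Wunit, \scalar$ to the evident generating matrices — that $\sem{\ABR}(\circuitAdots) = M$. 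Concretely, $\Bcomult^{(n)} \poi \ldots \poi \Wmult^{(m)}$ with scalars inserted realises the matrix whose $(i,j)$ entry records the scalar on the corresponding path, because matrix multiplication of the "fan-out", "scalar", and "fan-in" pieces computes exactly this.

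For injectivity, the key step is to show that \emph{every} circuit of $\ABR[n,m]$ is equal, using the axioms \eqref{eq:wmonassoc}--\eqref{eq:scalarsum}, to a circuit in matrix form, and that the matrix-form representative is \emph{unique}. Existence of the matrix form is essentially already argued in the text: the composite distributive-law factorisation gives any circuit as $\tr{\in\bcom}\tr{\in\PROPR}\tr{\in\wmon}$, and then \eqref{eq:zeroscalar}, \eqref{eq:scalarsum} let us arrange that each left port connects to each right port through exactly one wire, while \eqref{eq:scalarmult} (together with \eqref{eq:unitscalar}) collapses each such wire to carry exactly one scalar. So the content is uniqueness: if two circuits in matrix form have the same image under $\sem{\ABR}$, they determine the same matrix $M$ (by the computation in the surjectivity part, read backwards: the $k$-path data is recoverable from $M$), hence are literally the same circuit up to the bookkeeping permutations, which are themselves identified in the PROP. Combining: if $\sem{\ABR}(c) = \sem{\ABR}(c')$, reduce both to matrix form $c \eql{} \circuitAdots[M]$, $c' \eql{} \circuitAdots[M']$; applying $\sem{\ABR}$ gives $M = \sem{\ABR}(\circuitAdots[M]) = \sem{\ABR}(c) = \sem{\ABR}(c') = M'$, so $c = c'$ in $\ABR$. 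Finally one verifies $c \mapsto \circuitAdots[\sem{\ABR}(c)]$ and $M \mapsto \sem{\ABR}(\circuitAdots[M])$ are mutually inverse, and functoriality/monoidality of this assignment follows since $\sem{\ABR}$ itself is a PROP morphism and it is bijective on each hom-set.

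The main obstacle I expect is the uniqueness of the matrix form — that the normalisation procedure is confluent, i.e. that no two genuinely different matrices arise from the same circuit. The clean way around this is precisely the argument above: one does \emph{not} need confluence of a rewriting system, because $\sem{\ABR}$ provides an external invariant. One shows (a) $\sem{\ABR}$ is well-defined (stated in Definition~\ref{def:sem}), (b) every circuit reduces to \emph{some} matrix form, and (c) on matrix forms $\sem{\ABR}$ is injective by the explicit path-reading computation. Then (b)+(c) give injectivity of $\sem{\ABR}$ on all of $\ABR$ for free, and surjectivity is (a) applied to matrix forms. The only genuinely computational lemma is that for circuits $b\poi s\poi w$ and $b'\poi s'\poi w'$ in matrix form, $\sem{\ABR}$ composes them to the product of the associated matrices and that $\sem{\ABR}(\circuitAdots[M])$ has $(i,j)$-entry equal to the scalar on the $i$-to-$j$ path of $\circuitAdots[M]$, which is the matrix-multiplication identity $(\text{fan-in}) \cdot (\text{diag of scalars}) \cdot (\text{fan-out}) = M$; this is routine given Definition~\ref{def:sem}.
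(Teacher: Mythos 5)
Your proposal is correct and follows essentially the same route as the paper: since $\sem{\ABR}$ is identity-on-objects, one checks fullness/surjectivity by realising any matrix as a circuit in matrix form, and faithfulness/injectivity by reducing every circuit to matrix form via the factorisation $\tr{\in\bcom}\tr{\in\PROPR}\tr{\in\wmon}$ and equations \eqref{eq:unitscalar}, \eqref{eq:scalarmult}, \eqref{eq:zeroscalar}, \eqref{eq:scalarsum}, and then observing that $\sem{\ABR}$ is injective on matrix-form circuits. Your path-reading argument for that last step is just a more explicit phrasing of the paper's ``induction on $n$, $m$''.
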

\begin{proof}
Since $\sem{\ABR}$ is identity-on-objects, it suffices to prove that $\sem{\ABR}$ is full and faithful. Fullness is immediate: given a matrix $M$, it is clear how to generalise the procedure described in Example~\ref{ex:matrixform} in order to obtain a circuit in matrix form that maps via $\sem{\ABR}$ to $M$.
For faithfulness, recall that any circuit of $\ABR$ can be first factorised as $\tr{\in \bcom}\tr{\in \PROPR}\tr{\in \wmon}$ and then put in matrix form. Therefore, it suffices to check that, for $c,d \: n \to m$ in matrix form, $\sem{\ABR}(c) = \sem{\ABR}(d)$ implies $c=d$. This follows by induction on $n$, $m$. 
\qed
\end{proof}
%
%
%

We are  interested in the interaction of $\ABR$ with its opposite $\ABRop$, which we now briefly describe. Circuits of $\ABRop$ are represented as those of $\ABR$ reflected about the $y$-axis, that means, $\ABRop$ is freely obtained by generators $\{\Bunit , \Bmult , \Wcounit , \Wcomult , \coscalar \mid k \in \PID\}$ and equations \eqref{eq:wmonunitlaw}-\eqref{eq:scalarsum} ``in the mirror'', which we indicate with \eqref{eq:wmonunitlaw}$^{\op}$-\eqref{eq:scalarsum}$^{\op}$. The duality between $\ABR$ and $\ABRop$ is witnessed by the obvious contravariant morphism $(\cdot)^{\star} \: \ABR \to \ABRop$.

 The PROP $\ABRop$ is isomorphic to $\VectRop$ via $\sem{\ABR}^{\op} \: \ABRop \to \VectRop$. This means that, since $\sem{\ABR}$ maps $\Bcomult$ to ${\tiny \matrixOneOne} \in \VectR[1,2]$, then $\sem{\ABR}^{\op}$ maps $\Bmult$ to ${\tiny \matrixOneOne} \in \VectRop[2,1]$. Therefore, one should intuitively follow the same procedure of Example~\ref{ex:matrixform} to compute the matrix of a circuit in $\ABRop$, but reading the circuit from right to left --- meaning that columns are ports on the right boundary and rows are ports on the left boundary.
We shall draw \lower4pt\hbox{$\includegraphics[height=.6cm]{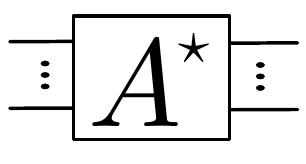}$} for the image under $(\cdot)^{\star}$ of the circuit representation of the matrix $A$. 

\section{Generalising Distributive Laws by Pullback and Pushout}\label{sec:distrLawPullback}

The theory of interacting Hopf algebras
is obtained by combining $\ABR$ and $\ABRop$ using the techniques introduced in Section~\ref{ssec:composingPROP}. For this, the original formulation by Lack~\cite{Lack2004a} is too restrictive. Recall that Lack identifies PROPs $\T$ and $\PS$ with monads on $\Perm$ in $\mathsf{Prof}(\mathbf{Mon})$. The 1-cell $\T  \poi \PS$ consists of pairs $\tr{f\in \T}\tr{g\in \PS}$, where $(f,g)$ and $(f',g')$ are identified if they are ``equal up-to permutation'', i.e. if $\exists$ an arrow $\pi$ in $\Perm$ such that $f \poi \pi = f'$ and $\pi \poi g' = g$. This, roughly speaking, 
amounts to identifying the permutations of $\T$ and $\PS$.
 
 This is the case for the distributive law $\F \poi \Fop \To \Fop \poi \F$ defined by pullback in $\F$ (Section~\ref{ssec:composingPROP}). However, this is a particularly fortunate situation: (co)limits in $\F$  are unique up-to permutation, because in $\F$ permutations happen to coincide with the isomorphisms. 
 In general, isomorphisms in an arbitrary PROP $\T$ include but do not coincide with permutations. This prevents us from using pullbacks to obtain a distributive law $\T \poi \Top \To \Top \poi \T$ in the sense of Lack: given $\tr{f \in \T}\tr{g \in \Top}$ and $\tr{f' \in \T}\tr{g' \in \Top}$ equal up-to permutation, pulling back $\tr{f}\tl{g}$ and $\tr{f'}\tl{g'}$ yields isomorphic pairs of arrows, but the witnessing iso is not necessarily a permutation. 

We therefore propose a mild generalisation of Lack's approach that allows us to consider distributive laws by pullback (resp.\ pushout) for an arbitrary PROP $\T$ with pullbacks (resp.\ pushouts). 
For distributive laws involving $\T$ and $\T^{\op}$, we need to identify more structure shared by the two PROPs, namely the sub-PROP $\PJ$ (called the \emph{core} of $\T$) whose arrows are the isomorphisms in $\T$. Formally, this amounts to view PROPs $\T$ and $\T^{\op}$ not as monads on $\Perm$ but rather on $\PJ$. Then, composites $\T \poi \T^{\op}$ and $\T^{\op} \poi \T$ in $\mathsf{Prof}(\mathbf{Mon})$ will identify composable pairs of arrows when they are equal up-to an arrow of $\PJ$, i.e., up-to iso in $\T$. 

\begin{proposition}\label{prop:distrLawPbPo} Let $\T$ be a PROP and $\PJ$ the core of $\T$.
\begin{enumerate}[itemsep=.5ex]
\item $\T$ and $\T^{\op}$ are monads on $\PJ$ in $\mathsf{Prof}(\mathbf{Mon})$.
    \item If $\T$ has pullbacks, there is a distributive law of type $\T \poi \T^{\op} \To \T^{\op} \poi \T$, defined by pullback, yielding the PROP $\T^{\op}\poi \T \cong \Span{\T}$.
    \item If $\T$ has pushouts, there is a distributive law of type $\T^{\op} \poi \T \To \T \poi \T^{\op}$, defined by pushout, yielding the PROP $\T\poi \T^{\op} \cong \Cospan{\T}$.
\end{enumerate}
\end{proposition}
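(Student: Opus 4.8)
The plan is to follow Lack's programme from~\cite{Lack2004a}, but relative to the core $\PJ$ rather than $\Perm$; the three clauses are established in turn, with clause~(1) providing the categorical setting that makes clauses~(2) and~(3) possible.

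For clause~(1), I would first check that $\PJ$---the wide subcategory of $\T$ consisting of the isomorphisms---is itself a PROP. It is a strict symmetric monoidal subcategory with the natural numbers as objects and $\tns$ as addition on objects, since an iso $\tns$ an iso is an iso and permutations are iso. Then, exactly as Lack identifies a PROP with a monad on $\Perm$ in $\mathsf{Prof}(\mathbf{Mon})$, I would observe that the same colax-monoidal-functor presentation exhibits $\T$ as a monad on $\PJ$: a PROP $\T$ comes with an identity-on-objects map $\PJ \to \T$ (the inclusion of isos), and the monad structure on the profunctor $\PJ(-,-) \mapsto \T(-,-)$ is the composition and identities of $\T$, with the monad laws being the category axioms of $\T$. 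The same for $\T^{\op}$, whose isos are the inverses of those of $\T$, so $\T^{\op}$ has the \emph{same} core $\PJ$ (an iso is its own witness of being an iso in the opposite category, up to inversion), which is precisely what makes it possible to compose $\T$ and $\T^{\op}$ as monads on a common object.

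For clause~(2), the composite $\T^{\op} \poi \T$ in $\mathsf{Prof}(\mathbf{Mon})$ has as arrows $n \to m$ the pairs $n \tl{f \in \T} z \tr{g \in \T} m$ (a pair in $\T^{\op}$ then one in $\T$), quotiented by the coend identification: $(f,g) \sim (f',g')$ iff there is $\varphi \in \PJ$, i.e.\ an iso $\varphi \: z \to z'$ in $\T$, with $f' \varphi = f$ and $g' \varphi = g$ --- this is exactly the span-equivalence, so $\T^{\op} \poi \T \cong \Span{\T}$ as a mere graph/profunctor. I would then define $\lambda \: \T \poi \T^{\op} \To \T^{\op} \poi \T$ on a composable pair $z \tr{f \in \T} n$, $z \tr{g \in \T} m$ --- i.e.\ a cospan --- by sending it to a chosen pullback span $n \tl{p} w \tr{q} m$; this is independent of the choice of pullback precisely up to iso in $\T$, hence well-defined into the quotient $\T^{\op} \poi \T = \Span{\T}$. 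The remaining work is to verify the four axioms of a distributive law (the two unit and two multiplication coherence squares of~\cite[Def.]{Lack2004a}): the unit axioms say that pulling back a cospan one leg of which is an identity returns the other leg, which holds since identities are mono; the multiplication axioms amount to the pasting lemma for pullbacks (a pullback of a pullback is a pullback) together with the fact that composition of spans is itself defined by pullback --- so the two ways of turning a $\T$-$\T^{\op}$-$\T$-$\T^{\op}$ zig-zag into span-then-cospan agree up to iso. Invoking~\cite[Th.]{Lack2004a}, $\lambda$ then makes $\T^{\op} \poi \T$ a PROP whose composition is span composition, giving $\T^{\op} \poi \T \cong \Span{\T}$ as PROPs.

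For clause~(3), I would simply dualise: apply clause~(2) to $\T^{\op}$, whose pushouts are $\T$'s pullbacks-in-the-opposite, noting $(\T^{\op})^{\op} = \T$ and that $\Span{\T^{\op}} = \Cospan{\T}$, to obtain a distributive law $\T \poi \T^{\op} \To \T^{\op}\poi\T$ for the theory $\T^{\op}$, which reads as $\T^{\op} \poi \T \To \T \poi \T^{\op}$ for $\T$, yielding $\T \poi \T^{\op} \cong \Cospan{\T}$.

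\textbf{Main obstacle.} The delicate point is clause~(2)'s well-definedness and the multiplication axioms: one must be careful that replacing $\Perm$ by $\PJ$ in the coend really does make the pullback-choice ambiguity disappear (this is the whole point of the preceding discussion in Section~\ref{sec:distrLawPullback}), and that the Yang--Baxter-style pasting coherences go through with isomorphisms rather than permutations as the mediating 2-cells. Once the bookkeeping of ``equal up to an arrow of $\PJ$'' is set up correctly, the verification is the standard pullback-pasting argument, but getting that bookkeeping exactly right --- rather than any genuinely new mathematics --- is where the care is needed.
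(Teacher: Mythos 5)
Your proposal is correct and takes essentially the same route the paper intends: the paper's own proof merely states that checking commutativity of the relevant diagrams is routine and defers the details to \cite[Sec. 2.4.5]{ZanasiThesis}, and your plan---viewing $\T$ and $\T^{\op}$ as monads on the common core $\PJ$, defining the law by chosen pullback (well-defined into the coend quotient because pullbacks are unique up to iso, which is exactly what replacing $\Perm$ by $\PJ$ buys), verifying the unit and multiplication axioms by pullback pasting, and dualising for the cospan case---is precisely that verification. One small nit: the unit axioms hold because a square two of whose opposite sides are identities is automatically a pullback, not because identities are mono.
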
 
\begin{proof}It is routine to check commutativity of the relevant diagrams. The reader may consult~\cite[Sec. 2.4.5]{ZanasiThesis} for the details. \qed\end{proof}
\begin{remark} Rosebrugh and Wood~\cite{RosebrRWood_fact} consider distributive laws of categories and investigate distributive laws by pullback and pushout. To do this, they propose to relax the definition of distributive law so that the associated conditions are required to hold up-to an arrow of a fixed groupoid $\PJ$ (in our case, $\PJ$ is the core of $\T$). This yields a bicategory as the result, which can be turned into a category by quotienting hom-sets by equivalence up-to $\PJ$.

This construction does not work for PROPs: differently from categories,
 distributive laws of PROPs need to be well-defined as mappings between equivalence classes of pairs of arrows equal up-to permutation, which as explained above is not guaranteed for the case of pullback and pushout. Our mild generalisation of Lack's approach handles this challenge while staying
within the confines of the standard notion of distributive law.
\end{remark}

\section{Interacting Hopf Algebras I: the Theories of Spans and Cospans of Matrices}\label{sec:ibrw}

In this section we commence the exploration of several theories that arise from
composing $\ABR$ with $\ABRop$, which is the main focus and contribution of this work.
Collectively, we refer to them as \emph{interacting Hopf algebras}.

We first introduce $\IBRw$ --- the superscript $\SupSpan$ represents the fact that $\IBRw$ will be shown to be the theory of \emph{spans} of $\PID$-matrices. In \S~\ref{sec:IBRbCospan} we introduce $\IBRb$, which will be shown to be the theory of \emph{cospans} of $\PID$-matrices.



\begin{definition}\label{def:IBRw} The PROP $\IBRw$ is the quotient of $\ABR + \ABRop$ by the following  equations, where $l$ is any non-zero element and $k$ any element of $\PID$.
\begin{multicols}{2}\noindent
 \begin{equation}
\label{eq:lcm}
\tag{W1}
\lower7pt\hbox{$\includegraphics[height=.7cm]{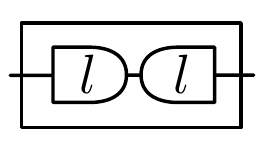}$}
=
\lower6pt\hbox{$\includegraphics[height=.6cm]{graffles/idcircuit.pdf}$}
\end{equation}
\begin{equation}
\label{eq:wbone}
\tag{W2}
\lower4pt\hbox{$\includegraphics[height=.5cm]{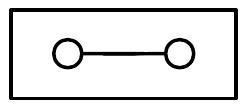}$}
=  \lower4pt\hbox{$\includegraphics[height=.5cm]{graffles/idzerocircuit.pdf}$}
\end{equation}
\end{multicols}
 \begin{multicols}{2}\noindent
\begin{equation}
\label{eq:WFrob}
\tag{W3}
\lower12pt\hbox{$\includegraphics[height=1.2cm]{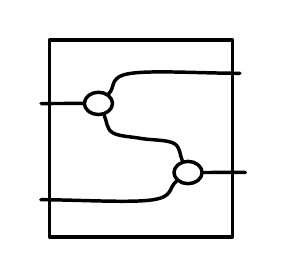}$}
\!\!
=
\!\!
\lower10pt\hbox{$\includegraphics[height=1cm]{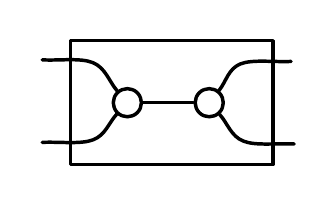}$}
\!\!
=
\!\!
\lower12pt\hbox{$\includegraphics[height=1.2cm]{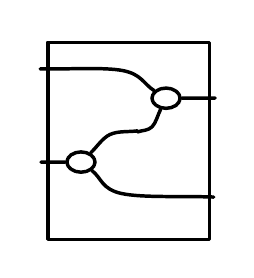}$}
\end{equation}
\begin{equation}
\label{eq:BFrob}
\tag{W4}
\lower12pt\hbox{$\includegraphics[height=1.2cm]{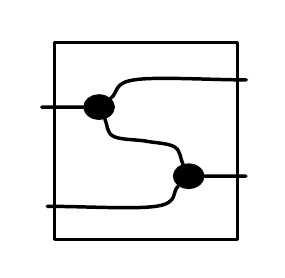}$}
\!\!
=
\!\!
\lower10pt\hbox{$\includegraphics[height=1cm]{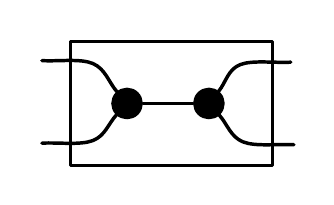}$}
\!\!
=
\!\!
\lower12pt\hbox{$\includegraphics[height=1.2cm]{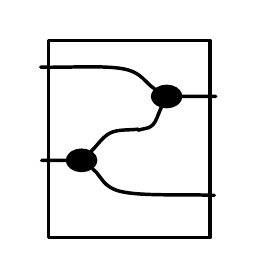}$}
\end{equation}
\end{multicols}
 \begin{multicols}{2}\noindent
\begin{equation}
\label{eq:lcc}
\tag{W5}
\lower12pt\hbox{$\includegraphics[height=1cm]{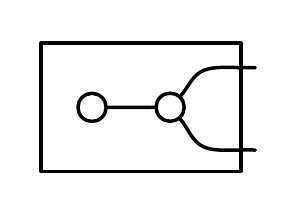}$}
\!\!
=
\!\!
\lower12pt\hbox{$\includegraphics[height=1cm]{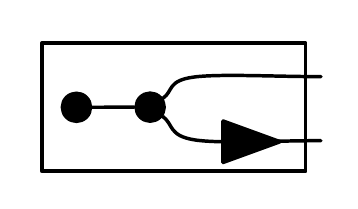}$}
\end{equation}
\begin{equation}
\label{eq:rcc}
\tag{W6}
\lower12pt\hbox{$\includegraphics[height=1cm]{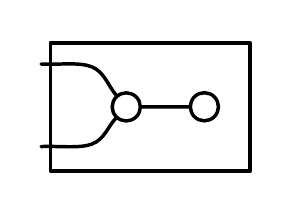}$}
\!\!
=
\!\!
\lower12pt\hbox{$\includegraphics[height=1cm]{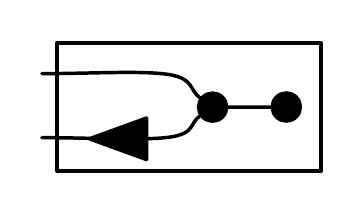}$}
\end{equation}
\end{multicols}
\begin{multicols}{2}\noindent
\begin{equation}
\label{eq:BccscalarAxiomOne}
\tag{W7}
\lower9pt\hbox{$\includegraphics[height=.8cm]{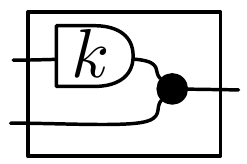}$}
=
\lower9pt\hbox{$\includegraphics[height=.8cm]{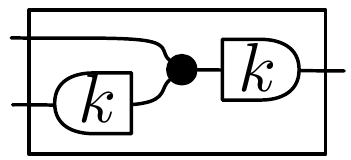}$}
\end{equation}
\begin{equation}
\label{eq:BccscalarAxiomTwo}
\tag{W8}
\lower9pt\hbox{$\includegraphics[height=.8cm]{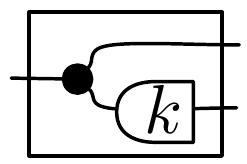}$}
=
\lower9pt\hbox{$\includegraphics[height=.8cm]{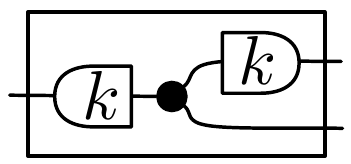}$}
\end{equation}
\end{multicols}
\end{definition}

We fix notation $\sigma_1 \: \ABR \to \IBRw$ and $\sigma_2 \: \ABRop \to \IBRw$ for the PROP morphisms interpreting circuits of $\ABR$ and $\ABRop$, respectively, as circuits of $\IBRw$. Syntactically speaking, the generators of $\ABR$ together with those of $\ABRop$ are also the generators of $\IBRw$ and therefore we will often abuse notation by confusing $c$ in $\ABR$ with $\sigma_1(c)$ in $\IBRw$, and the same for $\ABRop$.

\medskip

The following are some of the derived laws of $\IBRw$, where $k$ is any element and $l$ any non-zero element of $\PID$ (\emph{cf.}~\ref{AppDerLaws}). In~\eqref{eq:wunitcancelbcomult} below and in the sequel, we shall use the shorthand notation $\lower7pt\hbox{$\includegraphics[height=18pt]{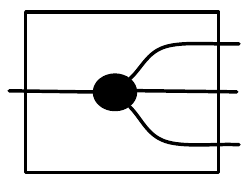}$}$ for the comultiplication $\lower7pt\hbox{$\includegraphics[height=18pt]{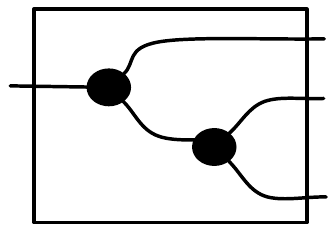}$}$ from $1$ to $3$, and more generally $\lower7pt\hbox{$\includegraphics[height=18pt]{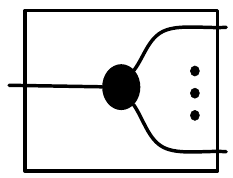}$}$ for the one from $1$ to an arbitrary $n$. This convention is harmless by~\eqref{eq:bcomonassoc}. We will adopt an analogous notation for multiplications $\Wmult$ of arity bigger than $2$.

\begin{multicols}{2}\noindent
\begin{equation}
\label{eq:lccb}
\tag{D1}
\lower7pt\hbox{$\includegraphics[height=.7cm]{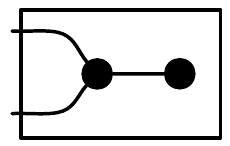}$}
\!
=
\!
\lower7pt\hbox{$\includegraphics[height=.7cm]{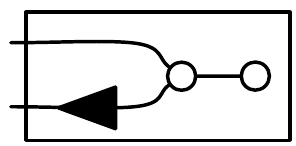}$}
\end{equation}
\begin{equation}
\label{eq:rccb}
\tag{D2}
\lower7pt\hbox{$\includegraphics[height=.7cm]{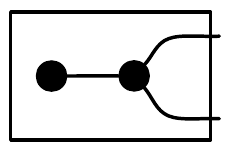}$}
\!
=
\!
\lower7pt\hbox{$\includegraphics[height=.7cm]{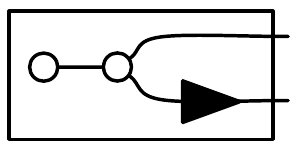}$}
\end{equation}
\end{multicols}
\begin{multicols}{2}\noindent
\begin{equation}
\label{eq:uniqueantipode}
\lower3pt\hbox{$
\tag{D3}
\lower5.5pt\hbox{$\includegraphics[height=.55cm]{graffles/antipode.pdf}$}
=
\lower5.5pt\hbox{$\includegraphics[height=.55cm]{graffles/antipodeop.pdf}$}
$}
\end{equation}
\begin{equation}\label{eq:QFrob}
\tag{D4}
\lower12pt\hbox{$\includegraphics[height=.9cm]{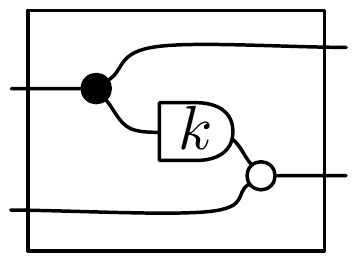}$} =
\lower12pt\hbox{$\includegraphics[height=.9cm]{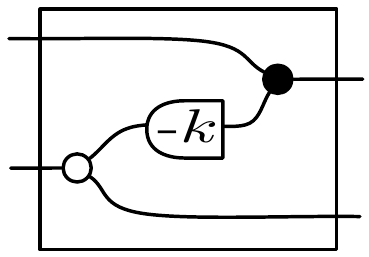}$}
\end{equation}
\end{multicols}
\begin{multicols}{2}\noindent
\begin{equation}
\label{eq:coscalarwunit}
\tag{D5}
\lower8pt\hbox{$\includegraphics[height=.7cm]{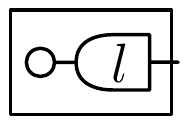}$}
=
\lower6pt\hbox{$\includegraphics[height=.6cm]{graffles/Wunit.pdf}$}
\end{equation}
\begin{equation}
\label{eq:scalarwcounit}
\tag{D6}
\lower7pt\hbox{$\includegraphics[height=.7cm]{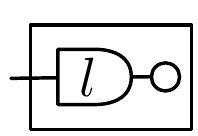}$}
 =
\lower6pt\hbox{$\includegraphics[height=.6cm]{graffles/Wcounit.pdf}$}
\end{equation}
\end{multicols}
\begin{multicols}{2}\noindent
\begin{align}
\label{eq:coscalarbcomult}
\tag{D7}
\lower10pt\hbox{$\includegraphics[height=.9cm]{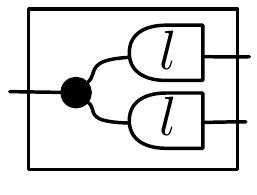}$}
=
\lower8pt\hbox{$\includegraphics[height=.8cm]{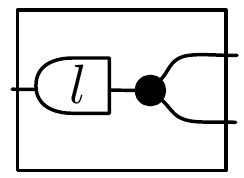}$}
\end{align}
\begin{equation}
\label{eq:scalarbmult}
\tag{D8}
\lower8pt\hbox{$\includegraphics[height=.8cm]{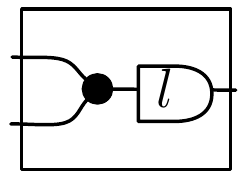}$}
=
\lower7.5pt\hbox{$\includegraphics[height=.9cm]{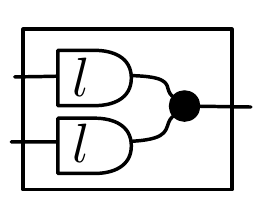}$}
\end{equation}
\end{multicols}
\noindent
\begin{equation}\label{eq:papillon}
\tag{D9}
\lower12pt\hbox{$\includegraphics[width=2.6cm]{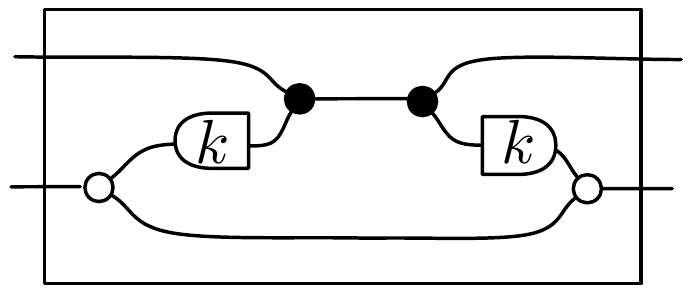}$}
=
\lower12pt\hbox{$\includegraphics[height=1.1cm]{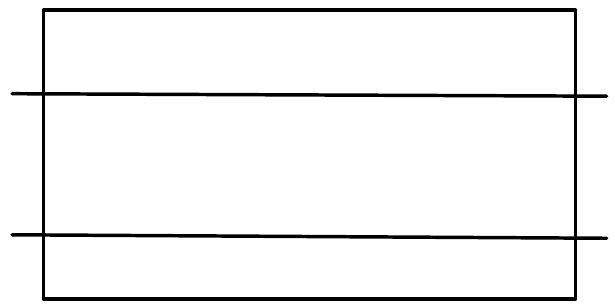}$}
 =
\lower12pt\hbox{$\includegraphics[width=2.4cm]{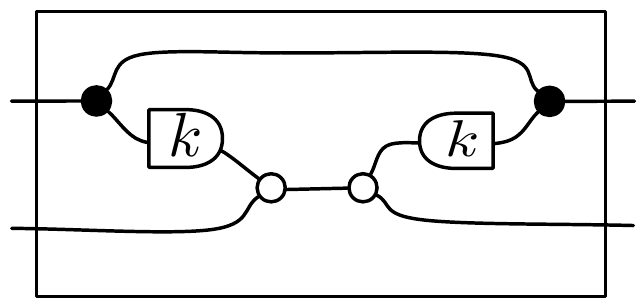}$}
\end{equation}
\vspace{-.9cm}\linebreak
\begin{multicols}{2}\noindent
\begin{align}
\label{eq:wunitcancelbcomult}
\tag{D10}
\lower9pt\hbox{$\includegraphics[height=.9cm]{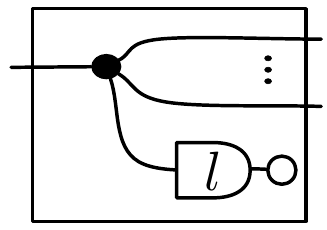}$}
=
\lower7pt\hbox{$\includegraphics[height=.7cm]{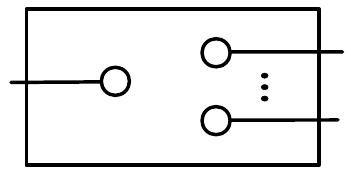}$}
\end{align}
\begin{align}
\label{eq:Bsep}
\tag{D11}
\lower6pt\hbox{$\includegraphics[height=.7cm]{graffles/BSep.pdf}$}
=
\lower6pt\hbox{$\includegraphics[height=.7cm]{graffles/idcircuit.pdf}$}
\end{align}
\end{multicols}
Equation~\eqref{eq:uniqueantipode} states that the antipodes of $\ABR$ and $\ABRop$ coincide in $\IBRw$, which allows us to use the same notation $\antipodesquare$ for the two of them. Also observe that, because of \eqref{eq:BFrob} and \eqref{eq:Bsep}, the black structure in $\IBRw$ forms a separable Frobenius algebra~\cite{Carboni1987}. The white structure, by \eqref{eq:WFrob}, also forms a Frobenius algebra that however is not separable, that is, the equation $\WSep\!\! = \idcircuit$ is not present. The situation is dual (separability for the white but not for the black structure) for $\IBRb$ investigated in Section~\ref{sec:IBRbCospan}.

\subsection{Compact Closed Structure of $\IBRw$}\label{sec:cc}


The PROP $\IBRw$ enjoys a self-dual compact closed structure~\cite{kelly1980compactclosed}:
we associate $n$ with circuits $\eta_n \: 0 \to n + n$ and $\epsilon_n \: n+n \to 0$ defined by induction as follows:
\begin{multicols}{2}
\noindent
 \begin{eqnarray*}
  \hspace{-2cm}
   \lower18pt\hbox{$
   \alpha_0 \: 2 \to 2 \df  
 \lower5pt\hbox{$\includegraphics[height=.6cm]{graffles/symmetryalt.pdf}$}
   $}
 \end{eqnarray*}
  \begin{eqnarray*}
   \hspace{-2cm}
   \alpha_{n+1} \: 2(n+1) \to 2(n+1) \df \lower22pt\hbox{$\includegraphics[height=1.8cm]{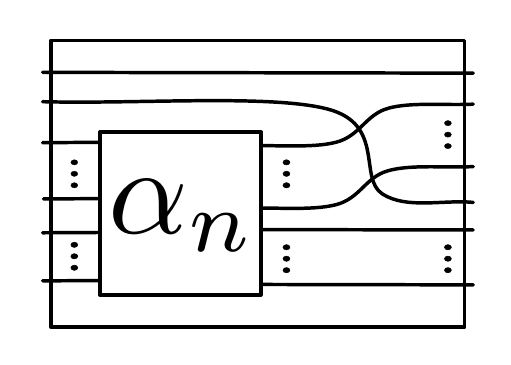}$}
 \end{eqnarray*}
 \end{multicols}
 \begin{multicols}{2}
\noindent
  \begin{eqnarray*}
  \hspace{-1.6cm}
       \lower12pt\hbox{$
   \eta_0 \: 0 \to 0 \df \lower4pt\hbox{$\includegraphics[height=.5cm]{graffles/idzerocircuit.pdf}$} 
   $}
   \end{eqnarray*}
  \begin{eqnarray*}
  \hspace{-2cm}
   \eta_{n+1} \: 0 \to 2(n+1)   \df  \lower22pt\hbox{$\includegraphics[height=1.6cm]{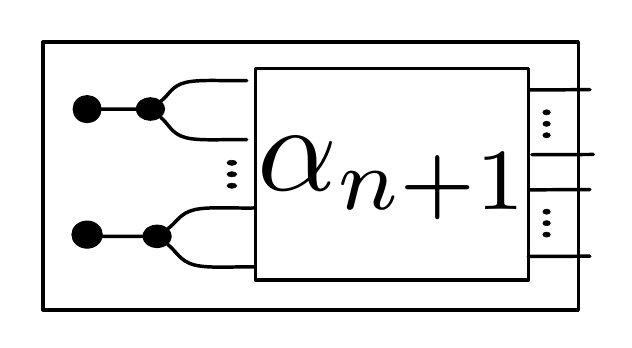}$}
   \end{eqnarray*}
   \end{multicols}
 \begin{multicols}{2}
\noindent
  \begin{eqnarray*}
  \hspace{-2cm}
     \lower12pt\hbox{$
   \beta_0 \: 2 \to 2  \df  
    \lower5pt\hbox{$\includegraphics[height=.6cm]{graffles/symmetryalt.pdf}$}
   $}
    \end{eqnarray*}
  \begin{eqnarray*}
  \hspace{-2cm}
   \beta_{n+1} \: 2(n+1) \to 2(n+1)  \df \lower22pt\hbox{$\includegraphics[height=1.6cm]{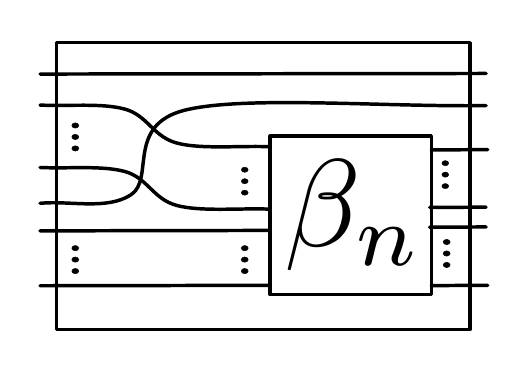}$} \end{eqnarray*}
   \end{multicols}
 \begin{multicols}{2}
\noindent
  \begin{eqnarray*}
  \hspace{-1.5cm}
       \lower18pt\hbox{$
   \epsilon_0 \: 0 \to 0   \df \lower4pt\hbox{$\includegraphics[height=.5cm]{graffles/idzerocircuit.pdf}$}
   $}
    \end{eqnarray*}
  \begin{eqnarray*}
  \hspace{-2cm}
   \epsilon_{n+1} \: 2(n+1) \to 0   \df  \lower22pt\hbox{$\includegraphics[height=1.8cm]{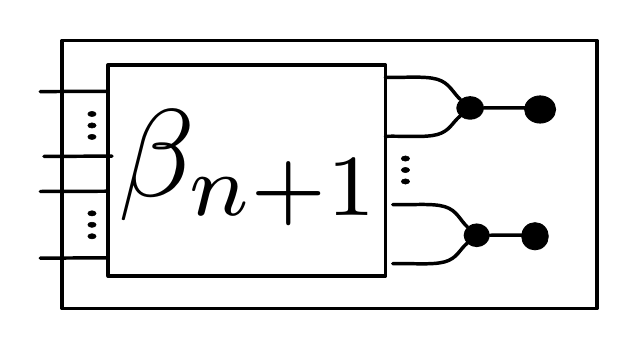}$}
   \end{eqnarray*}
 \end{multicols}

\noindent The first three instances of $\eta_n$ are:
  \begin{multicols}{3}
\noindent
  \begin{eqnarray*}
       \lower20pt\hbox{$
   \eta_1 = \lower12pt\hbox{$\includegraphics[height=1.1cm]{graffles/lccl.pdf}$}
   $}
    \end{eqnarray*}
  \begin{eqnarray*}
   \eta_2 =  \lower22pt\hbox{$\includegraphics[height=1.8cm]{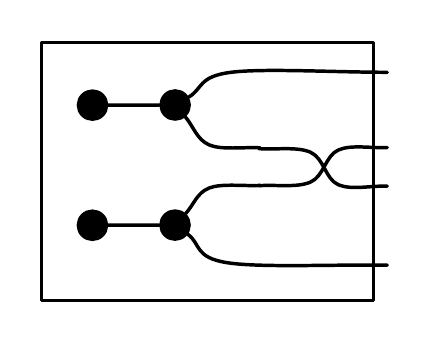}$}
   \end{eqnarray*}
     \begin{eqnarray*}
   \eta_3 =  \lower22pt\hbox{$\includegraphics[height=1.8cm]{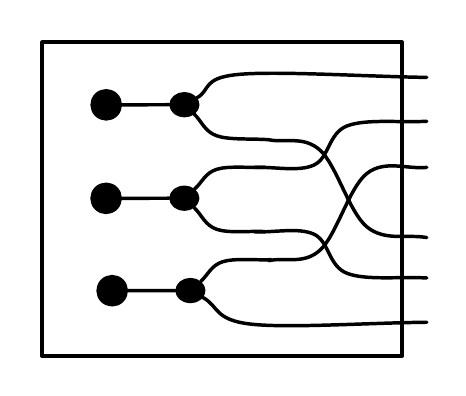}$}
   \end{eqnarray*}
 \end{multicols}

 We will often write \lccn for $\eta_n$, \rccn for $\epsilon_n$ and \idncircuit for $\id_n$. Similarly, \wcounitn (resp.\ \bcounitn) denotes the $n$-fold monoidal product of \Wcounit\ (resp.\ \Bcounit).

\begin{proposition}\label{prop:snakecc} $\IBRw$ is self-dual compact closed with structure given by $\eta_n$ and $\epsilon_n$ for each $n \in \IBRw$. \end{proposition}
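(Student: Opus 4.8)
The plan is to verify the two defining axioms of a self-dual compact closed structure on $\IBRw$: the \emph{snake equations} (triangle identities)
\[
(\epsilon_n \tns \idncircuit)\poi(\idncircuit \tns \eta_n) = \idncircuit = (\idncircuit \tns \epsilon_n)\poi(\eta_n \tns \idncircuit),
\]
together with the symmetry of $\eta_n$ and $\epsilon_n$ needed for \emph{self-duality} (i.e.\ that $\eta_n$ precomposed with the symmetry $\sigma_{n,n}$ equals $\eta_n$, and dually for $\epsilon_n$). Since everything is defined by induction on $n$, the proof proceeds by induction as well, with the base case $n=0$ trivial (all the relevant circuits are the empty diagram $\idcircuit$ on $0$ wires) and the base case $n=1$ reducing to the snake equations for $\eta_1 = \lccn$, $\epsilon_1 = \rccn$, which are precisely the Frobenius-type axioms packaged in \eqref{eq:lcc}, \eqref{eq:rcc} together with \eqref{eq:BFrob} and \eqref{eq:Bsep}. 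Concretely, for $n=1$ one unfolds the definitions and uses the separable Frobenius structure of the black generators (noted just after \eqref{eq:Bsep}) to slide the cup past the cap.

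For the inductive step I would exploit the recursive definitions: $\eta_{n+1}$ is built from $\eta_n$ and $\eta_1$ by nesting (via the auxiliary permutation-like circuits $\alpha_{n+1}$, $\beta_{n+1}$ that reorganise the $2(n+1)$ wires), and similarly for $\epsilon_{n+1}$. The strategy is to show that $\alpha_{n+1}$ and $\beta_{n+1}$ behave like the appropriate wire-permutations — i.e.\ they are honest isomorphisms realising the reshuffle $n+1+n+1 \to 1+n+n+1$ needed to bring the outer cup/cap into position — and then that the composite $(\epsilon_{n+1}\tns\idncircuit)\poi(\idncircuit\tns\eta_{n+1})$ decomposes, after sliding these reshuffling circuits through, into an outer application of the $n=1$ snake equation and an inner application of the induction hypothesis for $n$. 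This is the standard ``two cups cancel one at a time'' argument, but carried out diagrammatically in $\IBRw$: each wire-crossing manipulation is justified by naturality of the symmetry (the laws of SMCs), and each local cup/cap cancellation by \eqref{eq:lcc}/\eqref{eq:rcc}/\eqref{eq:BFrob}/\eqref{eq:Bsep}.

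The main obstacle I anticipate is the bookkeeping around $\alpha_{n+1}$ and $\beta_{n+1}$: one must check that these auxiliary circuits really are permutations (so that the relevant ``yanking'' is just SMC-naturality and not some nontrivial equational consequence), and that they interact correctly with $\eta_{n+1}$, $\epsilon_{n+1}$ so the induction hypothesis becomes applicable at the right place. Equivalently, one wants a clean lemma of the form ``$\eta_{n+1}$ equals $\eta_1$ tensored with $\eta_n$, conjugated by a permutation'', and dually for $\epsilon$; once such a lemma is in hand the snake equations factor as a product of the $n=1$ case and the $n$ case and the argument closes. I would therefore isolate that structural lemma first and prove the two snake equations (and the symmetry conditions) as corollaries. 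The remaining verification — that the chosen $\eta_n,\epsilon_n$ are compatible with the monoidal structure and that no orientation subtlety arises from $\IBRw$ being built from a PROP and its opposite — is routine given Proposition~\ref{prop:distrLawPbPo} and the derived laws \eqref{eq:lccb}--\eqref{eq:Bsep}.
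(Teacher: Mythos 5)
Your proposal is correct and follows essentially the same route as the paper: the paper also reduces the statement to the snake equations \eqref{eq:gensnake} and proves them by induction on $n$, with the base case given by the derived black-Frobenius snake law and the inductive step carried out by sliding the outermost cup/cap past the symmetries inside $\alpha_{n+1}$ and $\beta_{n+1}$ via a derived naturality-of-symmetry equation, which is exactly your ``peel one pair at a time'' strategy and plays the role of your proposed conjugation-by-a-permutation lemma. The only cosmetic differences are that the $n=1$ case needs just the black Frobenius law \eqref{eq:BFrob} and the (co)unit axioms rather than \eqref{eq:lcc}, \eqref{eq:rcc} or separability, and the extra symmetry condition on $\eta_n$, $\epsilon_n$ you mention is not checked in the paper, which takes the snake equations alone as sufficient.
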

\begin{proof} It suffices to verify the following equality, for each $n \in \IBRw$.
  \begin{equation}
  \label{eq:gensnake}
  \tag{CC1}
  \lower18pt\hbox{$\includegraphics[height=1.4cm]{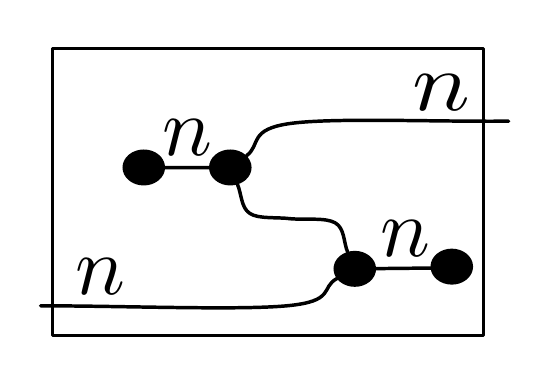}$}
  = \lower12pt\hbox{$\includegraphics[height=1cm]{graffles/idncircuit.pdf}$}
  = \lower18pt\hbox{$\includegraphics[height=1.4cm]{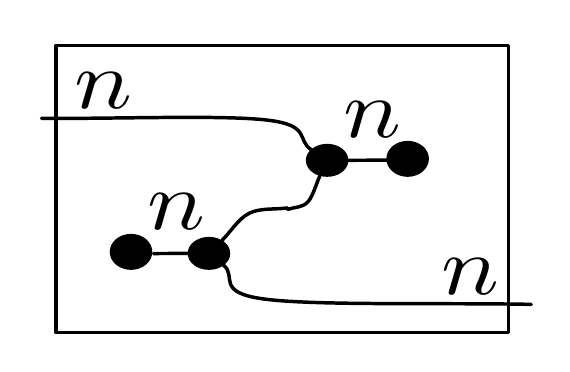}$}
   \end{equation}
   The details of this derivation in $\IBRw$ can be found in~\ref{AppCC}.
\qed\end{proof}

As observed in \cite[Remark~2.1]{Selinger07DaggerCC}, we can define a contravariant PROP morphism $\coc{(\cdot)}$ as follows:
\begin{align*}
 \lower12pt\hbox{$\includegraphics[height=1.2cm]{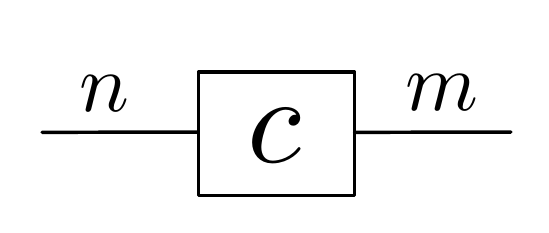}$} \mapsto \lower12pt\hbox{$\includegraphics[height=1.2cm]{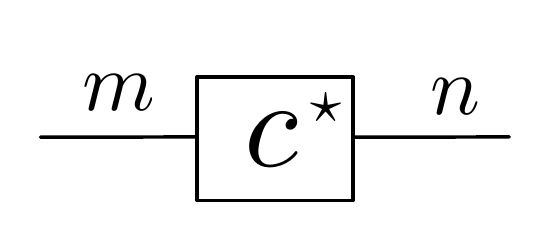}$} \df \lower18pt\hbox{$\includegraphics[height=1.7cm]{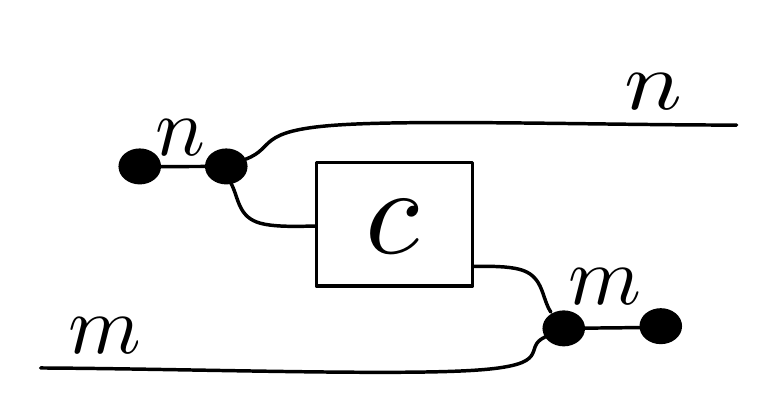}$}
 \end{align*}

 \begin{corollary} For any circuit $c \: n \to m$ of $\IBRw$,
 \begin{multicols}{2}\noindent
  \begin{equation}
  \hspace{-.1cm}
  \label{eq:ccsliding}
  \tag{CC2}
  \lower10pt\hbox{$\includegraphics[height=1.1cm]{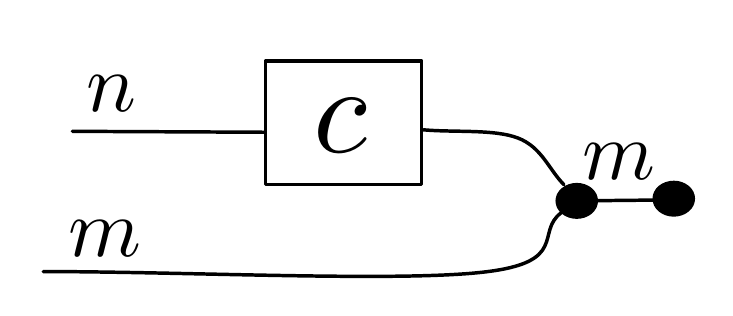}$}
   \! =  \! \lower13pt\hbox{$\includegraphics[height=1.1cm]{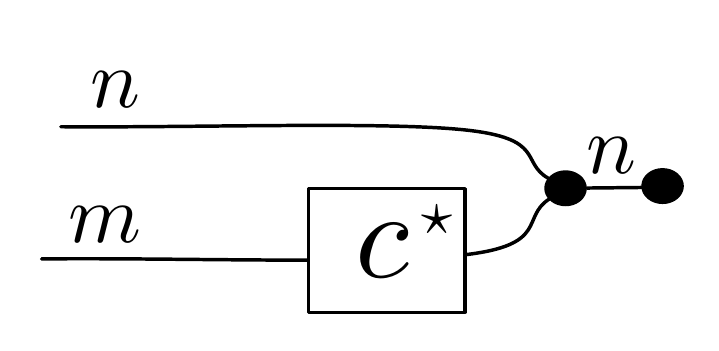}$}
   \end{equation}
     \begin{equation}
  \label{eq:ccsliding2}
  \tag{CC3}
  \lower8pt\hbox{$\includegraphics[height=1.1cm]{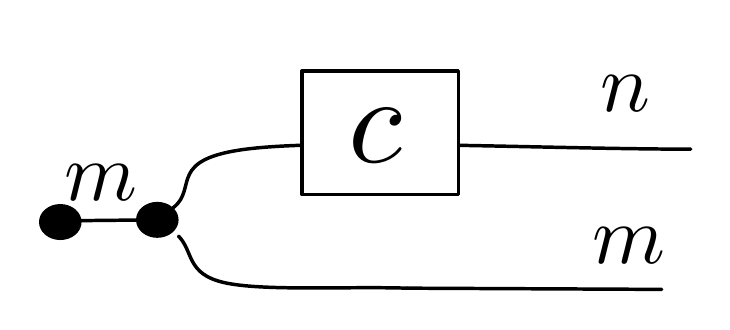}$}
  \! = \! \lower10pt\hbox{$\includegraphics[height=1.1cm]{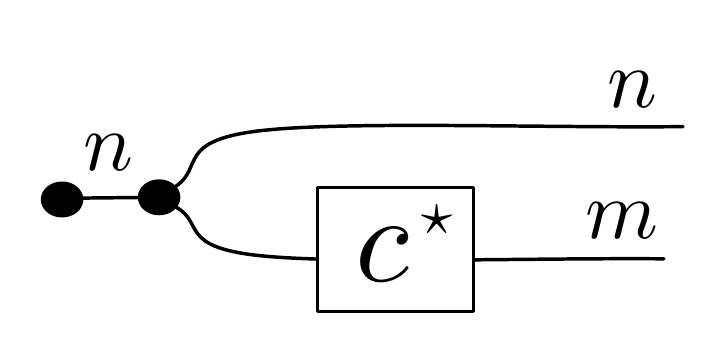}$}
   \end{equation}
   \end{multicols}
 \end{corollary}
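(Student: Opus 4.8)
The plan is to obtain both identities as formal consequences of the compact closed structure established in Proposition~\ref{prop:snakecc} together with the laws of symmetric monoidal categories, with no further appeal to the axioms of $\IBRw$. Once $\IBRw$ is known to be self-dual compact closed, \eqref{eq:ccsliding} and \eqref{eq:ccsliding2} are exactly the standard ``sliding a morphism along cups and caps'' identities valid in any such category (see~\cite{Selinger07DaggerCC}), with $\coc{(\cdot)}$ playing the role of the induced transpose functor.

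Concretely, I would first unfold the displayed definition of $\coc{c}$: for $c \: n \to m$, the circuit $\coc{c} \: m \to n$ is $c$ with its $m$ outputs and $n$ inputs bent around by means of $\epsilon_m$ and $\eta_n$. Substituting this expression into the right-hand side of \eqref{eq:ccsliding} produces a circuit in which $c$ occurs alongside one extra $\eta$ and one extra $\epsilon$ arranged so as to form a ``snake''; by the yanking equation \eqref{eq:gensnake} this snake straightens to a bundle of identity wires, and what remains is precisely the left-hand side of \eqref{eq:ccsliding}. The only rewrites used are functoriality and interchange of $\poi$ and $\tns$, naturality of the symmetries $\sigma$, and \eqref{eq:gensnake} itself. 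For \eqref{eq:ccsliding2} one argues symmetrically, interchanging the roles of $\eta_n$ and $\epsilon_m$ in the previous computation (the compact structure being self-dual, the derivation is the mirror image); alternatively, \eqref{eq:gensnake} also yields $\coc{(\coc c)} = c$, and then \eqref{eq:ccsliding2} follows by applying \eqref{eq:ccsliding} to $\coc{c}$.

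There is no genuine obstacle here — the statement is forced as soon as Proposition~\ref{prop:snakecc} is available — so the only real work is the bookkeeping of how the $n$- and $m$-fold bundles of wires thread through the inductively defined $\eta$'s and $\epsilon$'s. Since $\eta_n$ and $\epsilon_n$ are themselves given by recursion on $n$, the cleanest way to carry out the wire-chase is by induction on $n$ and $m$: the base case $n = m = 1$ involves only small diagrams, and the inductive step is a routine reshuffling using the definitions of $\eta_{n+1}$, $\epsilon_{n+1}$ and \eqref{eq:gensnake}. This is precisely the kind of diagrammatic verification deferred to~\ref{AppCC}.
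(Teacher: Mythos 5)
Your proposal is correct and matches the paper's proof: one unfolds the definition of $\coc{c}$ in the right-hand side and straightens the resulting snake with \eqref{eq:gensnake}, the second identity being the mirror-image derivation. The extra induction on $n,m$ you suggest is unnecessary, since \eqref{eq:gensnake} is already available for arbitrary $n$ from Proposition~\ref{prop:snakecc} and can be applied directly to the wire bundles.
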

 \begin{proof} The following is the derivation of \eqref{eq:ccsliding} in $\IBRw$. The one of \eqref{eq:ccsliding2} is analogous.
  \begin{equation*}
  \lower12pt\hbox{$\includegraphics[height=1.3cm]{graffles/ccslidingr.pdf}$}
  \eql{Def. $c^{\star}$} \lower22pt\hbox{$\includegraphics[height=1.7cm]{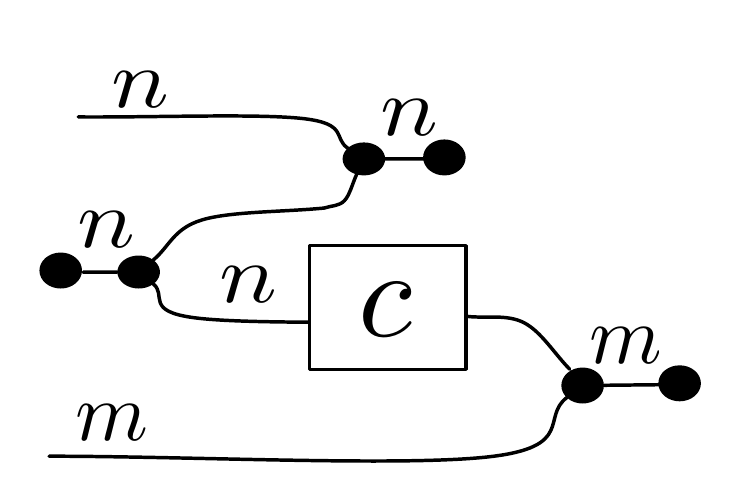}$}
  \eql{\eqref{eq:gensnake}} \lower12pt\hbox{$\includegraphics[height=1.3cm]{graffles/ccslidingl.pdf}$}
   \end{equation*}
 \qed\end{proof}

The following proposition ensures that the notation $\coc{(\cdot)}$ used above actually does not conflict with the one used for the contravariant identity $\ABR \to \ABRop$ defined in Section~\ref{sec:theorymatr}, in the sense that $\coc{\sigma_1(c)} = \sigma_1(\coc{c})$. First, let $\refl{(\cdot)} \: \IBRw \to \IBRw$ be the contravariant PROP morphism given inductively:
  \begin{multicols}{4}
\noindent
      \begin{eqnarray*}
     \Bcounit \mapsto \Bunit
    \end{eqnarray*}
   \begin{eqnarray*}
     \Bunit \mapsto \Bcounit
    \end{eqnarray*}
  \begin{eqnarray*}
     \Wunit \mapsto \Wcounit
    \end{eqnarray*}
  \begin{eqnarray*}
   \Wcounit \mapsto \Wunit
   \end{eqnarray*}
    \end{multicols}
    \smallskip
     \begin{multicols}{4}
     \noindent
     \begin{eqnarray*}
     \Wmult \mapsto \Wcomult
    \end{eqnarray*}
  \begin{eqnarray*}
   \Wcomult \mapsto \Wmult
   \end{eqnarray*}
        \begin{eqnarray*}
     \Bmult \mapsto \Bcomult
    \end{eqnarray*}
  \begin{eqnarray*}
   \Bcomult \mapsto \Bmult
   \end{eqnarray*}
      \end{multicols}
      \smallskip
     \begin{multicols}{2}
     \noindent
           \begin{eqnarray*}
     \scalar \mapsto \coscalar
    \end{eqnarray*}
  \begin{eqnarray*}
   \coscalar \mapsto \scalar
   \end{eqnarray*}
    \end{multicols}
    \smallskip
     \begin{multicols}{2}
     \noindent
     \begin{equation*}    \lower10pt\hbox{$\includegraphics[height=.9cm]{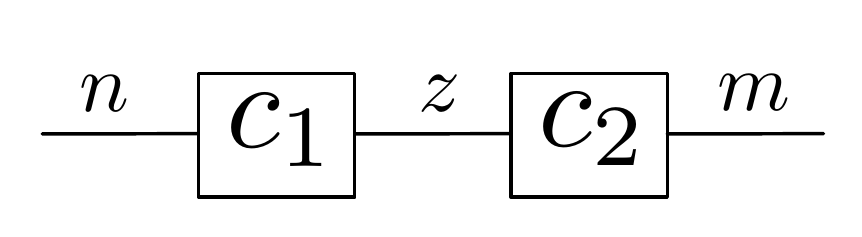}$}
  \!\mapsto\! \lower10pt\hbox{$\includegraphics[height=.9cm]{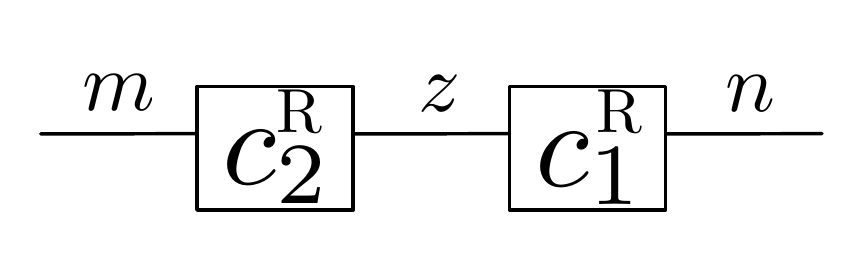}$}
   \end{equation*}
        \begin{equation*}
         \lower20pt\hbox{$\includegraphics[height=1.5cm]{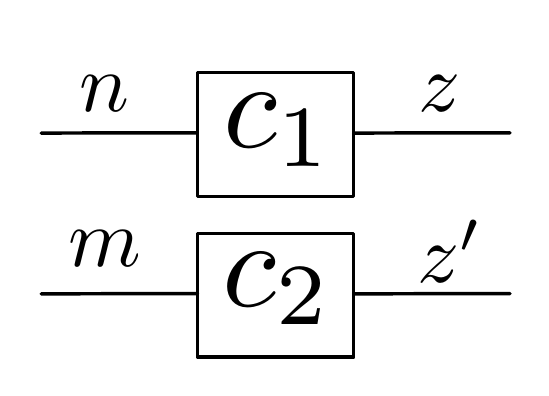}$}
   \!\mapsto\!\lower20pt\hbox{$\includegraphics[height=1.5cm]{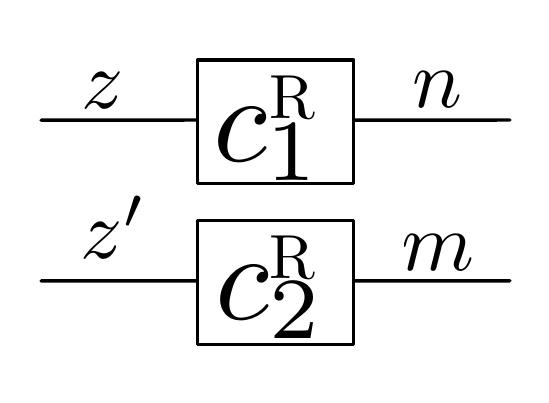}$}
   \end{equation*}
 \end{multicols}

\begin{proposition}\label{prop:star=refl} $\coc{c} = \refl{c}$ for all circuits $c \: n \to m$ of $\IBRw$.
\end{proposition}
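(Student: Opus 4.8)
The plan is to exploit functoriality to reduce the statement to a check on generators. Both $\coc{(\cdot)}$ and $\refl{(\cdot)}$ are contravariant, identity-on-objects, strict symmetric monoidal endofunctors of $\IBRw$: for $\refl{(\cdot)}$ this is exactly its inductive definition, and for $\coc{(\cdot)}$ it is the cited fact about compact closed categories, witnessed by the sliding laws \eqref{eq:ccsliding}-\eqref{eq:ccsliding2}. Any such functor $F$ satisfies $F(\id_1)=\id_1$, $F(\sigma_{1,1})=\sigma_{1,1}$, $F(c\poi d)=F(d)\poi F(c)$ and $F(c\tns d)=F(c)\tns F(d)$; since every arrow of $\IBRw$ is obtained from the generators, $\id_1$ and $\sigma_{1,1}$ using $\poi$ and $\tns$, two such functors coincide as soon as they agree on each generator. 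So it suffices to show $\coc{g}=\refl{g}$ for $g$ running over the generators of $\ABR$ (white multiplication and unit, black comultiplication and counit, and the scalars $\scalar$) and those of $\ABRop$ (their mirror images). In each case $\refl{g}$ is by definition the mirror generator, so the real content is to compute the compact closed transpose of each generator and identify it.

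I would organise the generator computations into three groups. \emph{Units and counits}: transposing $\Wunit\colon 0\to 1$ means feeding it into one leg of $\epsilon_1$; unfolding $\epsilon_1$ via \eqref{eq:WFrob} and applying the white unit law \eqref{eq:wmonunitlaw} collapses the result to $\Wcounit$, and symmetrically for the three others (the mixed cases also drawing on laws such as \eqref{eq:wunitcancelbcomult}). \emph{Scalars}: the transpose of $\scalar$ is a scalar slid along a cup and a cap, which is precisely the content of the compact closed scalar axioms \eqref{eq:BccscalarAxiomOne}-\eqref{eq:BccscalarAxiomTwo} (together with the snake \eqref{eq:gensnake}), yielding $\coscalar$; dually $\coc{\coscalar}=\scalar$. \emph{Multiplications and comultiplications}: here the transpose of $\Wmult\colon 2\to 1$ straightens, by one use of the white Frobenius law \eqref{eq:WFrob} and the snake \eqref{eq:gensnake}, into $\Wcomult$, and dually $\coc{\Wcomult}=\Wmult$; the black cases $\coc{\Bmult}=\Bcomult$ and $\coc{\Bcomult}=\Bmult$ are handled the same way using \eqref{eq:BFrob}, after re-expressing the compact closed cups and caps through \eqref{eq:lcc}-\eqref{eq:rcc} and \eqref{eq:lccb}-\eqref{eq:rccb} and cancelling the antipodes that this introduces (the antipode being self-inverse, by \eqref{eq:unitscalar}-\eqref{eq:scalarmult}).

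The only genuine effort, and the main obstacle, is in this last group: one has to unwind the inductive definitions of $\eta_2$ and $\epsilon_2$ (hence of $\alpha_1$ and $\beta_1$) into explicit circuits before the Frobenius and antipode manipulations can be carried out, and one must keep careful track of the interleaving permutations throughout. Once the generator identities are established, the functoriality argument of the first paragraph closes the proof at once. This also yields the compatibility advertised before the statement: on the generators of $\ABR$ the functor $\refl{(\cdot)}$ agrees with the contravariant identity $(\cdot)^{\star}\colon\ABR\to\ABRop$ of Section~\ref{sec:theorymatr}, so $\coc{(\cdot)}=\refl{(\cdot)}$ forces $\coc{\sigma_1(c)}=\sigma_2(\coc{c})$ for every $c$ of $\ABR$.
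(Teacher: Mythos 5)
Your proposal is correct and is essentially the paper's own argument: the paper proves Proposition~\ref{prop:star=refl} by structural induction on $c$, whose inductive steps for $\poi$ and $\tns$ are exactly the contravariant (strict symmetric monoidal) functoriality of $\coc{(\cdot)}$ that you invoke, and whose base cases are the same generator checks, carried out in~\ref{AppCC} with the snake~\eqref{eq:gensnake}, the Frobenius laws, \eqref{eq:BccscalarAxiomOne}--\eqref{eq:BccscalarAxiomTwo} for the scalars, and \eqref{eq:lccb}--\eqref{eq:rccb} with antipode absorption for the colour conversions. Your identification of the mult/comult cases (unwinding $\eta$, $\epsilon$ and tracking the permutations) as the only laborious part matches where the paper spends its effort.
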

\begin{proof} The proof is by induction on $c$. See~\ref{AppCC} for the details. \qed\end{proof}

%


\subsection{$\IBRw$: the theory of spans of $\PID$-matrices}\label{sec:completeness}

Since $\PID$ is a principal ideal domain, every submodule of a free module is itself free (see e.g.~\cite[Ch. 23]{HandbookLinearAlgebra}): this means that pullbacks in the category of finite-dimensional free $\PID$-modules---which is equivalent to $\VectR$---can be computed as in the abelian category of $\PID$-modules.
Given that $\VectR$ has pullbacks, we can consider the PROP $\Span{\VectR}$. We now develop the tools necessary to show that $\IBRw$ is a presentation of $\Span{\VectR}$.

\begin{theorem}\label{th:Span=IBw} $\IBRw \cong \Span{\VectR}$. \end{theorem}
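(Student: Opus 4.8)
The plan is to exploit the fact that the equations of $\IBRw$ are precisely those arising from the distributive law $\transFunctor \: \ABR \poi \ABRop \To \ABRop \poi \ABR$ defined by pullback in $\VectR$ (via Proposition~\ref{prop:distrLawPbPo}, using the core $\PJ$ of $\ABR$). First I would make explicit the correspondence: by Proposition~\ref{prop:ab=vect} we have $\ABR \cong \VectR$ and $\ABRop \cong \VectRop$, so $\VectR$ has pullbacks and Proposition~\ref{prop:distrLawPbPo}(2) gives a distributive law yielding $\ABRop \poi \ABR \cong \Span{\VectR}$. The real content of the theorem is that this composite PROP $\ABRop \poi \ABR$, presented by the equations of $\ABR + \ABRop$ together with those read off from the pullback squares in $\VectR$, coincides with $\IBRw$ as defined by equations \eqref{eq:lcm}--\eqref{eq:BccscalarAxiomTwo}.

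This splits into two directions. For soundness (every $\IBRw$-equation holds in $\ABRop \poi \ABR \cong \Span{\VectR}$), I would check each of \eqref{eq:lcm}--\eqref{eq:BccscalarAxiomTwo} by computing the span it denotes: e.g.\ \eqref{eq:lcm} expresses that multiplication-by-$l$ then its converse relation is the identity span (using that $l \neq 0$ and $\VectR$ is a domain so $l$ is monic), \eqref{eq:wbone} that a certain pullback is trivial, \eqref{eq:WFrob} and \eqref{eq:BFrob} that the relevant spans of matrices satisfy the Frobenius law, and so on. For completeness I would show conversely that every equation coming from a pullback square in $\VectR$ is derivable from \eqref{eq:lcm}--\eqref{eq:BccscalarAxiomTwo} plus the axioms of $\ABR + \ABRop$. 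The key structural tool here is the \emph{span factorisation}: using the compact closed structure (Proposition~\ref{prop:snakecc}) together with the derived laws \eqref{eq:lccb}--\eqref{eq:Bsep}, I would show that every circuit $c \: n \to m$ of $\IBRw$ can be rewritten in the form $n \tr{\in \ABRop} z \tr{\in \ABR} m$ — that is, as a cospan-shaped pair of $\PID$-matrix circuits (a converse followed by a map), which is exactly the shape of an arrow of $\Span{\VectR}$. The Frobenius and separability laws for the black structure, plus \eqref{eq:lcm}, are what let one ``push'' all $\ABR^{\op}$-generators to the left of all $\ABR$-generators, mimicking the computation of the distributive law.

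Concretely I would organise completeness as follows: (i) establish the span factorisation, i.e.\ that the canonical map $\ABRop \poi \ABR \to \IBRw$ is \emph{full}; (ii) establish faithfulness by showing that if two circuits in span form $\tr{g^\star}\tr{f}$ and $\tr{(g')^\star}\tr{f'}$ become equal in $\IBRw$ then the spans $(f,g)$ and $(f',g')$ are isomorphic as spans of matrices, which by the definition of $\transFunctor$ means they are identified in $\ABRop \poi \ABR$. Step (ii) I would handle by producing, from any $\IBRw$-proof of equality, a common refinement — again using \eqref{eq:lcm}, \eqref{eq:wbone} and the black Frobenius/separability laws to normalise, in effect replaying inside $\IBRw$ the uniqueness-up-to-iso of pullbacks. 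Alternatively, and perhaps more cleanly, I would argue that the composite $\IBRw \to \ABRop \poi \ABR$ (sending a circuit, via its span factorisation, to the corresponding span) is well-defined — i.e.\ respects the $\IBRw$-axioms, which is soundness again — and is inverse to $\ABRop \poi \ABR \to \IBRw$ on the nose, using fullness on one side and the explicit normal forms on the other.

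The main obstacle I anticipate is the span factorisation / fullness step: showing that \emph{every} $\IBRw$-circuit reduces to cospan-of-matrices shape requires a genuine rewriting argument. One must push each $\ABR^{\op}$-generator ($\Bmult, \Wcomult$, etc.) past each $\ABR$-generator it meets, and the ``resolution'' of such a clash is precisely a pullback computation in $\VectR$; making this terminate and confluent (or at least exhibiting a normal form) is where the derived laws \eqref{eq:lccb}--\eqref{eq:Bsep}, especially \eqref{eq:Bsep} (black separability), \eqref{eq:lcm}, and the Frobenius laws \eqref{eq:WFrob}, \eqref{eq:BFrob}, do the heavy lifting. A secondary subtlety is bookkeeping the core $\PJ$ rather than $\Perm$ in the composite PROP — one must be careful that quotienting spans by isomorphism, not just by permutation, is exactly matched by the $\IBRw$-provable equalities between isomorphic matrix circuits, which follows from Proposition~\ref{prop:ab=vect} since isos in $\ABR$ are invertible matrices and these are handled by \eqref{eq:unitscalar}, \eqref{eq:scalarmult}.
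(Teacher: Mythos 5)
Your overall framing agrees with the paper's: both reduce the theorem to the fact that, since $\ABR\cong\VectR$ has pullbacks, Proposition~\ref{prop:distrLawPbPo} yields a composite PROP $\ABRop\poi\ABR\cong\Span{\VectR}$ presented by the equations of $\ABR+\ABRop$ plus those read off pullback squares, so that it suffices to prove mutual derivability (soundness and completeness) of the two equation sets; your soundness check is the paper's. The gap is in completeness, which is where the actual mathematical content of the theorem lives (Proposition~\ref{prop:IBwComplete}: \emph{every} pullback square in $\VectR$ yields an equation derivable from \eqref{eq:lcm}--\eqref{eq:BccscalarAxiomTwo}). Your plan --- rewrite every circuit into span form by pushing generators past one another, where ``the resolution of such a clash is precisely a pullback computation in $\VectR$'' --- is circular as stated: that the axioms derive the span form of an arbitrary cospan $\tr{A}\tl{B}$ of matrices is exactly what must be proven, and no termination/confluence analysis substitutes for exhibiting those derivations (over a general PID even scalar-against-coscalar clashes with non-coprime entries already require lcm-style reasoning that the axioms give only very indirectly). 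The paper's missing idea is to replay concrete linear algebra graphically: Lemma~\ref{lemma:pbKernel} reduces an arbitrary pullback to a kernel computation using the compact closed structure; Proposition~\ref{prop:matrixequalkenrel} derives the kernel equation by putting the matrix into Hermite normal form (Proposition~\ref{prop:kernelfromHNF}, Lemma~\ref{lemma:BHNFequalKernel}, Lemma~\ref{lemma:invertiblerestriction}); and Lemma~\ref{lemma:invertiblestar} handles invertible matrices by induction on elementary row operations. Note also that the span factorisation you want to prove by rewriting appears in the paper as Corollary~\ref{cor:factorisationIBRw}, a \emph{consequence} of the theorem, not an ingredient; and your faithfulness step (ii), extracting a ``common refinement'' by induction over an $\IBRw$-derivation, is both unnecessary in this setup (both PROPs are quotients of $\ABR+\ABRop$, so soundness plus completeness already yield mutually inverse morphisms) and very hard to make rigorous.

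A second, smaller inaccuracy concerns your handling of the core $\PJ$: you claim that the identification of isomorphic spans is matched in $\IBRw$ ``by Proposition~\ref{prop:ab=vect} since isos in $\ABR$ are invertible matrices and these are handled by \eqref{eq:unitscalar}, \eqref{eq:scalarmult}''. Those axioms only govern $1\times 1$ units; an isomorphism of $\VectR$ is an arbitrary invertible matrix, and proving that the circuit of $U^{-1}$ equals the mirrored circuit $\coc{U}$ in $\IBRw$ --- Lemma~\ref{lemma:invertiblestar}, from which Lemma~\ref{lemma:mirror} follows --- genuinely requires the interaction axioms, in particular \eqref{eq:lcm} for unit row scalings and the derived law \eqref{eq:papillon} for row additions. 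Since Lemma~\ref{lemma:mirror} is used throughout the completeness proof (and would be needed for your step (ii) as well), this is not a side condition dischargeable by appeal to $\ABR\cong\VectR$ alone.
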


Our proof relies on the properties of composed PROPs. 
Seeing that $\ABR\cong\VectR$ (Proposition~\ref{prop:ab=vect}) and the fact that $\VectR$ has pullbacks
we form the PROP $\Span{\ABR} = \ABRop \poi \ABR$ via a distributive law $$\lambda_{pb} \: \ABR \poi \ABRop \to \ABRop \poi \ABR$$ which maps a cospan $\tr{\in \ABR}\tl{\in \ABR}$ to its pullback span $\tl{\in \ABR}\tr{\in \ABR}$ ---  $\lambda_{pb}$ is a distributive law by Proposition~\ref{prop:distrLawPbPo}
(\emph{cf.}\ the SMT of bialgebras of Section~\ref{ssec:composingPROP}).
Also, by Proposition~\ref{prop:ab=vect} we have that $\Span{\VectR} \cong \Span{\ABR}$.
Therefore, in order to prove Theorem~\ref{th:Span=IBw} it suffices to show that all equations of $\IBRw$ are derivable in $\Span{\ABR}$ (\emph{soundness}) and vice-versa (\emph{completeness}).

To show soundness, observe that the equations of $\IBRw$ are of two kinds: those of $\ABR + \ABRop$, which are also valid in $\Span{\ABR}$ by construction, and equations~\eqref{eq:lcm}-\eqref{eq:BccscalarAxiomTwo}. For the latter, each has the shape $p \poi \coc{q} = \coc{f} \poi g$, where $p,q,f,g$ are circuits of $\ABR$, and $(f,g)$ is the pullback of $(p,q)$ in $\ABR$.
\begin{example} Equation~\eqref{eq:lcc} corresponds to the pullback square in $\ABR$ on the left below. The pullback square in $\VectR$ is illustrated on the right.
\begin{eqnarray*}
\xymatrix@R=12pt@C=15pt{
& \ar[dl]_{\Bcounit} 1 \pushoutcorner \ar[dr]^{\BcomultSingleAntipode} & \\
0 \ar[dr]_{\Wunit} && 2 \ar[dl]^{\Wmult} \\
& 1 &
}
& \qquad \xymatrix{\\ \ar@{|->}[r]^{\sem{\ABR}} &}\qquad &
\xymatrix@R=12pt@C=15pt{
& \ar[dl]_{\finVect} 1 \pushoutcorner \ar[dr]^{\tiny{\left(%
               \begin{array}{c}
                 \!\! 1 \!\!\\
                 \!\! -1\!\!
                \end{array}\right)}} & \\
0 \ar[dr]_{\initVect} && 2 \ar[dl]^{\tiny{\left(%
                 \begin{array}{cc}
                \!\!  1 \!\! &\! 1 \!\!
                \end{array}\right)}} \\
& 1 &
}
\end{eqnarray*}
\end{example}

It remains to show completeness:
 we need to verify that any pullback in $\ABR$ (or, equivalently, in $\VectR$) yields an equation which is derivable in $\IBRw$. The proof of Theorem~\ref{th:Span=IBw} thus reduces to the proof of the following.

\begin{proposition} \label{prop:IBwComplete} Given a pullback square in $\VectR$ (below left), the corresponding circuit equation (below right) is derivable in $\IBRw$.
\[\xymatrix@R=10pt@C=10pt{
&\ar[dl]_{C} r \pushoutcorner \ar[dr]^{D} & \\
n \ar[dr]_{A} & & m \ar[dl]^{B}\\
& z  & }
\qquad \qquad
\lower20pt\hbox{
\lower8pt\hbox{$\includegraphics[height=.8cm]{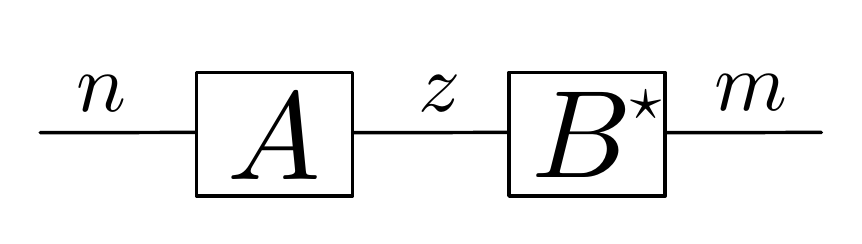}$} =
\lower8pt\hbox{$\includegraphics[height=.8cm]{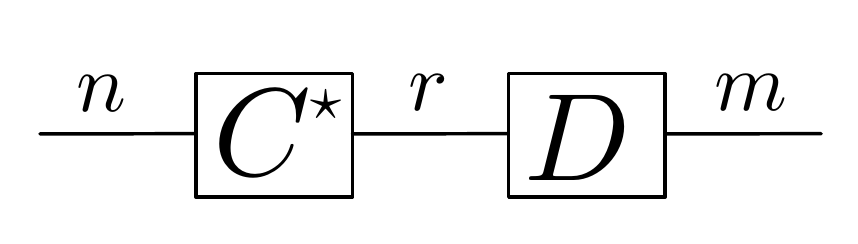}$}
}
\]
\end{proposition}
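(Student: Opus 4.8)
The plan is to mimic, at the diagrammatic level, the computation of a pullback of matrices over $\PID$ via Hermite Normal Form. Recall from Section~\ref{sec:catmatrices} that the pullback of $A \: n \to z$ and $B \: m \to z$ in $\VectR$ is obtained as the kernel of $(A \,|\, -B) \: n+m \to z$, and that (since $\PID$ is a PID) this kernel is free, say of dimension $r$, with inclusion of the shape $(\frac{C}{D}) \: r \to n+m$. Using the compact closed structure of $\IBRw$ (Proposition~\ref{prop:snakecc}) together with the sliding laws \eqref{eq:ccsliding}--\eqref{eq:ccsliding2} and Proposition~\ref{prop:star=refl}, the circuit equation on the right of the statement can be rewritten purely in terms of kernels: I would first reduce the general case to showing that for any matrix $E \: n+m \to z$, the composite $\coc{(\text{circuit of }(A\,|\,-B))}\poi(\text{circuit of }(A\,|\,-B))$ "computes its kernel'', i.e. equals $\coc{(\frac{C}{D})}\poi(\frac{C}{D})$ where $(\frac{C}{D})$ is a basis of $\krn{E}$. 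Concretely, bending the $B$-wires around using $\eta$ and $\epsilon$ turns $\coc{B}\poi A$-type diagrams into a single $\ABRop$-then-$\ABR$ span whose legs are exactly $C$ and $D$.

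The core lemma is therefore: \emph{for a matrix $E \: k \to z$ with $\krn{E}$ spanned by the columns of $K \: r \to k$, the equation $\coc{\circuitAdots[E]}\poi\circuitAdots[E] = \coc{\circuitAdots[K]}\poi\circuitAdots[K]$ holds in $\IBRw$}, where $\circuitAdots[E]$ is the matrix-form circuit of $E$. I would prove this by induction on $k$, following the effect of elementary column operations on $E$. Each elementary column operation corresponds to postcomposing $E$ with an invertible matrix $U$; diagrammatically, $\circuitAdots[E] = \circuitAdots[U]\poi\circuitAdots[EU^{-1}]$, and since $\circuitAdots[U]$ is invertible in $\ABR \cong \VectR$, on the "spider'' side we get $\coc{\circuitAdots[E]}\poi\circuitAdots[E] = \coc{\circuitAdots[EU^{-1}]}\poi \coc{\circuitAdots[U]}\poi\circuitAdots[U]\poi\circuitAdots[EU^{-1}]$, and $\coc{\circuitAdots[U]}\poi\circuitAdots[U] = \idncircuit$ must be shown — this is essentially the statement that $\IBRw$ makes every $\ABR$-iso "unitary'', which follows from \eqref{eq:lcm}, \eqref{eq:wbone} and the white/black Frobenius laws \eqref{eq:WFrob}--\eqref{eq:BFrob} applied entrywise (it is the diagrammatic avatar of "an invertible $\PID$-matrix has the same kernel behaviour as the identity''). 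Once $E$ is in Hermite Normal Form, its first $r$ columns are zero and the remaining $z-r$ columns are, after further row-type manipulation mirrored on the $\ABRop$ side, effectively an injective map; then the Frobenius/bialgebra laws let one "cut'' the non-kernel part against $\coc{(\cdot)}$, leaving precisely $\coc{\circuitAdots[K]}\poi\circuitAdots[K]$ where $K$ picks out the first $r$ coordinates — which is Proposition~\ref{prop:kernelfromHNF} read diagrammatically.

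The main obstacle I expect is the bookkeeping for the elementary-operations induction: one must verify that each of the three kinds of column operation (swap, sum, unitary multiplication) is realised by a circuit equation already derivable from \eqref{eq:lcm}--\eqref{eq:BccscalarAxiomTwo} and the derived laws \eqref{eq:lccb}--\eqref{eq:Bsep}, and that these manipulations interact correctly with the $\coc{(\cdot)}$-reflected copy on the left. In particular the "unitary multiplication'' case is exactly where \eqref{eq:lcm} (the law $l$-scalar followed by its reflected $l$-scalar equals identity, for $l \neq 0$) is indispensable, and the "column sum'' case is where one needs the bialgebra and Frobenius laws to push a scalar-weighted copy-and-add past the span apex; getting the scalars to cancel rather than accumulate is the delicate point. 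A secondary obstacle is making the reduction from the four-matrix pullback square to the single-matrix kernel statement fully rigorous: this requires care with the compact-closed "cup/cap'' rewriting and with the identification $(f,g) = $ pullback of $(p,q)$ in $\ABR$ versus in $\VectR$, but both are routine given Proposition~\ref{prop:ab=vect} and Proposition~\ref{prop:distrLawPbPo}.
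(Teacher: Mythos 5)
Your overall plan follows the paper's proof closely: reduce the pullback square to a kernel computation for $(A\,|\,{-B})$ (the paper's Lemma~\ref{lemma:pbKernel}), carry that computation out diagrammatically by bringing the matrix into Hermite normal form through elementary operations, with the pivotal sub-lemma that invertible matrices become ``unitary'' in $\IBRw$ (the paper's Lemma~\ref{lemma:invertiblestar}, $\coc{U}$ equals the circuit of $U^{-1}$, proved by exactly your swap/sum/unit-multiplication induction, where \eqref{eq:lcm} handles the unit case and the derived law \eqref{eq:papillon} the sum case), a ``chain-reaction'' collapse of the HNF circuit (Lemma~\ref{lemma:BHNFequalKernel}, the graphical rendition of Proposition~\ref{prop:kernelfromHNF}), and a final bending step that converts the black cap into a white one with antipodes absorbed into $B$ to form $-B$. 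Your implicit appeal to ``any basis of the kernel will do'' is the paper's Lemma~\ref{lemma:mirror} (isomorphic spans yield equal circuits), which indeed follows from the unitarity lemma.

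However, the core lemma you reduce to is not correct as stated. For $E = (A\,|\,{-B}) \: n+m \to z$ and $K = (\frac{C}{D}) \: r \to n+m$ a basis of $\krn{E}$, the equation $\coc{E}\poi E = \coc{K}\poi K$ does not typecheck (the left side has type $z \to z$, the right side $n+m \to n+m$), and no type-correct rearrangement of this ``sandwich'' holds in $\Span{\VectR}$: for instance $E\poi\coc{E}$ denotes the relation $\{(\xx,\yy) : E\xx = E\yy\}$, of dimension $(n+m)+\dim\krn{E}$, whereas $\coc{K}\poi K$ denotes the partial identity on $\krn{E}$, of dimension $\dim\krn{E}$, so they differ whenever $n+m>0$; similarly $\coc{E}\poi E$ is the partial identity on the image of $E$, not on its kernel. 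Since $\IBRw\cong\Span{\VectR}$ is sound, such an equation cannot be derived, and the reduction as formulated would fail. The statement your bending step actually produces, and the one the rest of your plan needs, is one-sided: the circuit of $E$ with its $z$ output ports closed off by counits equals $\coc{K}$ with its $r$ output ports closed off by counits --- i.e.\ both present the subspace $\krn{E}$ as an arrow $n+m \to 0$; this is the paper's Proposition~\ref{prop:matrixequalkenrel} combined with Lemma~\ref{lemma:pbKernel}. With the reduction target restated in this form (and with the caveat that the unitarity lemma needs the Hopf-algebra laws, not only \eqref{eq:lcm}, \eqref{eq:wbone} and Frobenius, as you yourself acknowledge for the column-sum case), the remainder of your argument goes through essentially as in the paper.
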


\paragraph{Circuits of Invertible Matrices}

In order to prove Proposition~\ref{prop:IBwComplete} it is useful to first develop a string diagrammatic treatment of invertible matrices.



\begin{lemma}\label{lemma:invertiblestar} For $U \in \VectR[n,n]$ invertible, the following holds in $\IBRw$:
\begin{equation}\label{eq:invertiblestar}
\lower11pt\hbox{$\includegraphics[height=1cm]{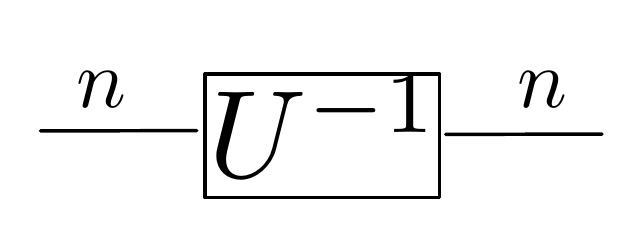}$} =
\lower11pt\hbox{$\includegraphics[height=1cm]{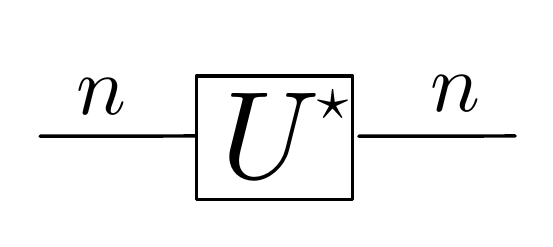}$}
\end{equation}
\end{lemma}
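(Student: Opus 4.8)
The plan is to reduce the statement to elementary matrices. Write $\lceil A\rceil\in\IBRw[n,m]$ for the matrix-form circuit of a $\PID$-matrix $A\colon n\to m$, i.e.\ the image under $\sigma_1$ of $\sem{\ABR}^{-1}(A)$; by Proposition~\ref{prop:star=refl}, $\coc{\lceil A\rceil}=\refl{\lceil A\rceil}$, so the equation to be proved is $\lceil U^{-1}\rceil=\refl{\lceil U\rceil}$ in $\IBRw$. Two bookkeeping facts carry the reduction. First, since $\sem{\ABR}$ is an isomorphism of PROPs (Proposition~\ref{prop:ab=vect}) and composition in $\VectR$ is matrix multiplication taken in the opposite order, $\lceil MN\rceil=\lceil N\rceil\poi\lceil M\rceil$ for square $M,N$. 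Second, $\coc{(\cdot)}=\refl{(\cdot)}$ is a contravariant symmetric monoidal endofunctor of $\IBRw$, hence $\coc{(c\poi d)}=\coc{d}\poi\coc{c}$, $\coc{(c\tns d)}=\coc{c}\tns\coc{d}$, and it fixes identities and symmetries. Granting that $U=E_1\cdots E_\ell$ is a product of elementary matrices and that the lemma holds for each $E_i$, one computes $\lceil U^{-1}\rceil=\lceil E_1^{-1}\rceil\poi\cdots\poi\lceil E_\ell^{-1}\rceil=\coc{\lceil E_1\rceil}\poi\cdots\poi\coc{\lceil E_\ell\rceil}=\coc{(\lceil E_\ell\rceil\poi\cdots\poi\lceil E_1\rceil)}=\coc{\lceil U\rceil}$. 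So it suffices to treat the elementary matrices: adjacent transpositions, rescalings of one coordinate by a unit, and transvections $I+k\,e_{pq}$ with $p\neq q$. (Recall that over a principal ideal domain every invertible matrix is such a product, via Hermite/Smith normal form reduction; for a non-Euclidean $\PID$ one must additionally allow the $2\times2$ unimodular blocks realising B\'ezout combinations, and these are handled exactly as the transvections below.)

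The first two families are immediate. For an adjacent transposition $T$ of coordinates $i,i{+}1$ we have $\lceil T\rceil=\id_{i-1}\tns\symNet\tns\id_{n-i-1}$ and $T^{-1}=T$; since $\refl{(\cdot)}$ fixes $\symNet$ and identities, $\refl{\lceil T\rceil}=\lceil T\rceil=\lceil T^{-1}\rceil$, as required. For the rescaling of coordinate $i$ by a unit $u$, the circuit is the scalar $u$ in position $i$, its reflection is the coscalar $u$ in the same position, and its inverse is the rescaling by $u^{-1}$; the claim thus amounts to the equality in $\IBRw$ of the scalar $u^{-1}$ with the coscalar $u$. This holds because \eqref{eq:lcm}, instantiated at the non-zero element $u$, gives that the scalar $u$ postcomposed with the coscalar $u$ is $\id_1$, while \eqref{eq:scalarmult} and \eqref{eq:unitscalar} give that the scalar $u^{-1}$ postcomposed with the scalar $u$ is $\id_1$; cancelling, the scalar $u^{-1}$ equals the coscalar $u$. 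Finally, conjugating by a permutation circuit (covered by the transposition case together with closure under $\poi$) and using monoidality of $\coc{(\cdot)}$, the transvection case reduces to the single equation $\coc{\lceil E_{12}(k)\rceil}=\lceil E_{12}(-k)\rceil$ in $\IBRw[2,2]$, where $E_{12}(k)$ is the $2\times2$ matrix with $1$s on the diagonal and $k$ in position $(1,2)$. Here $\lceil E_{12}(k)\rceil=(\id_1\tns\Bcomult)\poi(\id_1\tns\scalar\tns\id_1)\poi(\Wmult\tns\id_1)$ (the scalar labelled $k$), and applying Proposition~\ref{prop:star=refl} termwise, $\coc{\lceil E_{12}(k)\rceil}=(\Wcomult\tns\id_1)\poi(\id_1\tns\coscalar\tns\id_1)\poi(\id_1\tns\Bmult)$ (the coscalar labelled $k$).

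The remaining step --- turning $\coc{\lceil E_{12}(k)\rceil}$ into $\lceil E_{12}(-k)\rceil$ --- is where I expect the real work to lie: it is the only point at which the axioms proper to $\IBRw$, i.e.\ those not already valid in $\ABR+\ABRop$, are needed, and it is in essence a diagrammatic Gaussian-elimination step showing that dualising a shear negates its parameter. The plan is: use the black Frobenius law \eqref{eq:BFrob} and black separability \eqref{eq:Bsep} to re-route the $\Bmult$ past the $\Wcomult$; convert the coscalar $k$ into a scalar $k$ using the scalar/Frobenius interaction laws (such as \eqref{eq:coscalarbcomult} and \eqref{eq:scalarbmult}); and track the antipode that is produced when the resulting black cap--cup is straightened out --- via the Hopf law \eqref{eq:hopf} and the identification \eqref{eq:uniqueantipode} of the two antipodes --- this antipode being precisely what flips $k$ to $-k$. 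The delicate part is getting the bookkeeping of the Frobenius re-routings and of the sign right; the rest is formal manipulation of the kind already carried out in the derivations of \eqref{eq:lccb}--\eqref{eq:Bsep}.
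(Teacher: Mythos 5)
Your reduction to elementary matrices via the contravariant functoriality of $\coc{(\cdot)}$ is sound, and your transposition and unit-rescaling cases are correct; in substance this mirrors the paper's own argument, which inducts on the elementary row operations producing $U$, shows at each step that the reflected circuit is a two-sided inverse of the circuit of $U$, and concludes by uniqueness of inverses (the unit case there is likewise dispatched with \eqref{eq:lcm}, \eqref{eq:scalarmult}, \eqref{eq:unitscalar}). The genuine gap is the transvection case, i.e.\ the identity $\coc{\lceil E_{12}(k)\rceil}=\lceil E_{12}(-k)\rceil$ in $\IBRw$. This is the only case in which the axioms proper to $\IBRw$ (rather than those of $\ABR+\ABRop$ and the SMC laws) do any work, and you do not prove it: you give a plan and explicitly defer ``the bookkeeping of the Frobenius re-routings and of the sign''. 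That deferred bookkeeping \emph{is} the content of the lemma. Moreover the sketched route is not clearly viable as written: \eqref{eq:coscalarbcomult} and \eqref{eq:scalarbmult} are derived only for a non-zero label and in any case only commute a (co)scalar past the black structure without introducing an antipode, so it is not clear where the $-k$ is supposed to come from; extracting that sign is exactly the nontrivial diagrammatic computation which the paper packages into the derived ``bow-tie'' law \eqref{eq:papillon}, itself established in~\ref{AppDerLaws} from \eqref{eq:QFrob}, \eqref{eq:scalarsum}, \eqref{eq:zeroscalar} and the (co)monoid laws. Until an equational derivation of the shear case is supplied, the proof is incomplete precisely at its crux.

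Two further remarks. First, a cleaner completion along your own lines is available and coincides with the paper's: rather than transforming $\coc{\lceil E_{12}(k)\rceil}$ into $\lceil E_{12}(-k)\rceil$ directly, show that $\lceil E_{12}(k)\rceil\poi\coc{\lceil E_{12}(k)\rceil}$ and $\coc{\lceil E_{12}(k)\rceil}\poi\lceil E_{12}(k)\rceil$ are both the identity --- this is precisely what \eqref{eq:papillon} provides --- and then conclude $\coc{\lceil E_{12}(k)\rceil}=\lceil E_{12}(k)^{-1}\rceil$ by uniqueness of inverses, the very ``cancelling'' argument you already used for unit scalars; the paper runs this inverse-uniqueness argument uniformly for all three kinds of row operation. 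Second, your parenthetical about non-Euclidean PIDs is a fair concern (the paper itself silently assumes every invertible matrix is reachable by elementary row operations), but ``handled exactly as the transvections below'' is not justified: a general unimodular $2\times 2$ B\'ezout block is not a shear, and under your strategy it requires its own derivation (either a direct one, or a \eqref{eq:papillon}-style two-sided-inverse law for such blocks), which you do not supply.
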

\begin{proof} Recall that an invertible $n \times n$ $\PID$-matrix is one obtainable from the identity $n \times n$ matrix by application of elementary row operations. Thus we can prove our statement by induction on the number of applied operations.

The base case is the one in which no row operation is applied and thus $U = \id_n$. Then we have the following equality in $\IBRw$, yielding~\eqref{eq:invertiblestar}.
 \[
\lower11pt\hbox{$\includegraphics[height=1cm]{graffles/circuitUminusone.pdf}$} =
\lower4pt\hbox{$\includegraphics[height=.8cm]{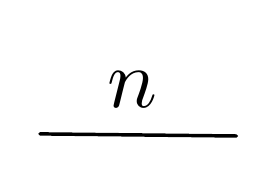}$} =
\lower11pt\hbox{$\includegraphics[height=1cm]{graffles/circuitUstar.pdf}$}
\]

Inductively, suppose that $U$ is obtained by swapping two rows of an invertible matrix $V$. We can assume without loss of generality that the two rows are one immediately above the other, with $j$ the number of rows above them and $m$ the number of rows below, where $n = j+2+m$. In circuit terms, this means that
\[
\lower11pt\hbox{$\includegraphics[height=1cm]{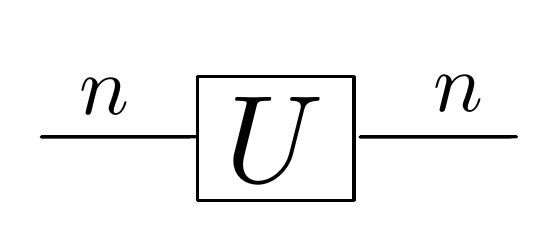}$} =
\lower11pt\hbox{$\includegraphics[height=1cm]{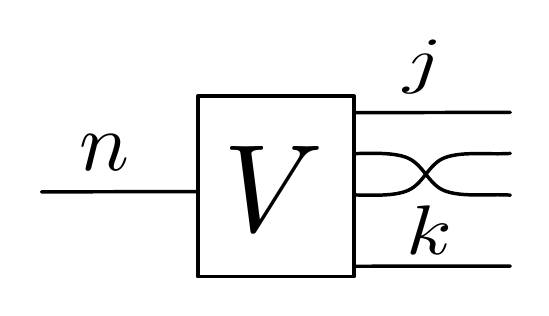}$}
\]
In order to show~\eqref{eq:invertiblestar}, it suffices to prove that the circuit representing $U^{\star}$ is the inverse of $U$, that is, $U\poi U^{\star} = \id_n = U^{\star}\poi U$. This is given by the following derivations.
\begin{eqnarray*}
\lower14pt\hbox{$\includegraphics[height=1.3cm]{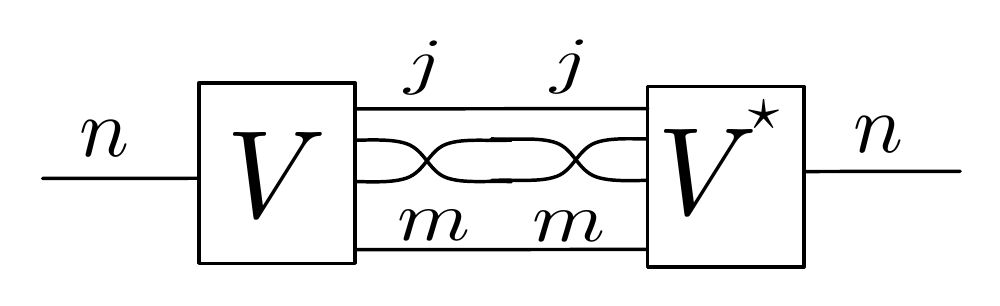}$}
\eql{Axiom SMCs}
\lower11pt\hbox{$\includegraphics[height=1cm]{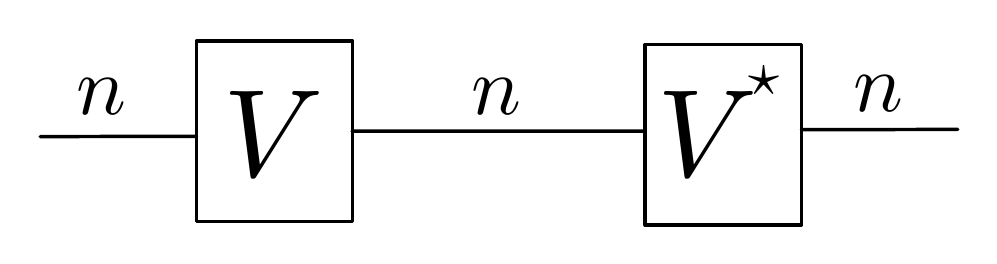}$}
\eql{Ind. hyp.}
\lower4pt\hbox{$\includegraphics[height=.8cm]{graffles/circuitidn.pdf}$} \\
\lower14pt\hbox{$\includegraphics[height=1.3cm]{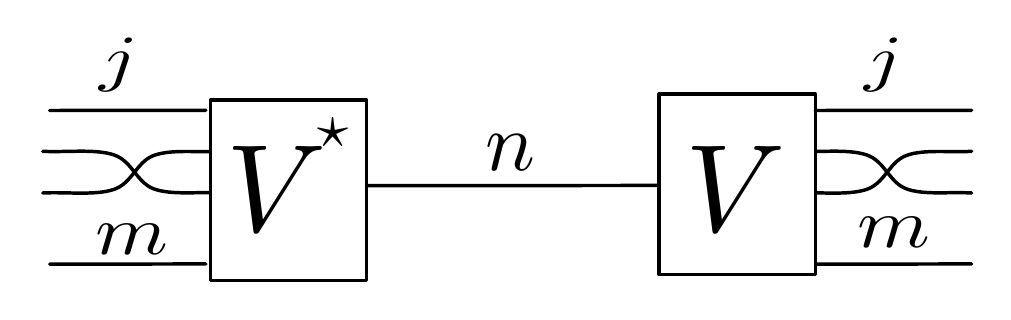}$}
\eql{Ind. hyp.}
\lower12pt\hbox{$\includegraphics[height=1.2cm]{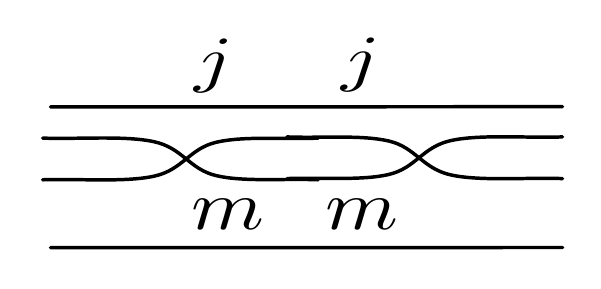}$}
\eql{Axiom SMCs}
\lower4pt\hbox{$\includegraphics[height=.8cm]{graffles/circuitidn.pdf}$}
 \end{eqnarray*}

 The next inductive case that we consider is the one of row sum. As above, we may assume that such operation is applied to adjacent rows of an invertible matrix $V$. The circuit representing $U$ has the following shape, where $j+2+m = n$:
\[
\lower11pt\hbox{$\includegraphics[height=1cm]{graffles/circuitU.pdf}$} =
\lower18pt\hbox{$\includegraphics[height=1.5cm]{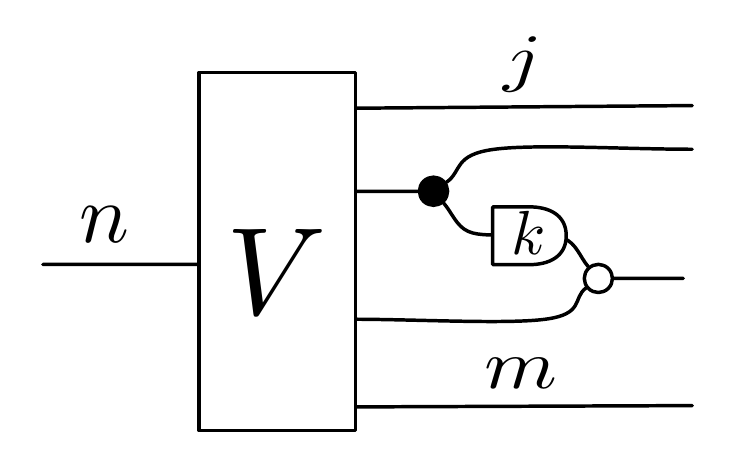}$}
\]
The following two derivations prove that $U^{\star}$ is the inverse of $U$:
\begin{gather*}
\lower18pt\hbox{$\includegraphics[height=1.9cm]{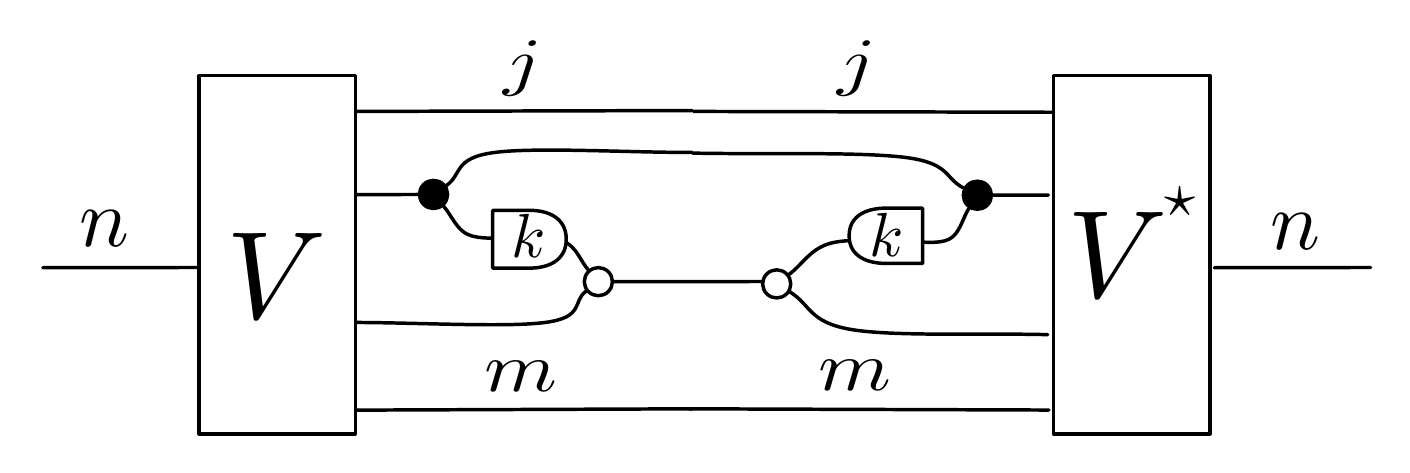}$}
\eql{\eqref{eq:papillon}}
\lower11pt\hbox{$\includegraphics[height=1.2cm]{graffles/circuitVVstar.pdf}$}
\eql{Ind. hyp.}
\lower4pt\hbox{$\includegraphics[height=.8cm]{graffles/circuitidn.pdf}$} \\
\lower18pt\hbox{$\includegraphics[height=2.1cm]{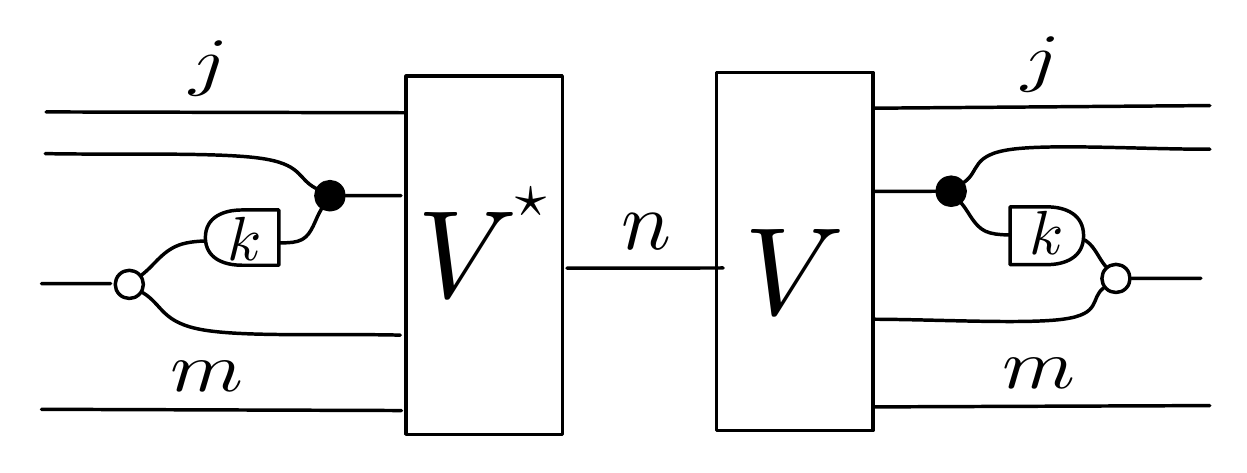}$}
\eql{Ind. hyp.}
\lower16pt\hbox{$\includegraphics[height=2cm]{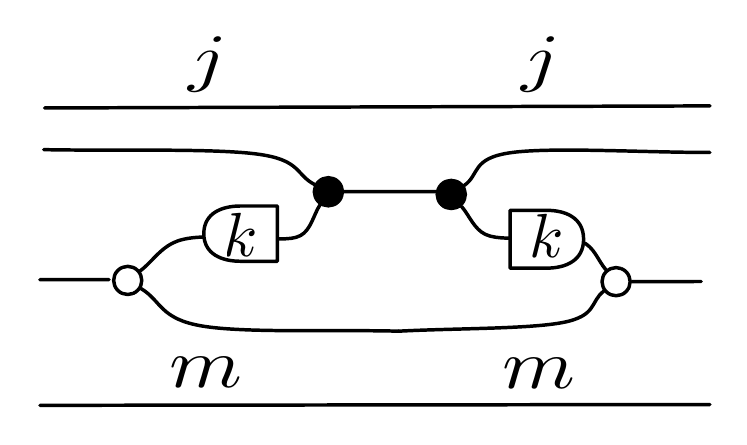}$}
\eql{\eqref{eq:papillon}}
\lower4pt\hbox{$\includegraphics[height=.8cm]{graffles/circuitidn.pdf}$}
 \end{gather*}
 Finally, we have the inductive case in which $U$ is obtained by $V$ via multiplication of a row by a invertible element $i \in \PID$. We denote with $i^{-1} \in \PID$ the multiplicative inverse of $i$. The circuit representing $U$ has the following shape, where $z+1+m=n$:
 \[
\lower11pt\hbox{$\includegraphics[height=1cm]{graffles/circuitU.pdf}$} =
\lower14pt\hbox{$\includegraphics[height=1.3cm]{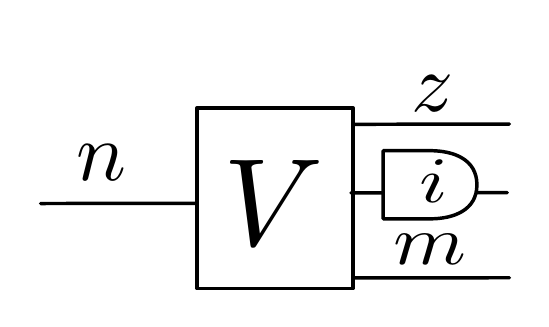}$}
\]
 and we can derive the desired equalities in $\IBRw$ as follows.
 \begin{eqnarray*}
\lower13pt\hbox{$\includegraphics[height=1.6cm]{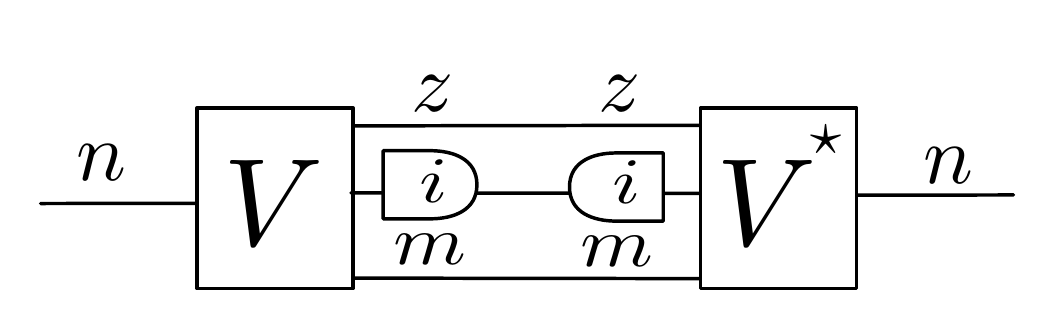}$}
\eql{\eqref{eq:lcm}}
\lower11pt\hbox{$\includegraphics[height=1cm]{graffles/circuitVVstar.pdf}$}
\eql{Ind. hyp.}
\lower4pt\hbox{$\includegraphics[height=.8cm]{graffles/circuitidn.pdf}$}
\end{eqnarray*}
\begin{eqnarray*}
\lower13pt\hbox{$\includegraphics[height=1.6cm]{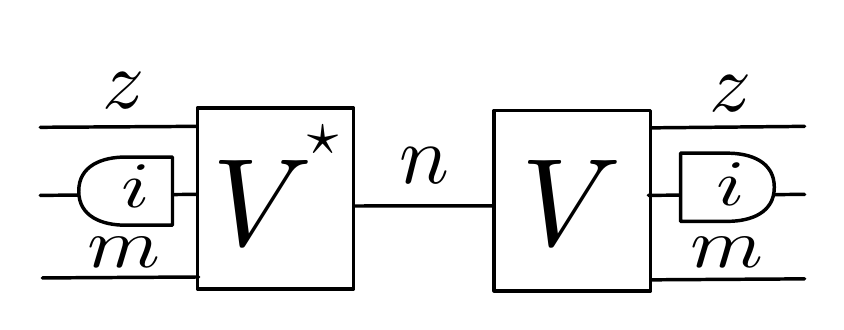}$}
\!\!\eql{IH}\!\!
\lower12pt\hbox{$\includegraphics[height=1.5cm]{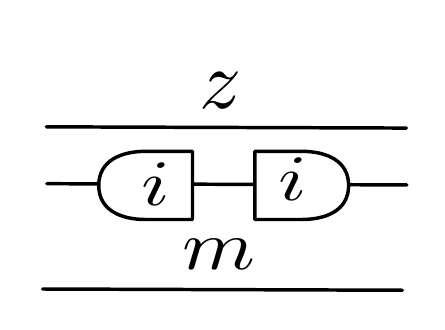}$}
& \!\!\eql{\eqref{eq:lcm}}\!\! &
\lower12pt\hbox{$\includegraphics[height=1.7cm]{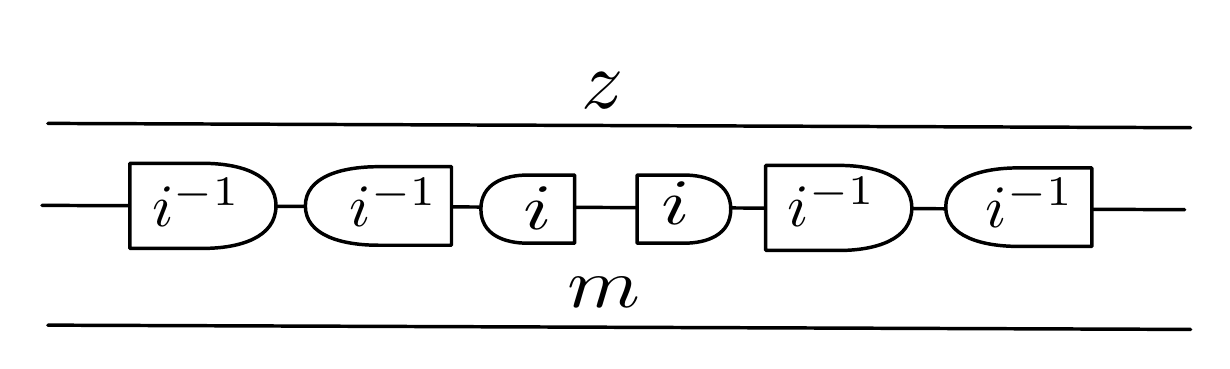}$}\\
& \!\!\eql{\eqref{eq:scalarmult}}\!\! &
\lower12pt\hbox{$\includegraphics[height=1.5cm]{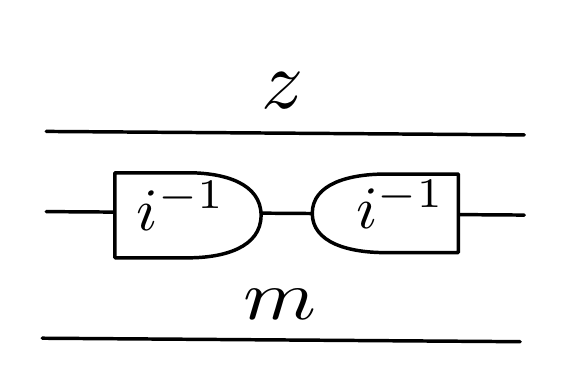}$}
\!\!\eql{\eqref{eq:lcm}}\!\!
\lower4pt\hbox{$\includegraphics[height=.8cm]{graffles/circuitidn.pdf}$}
 \end{eqnarray*}
\qed
\end{proof}

The next lemma guarantees that spans which are identified in $\Span{\VectR}$ are not distinguished by the equational theory of $\IBRw$. This means considering invertible $\PID$-matrices, as they are precisely the isomorphisms in $\VectR$; indeed, recall that arrows of $\Span{\VectR}$ are isomorphism classes of spans in $\VectR$: we identify $n \tl{A} z \tr{B} m$ and $n \tl{C} z \tr{D} m$ iff there is an invertible matrix $U \in \VectR[z,z]$ such that the following diagram commutes:
\begin{eqnarray}\label{diag:isospan}
\vcenter{
\xymatrix@C=15pt@R=15pt{ && \ar[drr]^{B} z  \ar[dll]_{A}&& \\
 n && \ar[ll]^{C} z \ar[u]^>>>>U \ar[rr]_{D}&& m }
 }
\end{eqnarray}
\begin{lemma}\label{lemma:mirror} Let $A,B,C,D,U$ be as in \eqref{diag:isospan}. Then the following equation holds in $\IBRw$:
\[
\lower11pt\hbox{$\includegraphics[height=1cm]{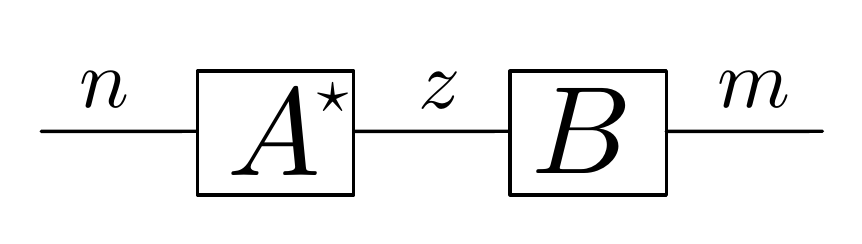}$} = \lower11pt\hbox{$\includegraphics[height=1cm]{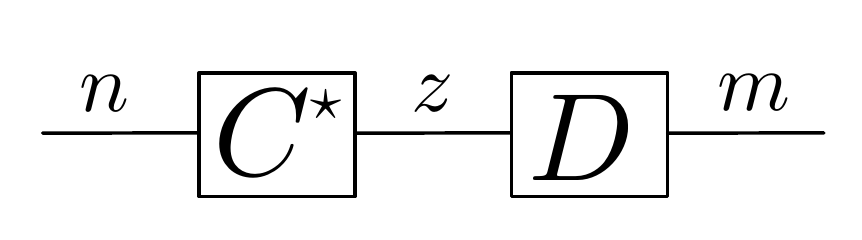}$}
\]
\end{lemma}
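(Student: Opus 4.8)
The plan is to reduce the claimed identity to Lemma~\ref{lemma:invertiblestar} together with the contravariant functoriality of the duality $(\cdot)^\star \colon \ABR \to \ABRop$, using throughout that $\sem{\ABR}$ is an isomorphism of PROPs (Proposition~\ref{prop:ab=vect}), so that any identity of $\PID$-matrices may be transported to the corresponding matrix-form circuits in $\ABR$, and hence in $\IBRw$ along $\sigma_1$.

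First I would read off the commuting diagram~\eqref{diag:isospan}: as arrows of $\VectR$ we have $C = U \poi A$ and $D = U \poi B$, so the same equalities hold between the associated circuits in $\ABR$, viewed in $\IBRw$. Applying $(\cdot)^\star$ (equivalently, by Proposition~\ref{prop:star=refl}, the compact closed duality $\coc{(\cdot)}$) to $C = U \poi A$ reverses the order of composition and yields $\coc{C} = \coc{A} \poi \coc{U}$. Substituting this equality and $D = U \poi B$ into the right-hand side gives
\[
\coc{C} \poi D \;=\; \coc{A} \poi \coc{U} \poi U \poi B
\]
in $\IBRw$. Since $U$ is invertible, Lemma~\ref{lemma:invertiblestar} tells us that the circuit $\coc{U}$ coincides in $\IBRw$ with the matrix-form circuit of the inverse matrix $U^{-1}$; hence the subcircuit $\coc{U} \poi U$ equals $U^{-1} \poi U$, whose image under $\sem{\ABR}$ is the $z \times z$ identity matrix, so $\coc{U} \poi U = \id_z$ in $\ABR$ and therefore in $\IBRw$. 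Cancelling this identity leaves $\coc{C} \poi D = \coc{A} \poi B$, which is exactly the desired equation.

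I do not expect a genuine obstacle here: the real content has already been isolated in Lemma~\ref{lemma:invertiblestar}. The only points needing care are the bookkeeping of composition order when invoking contravariance of $(\cdot)^\star$, and the observation that the ``starred'' circuits appearing in the statement of the lemma are precisely the images of $A$ and $C$ under $\ABR \to \ABRop$, so that Lemma~\ref{lemma:invertiblestar} applies verbatim.
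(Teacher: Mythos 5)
Your proof is correct and follows essentially the same route as the paper's: both translate the commutativity of \eqref{diag:isospan} into circuit equalities via $\ABR\cong\VectR$, use contravariance of $\coc{(\cdot)}$ to rewrite $\coc{C}$ as $\coc{A}\poi\coc{U}$, identify $\coc{U}$ with the circuit of $U^{-1}$ by Lemma~\ref{lemma:invertiblestar}, and cancel $U^{-1}\poi U$. The only (immaterial) difference is the order of rewriting: the paper first inserts $U\poi U^{-1}$ and uses $U^{-1}\poi D=B$, whereas you substitute $D=U\poi B$ directly.
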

\begin{proof} Since $\ABR \cong \VectR$, commutativity of \eqref{diag:isospan} yields the following equalities of circuits in $\ABR$:
        \begin{equation}\label{eq:mirror1}
         \lower9pt\hbox{$\includegraphics[height=1cm]{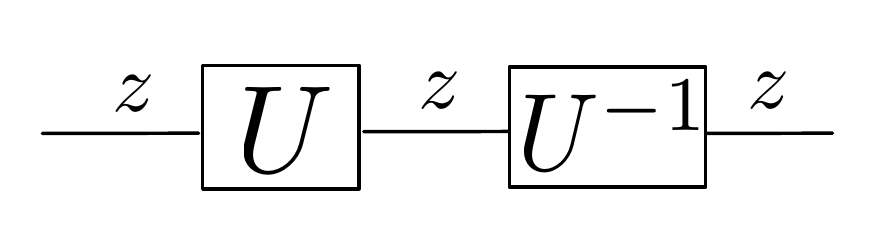}$}
   =\lower4pt\hbox{$\includegraphics[height=1cm]{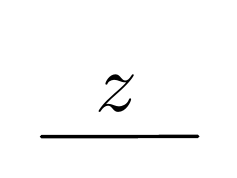}$}
   =          \lower9pt\hbox{$\includegraphics[height=1cm]{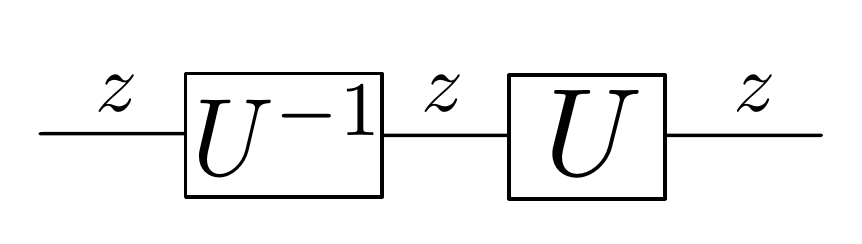}$}
   \end{equation}
        \begin{multicols}{2}
     \noindent
        \begin{equation}\label{eq:mirror2}    \lower7pt\hbox{$\includegraphics[height=.8cm]{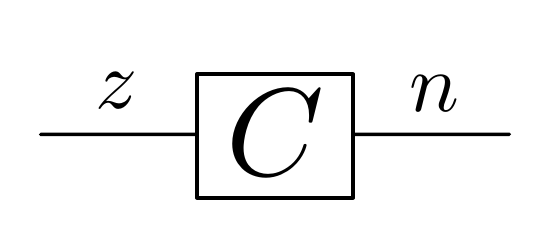}$}
  \!=\! \lower7pt\hbox{$\includegraphics[height=.8cm]{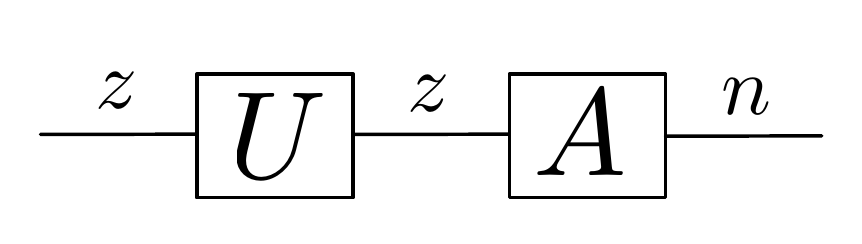}$}
   \end{equation}
        \begin{equation}   \label{eq:mirror3} \lower7pt\hbox{$\includegraphics[height=.8cm]{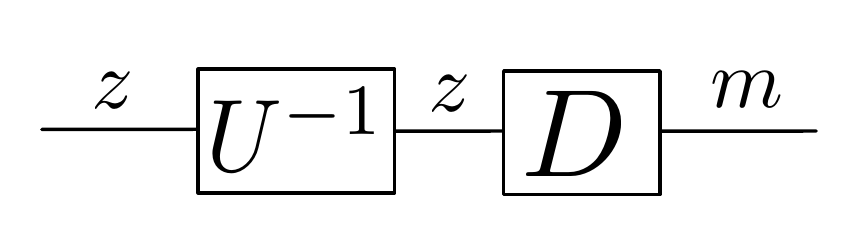}$}
  \!=\! \lower7pt\hbox{$\includegraphics[height=.8cm]{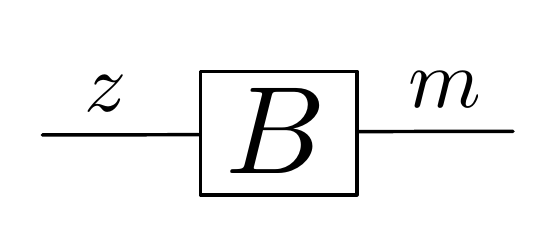}$}
   \end{equation}
 \end{multicols}
Since $\ABR$ is a sub-theory of $\IBRw$, these equations are also valid in $\IBRw$. The statement of the lemma is then given by the following derivation.
\begin{eqnarray*}
\lower11pt\hbox{$\includegraphics[height=1cm]{graffles/circuitCstarDz.pdf}$}
& \eql{\eqref{eq:mirror1}} & \lower11pt\hbox{$\includegraphics[height=1cm]{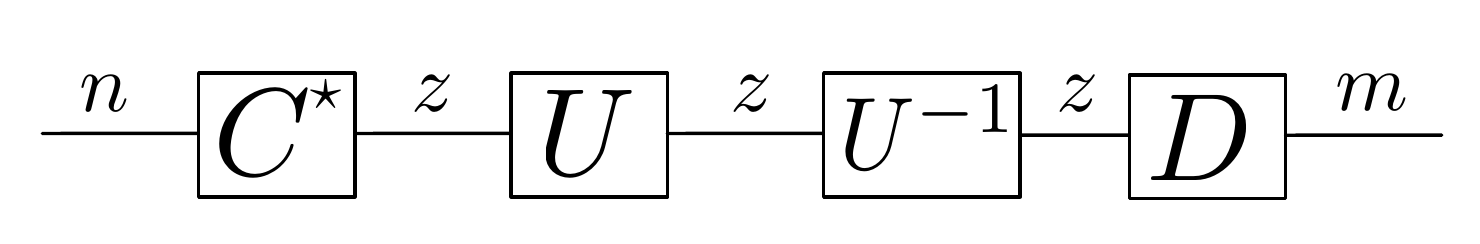}$} \\
& \eql{\eqref{eq:mirror3}} & \lower11pt\hbox{$\includegraphics[height=1cm]{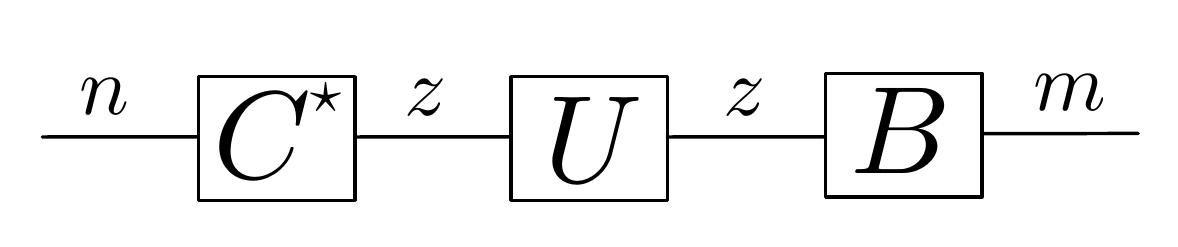}$} \\
& \eql{\eqref{eq:mirror2}} &
\lower13pt\hbox{$\includegraphics[height=1.2cm]{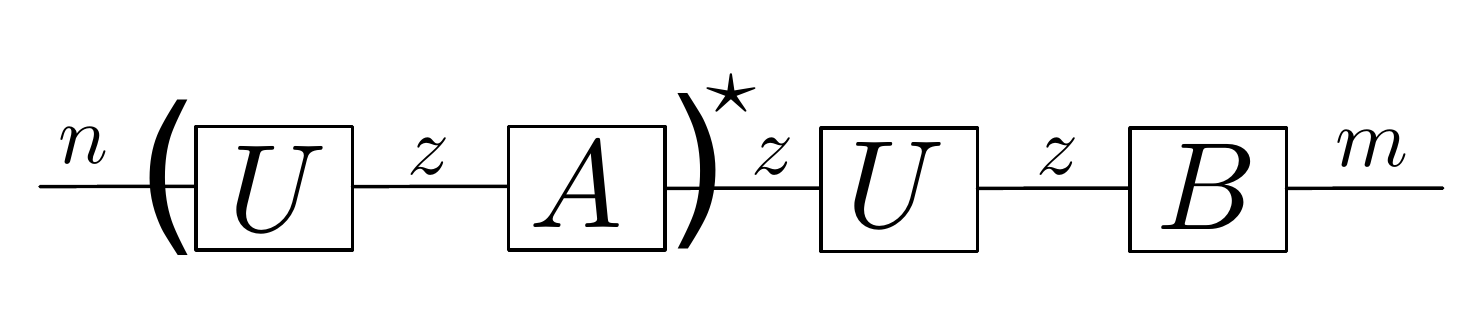}$} \\
& \eql{Def. $\coc{(\cdot)}$} &
\lower11pt\hbox{$\includegraphics[height=1cm]{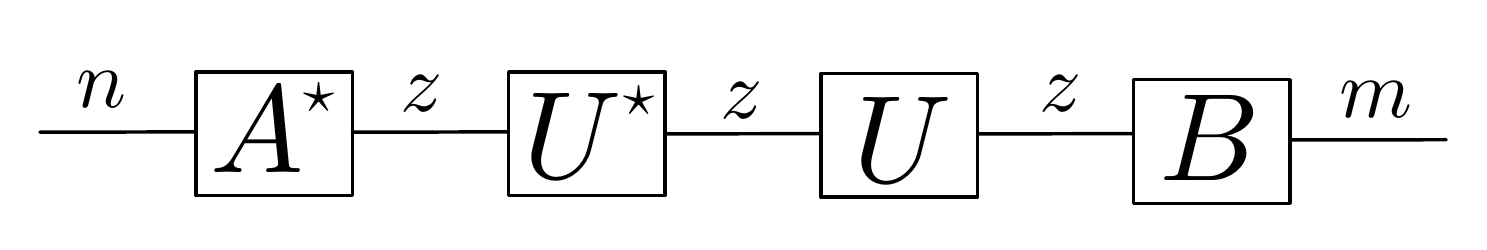}$} \\
& \eql{Lemma~\ref{lemma:invertiblestar}} &
\lower11pt\hbox{$\includegraphics[height=1cm]{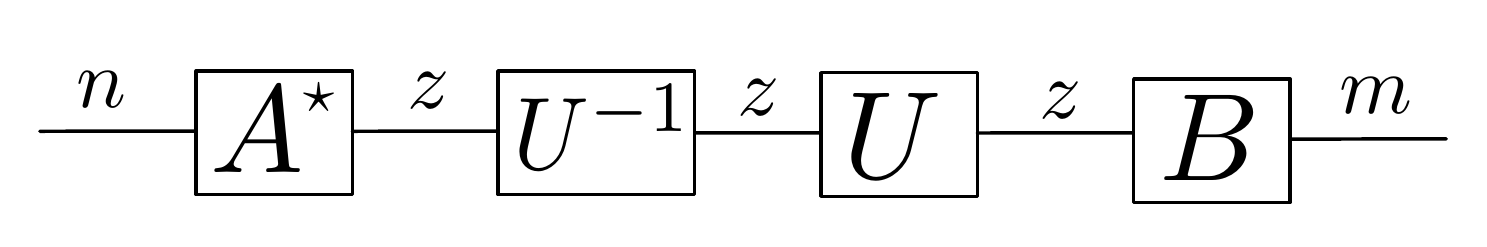}$} \\
& \eql{\eqref{eq:mirror1}} &
\lower11pt\hbox{$\includegraphics[height=1cm]{graffles/circuitAstarB.pdf}$}
\end{eqnarray*} \qed
\end{proof}

The next lemma is an important ingredient in the proof of Proposition~\ref{prop:IBwComplete}: it allows us to reduce, in the graphical theory, the computation of pullbacks to the computation of kernels. In the following, we use the notation $Ker(A)$ for the matrix representing the kernel of $A$ --- or, more precisely, 
the indicated arrow in the pullback square below:
\begin{equation*}
\vcenter{
\xymatrix@R=8pt@C=10pt{
&\ar[dl]_{\Ker{A}} r \pushoutcorner \ar[dr]^{\finVect} & \\
n \ar[dr]_{A} & & 0 \ar[dl]^{\initVect}\\
& z  & }
}
\end{equation*}

\begin{lemma}\label{lemma:pbKernel} Given a pullback square in $\VectR$ as on the left, the equation on the right holds in $\IBRw$:
\[\xymatrix@R=10pt@C=10pt{
&\ar[dl]_{C} r \pushoutcorner \ar[dr]^{D} & \\
n \ar[dr]_{A} & & m \ar[dl]^{B}\\
& z  & }
\qquad \qquad
\lower20pt\hbox{
\lower12pt\hbox{$\includegraphics[height=1cm]{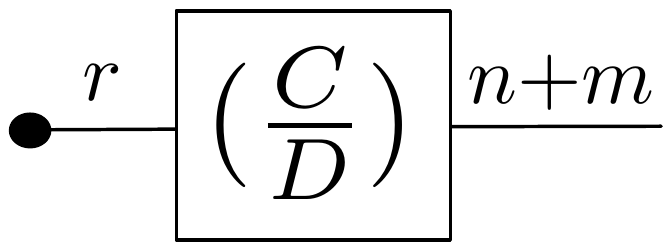}$} =
\lower8pt\hbox{$\includegraphics[height=.8cm]{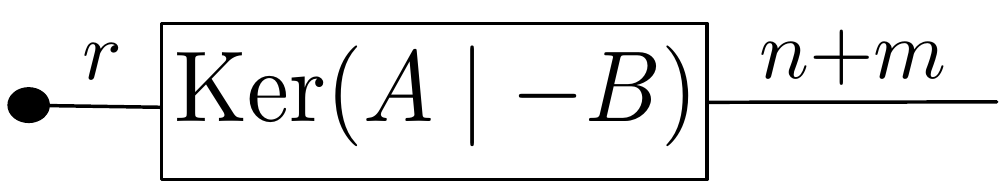}$}
}
\]
\end{lemma}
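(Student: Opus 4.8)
The plan is to recognise that both circuits in the statement denote one and the same $\PID$-submodule of $\PID^{n+m}$, namely
\[
V \;=\; \{\,(x,y)\in\PID^{n}\times\PID^{m} \;\mid\; Ax = By\,\}\;\subseteq\;\PID^{n+m},
\]
and to bridge the two descriptions of $V$ by an invertible change of basis of its apex, which Lemma~\ref{lemma:invertiblestar} renders invisible inside $\IBRw$.

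Concretely, I would first record two elementary facts about $\VectR$. On the one hand, the comparison arrow $\left(\frac{C}{D}\right)\colon r\to n+m$ of the pullback is monic --- the mediating arrow from a pullback into the corresponding product always is --- and its image is exactly $V$, since $(C,D)$ being a pullback of $(A,B)$ means precisely that $AC=BD$ and that every pair in $V$ factors uniquely through it. On the other hand, $\Ker{A\mid-B}\colon r\to n+m$ is monic, being a kernel, and its image is $\{(x,y)\mid (A\mid-B)\left(\frac{x}{y}\right)=0\}=V$ as well; in particular the two apex objects agree, both having rank equal to that of the submodule $V$, which is free because $\PID$ is a PID. Hence the columns of $\left(\frac{C}{D}\right)$ and of $\Ker{A\mid-B}$ are two bases of the same rank-$r$ free submodule, so they differ by an invertible change-of-basis matrix: there is an invertible $U\in\VectR[r,r]$ with $\left(\frac{C}{D}\right) = U\poi\Ker{A\mid-B}$ in $\VectR$. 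Transporting this equation along the isomorphism $\sem{\ABR}\colon\ABR\cong\VectR$ of Proposition~\ref{prop:ab=vect} and then along the inclusion $\sigma_1\colon\ABR\to\IBRw$, the same identity holds between the circuits in matrix form corresponding to $\left(\frac{C}{D}\right)$, $U$ and $\Ker{A\mid-B}$ inside $\IBRw$.

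Next I would apply the contravariant morphism $\coc{(\cdot)}$ together with Lemma~\ref{lemma:invertiblestar}: from $\left(\frac{C}{D}\right) = U\poi\Ker{A\mid-B}$ we obtain
\[
\coc{\left(\frac{C}{D}\right)} \;=\; \coc{\Ker{A\mid-B}}\poi\coc{U} \;=\; \coc{\Ker{A\mid-B}}\poi U^{-1}
\]
in $\IBRw$. Postcomposing both sides with the $r$-fold black counit $\bcounitn\colon r\to 0$ turns the left-hand side into exactly the left-hand circuit of the statement, so it only remains to observe that $U^{-1}\poi\bcounitn=\bcounitn$. But $U^{-1}$ and $\bcounitn$ are both arrows of the sub-PROP $\ABR$, and $\ABR[r,0]\cong\VectR[r,0]$ is a singleton (the only $0\times r$ matrix); hence $U^{-1}\poi\bcounitn=\bcounitn$, yielding the right-hand circuit of the statement. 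This last step may equivalently be presented as an instance of Lemma~\ref{lemma:mirror} for the isomorphic spans $(C,D)$ and the one induced by $\Ker{A\mid-B}$, read through the compact closed structure of Proposition~\ref{prop:snakecc}; the elementary argument just given is shorter.

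I do not anticipate a substantial obstacle. The only delicate point is the bookkeeping of arities and of the direction of $U$ in the identity $\left(\frac{C}{D}\right) = U\poi\Ker{A\mid-B}$ (and, accordingly, which hom-set into $0$ one invokes). Beyond that, the argument rests only on two standard linear-algebraic inputs --- that over a PID the pullback of $A$ and $B$ is computed as the kernel of $(A\mid-B)$, and that two bases of a free submodule differ by an invertible matrix --- together with the already-established Lemma~\ref{lemma:invertiblestar} and the triviality of $\ABR[r,0]$. The genuinely diagrammatic content, namely that a change of basis of the pullback apex cannot be seen by $\IBRw$, is precisely what Lemma~\ref{lemma:invertiblestar} was designed to supply.
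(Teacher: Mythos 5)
Your proof is correct, and it isolates the same linear-algebraic content as the paper's argument --- namely that the given pullback apex and the source of $\Ker{A|{-B}}$ are two free covers of the same submodule $\{(\xx,\yy)\mid A\xx=B\yy\}$, hence differ by an invertible change of basis --- but it packages the diagrammatic part differently. The paper keeps the two legs of the span separate: it forms the pullback span $(C',D')$ by composing $\Ker{A|{-B}}$ with the biproduct projections, invokes Lemma~\ref{lemma:mirror} to get $\coc{C}\poi D=\coc{C'}\poi D'$, and then transfers this to the capped circuits of the statement via the compact-closed sliding law~\eqref{eq:ccsliding2}. You instead work with the pairing arrows themselves: the two monos $\left(\frac{C}{D}\right)$ and $\Ker{A|{-B}}$ with common image give an invertible $U$ with $\left(\frac{C}{D}\right)=U\poi\Ker{A|{-B}}$, Lemma~\ref{lemma:invertiblestar} turns $\coc{U}$ into $U^{-1}$ inside $\IBRw$, and $U^{-1}$ is then absorbed by the $r$-fold black counit because $\ABR[r,0]\cong\VectR[r,0]$ is a singleton and $\sem{\ABR}$ is faithful. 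This bypasses both Lemma~\ref{lemma:mirror} (whose own proof is in any case powered by Lemma~\ref{lemma:invertiblestar}) and the cup/cap manipulations, at the price only of the orientation bookkeeping you already flag: depending on how one reads the statement's pictures, your identity may need to be mirrored through $\refl{(\cdot)}$, i.e.\ Proposition~\ref{prop:star=refl}, which is harmless. Your element-wise verification that the image of the mediating arrow $\left(\frac{C}{D}\right)$ is exactly the kernel submodule is legitimate as stated, since it only uses the universal property of the pullback in $\VectR$ against the test object $1$.
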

\begin{proof}
The pullback of $A\: n \to z$ and $B\: m \to z$ in the category of $\PID$-modules and linear maps can be obtained by computing the kernel of the matrix $(A|{-B})$. The pullback span ($C'$ and $D'$) then arises by post-composing $\Ker{A|-B}$ with the projections out of the biproduct $n \tns m$.
\[\xymatrix@C=40pt@R=32pt{
&\ar[dl]_{C'} \PID^r \ar[d]|{\Ker{A|{-B}}} \ar[dr]^{D'} & \\
\PID^n \ar[dr]_{A}  & \PID^n \tns \PID^m \ar[d]|{(A|-B)} & \PID^m \ar[dl]^{B} \\
& \PID^z  & }\]
Thus the spans $ \tl{C'} \tr{D'}$
 and $\tl{C} \tr{D}$ are isomorphic and, using the conclusion of Lemma~\ref{lemma:mirror}, we infer that
\begin{equation}\label{eq:mirrorAppliedToKernel} \tag{$\nabla$}
\lower8pt\hbox{$\includegraphics[height=.8cm]{graffles/circuitCstarDr.pdf}$} =
 \lower6pt\hbox{$\includegraphics[height=.6cm]{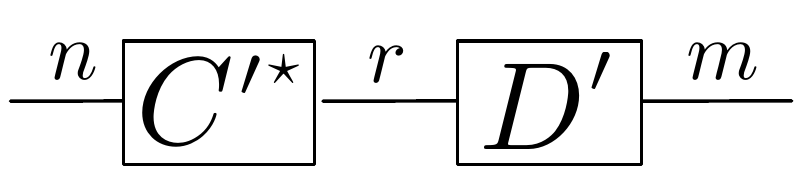}$}
\end{equation}
 from which follows that
\begin{equation}\label{eq:mirrorAppliedToKernel2} \tag{$\triangle$}
\lower10pt\hbox{$\includegraphics[height=1cm]{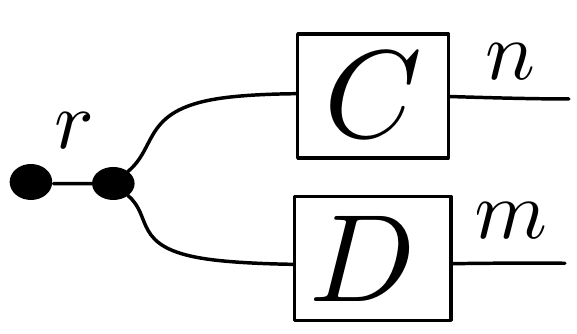}$} \!\!\eql{\eqref{eq:ccsliding2}}\!\!
\lower10pt\hbox{$\includegraphics[height=1cm]{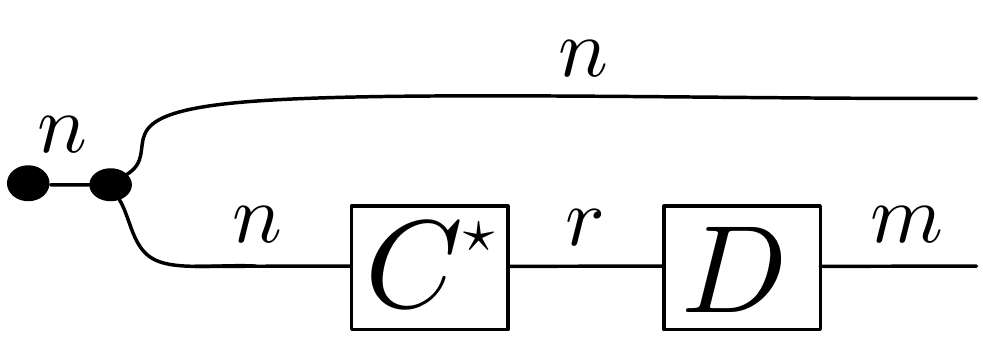}$} \!\!\eql{\eqref{eq:mirrorAppliedToKernel}}\!\!
\lower10pt\hbox{$\includegraphics[height=1cm]{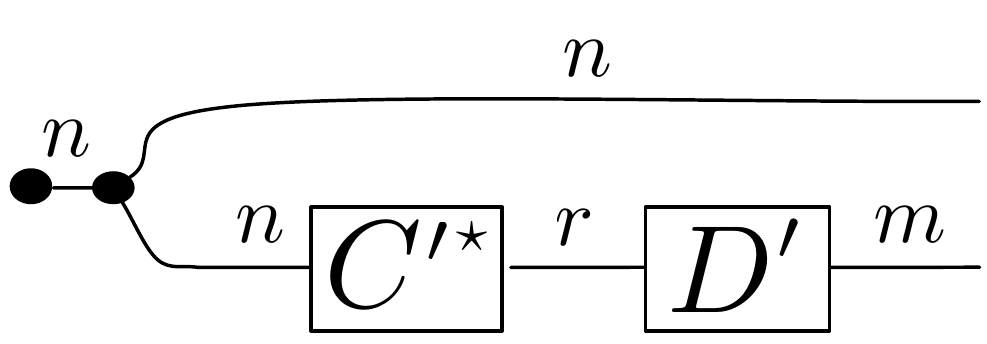}$} \!\!\eql{\eqref{eq:ccsliding2}}\!\!
\lower10pt\hbox{$\includegraphics[height=1cm]{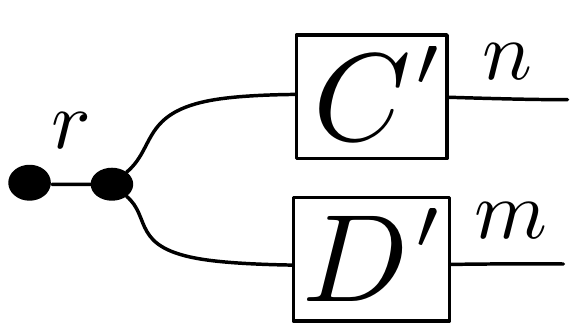}$}.
\end{equation}
 We can now conclude the proof of our statement:
 \[
\lower8pt\hbox{$\includegraphics[height=.8cm]{graffles/BcounitCoverD.pdf}$} \!\eql{Def. $\left(\frac{C}{D}\right)$}\!
\lower10pt\hbox{$\includegraphics[height=1cm]{graffles/ccCD.pdf}$} \!\eql{\eqref{eq:mirrorAppliedToKernel2}}\!
\lower10pt\hbox{$\includegraphics[height=1cm]{graffles/ccCprimeDprime.pdf}$} \!\eql{Def. $\left(\frac{C'}{D'}\right)$}\!
\lower8pt\hbox{$\includegraphics[height=.8cm]{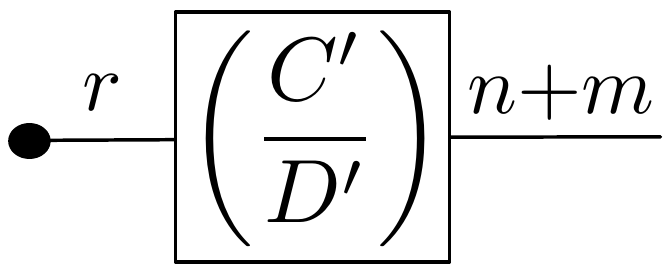}$} \!\eql{}\!
\lower6pt\hbox{$\includegraphics[height=.6cm]{graffles/BcounitkernelAminusB.pdf}$}.
\]
\qed\end{proof} 

\paragraph{Computing Kernels in $\IBRw$}

 We now describe how the kernel computation of a matrix can be formulated
 within the equational theory of $\IBRw$.  We first recall some linear algebra that will be used in our argument.
 \begin{definition}\label{Def:HNF} An $m \times n$ matrix $A$ is said to be in \emph{Hermite normal form} (HNF) if there is a natural number $r \leq n$ and a strictly increasing function $f \: [r+1,n] \to [1,m]$ associating to each column $i$ a row $f(i)$, such that:
 \begin{enumerate}
   \item the first $r$ columns of $A$ have all entries with value $0$;
   \item for all columns $i$ with $r+1 \leq i \leq n$, $A_{f(i),i} \neq 0$ and
   \item for all $j \gr f(i)$, $A_{j,i} = 0$.
 \end{enumerate}
\end{definition}

A matrix in HNF is given in Example~\ref{ex:HNF} below. In the following we list some useful properties of the HNF, the first of which is immediate.
\begin{lemma}\label{lemma:HNFtriangular} Suppose that $A$ is an $m \times n$ matrix in HNF and fix a column $i \leq n$. Then $A_{f(i),j} = 0$ for all columns $j \ls i$.
\end{lemma}
Every $\PID$-matrix $A$ is column-equivalent to some matrix $B$ in HNF (see e.g. \cite{McDonald:linearalgebrarings,Cohen:1995ComputAlg}).
The transformation of $A$ into $B$ can be encoded as an invertible matrix $U$, obtained by applying to the identity matrix the sequence of elementary column operations allowing to pass from $A$ to $B$. Then $B = AU$ and we can compute from $U$ the kernel of $A$ as follows.

\begin{proposition}\label{prop:kernelfromHNF} For an $m \times n$ matrix $A$, let $B = AU$ be its HNF and $r \leq m$ the number of initial $0$-columns of $B$. Then the first $r$ columns of $U$ form a basis for the kernel of $A$.
\end{proposition}
\begin{proof} A proof can be found for the PID of integers in \cite[Prop. 2.4.9]{Cohen:1995ComputAlg}, which we reformulate here for an arbitrary PID. We include the details because the next result will be essentially a graphical rendition of the argument.

 For $i \leq r$, let $\vlist{u}_i$ be the $i$-th column of $U$. By definition $A \vlist{u}_i = B_i$, which is a $0$-vector because $i \leq r$. Thus all first $r$ columns of $U$ are elements of the kernel of $A$. Conversely, let $\xx$ be a vector such that $A\xx = 0$. Then $A\xx = A U U^{-1} \xx = B U^{-1} \xx$ because $U$ is invertible. Let $y_1, \dots, y_n$ be the coordinates of $\yy \df U^{-1}\xx$. For each $i$ in $[r+1,n]$, we show that $y_i = 0$, by backward induction on $i$. This unfolds as a kind of ``chain reaction'':
\begin{itemize}
  \item[(I)] \label{pt:chainreaction1} if $i =n$, let $f(n)$ be given as in Definition~\ref{Def:HNF}. Since $B\yy = 0$, then the $f(n)$-th coordinate of $B\yy$ is
      \begin{equation}\label{eq:propHNF1} \tag{$\bigtriangleup$}
          B_{f(n),1}y_1 + \dots + B_{f(n),n}y_n = 0.
      \end{equation}
      By Lemma~\ref{lemma:HNFtriangular}, $B_{f(n),1}, \dots, B_{f(n),{n-1}}$ are all equal to $0$, meaning by \eqref{eq:propHNF1} that $B_{f(n),n}y_n = 0$. By property 2 of HNF, $B_{f(n),n} \neq 0$ and thus, since $\PID$ has no non-zero divisors, $y_n = 0$.
  \item[(II)] \label{pt:chainreaction2} For $i$ with $r \ls i \ls n$, the $f(i)$-th coordinate of $B\yy$ is $B_{f(i),1}y_1 + \dots + B_{f(i),n}y_n = 0$ and by induction hypothesis $y_j = 0$ for all $j$ such that $i \ls j \leq n$. By Lemma~\ref{lemma:HNFtriangular}, $B_{f(i),1}, \dots, B_{f(i),{i-1}}$ are all equal to $0$, which means, analogously to the base case, that $B_{f(i),i}y_i = 0$ and since $B_{f(i),i}$ then $y_i = 0$.
  \item[(III)] Thus we proved that the coordinates $y_{r+1},\dots,y_n$ of $\yy$ are equal to $0$. Instead the first $r$ coordinates of $\yy$ can be arbitrary, because the $j$-th row of $B\yy$, for $j \leq r$, is give by $B_{j,1}y_1 + \dots + B_{j,n}y_n = 0$ and we know that, by property 1 of HNF, the entries $B_{j,1},\dots, B_{j,n}$ have value $0$.
\end{itemize}
Therefore the kernel of $B$ is generated by the first $r$ canonical basis vectors $\vlist{c}_1, \dots \vlist{c}_r$ of $\PID^n$. Since $B = AU$, then $U\vlist{c}_1, \dots, U\vlist{c}_r$ form a basis for the kernel of $A$. But those are just the first $r$ columns of $U$: hence we have proven the statement of the theorem.
\qed\end{proof}

We now recast the core of Proposition~\ref{prop:kernelfromHNF} ``in purely graphical terms''. For an instance of the construction used in the proof, see Example~\ref{ex:HNF}.

\begin{lemma}\label{lemma:BHNFequalKernel} Let $B$ be an $m \times n$ $\PID$-matrix in HNF and $r$ the number of initial $0$-columns of $B$. Then the following holds in $\IBRw$:
$$\lower13pt\hbox{$\includegraphics[height=1.2cm]{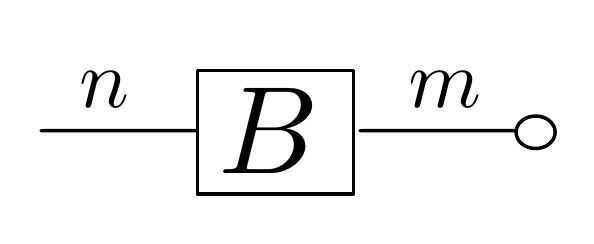}$} = \lower13pt\hbox{$\includegraphics[height=1.4cm]{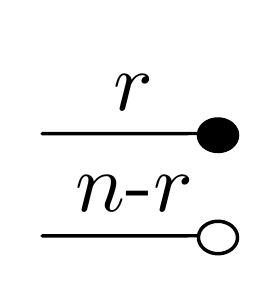}$}$$
\end{lemma}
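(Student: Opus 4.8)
The statement is a diagrammatic transcription of Proposition~\ref{prop:kernelfromHNF} in the special case $U = \id$ (legitimate since $B$ is already in HNF), so the plan is to replay its ``chain reaction'' inside the equational theory of $\IBRw$. Concretely, I would argue by induction on $d \df n - r$, the number of non-zero columns of $B$ (equivalently, its rank). In the base case $d = 0$ the matrix $B$ is the zero $m\times n$ matrix, its circuit is the fully disconnected one, and both sides of the displayed equation collapse --- by the axioms of symmetric monoidal categories together with the (co)monoid unit laws \eqref{eq:wmonunitlaw}, \eqref{eq:bcomonunitlaw} and \eqref{eq:wbone} --- to the same circuit on $n$ wires.

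For the inductive step, consider the last column $i = n$; since $n > r$ it is a pivot column, with pivot row $f(n)$ and $B_{f(n),n}\neq 0$. By Lemma~\ref{lemma:HNFtriangular} the output wire corresponding to row $f(n)$ receives a contribution only from the $n$-th input, and only through a single scalar of value $B_{f(n),n}$. Using the separable Frobenius structure on the black generators \eqref{eq:BFrob}, \eqref{eq:Bsep} together with the bialgebra equations \eqref{eq:unitsl}--\eqref{eq:bwbone} to reroute the remaining wires, I would bring this sub-circuit into a standard position. Since $B_{f(n),n}\neq 0$, equation \eqref{eq:scalarwcounit} collapses a scalar of value $B_{f(n),n}$ followed by a white counit to just a white counit --- the graphical incarnation of ``$B_{f(n),n}y_n = 0 \Rightarrow y_n = 0$'' using that $\PID$ has no zero divisors --- and this forces the $n$-th input to $0$. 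Propagating this fact with the bialgebra, unit and scalar laws \eqref{eq:wmonunitlaw}, \eqref{eq:scalarwunit}, \eqref{eq:wunitcancelbcomult}, the $n$-th input gets detached (it becomes a white counit) while every downstream occurrence of column $n$ disappears, leaving exactly the circuit of the lemma for the matrix $B'$ obtained from $B$ by deleting column $n$ and row $f(n)$.

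It then remains to observe that $B'$ is an $(m-1)\times(n-1)$ matrix in HNF with the same number $r$ of initial zero columns: deleting a row keeps columns $1,\dots,r$ zero; the pivot $B_{f(r+1),r+1}$ of column $r+1$ survives since $f$ strictly increasing forces $f(r+1) < f(n)$; and the associated function is $f$ restricted to $[r+1,n-1]$, still strictly increasing. Applying the induction hypothesis to $B'$ and tensoring back the white counit on the detached wire yields the right-hand side, up to a final reshuffling by symmetries. The step I expect to be the main obstacle is precisely this rerouting/isolation: one has to show purely equationally that the link between the $n$-th input and the output of row $f(n)$ can be separated from the rest of the diagram, which means transporting all the other wires past it using the (separable) Frobenius and bialgebra equations and carefully tracking the bookkeeping permutation that moves row $f(n)$ --- and, after deletion, the remaining rows --- into a canonical position. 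Everything else is the routine propagation already rehearsed in the ``chain reaction'' (I)--(III) of the proof of Proposition~\ref{prop:kernelfromHNF}.
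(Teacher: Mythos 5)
Your proposal is correct and is in all essentials the paper's own argument: both replay the HNF ``chain reaction'' of Proposition~\ref{prop:kernelfromHNF} graphically, isolating the pivot connection of the last non-zero column (the permutation-and-scalar layer makes this purely structural), killing it with the derived law \eqref{eq:wunitcancelbcomult} together with \eqref{eq:scalarwcounit}, \eqref{eq:scalarwunit}, \eqref{eq:wmonunitlaw}, and disposing of the leftover all-zero material via \eqref{eq:wbone}. The only difference is bookkeeping: the paper runs a backward induction on the column index inside the fixed $m\times n$ circuit, building intermediate circuits $B_n,\dots,B_{r+1}$ and cleaning up the non-pivot rows in a final step, whereas you shrink the matrix by deleting the pivot column and row, observe that HNF with the same $r$ is preserved, and recurse --- the same argument, differently packaged.
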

\begin{proof}
The idea is to show that the kernel computation described in the proof of Proposition~\ref{prop:kernelfromHNF} can be carried out on circuits using the equational theory of $\IBRw$. Since $B$ is in HNF, the corresponding circuit (in matrix form) can be assumed of a particular shape, that we depict below right. 

\noindent \begin{minipage}[t]{.60\textwidth}
$P$ is some circuit only made of symmetries $\symNet$ and scalars $\scalar$ as basic components. By property 1 of HNF, the first $r$ columns of $B$ only have $0$ entries, meaning that the topmost $r$ ports on the left boundary are not connected to the right boundary. Also, by Lemma~\ref{lemma:HNFtriangular} we know that the $f(n)$-th row of $B$ (where $f \: [r+1,n] \to [1,m]$ is as in Definition~\ref{Def:HNF}) has only one non-$0$ value $k \in \PID$, at position $B_{f(n),n}$.  In circuit terms, this allows us to assume that the $f(n)$-th port on the right
\end{minipage}
\begin{minipage}[t]{.35\textwidth}
\vspace{-.8cm}$$\includegraphics[height=5.5cm]{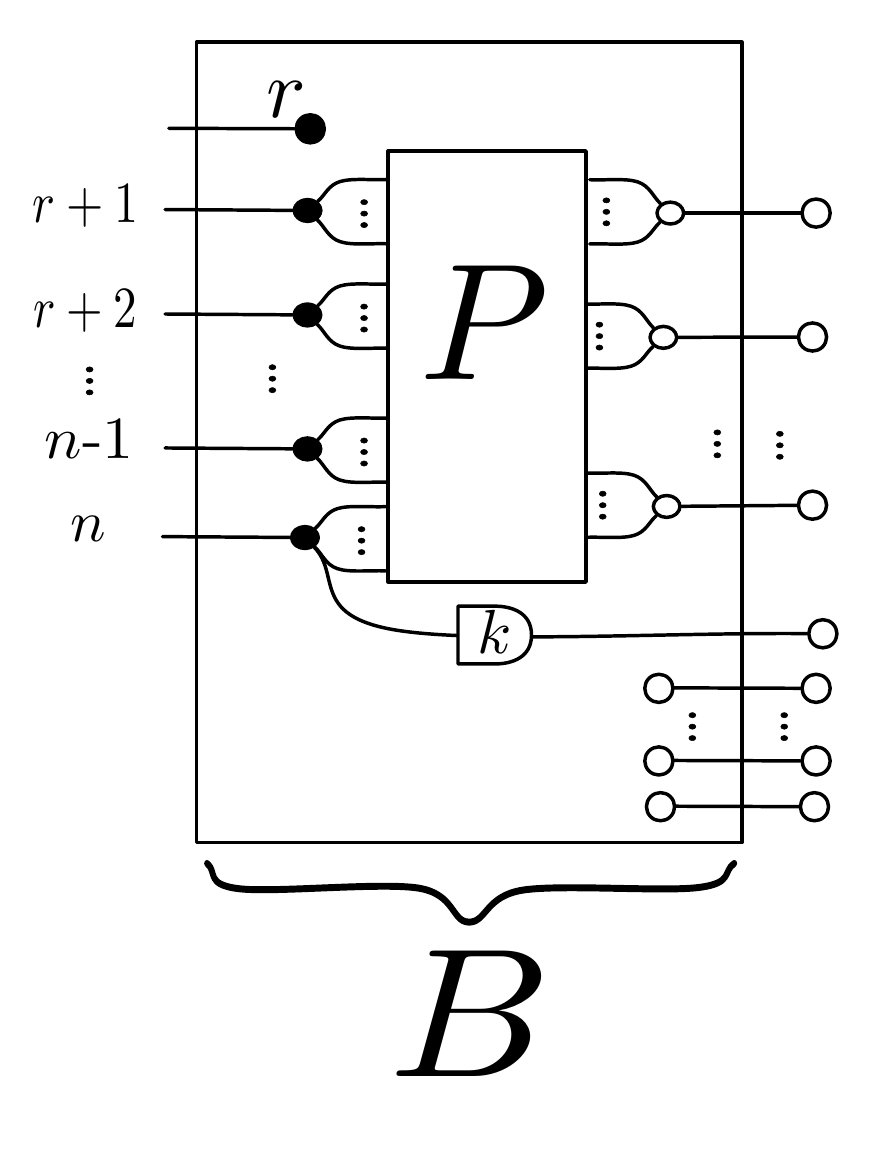}$$
\end{minipage}

\noindent boundary only connects to the $n$-th and last port on the left boundary.
As yet another consequence of the definition of HNF, we know that, for each $i$ with $m \geq i \gr f(n)$, row $i$ only has $0$ entries, allowing us to represent all the rows below $f(n)$ in the circuit above as ports on the right boundary not connected to any port on the left.
Once we plug counits on the right of the circuit representing $B$, we trigger the chain reaction described in the proof of Proposition~\ref{prop:kernelfromHNF}, which we now reproduce in circuit terms. By backward induction on $i$ with $n \geq i \gr r$, we construct circuits $B_n, \dots , B_{r+1}$ such that:
\[
\lower11pt\hbox{$\includegraphics[height=1cm]{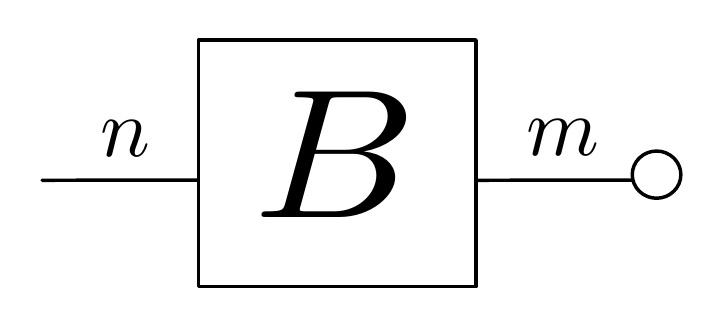}$} =
\lower11pt\hbox{$\includegraphics[height=1cm]{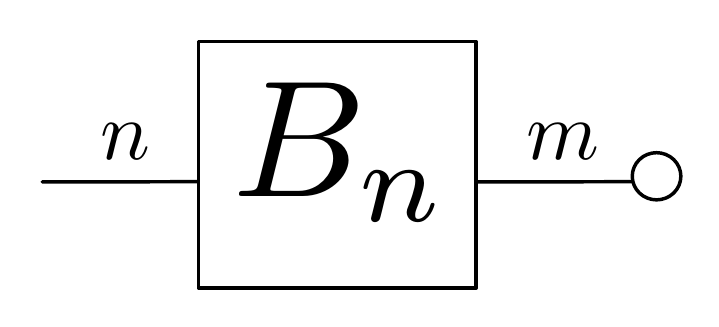}$}
= \dots =
\lower11pt\hbox{$\includegraphics[height=1cm]{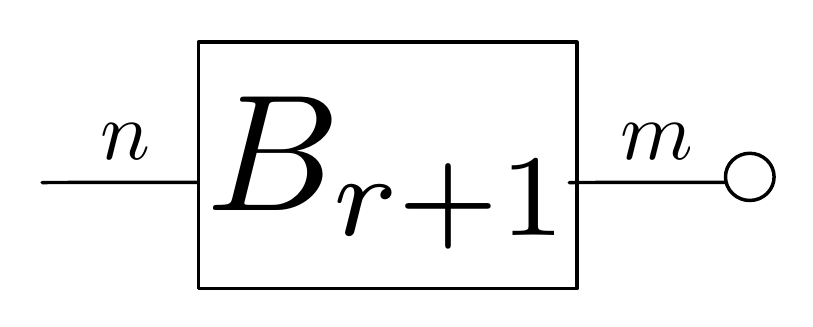}$} =
\lower11pt\hbox{$\includegraphics[height=1.2cm]{graffles/circuitCounitsrn-r.pdf}$}
\]
Clearly, this suffices to prove the main statement.
\begin{itemize}
\item[(I)] For the base case, suppose $i = n$. Since $k \neq 0$, we can use the derived law \eqref{eq:wunitcancelbcomult} of $\IBRw$ to ``disconnect'' the $n$-th port on the left from any port on the right. We define $B_n$ in terms of the resulting circuit.
\begin{eqnarray*}
\lower81pt\hbox{$\includegraphics[height=5cm]{graffles/circuitBdetail.pdf}$} \ \eql{\eqref{eq:wunitcancelbcomult}} \
\lower84pt\hbox{$\includegraphics[height=5cm]{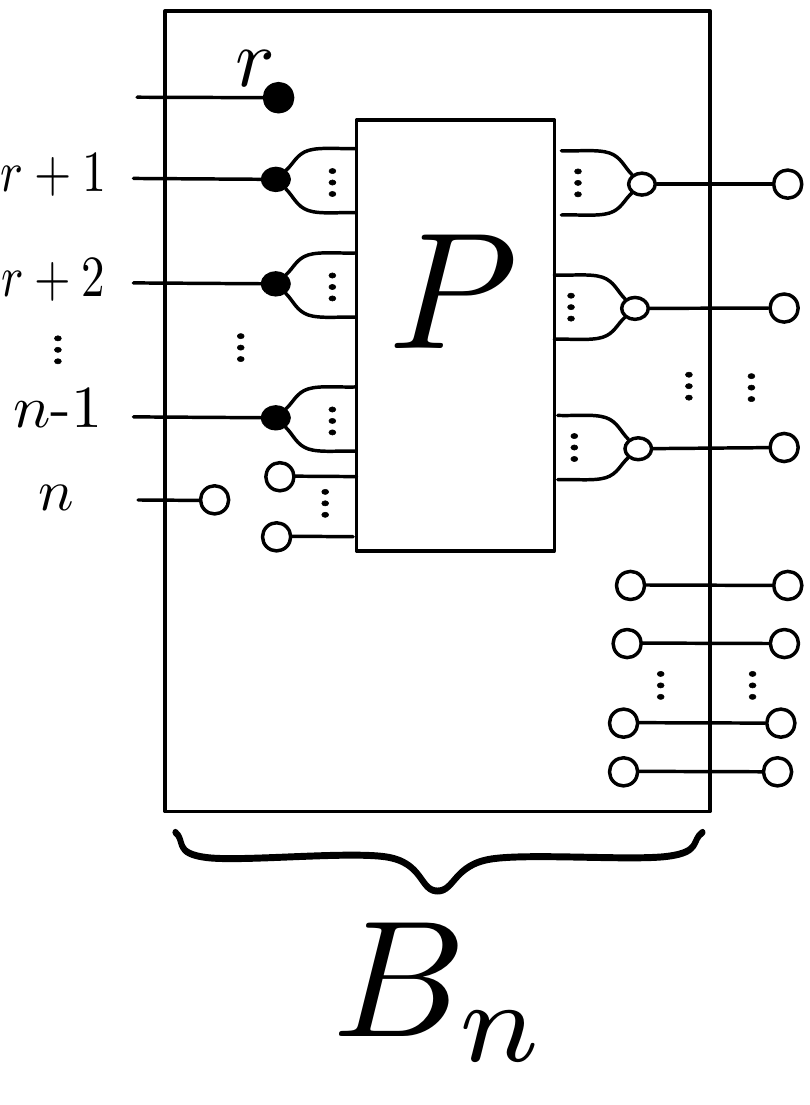}$} \dfop
\lower11pt\hbox{$\includegraphics[height=1.2cm]{graffles/circuitBncounits.pdf}$}
\end{eqnarray*}
We assign the name $P_n$ to the circuit $P$ depicted above and proceed with the inductive step of $i$ with $n \gr i \gr r$.

\item[(II)]
\noindent \begin{minipage}[t]{.60\textwidth}
The inductive construction
gives us a circuit $B_{i+1}$ as on the right. The $i$-th port on the left boundary corresponds to column $i$ in $B$ and thus it is assigned a row $f(i)$. This corresponds to the $f(i)$-th port on the right boundary of the circuit representing $B_{i+1}$. By Lemma~\ref{lemma:HNFtriangular}, such a port has no connections with ports $1,\dots,i-1$ on the left boundary. Moreover, by inductive hypothesis it also has no connections with ports $i+1,\dots,n$ on the left boundary. Therefore port $f(i)$ on the right
\end{minipage}
\begin{minipage}[t]{.35\textwidth}
\vspace{-.3cm}\[
\includegraphics[height=5cm]{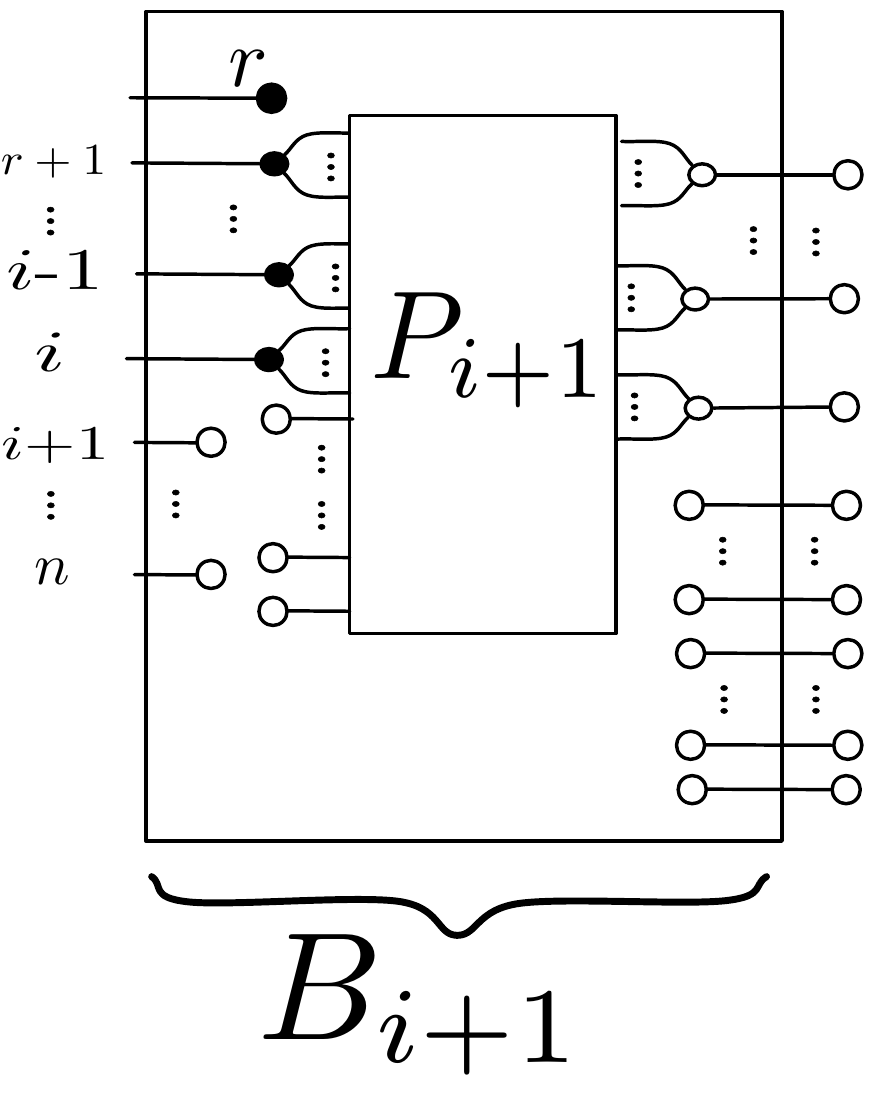}
\]
\end{minipage}
 \noindent connects only to port $i$ on the left. These connections are part of the circuit $P_{i+1}$ --- which by inductive construction only contains $\symNet$ and $\scalar$ as basic components. It should then be clear that we can ``move port $f(i)$ towards the left side of the circuit'', isolating its connections from the others in $P_{i+1}$, while preserving equality in $\IBRw$.
The resulting circuit is the depicted below, where $P_i$ results from the rearrangement of $P_{i+1}$ in order to allow the move of port $f(i)$ towards the left side of the circuit.
\begin{equation} \label{eq:circuitBn+1}
\lower70pt\hbox{$\includegraphics[height=5cm]{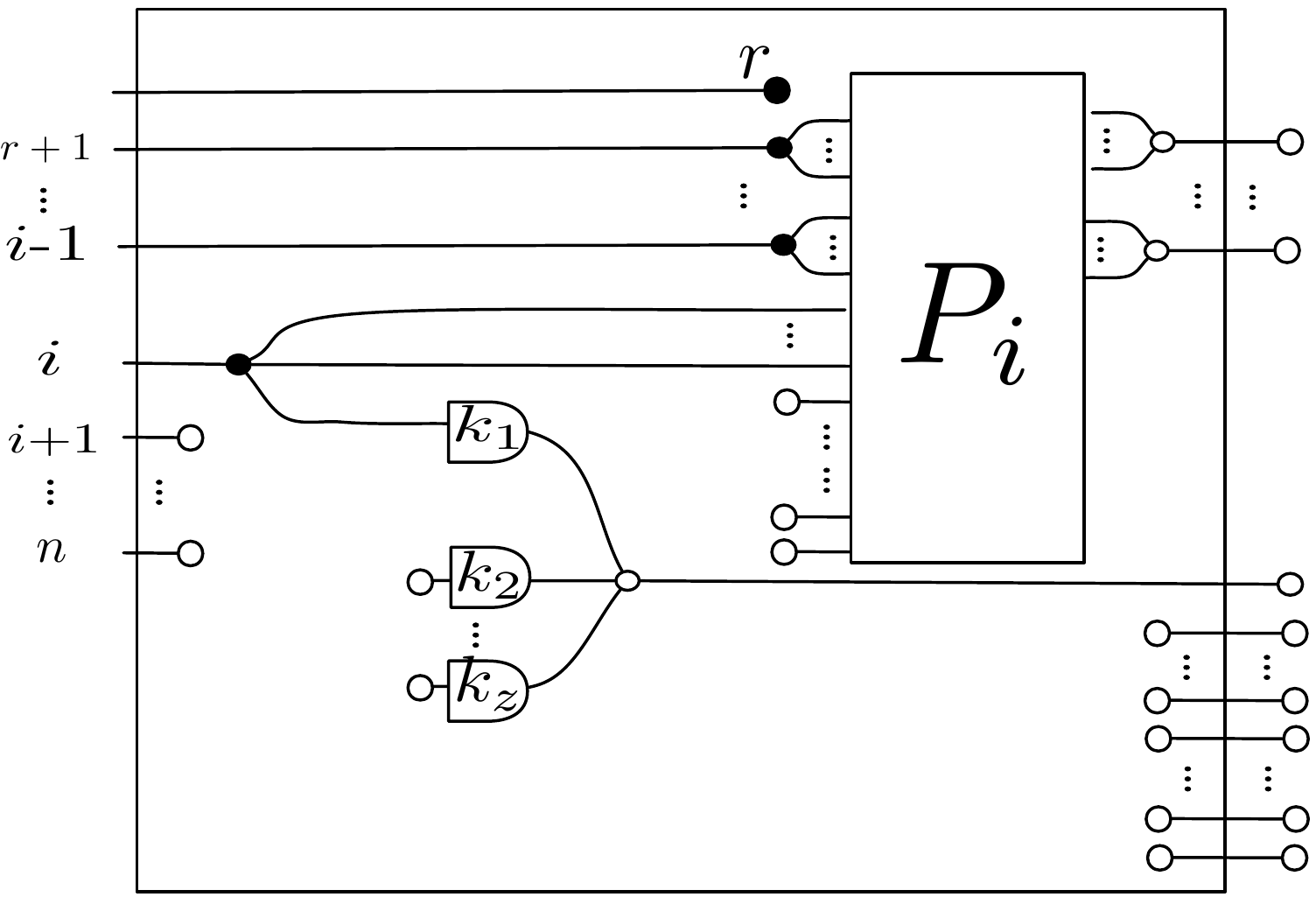}$}
 \end{equation}
 We now focus on the sub-diagram depicting the connection of port $i$ on the left with (former) port $f(i)$. In the derivation below, \eqref{eq:wunitcancelbcomult} can be applied because $k_1 = B_{f(i),i} \neq 0$.
\begin{equation*}
\lower35pt\hbox{$\includegraphics[height=2.5cm]{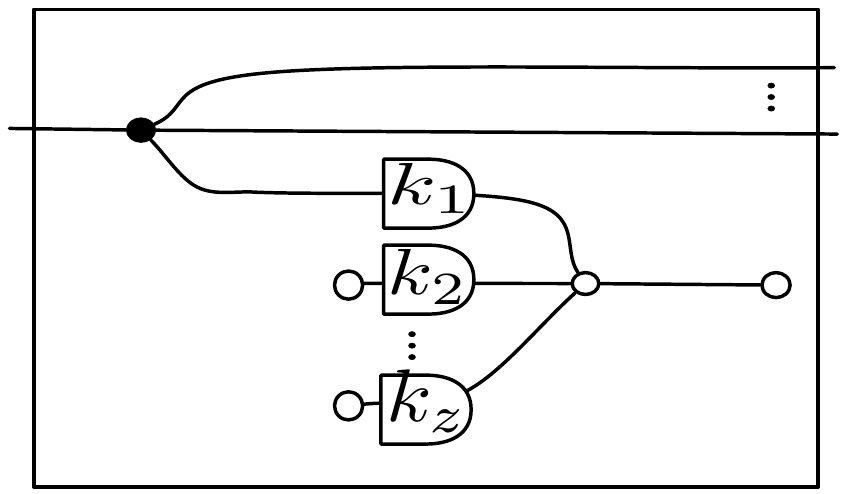}$}
 \eql{\eqref{eq:scalarwunit},\eqref{eq:wmonunitlaw}}
\lower21pt\hbox{$\includegraphics[height=1.6cm]{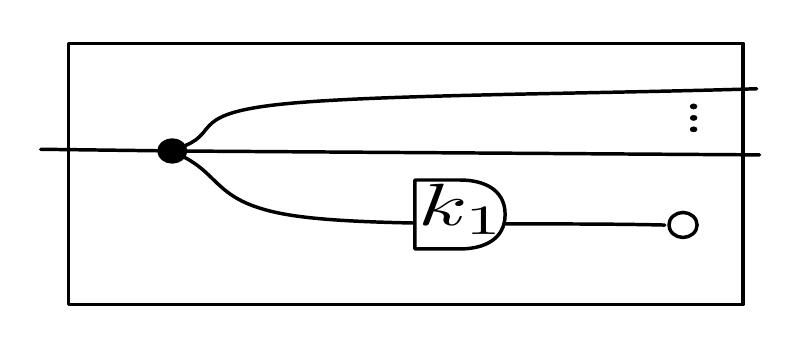}$}
 \eql{\eqref{eq:wunitcancelbcomult}}
\lower15pt\hbox{$\includegraphics[height=1.1cm]{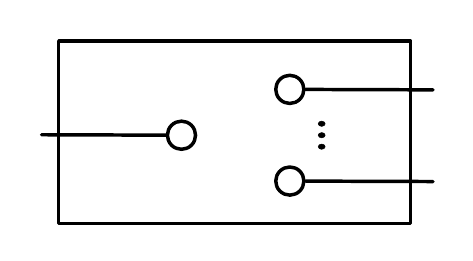}$}
 \end{equation*}
Thus \eqref{eq:circuitBn+1} is equal to the circuit below left, from which we define $B_{i}$.
\[
\lower62pt\hbox{$\includegraphics[height=4.4cm]{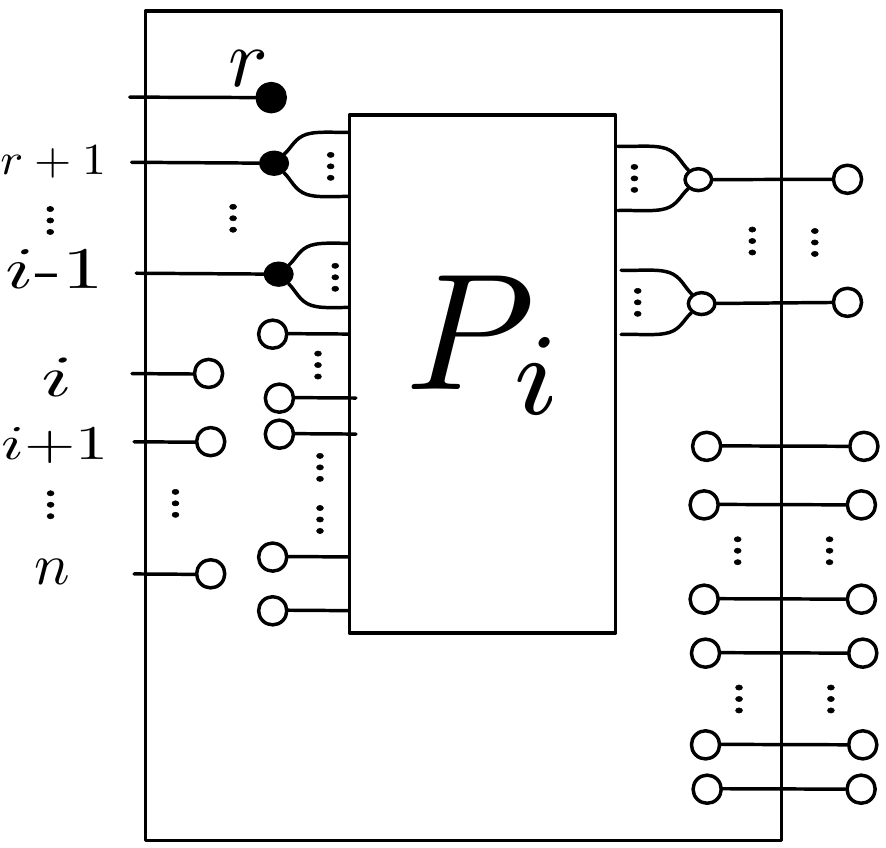}$}
\quad \dfop \quad
\lower15pt\hbox{$\includegraphics[height=1.5cm]{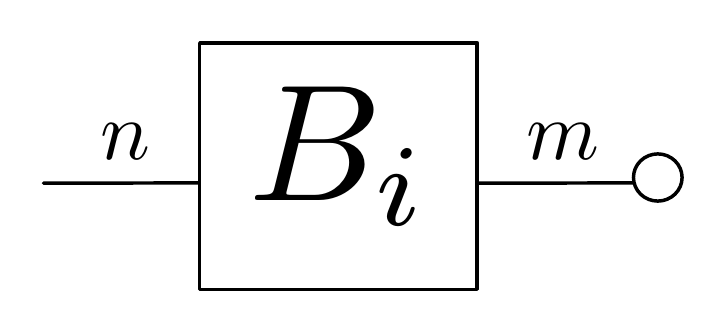}$}
\]
\item[(III)] Finally, at step $r+1$, our inductive construction produces a circuit as on the left below. We have disconnected all ports $i$ on the left and all ports $f(i)$ on the right: $P_{r+1}$ only contains the entries $\scalar$ on rows not in the image of $f$ (if any). We can then easily remove also this last piece of information.
\begin{align*}
\lower80pt\hbox{$\includegraphics[height=5cm]{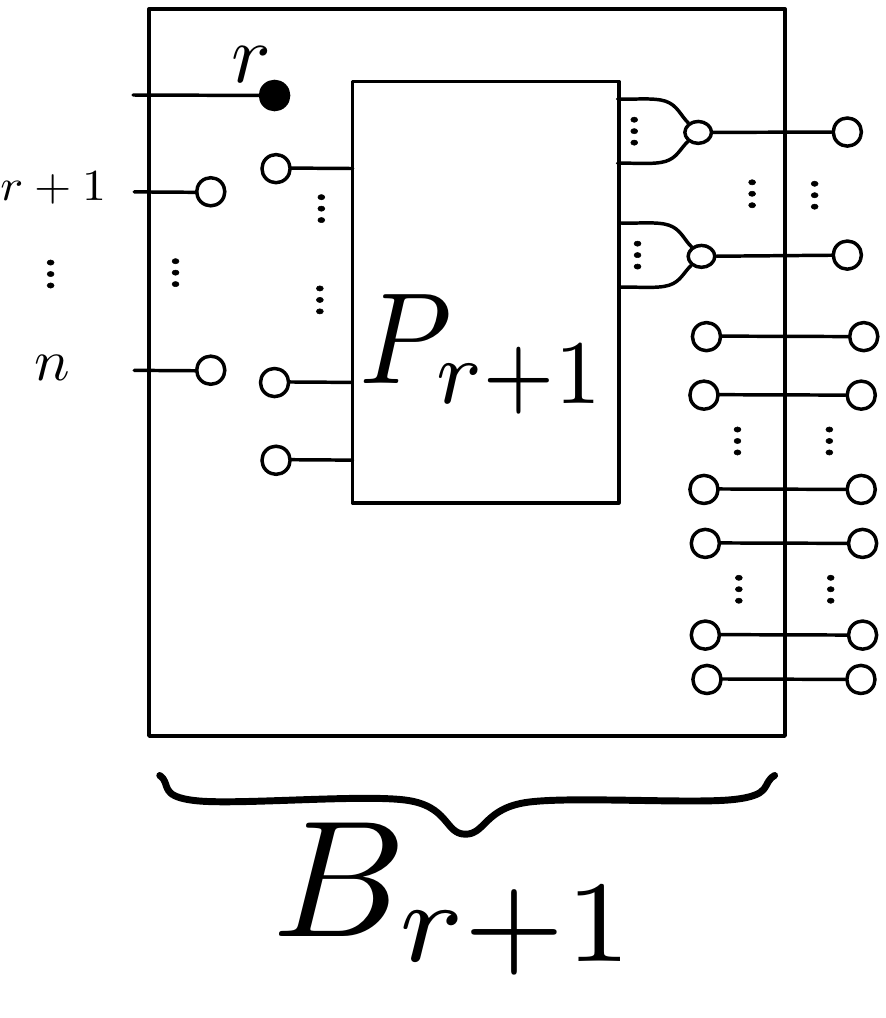}$}\ \ \  & \eql{\eqref{eq:scalarwunit},\eqref{eq:wmonunitlaw}}   & 
\lower42pt\hbox{$\includegraphics[height=3.6cm]{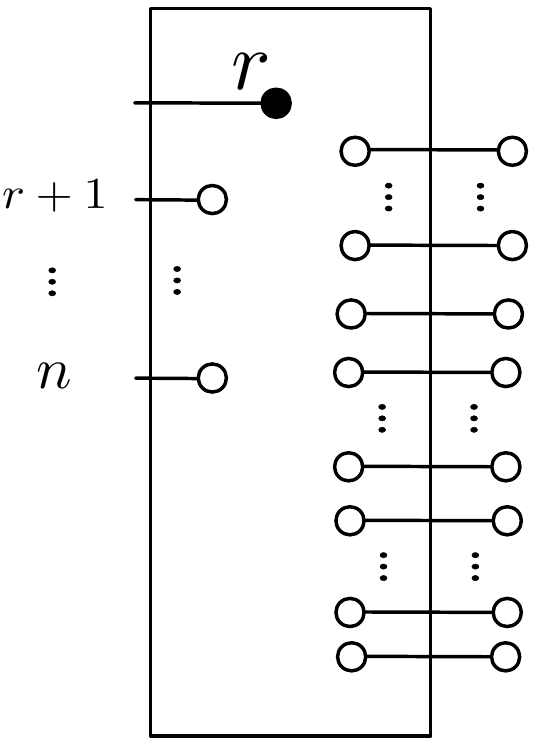}$}
\ \ \ & \eql{\eqref{eq:wbone}} &
\lower15pt\hbox{$\includegraphics[height=1.6cm]{graffles/circuitCounitsrn-r.pdf}$}
\end{align*}
For the first equality, observe that by inductive construction $P_{r+1}$ is only made of basic components of the kind $\symNet$ and $\scalar$ : the white units plugged on the left boundary of $P_{r+1}$ cancel $\symNet$ by naturality of symmetries in the symmetric monoidal category $\IBRw$ and cancel $\scalar$ by~\eqref{eq:scalarwunit}. The second equality holds by repeated application of~\eqref{eq:wbone}. \qed
\end{itemize}
\end{proof}


 \begin{example}\label{ex:HNF} We show the construction of Lemma~\ref{lemma:BHNFequalKernel} on a circuit in matrix form that represents the following $\Z$-matrix in HNF.
\begin{eqnarray*}
{\scriptsize \left(
  \begin{array}{cccc}
    0 & 0 & 2 & -1\\
    0 & 4 & 1 & -3\\
    0 & 0 & 1 & 0\\
    0 & 0 & 0 & 0\\
    0 & 0 & 0 & 3
  \end{array}
\right)}
\end{eqnarray*}
\begin{center}
\includegraphics[height=7cm]{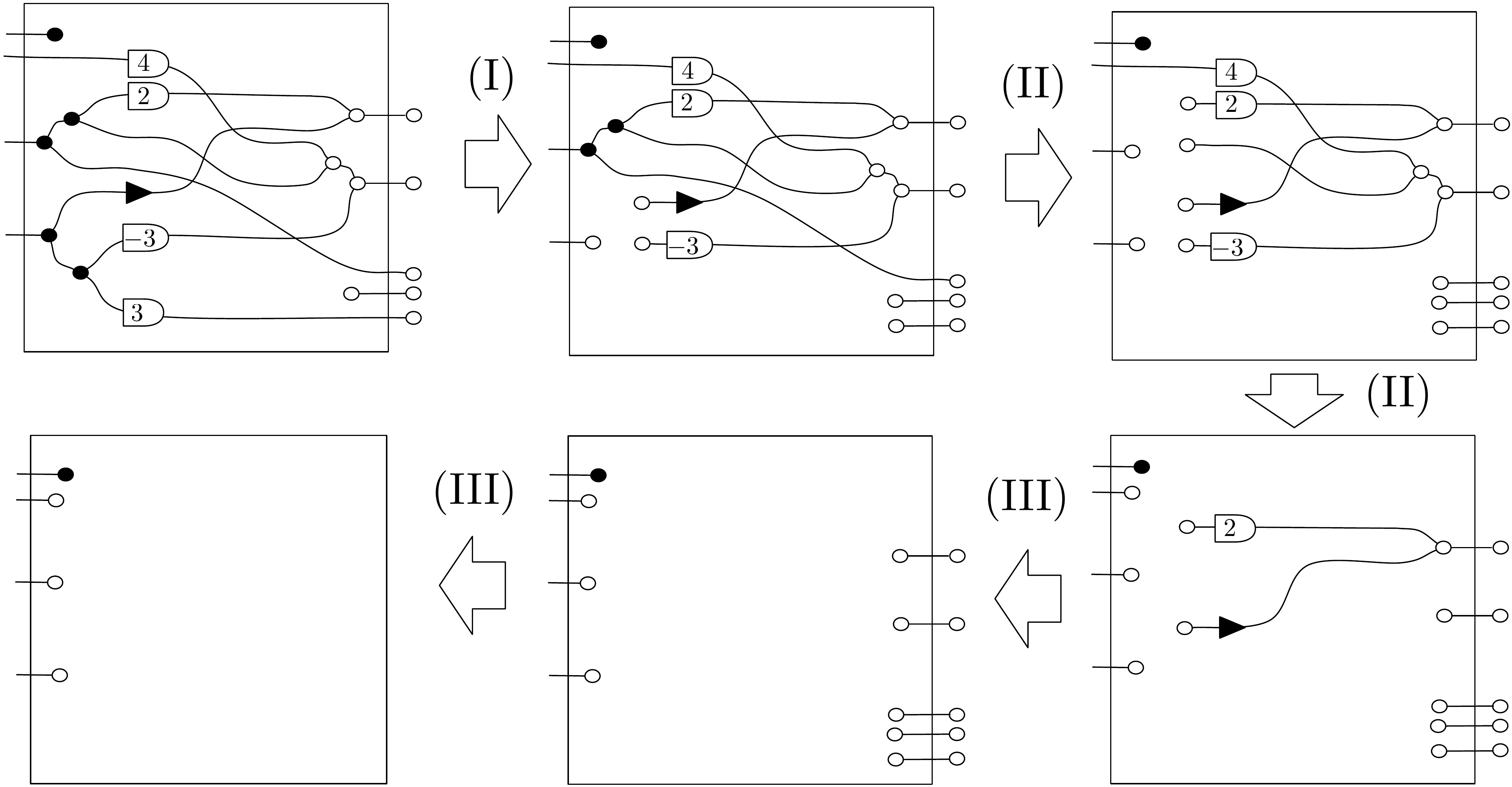}
\end{center}
\end{example}

Given $A \in \VectR[n,m]$ and $r \leq n$, let the \emph{$r$-restriction of $A$} be the matrix $\restr{A}{r} \in \VectR[r,m]$ consisting of the first $r$ columns of $A$. It is useful to make the following observation.

\begin{lemma}\label{lemma:invertiblerestriction} Let $U \in \VectR[n,m]$ be a matrix and fix $r \leq n$. Then the following holds in $\IBRw$:
$$\lower11pt\hbox{$\includegraphics[height=.9cm]{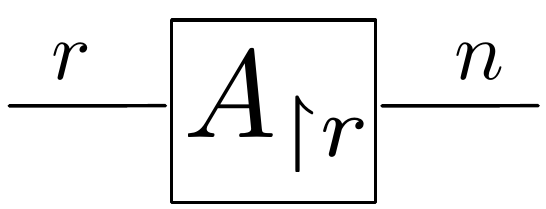}$} =
\lower9pt\hbox{$\includegraphics[height=.9cm]{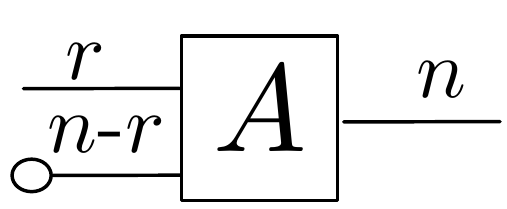}$}$$
\end{lemma}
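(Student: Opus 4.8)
The plan is to observe that both circuits appearing in the statement already live inside the image of $\ABR$ under $\sigma_1 \: \ABR \to \IBRw$, so the identity can be established inside $\ABR$ and then transported to $\IBRw$ by functoriality of $\sigma_1$. Concretely, I would first rewrite the right-hand side as the $\ABR$-circuit $\bigl(\id_r \tns \kappa\bigr) \poi \circuitAdots$, where $\circuitAdots \in \ABR[n,m]$ is the matrix form of $U$ and $\kappa \: 0 \to n-r$ is the $(n-r)$-fold monoidal product of white units $\Wunit$; the left-hand side is the matrix form $\circuitAdots$ of $\restr{U}{r} \in \ABR[r,m]$. Both are built from generators of $\ABR$ using only $\tns$ and $\poi$, so nothing $\IBRw$-specific is involved.

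Next I would evaluate the two sides under $\sem{\ABR} \: \ABR \to \VectR$ (Definition~\ref{def:sem}). By construction of $\sem{\ABR}$, the matrix form of $\restr{U}{r}$ maps to the matrix $\restr{U}{r}$. On the other side, $\sem{\ABR}(\Wunit) = \initVect$, so $\sem{\ABR}(\id_r \tns \kappa)$ is the $n \times r$ matrix $\left(\begin{smallmatrix} I_r \\ 0 \end{smallmatrix}\right)$, whence $\sem{\ABR}$ sends $\bigl(\id_r \tns \kappa\bigr) \poi \circuitAdots$ to $U \cdot \left(\begin{smallmatrix} I_r \\ 0 \end{smallmatrix}\right)$, i.e. the matrix consisting of the first $r$ columns of $U$, which is precisely $\restr{U}{r}$. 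Thus the two $\ABR$-circuits have the same image under $\sem{\ABR}$.

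Since $\sem{\ABR}$ is an isomorphism of PROPs (Proposition~\ref{prop:ab=vect}), in particular faithful, the two circuits are already equal in $\ABR$, and applying the PROP morphism $\sigma_1$ yields the claimed equality in $\IBRw$. The only point that needs care is bookkeeping: verifying that the string diagram drawn on the right really is the composite $\bigl(\id_r \tns \kappa\bigr) \poi \circuitAdots$ — so that it lies in $\sigma_1(\ABR)$ rather than merely in $\IBRw$ — and that plugging white units into the bottom $n-r$ input wires deletes exactly those columns of $U$. There is no genuine obstacle beyond this, as the entire content of the lemma is absorbed into $\ABR \cong \VectR$. If a self-contained diagrammatic argument were preferred instead, one could push each white unit through the matrix form of $\circuitAdots$ — through a black comultiplication by the bialgebra unit law, then through the scalars on the outgoing paths by~\eqref{eq:scalarwunit}, and finally through the white multiplications by~\eqref{eq:wmonunitlaw} — but this would require a case analysis on the shape of $\circuitAdots$ that the semantic route avoids.
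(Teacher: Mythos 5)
Your argument is exactly the paper's: the paper also observes that the right-hand circuit is (the image under $\sigma_1$ of) an $\ABR$-circuit whose matrix is $U$ multiplied by $\left(\begin{smallmatrix} I_r \\ 0 \end{smallmatrix}\right)$, i.e. $\restr{U}{r}$, and concludes by the isomorphism $\ABR \cong \VectR$ (Proposition~\ref{prop:ab=vect}). Your version just spells out the bookkeeping more explicitly, so it is correct and takes essentially the same route.
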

\begin{proof} Observe that multiplying the matrix corresponding to $\lower3pt\hbox{\includegraphics[height=.7cm]{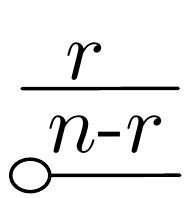}}$ by $A$ yields $\restr{A}{r}$. Then the statement holds by the isomorphism $\ABR \cong \VectR$. \qed\end{proof}

We now have all the ingredients to state the soundness of kernel computation for an arbitrary $\PID$-matrix of $\VectR$.

\begin{proposition}\label{prop:matrixequalkenrel}
Let $A \in \VectR[n,m]$ be a $\PID$-matrix. Then the equation below left, which corresponds to the pullback on the right, is valid in $\IBRw$.
\begin{eqnarray}\label{eq:kernelplback}
\lower13pt\hbox{$\includegraphics[height=1.2cm]{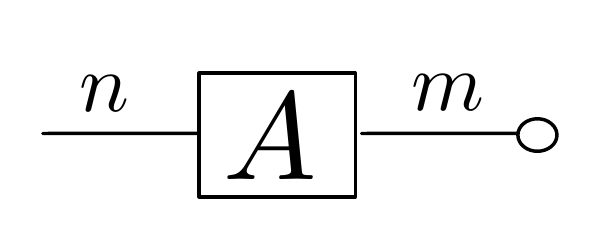}$} =
\lower13pt\hbox{$\includegraphics[height=1.2cm]{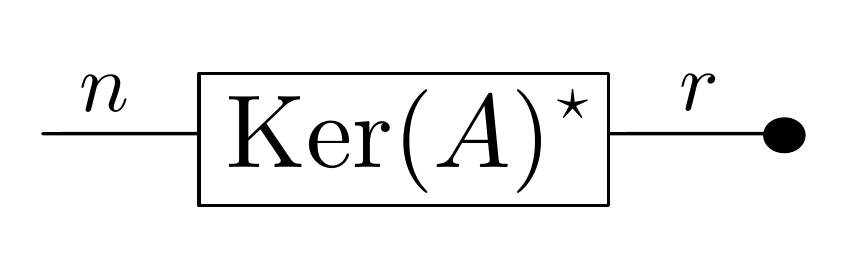}$} & \qquad &
\vcenter{
\xymatrix@R=8pt@C=10pt{
&\ar[dl]_{\Ker{A}} r \pushoutcorner \ar[dr]^{\finVect} & \\
n \ar[dr]_{A} & & 0 \ar[dl]^{\initVect}\\
& z  & }
}
\end{eqnarray}
\end{proposition}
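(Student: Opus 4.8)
The plan is to reduce the arbitrary kernel pullback to the Hermite-normal-form case (Lemma~\ref{lemma:BHNFequalKernel}), using the diagrammatic treatment of invertible matrices (Lemmas~\ref{lemma:invertiblestar} and~\ref{lemma:invertiblerestriction}) to bridge the two. First I would put $A$ in Hermite normal form (\emph{cf.}\ the discussion preceding Proposition~\ref{prop:kernelfromHNF}): fix an invertible $U\in\VectR[n,n]$ and $B\in\VectR[n,m]$ in HNF, with $r$ the number of initial zero columns of $B$, such that $B=U\poi A$ in $\VectR$, equivalently $A=U^{-1}\poi B$. By Proposition~\ref{prop:ab=vect} the same identities hold between the corresponding matrix-form circuits inside $\ABR\subseteq\IBRw$. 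Moreover, by Proposition~\ref{prop:kernelfromHNF} the matrix $\restr{U}{r}\:r\to n$ of the first $r$ columns of $U$ is a kernel matrix for $A$; as $\IBRw$ identifies isomorphic spans (Lemma~\ref{lemma:mirror}) we may assume outright that $\Ker A=\restr{U}{r}$.

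The core of the argument is then a short equational chain in $\IBRw$. The left-hand circuit of~\eqref{eq:kernelplback} is the circuit of $A$ post-composed with the $m$-fold white counit; substituting $A=U^{-1}\poi B$ turns it into the circuit of $U^{-1}$ post-composed with ``the circuit of $B$ followed by the $m$-fold white counit''. Lemma~\ref{lemma:BHNFequalKernel} rewrites this sub-circuit into the circuit $K\:n\to 0$ that carries $r$ black counits on the first $r$ wires and $n-r$ white counits on the remaining ones. Finally, Lemma~\ref{lemma:invertiblestar} replaces the circuit of $U^{-1}$ by $\coc{U}$, the reflection of the circuit of $U$ (which, for a circuit coming from $\ABR$, agrees with its image under $(\cdot)^{\star}\:\ABR\to\ABRop$, by the remark preceding Proposition~\ref{prop:star=refl}). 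At this point the left-hand side of~\eqref{eq:kernelplback} has become $\coc{U}\poi K$.

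It then remains to recognise $\coc{U}\poi K$ as the right-hand side $\coc{\Ker A}\poi(\text{$r$-fold black counit})$ of~\eqref{eq:kernelplback}; this is a routine check via $\ABR\cong\VectR$ and contravariance of $(\cdot)^{\star}$. One writes $K=\bigl(\id_r\tns(\text{$(n-r)$-fold white counit})\bigr)\poi(\text{$r$-fold black counit})$. By Lemma~\ref{lemma:invertiblerestriction} the circuit of $\restr{U}{r}$ is the circuit of $U$ pre-composed with the ``first $r$ coordinates'' inclusion $\id_r\tns(\text{$(n-r)$-fold white unit})$, so applying $(\cdot)^{\star}$ gives $\coc{\Ker A}=\coc{U}\poi\bigl(\id_r\tns(\text{$(n-r)$-fold white counit})\bigr)$; post-composing with the $r$-fold black counit yields $\coc{\Ker A}\poi(\text{$r$-fold black counit})=\coc{U}\poi K$, which closes the chain.

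The genuinely hard work is entirely absorbed by Lemma~\ref{lemma:BHNFequalKernel} --- the ``chain reaction'' that performs Gaussian-style kernel elimination inside $\IBRw$ --- so the main obstacle here is only careful bookkeeping: one must check that $B=U\poi A$ rather than $A\poi U$, so that the invertible factor ends up on the left and is treated by Lemma~\ref{lemma:invertiblestar} rather than its mirror; that Lemma~\ref{lemma:invertiblerestriction} is applied with the white units on the \emph{input} side of $U$, so that under reflection they become white counits on the \emph{output} side of $\coc{U}$; and that $K$ is indeed the $(\cdot)^{\star}$-image of a kernel inclusion followed by black counits. No equational identity beyond those already established in this section is required.
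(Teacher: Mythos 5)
Your proposal is correct and follows essentially the same route as the paper: factor $A$ as $U^{-1}\poi B$ with $B$ in HNF, apply Lemma~\ref{lemma:invertiblestar} and Lemma~\ref{lemma:BHNFequalKernel}, and then use Lemma~\ref{lemma:invertiblerestriction} together with Proposition~\ref{prop:star=refl} to recognise $\coc{U}\poi K$ as $\coc{\restr{U}{r}}$ followed by black counits. The only (harmless) difference is organisational: you invoke Proposition~\ref{prop:kernelfromHNF} and Lemma~\ref{lemma:mirror} up front to identify $\Ker A$ with $\restr{U}{r}$, whereas the paper performs this replacement as the final step.
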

\begin{proof}Let $B = AU$ be the HNF of $A$ for some invertible matrix $U \: n \to n$. Consider the following derivation in $\IBRw$.
\begin{eqnarray*}
\lower13pt\hbox{$\includegraphics[height=1.2cm]{graffles/circuitA.pdf}$} &\eql{}&
\lower9pt\hbox{$\includegraphics[height=.8cm]{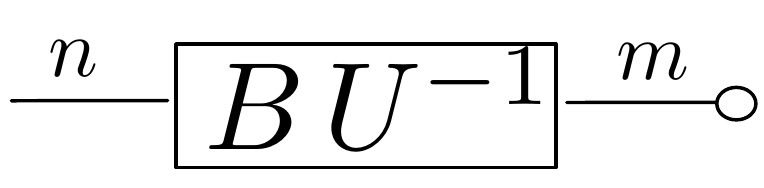}$}
\\
&\eql{}&
\lower11pt\hbox{$\includegraphics[height=1cm]{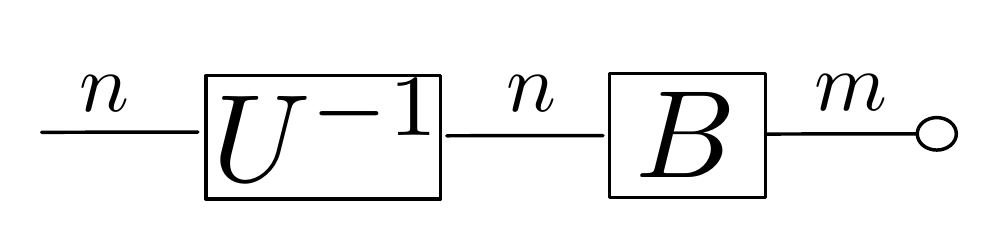}$}
\\
&\eql{Lemma \ref{lemma:invertiblestar}}&
\lower11pt\hbox{$\includegraphics[height=1cm]{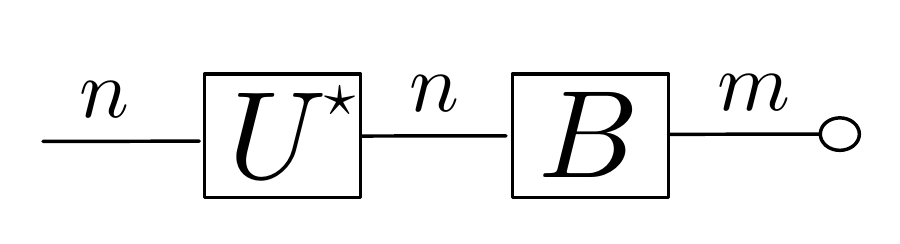}$} \\
&\eql{Lemma \ref{lemma:BHNFequalKernel}}&
\lower13pt\hbox{$\includegraphics[height=1.2cm]{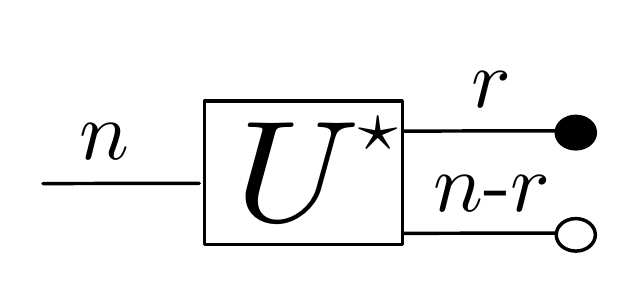}$} \\
&\eql{Prop. \ref{prop:star=refl}}&
\lower15pt\hbox{$\includegraphics[height=1.4cm]{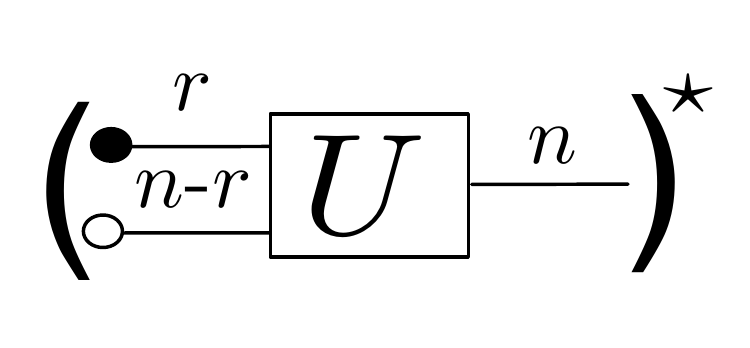}$} \\
&\eql{Lemma \ref{lemma:invertiblerestriction}}&
\lower15pt\hbox{$\includegraphics[height=1.4cm]{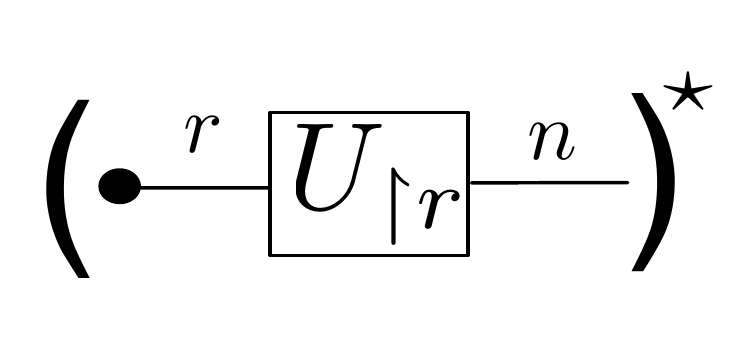}$} \\
&\eql{Prop. \ref{prop:star=refl}}&
\lower12pt\hbox{$\includegraphics[height=1.1cm]{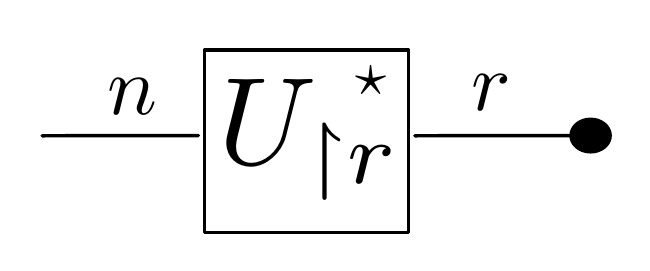}$}
 \end{eqnarray*}
By Proposition~\ref{prop:kernelfromHNF}, the columns of the matrix $\restr{U}{r} \: r \to n$ yield a basis for the kernel of $A$. Thus $\restr{U}{r} \: r \to n$ together with $\finVect \: r \to 0$ is also a pullback span in \eqref{eq:kernelplback} and since $\sem{\ABR}(\finVect \: r \to 0) = \circuitrbcounits$ we know by Lemma~\ref{lemma:mirror} that
\begin{equation*}
\lower12pt\hbox{$\includegraphics[height=1.1cm]{graffles/circuitUrestrrcounits.pdf}$}
=
\lower12pt\hbox{$\includegraphics[height=1.1cm]{graffles/circuitKerAstar.pdf}$} \end{equation*}
which concludes the proof of our statement.
\qed\end{proof} 


We now have all the ingredients to provide a proof of our completeness statement, from which the characterization result of Theorem~\ref{th:Span=IBw} follows.

\begin{proof}[Proposition~\ref{prop:IBwComplete}]

Let $A,B,C,D$ be as in the statement of Proposition~\ref{prop:IBwComplete} and consider the following derivation in $\IBRw$:
\begin{eqnarray}\label{eq:dercompl}
\nonumber \lower20pt\hbox{$\includegraphics[height=1.7cm]{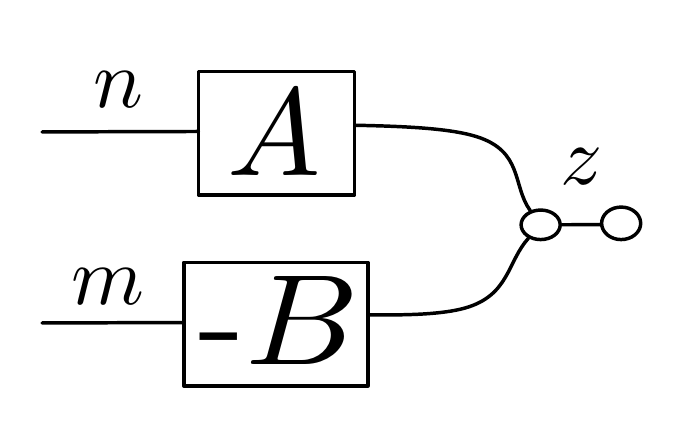}$}
&\eql{Def. $\sem{\ABR}(A|\minus B)$}&
\lower10pt\hbox{$\includegraphics[height=1cm]{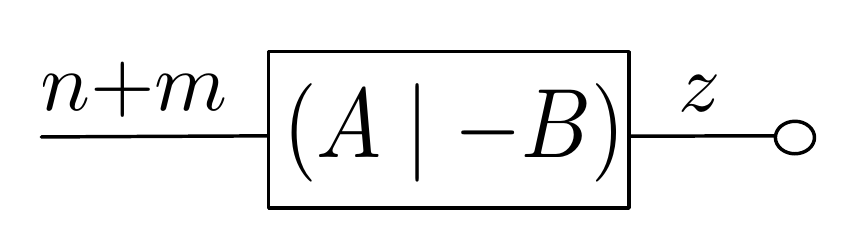}$} \\ \nonumber
&\eql{Prop.~\ref{prop:matrixequalkenrel}}&
\lower10pt\hbox{$\includegraphics[height=1cm]{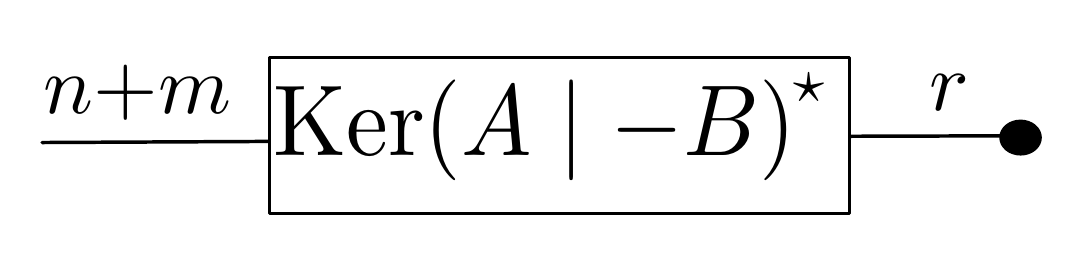}$} \\
&\eql{Lemma~\ref{lemma:pbKernel}}&
\lower15pt\hbox{$\includegraphics[height=1.2cm]{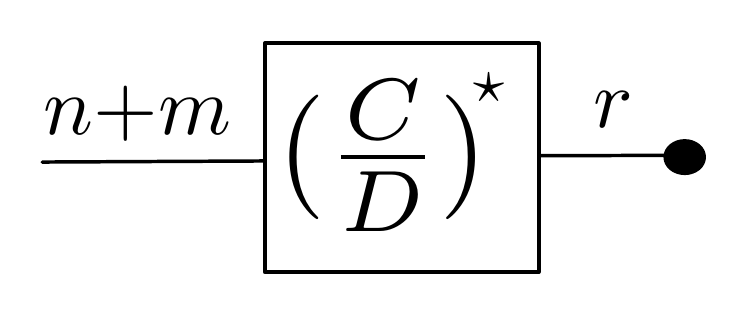}$} \\ \nonumber
&\eql{Def. $\sem{\ABR}(\frac{C}{D})$}&
\lower20pt\hbox{$\includegraphics[height=1.7cm]{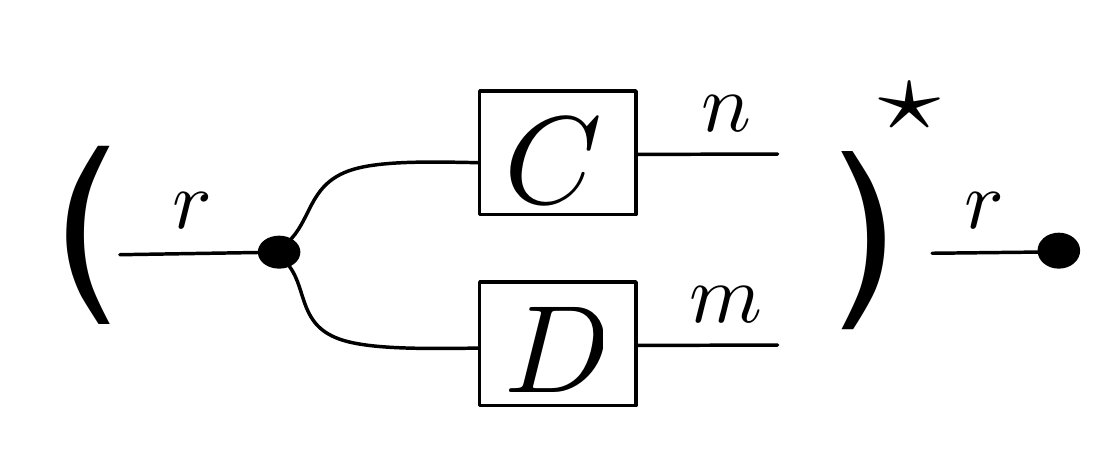}$} \\ \nonumber
&\eql{Prop. \ref{prop:star=refl}}&
\lower19pt\hbox{$\includegraphics[height=1.6cm]{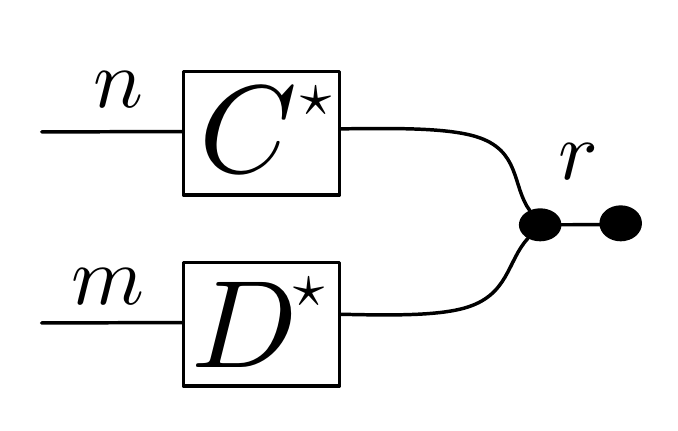}$}
 \end{eqnarray}
The proof is concluded by the following derivation, yielding the desired equation in $\IBRw$.
\begin{eqnarray*}
\lower10pt\hbox{$\includegraphics[height=1cm]{graffles/circuitABstar.pdf}$}
&\eql{Def. $\coc{(\cdot)}$}&
\lower18pt\hbox{$\includegraphics[height=2cm]{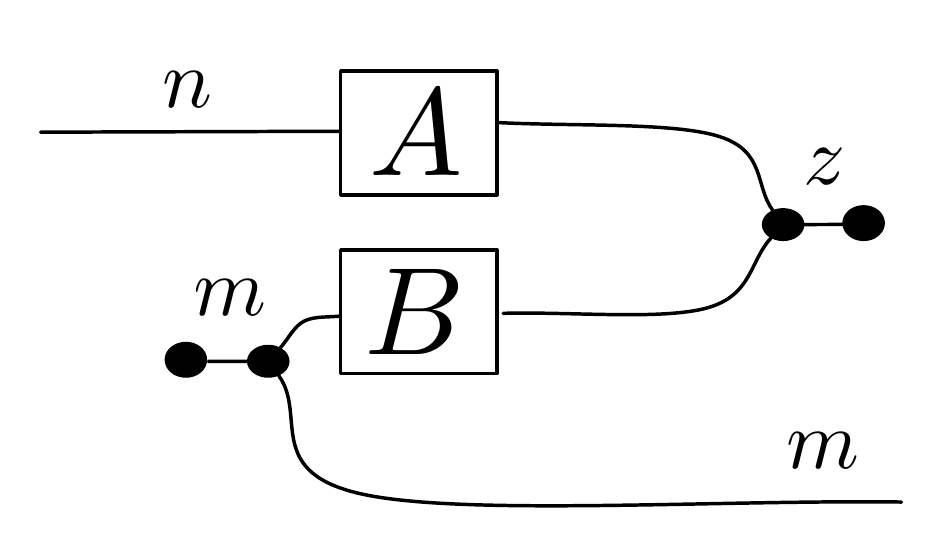}$} \\
&\eql{\eqref{eq:lccb}}&
\lower18pt\hbox{$\includegraphics[height=2cm]{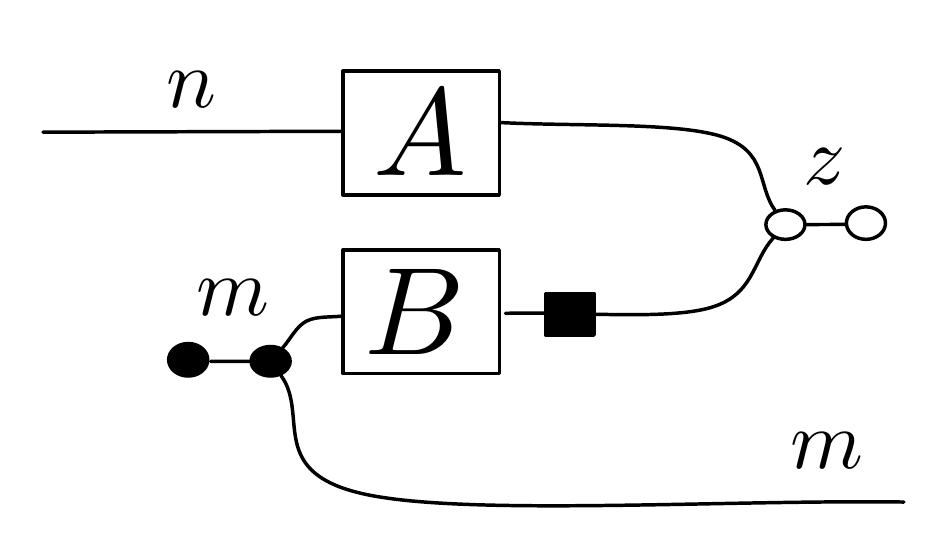}$} \\
&\eql{\eqref{eq:scalarwmult},\eqref{eq:scalarmult}}&
\lower18pt\hbox{$\includegraphics[height=2cm]{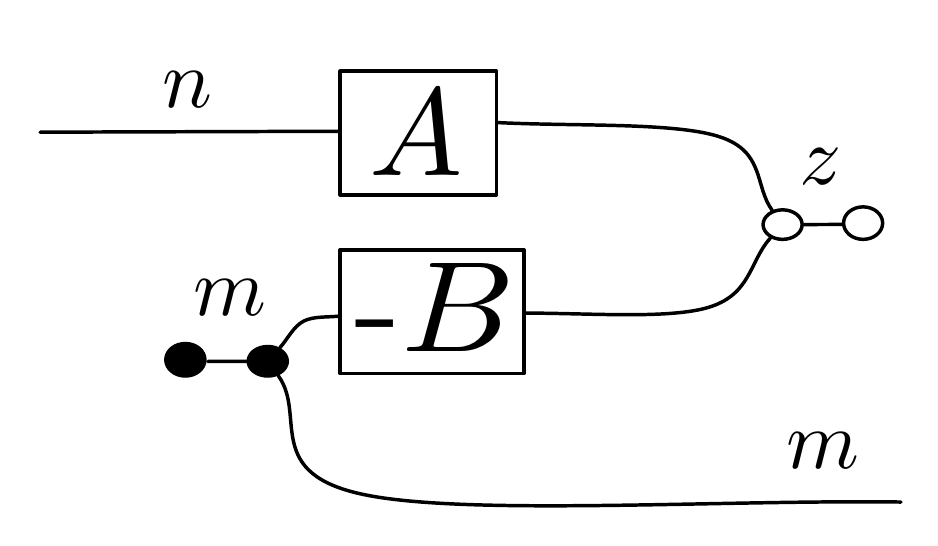}$} \\
&\eql{\eqref{eq:dercompl}}&
\lower18pt\hbox{$\includegraphics[height=2cm]{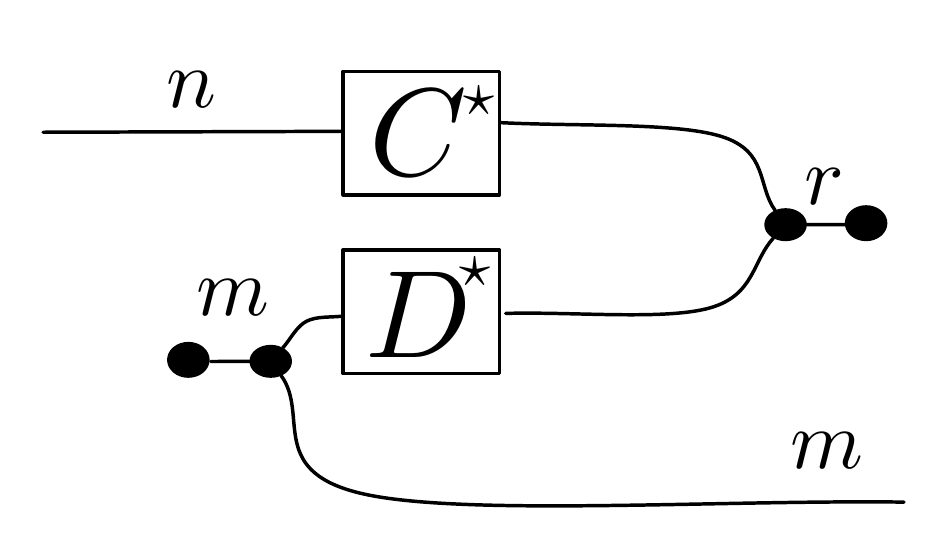}$} \\
&\eql{Def. $\coc{(\cdot)}$}&
\lower10pt\hbox{$\includegraphics[height=1cm]{graffles/circuitCstarDr.pdf}$}
 \end{eqnarray*}
We detail the various derivation steps. First, we can ``bend'' our circuit using the compact-closed structure $\coc{(\cdot)}$. Then we iteratively apply equation~\eqref{eq:lccb} to turn the rightmost part of the compact-closed structure from black into white. This produces $z$ copies of the antipode \antipodesquare. The third equality is given by iteratively applying axiom~\eqref{eq:scalarwmult} to push the antipodes in front of each scalar in circuit $B$, and then multiply all those scalars by the antipode value $-1$ using axiom~\eqref{eq:scalarmult}. As a result, we obtain the (circuit representing) the matrix $-B$. Then we can easily conclude using derivation~\eqref{eq:dercompl}.
\qed\end{proof}

This concludes the proof of Theorem~\ref{th:Span=IBw}. As an immediate consequence, we obtain the following factorisation property.

\begin{corollary}\label{cor:factorisationIBRw} Let $c \in \IBRw[n,m]$ be a circuit. Then $c = \sigma_2(c_1);\sigma_1(c_2)$ with $c_1 \in \ABRop[n,z]$ and $c_2 \in \ABR[z,m]$ for some natural number $z$.
\end{corollary}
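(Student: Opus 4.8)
The plan is to read the factorisation straight off the isomorphism $\IBRw \cong \Span{\VectR}$ of Theorem~\ref{th:Span=IBw}. Recall from the discussion preceding that theorem that this isomorphism is obtained by composing $\IBRw \cong \Span{\ABR}$ with the identification $\Span{\ABR} = \ABRop \poi \ABR$ coming from the pullback distributive law $\lambda_{pb}$ (Proposition~\ref{prop:distrLawPbPo}, together with $\ABR \cong \VectR$ of Proposition~\ref{prop:ab=vect}). Write $\Phi \: \IBRw \to \ABRop \poi \ABR$ for the resulting PROP isomorphism.

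By Lack's description of composed PROPs recalled in Section~\ref{ssec:composingPROP}, every arrow $n \to m$ of the composite $\ABRop \poi \ABR$ is represented by a pair of composable arrows, one in $\ABRop$ of type $n \to z$ followed by one in $\ABR$ of type $z \to m$. So the first step is to fix such a representative $(c_1, c_2)$ for $\Phi(c)$, with $c_1 \in \ABRop[n,z]$ and $c_2 \in \ABR[z,m]$. It then suffices to prove $\Phi(\sigma_2(c_1) \poi \sigma_1(c_2)) = \Phi(c)$, since $\Phi$ is an isomorphism and hence faithful. Because $\Phi$ is identity-on-objects and fixes the generators of $\ABR + \ABRop$ --- which is exactly the soundness half of the proof of Theorem~\ref{th:Span=IBw} --- it sends $\sigma_1 \: \ABR \to \IBRw$ to the canonical injection $\ABR \hookrightarrow \ABRop \poi \ABR$ and $\sigma_2 \: \ABRop \to \IBRw$ to the canonical injection $\ABRop \hookrightarrow \ABRop \poi \ABR$. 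Hence $\Phi(\sigma_2(c_1) \poi \sigma_1(c_2))$ is, computed in $\ABRop \poi \ABR$, the composite of the $\ABRop$-image of $c_1$ with the $\ABR$-image of $c_2$.

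The remaining point --- the only one requiring (entirely routine) care --- is that this composite is again the pair $(c_1, c_2)$, i.e.\ $\Phi(c)$. Unfolding the monad multiplication on $\ABRop \poi \ABR$, composing the two injected arrows amounts to applying $\lambda_{pb}$ to the identity cospan on $z$; but the pullback of $\id_z$ along $\id_z$ in $\VectR$ is the identity span on $z$, so $\lambda_{pb}$ acts trivially and the composite is $(c_1 \poi \id_z, \id_z \poi c_2) = (c_1, c_2)$. This yields $c = \sigma_2(c_1) \poi \sigma_1(c_2)$, as required. I do not foresee any genuine obstacle here: the corollary is just a repackaging of Theorem~\ref{th:Span=IBw} with the fact that arrows of a composed PROP are, by construction, factored pairs.
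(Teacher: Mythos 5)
Your proposal is correct and follows exactly the route the paper intends: the corollary is stated as an immediate consequence of Theorem~\ref{th:Span=IBw}, i.e.\ one transports the canonical $\ABRop$-then-$\ABR$ pair representation of arrows in the composite $\Span{\ABR}=\ABRop\poi\ABR$ back along the isomorphism, which is compatible with $\sigma_1,\sigma_2$ by construction. Your extra check that composing the two injected arrows leaves the pair unchanged (the pullback of the identity cospan being the identity span, up to the isomorphism-identification built into the composite) is just a careful spelling-out of what the paper leaves implicit.
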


\subsection{$\IBRb$: the theory of cospans of $\PID$-matrices}\label{sec:IBRbCospan}

In this section we provide a circuit characterization of $\Cospan{\VectR}$. Since we already have such a result for $\Span{\VectR}$, and $\VectR$ is self-dual by matrix transpose, then our strategy will be to understand the transpose in terms of circuits, as this will give ``for free'' also the syntactic PROP of $\Cospan{\VectR}$. We begin with the presentation of $\Cospan{\VectR}$.

\begin{definition}\label{def:IBRb} The PROP $\IBRb$ is the quotient of $\ABR + \ABRop$ by the following equations, for $k$ any element and $l$ any non-zero element of $\PID$.
\begin{multicols}{2}\noindent
 \begin{equation}
\label{eq:lcmop}
\tag{B1}
\lower7.5pt\hbox{$\includegraphics[height=.75cm]{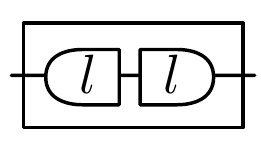}$}
=
\lower6pt\hbox{$\includegraphics[height=.6cm]{graffles/idcircuit.pdf}$}
\end{equation}
\begin{equation}
\label{eq:bbone}
\tag{B2}
\lower6pt\hbox{$\includegraphics[height=.6cm]{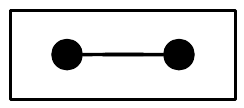}$}
=  \lower4pt\hbox{$\includegraphics[height=.5cm]{graffles/idzerocircuit.pdf}$}
\end{equation}
\end{multicols}
 \begin{multicols}{2}\noindent
\begin{equation}\tag{B3}
\lower15pt\hbox{$\includegraphics[height=1.3cm]{graffles/BFrobS.pdf}$}
\!\!
=
\!\!
\lower11pt\hbox{$\includegraphics[height=1cm]{graffles/BFrobX.pdf}$}
\!\!
=
\!\!
\lower15pt\hbox{$\includegraphics[height=1.3cm]{graffles/BFrobZ.pdf}$}
\end{equation}
\begin{equation}\tag{B4}
\lower15pt\hbox{$\includegraphics[height=1.3cm]{graffles/WFrobS.pdf}$}
\!\!
=
\!\!
\lower11pt\hbox{$\includegraphics[height=1cm]{graffles/WFrobX.pdf}$}
\!\!
=
\!\!
\lower15pt\hbox{$\includegraphics[height=1.3cm]{graffles/WFrobZ.pdf}$}
\end{equation}
\end{multicols}
 \begin{multicols}{2}\noindent
\begin{equation}\tag{B5}
\lower7pt\hbox{$\includegraphics[height=.7cm]{graffles/bccr.pdf}$}
=
\!\!
\lower11pt\hbox{$\includegraphics[height=1cm]{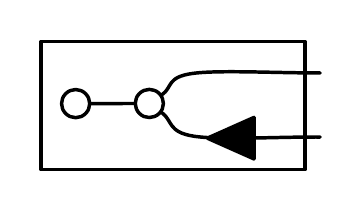}$}
\end{equation}
\begin{equation}\tag{B6}
\lower7pt\hbox{$\includegraphics[height=.7cm]{graffles/bccl.pdf}$}
=
\!\!
\lower11pt\hbox{$\includegraphics[height=1cm]{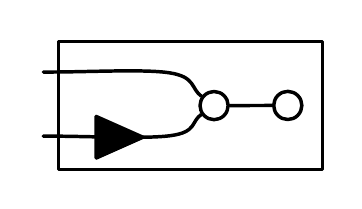}$}
\end{equation}
\end{multicols}
\begin{multicols}{2}\noindent
\begin{equation}\tag{B7}
\label{eq:WcccoscalarAxiomOne}
\lower9pt\hbox{$\includegraphics[height=.8cm]{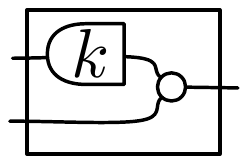}$}
=
\lower9pt\hbox{$\includegraphics[height=.8cm]{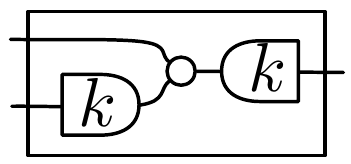}$}
\end{equation}
\begin{equation}\tag{B8}
\label{eq:WcccoscalarAxiomTwo}
\lower9pt\hbox{$\includegraphics[height=.8cm]{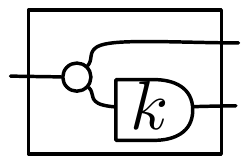}$}
=
\lower9pt\hbox{$\includegraphics[height=.8cm]{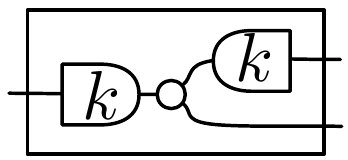}$}
\end{equation}
\end{multicols}
\end{definition}

Similarly to the case of $\IBRw$, we write $\tau_1 \: \ABR \to \IBRb$ and $\tau_2 \: \ABRop \to \IBRb$ for the PROP morphisms interpreting circuits of $\ABR$ and $\ABRop$, respectively, as circuits of $\IBRb$.

\medskip

The axioms of $\IBRb$ are the \emph{photographic negative} of the ones of $\IBRw$, that is, they are the same modulo swapping the black and white colors (and the orientation of scalar circuits). More formally, we inductively define a PROP morphism $\pn \: \IBRb \to \IBRw$ by the following mapping.
  \begin{multicols}{4}
\noindent
      \begin{eqnarray*}
     \Bcounit \mapsto \Wcounit
    \end{eqnarray*}
   \begin{eqnarray*}
  \hspace{-.7cm}   \Bunit \mapsto \Wunit
    \end{eqnarray*}
  \begin{eqnarray*}
   \hspace{-.7cm}  \Wunit \mapsto \Bunit
    \end{eqnarray*}
  \begin{eqnarray*}
   \Wcounit \mapsto \Bcounit
   \end{eqnarray*}
    \end{multicols}
     \begin{multicols}{4}
     \noindent
     \begin{eqnarray*}
     \Wmult \mapsto \Bmult
    \end{eqnarray*}
  \begin{eqnarray*}
  \hspace{-.7cm} \Wcomult \mapsto \Bcomult
   \end{eqnarray*}
        \begin{eqnarray*}
  \hspace{-.7cm}   \Bmult \mapsto \Wmult
    \end{eqnarray*}
  \begin{eqnarray*}
   \Bcomult \mapsto \Wcomult
   \end{eqnarray*}
      \end{multicols}
     \begin{multicols}{4}
     \noindent
           \begin{eqnarray*}
     \scalar \mapsto \coscalar
    \end{eqnarray*}
  \begin{eqnarray*}
 \hspace{-.7cm}  \coscalar \mapsto \scalar
   \end{eqnarray*}
     \begin{equation*}
     \lower2pt\hbox{\hspace{-1.4cm}$c\poi c' \mapsto \pn(c)\poi\pn(c')$}
   \end{equation*}
     \begin{equation*}
     \lower2pt\hbox{\hspace{-.7cm}$c\tns c' \mapsto \pn(c)\tns\pn(c')$}
   \end{equation*}
 \end{multicols}

\noindent The next lemma confirms that $\pn$ is well-defined.

\begin{lemma}\label{lemma:negativerespectequations} For all circuits $c,c'$ of $\IBRb$,  $c = c'$ in $\IBRb$ if and only if $\pn(c) = \pn(c')$ in $\IBRw$.
\end{lemma}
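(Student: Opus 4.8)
The plan is to show that the "photographic negative" map $\pn$ is well-defined and is, moreover, part of an isomorphism of PROPs, which immediately gives the biconditional. First I would observe that there is an evident candidate inverse $\pn' \: \IBRw \to \IBRb$ obtained by the mirror-image assignment (swapping black and white generators, reversing scalars, and acting on composites and tensors as expected). Since both $\pn$ and $\pn'$ are defined inductively on the free PROP $\ABR + \ABRop$ (the generators of $\IBRb$ and $\IBRw$ coincide with those of $\ABR + \ABRop$), it is clear from the definitions that $\pn'\comp\pn$ and $\pn\comp\pn'$ act as the identity on generators, hence are the identity functors. Therefore it suffices to prove that $\pn$ is well-defined, i.e.\ respects the equations of $\IBRb$; the corresponding fact for $\pn'$ is symmetric, and then $c = c'$ in $\IBRb$ iff $\pn(c) = \pn(c')$ in $\IBRw$ follows since $\pn$ is a bijection on hom-sets.

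To show $\pn$ is well-defined it is enough to verify that, for each axiom $t = t'$ of $\IBRb$ (namely \eqref{eq:lcmop}--\eqref{eq:WcccoscalarAxiomTwo} together with the equations of $\ABR + \ABRop$ inherited through $\tau_1,\tau_2$), the pair $(\pn(t),\pn(t'))$ is derivable in $\IBRw$. For the $\ABR + \ABRop$ fragment this is immediate: $\pn$ swaps the $\ABR$ generators with the $\ABRop$ generators, and since $\ABR$ and $\ABRop$ are mirror images of each other (via $(\cdot)^\star$), each axiom \eqref{eq:wmonunitlaw}--\eqref{eq:scalarsum} of $\ABR$ is sent to the mirrored axiom \eqref{eq:wmonunitlaw}$^{\op}$--\eqref{eq:scalarsum}$^{\op}$ of $\ABRop$, and vice versa, all of which hold in $\IBRw$ by definition. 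For the genuinely new axioms, I would go through them one at a time: by construction, the axioms \eqref{eq:lcmop}--\eqref{eq:WcccoscalarAxiomTwo} of $\IBRb$ were \emph{chosen} to be the colour-swapped versions of \eqref{eq:lcm}--\eqref{eq:BccscalarAxiomTwo} of $\IBRw$, so $\pn$ sends \eqref{eq:lcmop} to \eqref{eq:lcm}, (B2) to \eqref{eq:wbone}, (B3)/(B4) to \eqref{eq:WFrob}/\eqref{eq:BFrob} (note the two Frobenius axioms get swapped with each other, which is harmless since both are present in $\IBRw$), (B5)/(B6) to \eqref{eq:lcc}/\eqref{eq:rcc} modulo the derived law \eqref{eq:uniqueantipode} which lets us identify the $\ABR$- and $\ABRop$-antipodes, and (B7)/(B8) to \eqref{eq:BccscalarAxiomOne}/\eqref{eq:BccscalarAxiomTwo}. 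In each case the image is literally an axiom of $\IBRw$ (or an axiom up to the harmless bookkeeping just mentioned), hence derivable.

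The only point requiring a little care — and the place I expect the bulk of the routine work — is the bookkeeping around scalar orientations and the antipode: in $\IBRb$ a scalar $\scalar$ lives in the $\ABR$ part while its mirror $\coscalar$ lives in $\ABRop$, and $\pn$ crosses these, so one must check that the scalar axioms \eqref{eq:unitscalar}, \eqref{eq:scalarmult} and their mirrors, as well as the interaction axioms involving scalars, are mapped correctly; this amounts to checking that the presentation of $\IBRb$ really is symmetric to that of $\IBRw$ under colour swap, which is true essentially by the way Definition~\ref{def:IBRb} was set up. Once all axioms are checked, $\pn$ respects the congruence generated by them, so it is a well-defined PROP morphism; symmetrically so is $\pn'$; they are mutually inverse; and the claimed biconditional is exactly the statement that $\pn$ is faithful (and full), which holds because it is an isomorphism. \qed
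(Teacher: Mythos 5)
Your proposal is correct and follows essentially the same route as the paper: the paper's own proof is precisely the observation that, by construction, the equations presenting $\IBRw$ are the $\pn$-images of those presenting $\IBRb$, so derivability is preserved in both directions (your explicit inverse $\pn'$ and the isomorphism packaging just make explicit what the paper records in the immediately following lemma). The only nitpick is terminological: calling $\pn'$ a ``mirror-image'' assignment conflates the photographic negative (colour swap) with reflection about the $y$-axis, i.e.\ $\coc{(\cdot)}$, but your parenthetical description makes clear you mean the former.
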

\begin{proof} By construction, the equations presenting $\IBRw$ are the image under $\pn$ of the equations presenting $\IBRb$. Thus the statement is also true for all the derived laws of the two theories. \qed \end{proof}

\begin{lemma} $\pn$ is an isomorphism of PROPs. \end{lemma}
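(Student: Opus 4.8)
The plan is to exhibit an explicit two-sided inverse for $\pn$. Define a map on generators $\pn' \: \IBRw \to \IBRb$ by the evident reversed "photographic negative", i.e.\ $\Wcounit \mapsto \Bcounit$, $\Wunit \mapsto \Bunit$, $\Bcounit \mapsto \Wcounit$, $\Bunit \mapsto \Wunit$, $\Wmult \mapsto \Bmult$, $\Bmult \mapsto \Wmult$, $\Wcomult \mapsto \Bcomult$, $\Bcomult \mapsto \Wcomult$, $\scalar \mapsto \coscalar$, $\coscalar \mapsto \scalar$, extended to composites and monoidal products in the obvious way. The first thing to check is that $\pn'$ is well defined, i.e.\ that it respects the equations presenting $\IBRw$. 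But swapping the two colours and reversing the orientation of scalars is an involution on the signature which carries the defining axioms of $\IBRw$ bijectively onto those of $\IBRb$ — this is precisely the symmetry already exploited in Lemma~\ref{lemma:negativerespectequations}. Hence the argument of that lemma applies verbatim with the roles of the two theories interchanged, and $\pn'$ is a genuine PROP morphism.

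Next I would observe that the two composites $\pn' \poi \pn \: \IBRb \to \IBRb$ and $\pn \poi \pn' \: \IBRw \to \IBRw$ act as the identity on each generator: this is immediate from the definitions, since the colour-swap-and-reverse operation is an involution on the list of generators. As identity-on-objects strict symmetric monoidal functors are determined by their values on generators, both composites coincide with the respective identity functors. Therefore $\pn$ is an isomorphism of PROPs, with inverse $\pn'$.

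Equivalently, one can argue that $\pn$ is full and faithful: faithfulness is exactly the ``if'' direction of Lemma~\ref{lemma:negativerespectequations}, while fullness follows because every generator of $\IBRw$ lies in the image of $\pn$ and the image of a PROP morphism is a sub-PROP; together with $\pn$ being identity-on-objects this again gives the isomorphism. Either way, no step presents a genuine obstacle — the only point needing (minor) care is the verification that the black/white swap sends the axiom schemes of $\IBRb$ precisely onto those of $\IBRw$ and conversely, which is what makes the biconditional of Lemma~\ref{lemma:negativerespectequations} symmetric in the two theories; this is a straightforward inspection of Definitions~\ref{def:IBRw} and~\ref{def:IBRb}.
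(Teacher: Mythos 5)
Your proof is correct. The paper itself argues via full-and-faithful: fullness by induction on circuits of $\IBRw$ (every generator is hit by $\pn$), and faithfulness from Lemma~\ref{lemma:negativerespectequations} --- which is precisely your second, alternative paragraph (and your labelling of the relevant implication as the ``if'' direction, $\pn(c)=\pn(c')\Rightarrow c=c'$, is the accurate one). Your primary route --- defining the colour-swap-and-scalar-reverse map $\pn'\: \IBRw \to \IBRb$ and checking it is a well-defined PROP morphism that inverts $\pn$ on generators --- is an equivalent but slightly different packaging of the same underlying observation, namely that the swap carries the presentation of $\IBRb$ (the equations of $\ABR+\ABRop$ together with B1--B8) bijectively onto that of $\IBRw$ (the same equations of $\ABR+\ABRop$ together with W1--W8). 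What the explicit-inverse formulation buys is that the only thing requiring verification is well-definedness of $\pn'$, after which the isomorphism is automatic because both composites fix every generator and PROP morphisms out of an SMT-presented PROP are determined by their values on generators; what the paper's formulation buys is that it reuses Lemma~\ref{lemma:negativerespectequations} verbatim without stating a second symmetric instance of it. Either way the substance is identical, and no step in your argument is problematic.
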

\begin{proof} Fullness of $\pn$ is easily verified by induction on $c \in \IBRw$ and faithfulness follows by the ``only if'' direction of Lemma~\ref{lemma:negativerespectequations}.
\qed \end{proof}

We now specify the matrix counterpart of $\pn$. The operation of taking the transpose of a matrix yields a PROP isomorphism $(\cdot)^{T} \: \VectR \cong \VectRop$. 
This also induces a PROP morphism $\tra \: \Span{\VectR} \to \Cospan{\VectR}$ mapping $n \tl{A} z \tr{B} m$ into $n \tr{A^T} z \tl{B^T} m$. To see that this assignment is functorial, observe that pushouts in $\VectR$ --- giving composition in $\Cospan{\Vect}$ --- can be calculated by transposing pullbacks of transposed matrices. Because $(\cdot)^{T}$ is an isomorphism, also $\tra$ is an isomorphism.

We can now obtain an isomorphism between $\IBRb$ and $\Cospan{\Vect}$ as:
\begin{equation}\label{eq:IsoCospan}  \xymatrix{ \IBRb \ar[r]^{\pn}&  \IBRw \ar[r]^-{\cong} & \Span{\Vect} \ar[r]^{\tra} & \Cospan{\Vect}} \text{.}  \end{equation}

\section{Interacting Hopf Algebras II: the Theory of Linear Subspaces}\label{sec:cubetop}

In this section we give the presentation $\IBR$ of the PROP of linear relations, obtained by merging theories $\IBRw$ and $\IBRb$. Since we want to identify the generators of $\ABR+\ABRop$ on which both $\IBRw$ and $\IBRb$ are based, we formally define it as the following pushout in $\PROP$.
 \begin{equation}\label{eq:topface}
 \tag{Top}
\vcenter{
\xymatrix@=20pt{ & \ar[dl]_{[\sigma_1,\sigma_2]} \ABR+\ABRop \ar[dd]|{[\varphi_1,\varphi_2]} \ar[dr]^{[\tau_1,\tau_2]} & \\
 \IBRw \ar[dr]_{\Theta} & & \ar[dl]^{\Lambda} \IBRb \\
 & \IBR &
 }
 }
 \end{equation}

The PROP morphism $\Theta$ quotients $\IBRw$ by the equations of $\IBRb$ and $\Lambda$ quotients $\IBRb$ by the ones of $\IBRw$. Then $[\varphi_1,\varphi_2] \: \ABR + \ABRop \to \IBR$ is defined by commutativity of the diagram.  We can give a presentation of the resulting theory $\IBR$ as follows.

\begin{definition}\label{def:IBR}  The PROP $\IBR$ is the quotient of $\ABR + \ABRop$ by the following equations, for $l$ any non-zero element of $\PID$.
  \begin{multicols}{2}\noindent
 \begin{equation}\tag{I1} \label{eq:lcmIH}
\lower8pt\hbox{$\includegraphics[height=.75cm]{graffles/lcml_l.pdf}$}
=
\lower6pt\hbox{$\includegraphics[height=.6cm]{graffles/idcircuit.pdf}$}
\end{equation}
 \begin{equation}\tag{I2} \label{eq:lcmopIH}
\lower8pt\hbox{$\includegraphics[height=.75cm]{graffles/lcmopl_l.pdf}$}
=
\lower6pt\hbox{$\includegraphics[height=.6cm]{graffles/idcircuit.pdf}$}
\end{equation}
\end{multicols}
 \begin{multicols}{2}\noindent
\begin{equation}\tag{I3}\label{eq:WFrobIBR}
\lower15pt\hbox{$\includegraphics[height=1.3cm]{graffles/WFrobS.pdf}$}
\!\!
=
\!\!
\lower11pt\hbox{$\includegraphics[height=1cm]{graffles/WFrobX.pdf}$}
\!\!
=
\!\!
\lower15pt\hbox{$\includegraphics[height=1.3cm]{graffles/WFrobZ.pdf}$}
\end{equation}
\begin{equation}\tag{I4}\label{eq:BFrobIBR}
\lower15pt\hbox{$\includegraphics[height=1.3cm]{graffles/BFrobS.pdf}$}
\!\!
=
\!\!
\lower11pt\hbox{$\includegraphics[height=1cm]{graffles/BFrobX.pdf}$}
\!\!
=
\!\!
\lower15pt\hbox{$\includegraphics[height=1.3cm]{graffles/BFrobZ.pdf}$}
\end{equation}
\end{multicols}
 \begin{multicols}{2}\noindent
\begin{equation}\tag{I5}\label{eq:lccIBR}
\lower12pt\hbox{$\includegraphics[height=1cm]{graffles/lccr.pdf}$}
\!\!
=
\!\!
\lower12pt\hbox{$\includegraphics[height=1cm]{graffles/blackcceta.pdf}$}
\end{equation}
\begin{equation}\tag{I6}\label{eq:rccIBR}
\lower12pt\hbox{$\includegraphics[height=1cm]{graffles/rccl.pdf}$}
\!\!
=
\!\!
\lower12pt\hbox{$\includegraphics[height=1cm]{graffles/blackccepsilon.pdf}$}
\end{equation}
\end{multicols}
\begin{multicols}{2}\noindent
\begin{equation}\tag{I7}\label{eq:WSepIBR}
\lower7pt\hbox{$\includegraphics[height=.7cm]{graffles/WSep.pdf}$}
=
\lower6pt\hbox{$\includegraphics[height=.6cm]{graffles/idcircuit.pdf}$}
\end{equation}
\begin{equation}\tag{I8}\label{eq:BSepIBR}
\lower7pt\hbox{$\includegraphics[height=.7cm]{graffles/BSep.pdf}$}
=
\lower6pt\hbox{$\includegraphics[height=.6cm]{graffles/idcircuit.pdf}$}
\end{equation}
\end{multicols}
\end{definition}
One can readily verify that the axioms above suffice to present the theory resulting from the pushout \eqref{eq:topface}. In particular, the missing equations from the presentations of $\IBRw$ --- \eqref{eq:wbone}, \eqref{eq:BccscalarAxiomOne} and \eqref{eq:BccscalarAxiomTwo} --- and of $\IBRb$ --- \eqref{eq:bbone}, \eqref{eq:WcccoscalarAxiomOne} and \eqref{eq:WcccoscalarAxiomTwo} --- are all derivable from \eqref{eq:lcmIH}-\eqref{eq:BSepIBR} (see~\ref{AppDerLawsIH}).

By definition, $\IBR$ is both a quotient of $\IBRw \cong \Span{\ABR}$ and of $\IBRb\cong \Cospan{\ABR}$. Therefore, it inherits their factorisation property.
\begin{theorem}[Factorisation of $\IBR$] \label{Th:factIBR} Let $c \in \IBR[n,m]$ be a circuit.
 \begin{enumerate}[(i)]
   \item There exist $c_1$ in $\ABRop$ and $c_2$ in $\ABR$ such that $c = \varphi_2(c_1) \poi\varphi_1(c_2)$.
   \item  There exist $c_3$ in $\ABR$ and $c_4$ in $\ABRop$ such that $c = \varphi_1(c_3) \poi \varphi_2(c_4)$.
 \end{enumerate}
\end{theorem}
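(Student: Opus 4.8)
The plan is to obtain the two factorisations of $\IBR$ by transporting, along the quotient morphisms $\Theta$ and $\Lambda$ of the pushout square~\eqref{eq:topface}, factorisation properties that $\IBRw$ and $\IBRb$ are already known (or easily seen) to enjoy. The key preliminary observation is that $\Theta \: \IBRw \to \IBR$ and $\Lambda \: \IBRb \to \IBR$ are \emph{full}: each is, by construction, a quotient by a set of equations (Definition~\ref{def:IBR} describes $\IBR$ as the quotient of $\IBRw$ by the $\IBRb$-axioms, and symmetrically for $\Lambda$), hence identity-on-objects and surjective on the generating terms, so surjective on all arrows. Moreover, commutativity of~\eqref{eq:topface} records the identities $\varphi_1 = \Theta\circ\sigma_1 = \Lambda\circ\tau_1$ and $\varphi_2 = \Theta\circ\sigma_2 = \Lambda\circ\tau_2$, which is what lets factorisations pass back and forth.

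For part (i), given $c \in \IBR[n,m]$, fullness of $\Theta$ provides some $c' \in \IBRw[n,m]$ with $\Theta(c') = c$. Corollary~\ref{cor:factorisationIBRw} lets us write $c' = \sigma_2(c_1)\poi\sigma_1(c_2)$ with $c_1 \in \ABRop[n,z]$ and $c_2 \in \ABR[z,m]$; applying $\Theta$ and using $\varphi_i = \Theta\circ\sigma_i$ then yields $c = \varphi_2(c_1)\poi\varphi_1(c_2)$, as required.

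Part (ii) follows the same pattern with $\Lambda$ in place of $\Theta$, but first requires the \emph{cospan} analogue of Corollary~\ref{cor:factorisationIBRw}: every circuit of $\IBRb$ factorises as $\tau_1(c_3)\poi\tau_2(c_4)$ with $c_3 \in \ABR$ and $c_4 \in \ABRop$. I would derive this from $\IBRb \cong \Cospan{\VectR} \cong \Cospan{\ABR} = \ABR\poi\ABRop$ (Proposition~\ref{prop:distrLawPbPo}(3) together with the isomorphism~\eqref{eq:IsoCospan}) and the defining property of a composite PROP $\T_1\poi\T_2$, whose arrows are pairs $\tr{\in\T_1}\tr{\in\T_2}$; alternatively, one can transport the span factorisation of $\IBRw$ through the colour-swapping isomorphism $\pn$ and the transpose functor $\tra$. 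Granting this, lifting $c$ to $\IBRb$, factorising there, and pushing back through $\Lambda$ (using $\varphi_i = \Lambda\circ\tau_i$) gives $c = \varphi_1(c_3)\poi\varphi_2(c_4)$.

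The whole argument is essentially a diagram chase through the commuting triangles of~\eqref{eq:topface}, so the only point needing genuine care — and the closest thing to an obstacle — is the cospan factorisation of $\IBRb$ invoked in~(ii); once that is recorded (it is strictly dual to material already established for $\IBRw$ in Section~\ref{sec:ibrw}), the rest is bookkeeping.
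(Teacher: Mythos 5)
Your argument is correct and follows essentially the same route as the paper: the paper also deduces (i) from Corollary~\ref{cor:factorisationIBRw} and (ii) from the analogous cospan factorisation of $\IBRb$ guaranteed by $\IBRb \cong \Cospan{\ABR}$, with $\IBR$ inheriting both because it is a quotient of each theory. You merely make explicit the inheritance mechanism (fullness of $\Theta$, $\Lambda$ and the identities $\varphi_i = \Theta\circ\sigma_i = \Lambda\circ\tau_i$ from~\eqref{eq:topface}), which the paper leaves implicit.
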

\begin{proof} The first statement follows by Corollary \ref{cor:factorisationIBRw}. Since $\IBRb$ has been shown to be isomorphic to $\Cospan{\ABR}$, then a result analogous to Corollary \ref{cor:factorisationIBRw} also holds for $\IBRb$, yielding the second statement.
\qed\end{proof}

\begin{remark} In the case in which the PID under consideration is actually a field, we can replace \eqref{eq:lcmIH} and \eqref{eq:lcmopIH} by the axiom $\lower10pt\hbox{$\includegraphics[height=1cm]{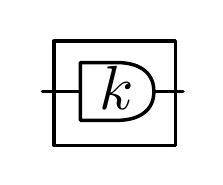}$} \!\! = \!\! \lower7pt\hbox{$\includegraphics[height=.8cm]{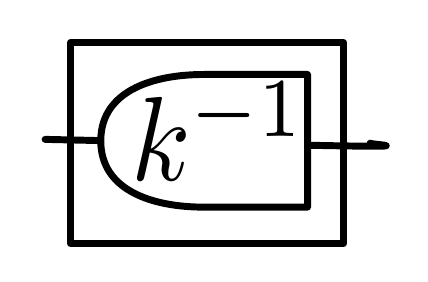}$}$ , for $k \neq 0$.
\end{remark}
The PROP $\SVR$ of linear relations over a field $\frPID$ is defined as follows:
\begin{itemize}
\item arrows $n\to m$ are subspaces of $\frPID^n \times \frPID^m$, considered as a $\frPID$-vector space
\item composition is relational: given $V \: n \to z$, $W \: z \to m$,
\[
(\xx,\zz)\in V\poi W \quad\Leftrightarrow\quad \exists \yy.\; (\xx,\yy)\in V \wedge (\yy,\zz)\in W
\]
\item the monoidal product is given by direct sum
\item the symmetry $n + m \to m+n$ is the subspace $\{(\matrixTwoROneC{\xx}{\yy},\matrixTwoROneC{\yy}{\xx}) \mid \xx \in \frPID^n \wedge {\yy \in \frPID^m}\}$.
\end{itemize}

We can now state our main result.
\begin{theorem}\label{th:IBR=SVR} Let $\frPID$ be the field of fractions of $\PID$. Then $\IBR \cong \SVR$.
\end{theorem}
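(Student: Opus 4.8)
The plan is to exploit the commutative cube \eqref{eq:cube}, whose top face is the pushout \eqref{eq:topface} and whose vertical arrows are the semantic interpretations. The strategy is to show that the bottom face of \eqref{eq:cube} is also a pushout in $\PROP$, and then to conclude $\IBR \cong \SVR$ by uniqueness of pushouts, using that the four vertical morphisms $\ABR + \ABRop \cong \VectR + \VectRop$, $\IBRw \cong \Span{\VectR}$ and $\IBRb \cong \Cospan{\VectR}$ are already known to be isomorphisms (Propositions~\ref{prop:ab=vect}, Theorem~\ref{th:Span=IBw}, and~\eqref{eq:IsoCospan}) and that the rear faces of the cube commute.

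First I would establish that the bottom face
\[
\xymatrix@=14pt{
& \VectR + \VectRop \ar[dl] \ar[dr] & \\
\Span{\VectR} \ar[dr] & & \Cospan{\VectR} \ar[dl] \\
& \SVR &
}
\]
is a pushout in $\PROP$. The two legs from $\VectR + \VectRop$ send a matrix $A$ to the span/cospan consisting of $A$ together with an identity (equivalently, a linear map viewed as the graph relation, resp.\ the cograph); the two legs into $\SVR$ send a span $n \tl{A} z \tr{B} m$ to the subspace $\{(A\vv, B\vv) \mid \vv \in \frPID^z\}$ and a cospan $n \tr{C} z \tl{D} m$ to $\{(\xx,\yy) \mid C\xx = D\yy\}$ (after applying $I \: \VectR \to \Mat{\frPID}$). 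Checking that this square commutes is routine. For the universal property: given a PROP $\mathbb{X}$ with morphisms $\Span{\VectR} \to \mathbb{X}$ and $\Cospan{\VectR} \to \mathbb{X}$ agreeing on $\VectR + \VectRop$, I would use the span/cospan \emph{factorisations}: every arrow of $\SVR$ is simultaneously of the form $\sem{}(A^T)^{\!\star}$-style $\tl{}\tr{}$ and $\tr{}\tl{}$, i.e.\ it is the image of a span and also of a cospan, and in fact every subspace $V \subseteq \frPID^n \times \frPID^m$ arises from a matrix pair in each presentation; one then defines the mediating map on a span/cospan representative and checks independence of the choice of representative using that any two representatives of the same subspace differ by an invertible matrix — and here Lemma~\ref{lem:pullbacksinmat} and Lemma~\ref{lemma:pushoutinvect} (the span/cospan reflection facts for $\Mat{\PID} \to \Mat{\frPID}$) are exactly what is needed to see that spans of $\PID$-matrices and subspaces over $\frPID$ match up, with both $\Span{\VectR} \to \SVR$ and $\Cospan{\VectR} \to \SVR$ being full, and that the comparison is faithful. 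Uniqueness of the mediating morphism follows because the images of $\Span{\VectR}$ and $\Cospan{\VectR}$ jointly generate $\SVR$.

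Next I would verify that the rear (left and right) faces of the cube commute: the square relating $\ABR + \ABRop \to \IBRw$, $\ABR + \ABRop \to \VectR + \VectRop$, $\IBRw \to \Span{\VectR}$ and $\VectR + \VectRop \to \Span{\VectR}$ commutes because both composites send a generator of $\ABR$ (resp.\ $\ABRop$) to the span/cospan built from its matrix together with an identity — this is immediate from the definition of $\sem{\ABR}$ and of the distributive law $\lambda_{pb}$ by pullback used in the proof of Theorem~\ref{th:Span=IBw}; symmetrically for the $\IBRb$/$\Cospan{\VectR}$ face via~\eqref{eq:IsoCospan}. With both horizontal faces now pushouts and all four vertical maps isomorphisms, the induced dotted arrow $\IBR \to \SVR$ is forced to be an isomorphism: explicitly, $\SVR$ with the two legs $\Span{\VectR}\xrightarrow{\cong}\IBRw\xrightarrow{\Theta}\IBR$ precomposed appropriately and similarly through $\IBRb$ exhibits $\IBR$ as satisfying the universal property of the bottom pushout, hence $\IBR \cong \SVR$.

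The main obstacle I expect is the verification that the bottom face is genuinely a pushout in $\PROP$ — specifically, proving that the comparison map out of the pushout of $\Span{\VectR} \leftarrow \VectR + \VectRop \to \Cospan{\VectR}$ into $\SVR$ is \emph{faithful}. Fullness is easy (every subspace has both a span presentation and a cospan presentation, by choosing a basis for the subspace and a basis for a complement of its projection), but faithfulness requires showing that the two factorisation systems interact correctly: one must argue that an arrow of the pushout PROP can be brought to a normal form — e.g.\ a $\tl{}\tr{}$ span-over-$\Mat{\frPID}$ form — and that two such forms denoting the same subspace are already equated in the pushout. This is where the careful use of Lemmas~\ref{lem:pullbacksinmat} and~\ref{lemma:pushoutinvect} (clearing denominators to pass between $\PID$ and $\frPID$) together with the factorisation Corollary~\ref{cor:factorisationIBRw} does the real work, and it is the technically delicate heart of the argument.
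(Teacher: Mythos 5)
Your overall strategy is exactly the paper's: top face a pushout by construction, bottom face~\eqref{eq:bottomface} a pushout, rear faces commuting with vertical isomorphisms, and the induced comparison $\IBR \to \SVR$ then forced to be an isomorphism. The gap lies in the step you yourself single out as the heart of the matter, and the mechanism you propose for it fails. You want to define the mediating morphism on a span representative and check independence of the representative ``using that any two representatives of the same subspace differ by an invertible matrix''. This is false: $\Phi \: \Span{\VectR}\to\SVR$ is \emph{not} faithful, and two spans of $\PID$-matrices presenting the same $\frPID$-subspace need not be isomorphic spans. For example, over $\PID=\Z$ the spans $0 \tl{} 1 \tr{1} 1$ and $0 \tl{} 1 \tr{2} 1$ both present the whole of $\Q$ as a subspace of $\Q^0\times\Q^1$, but no $\Z$-invertible $1\times 1$ matrix relates them; they are only identified after the scalar/coscalar cancellation \eqref{eq:lcmIH} available in the pushout. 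So well-definedness of the mediating map cannot be reduced to span isomorphism, nor can one hope to verify ``faithfulness'' of the comparison by putting arrows of the abstract pushout into a span normal form unique up to invertibles.

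What the paper does instead avoids faithfulness of $\Phi$ and $\Psi$ altogether. The key fact is the equivalence of Lemma~\ref{lemma:pushoutphi}: $\Phi(\tl{F_1}\tr{G_1})=\Phi(\tl{F_2}\tr{G_2})$ iff the two spans have the \emph{same pushout cospan} in $\Mat{\PID}$; it is proved by passing through $\Psi$ and pullbacks (Lemmas~\ref{lemma:arbitraryPROP} and~\ref{lemma:pullbackpsi}), with the $\PID$-to-$\frPID$ transfer supplied precisely by Lemmas~\ref{lem:pullbacksinmat} and~\ref{lemma:pushoutinvect} and commutativity of the square (Lemma~\ref{lemma:bottomfacecommutes}). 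Then, for an arbitrary cocone $(\Gamma,\Delta)$ agreeing on $\VectR+\VectRop$, Lemma~\ref{lemma:arbitraryPROP} shows $\Gamma$ automatically identifies any two spans with the same pushout cospan, so setting $\Theta(S)=\Gamma(\tl{S_1}\tr{S_2})$ for any span representative is well defined, and uniqueness of $\Theta$ follows from fullness of $\Phi$ alone (Lemma~\ref{lemma:phipsifull}). This pushout/pullback correspondence is the missing argument your sketch needs; in the $\Z$-example above, both spans indeed have the same pushout cospan $0 \tr{} 0 \tl{} 1$, which is why every cocone identifies them. Two smaller remarks: Corollary~\ref{cor:factorisationIBRw} plays no role in the bottom face (which is purely about matrices); together with Lemma~\ref{lemma:mirror} it is what the paper uses to show that the explicitly defined $\sem{\IBRw}$ is an isomorphism (Proposition~\ref{prop:semanticsIBRwIso}), i.e.\ for the rear faces, where your appeal to the isomorphism of Theorem~\ref{th:Span=IBw} and its action on generators is otherwise fine.
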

The proof consists of the construction of the cube~\eqref{eq:cube} shown in the Introduction. We already noted that the top face~\eqref{eq:topface} is a pushout. We next prove that the bottom face is also a pushout (Section~\ref{sec:cubebottom}). Then, in Section~\ref{sec:cubeback}, we show commutativity of the rear faces, whose vertical arrows are isomorphisms.  The isomorphism $\IBR \to \SVR$ will then be given by universal properties of the top and bottom faces (Section~\ref{sec:cuberebuilt}).

\subsection{The Cube: Bottom Face}\label{sec:cubebottom}

In this section we show that the following diagram, which is the bottom face of the cube \eqref{eq:cube}, is a pushout in $\PROP$.
\begin{equation}\tag{Bot}
\label{eq:bottomface}
\raise15pt\hbox{$
\xymatrix@C=40pt{
{\VectR + \VectRop} \ar[r]^-{[\kappa_1,\,\kappa_2]} \ar[d]_{[\iota_1,\,\iota_2]} & {\Span{\Mat \PID}} \ar[d]^{\Phi} \\
{\Cospan{\Mat \PID}} \ar[r]_-{\Psi} & {\SVR}
}$}
\end{equation}
In the diagram above, we define
\[\kappa_1(A \: n\to m) = (n \tl{\id} n \tr{A} m),\  \kappa_2(A \: n\to m) = (n \tl{A} m \tr{\id} m ),\]
\[\iota_1(A \: n\to m) = (n \tr{A} m \tl{\id} m)\text{ and }
\iota_2(A \: n\to m) = (n \tr{\id} n \tl{A} m).\]
For the definition of $\Phi$, we let $\Phi(n\tl{A} z \tr{B} m)$ be the subspace
\[
\{\,(\xx,\yy)\ |\ \xx\in \frPID^n,\, \yy\in \frPID^m,\, \exists \zz\in \frPID^z.\; A\zz=\xx \wedge B\zz = \yy \,\}.
\]
Instead, $\Psi(n \tr{A} z \tl{B} m)$ is defined to be the subspace \[
\{\, (\xx,\yy) \ | \ \xx\in \frPID^n,\, \yy\in \frPID^m,\ A\xx = B\yy\, \}
\]
In the sequel we verify that $\Phi$ and $\Psi$ are indeed functorial assignments. This requires some preliminary work. Let $\RMod{\PID}$ be the category of finite-dimensional $\PID$-modules and linear maps. We define $\RMod{\frPID}$ analogously. $\FRMod{\PID}$ and $\FRMod{\frPID}$ are the full subcategories of free modules of $\RMod{\PID}$ and $\RMod{\frPID}$ respectively (note that, of course, $\FRMod{\frPID} \cong \RMod{\frPID}$). There is an obvious PROP morphism $I \: \VectR \to \Mat{\frPID}$ interpreting a matrix with entries in $\PID$ as one with entries in $\frPID$. Similarly, we have an inclusion $J \: \FRMod{\PID} \to \FRMod{\frPID}$. This yields the following commutative diagram, where $\simeq$ denotes equivalence.
\[
\xymatrix{
{\Mat{\PID}} \ar[d]_I \ar[r]^-{\simeq} & {\FRMod{\PID}} \ar[d]^J \\
{\Mat{\frPID}} \ar[r]_-{\simeq} & {\FRMod{\frPID}}
}
\]

\begin{lemma} \label{lem:pullbacksinmat} $I: \Mat{\PID}\to \Mat{\frPID}$ preserves pullbacks and pushouts.
\end{lemma}
\begin{proof} Because the transpose operation induces a duality in both $\Mat{\PID}$ and $\Mat{\frPID}$, the morphism $\Mat{\PID}\to\Mat{\frPID}$ preserves pullbacks iff it preserves pushouts. It is thus enough to show that it preserves pullbacks. This can be easily be proved directly as follows. Suppose that the diagram
\begin{equation}
\label{eq:pbinmatr}
\tag{$\star$}
\vcenter{
\xymatrix@=15pt{
{r} \ar[d]_A \ar[r]^B& {m}\ar[d]^D\\
{n} \ar[r]_C  & {z}
}
}
\end{equation}
is a pullback in $\Mat{\PID}$. We need to show that it is also a pullback in $\Mat{\frPID}$.
Suppose that, for some $P \: q\to n$, $Q \: q\to m$ in $\Mat{\frPID}$ we have that
$CP = DQ$ in $\Mat{\frPID}$. Since $\PID$ is a PID we can find least common multiples:
thus let $d$ be a common multiple of all the denominators that appear in $P$ and $Q$.
Then $dP \: q \to n$, $dQ \: q\to m$ are in $\Mat{\PID}$ and we have
$C(dP)=d(CP)=d(DQ)=D(dQ)$. Since \eqref{eq:pbinmatr} is a pullback in $\Mat{\PID}$, there exists a unique $H \: q\to r$ with $AH=dP$ and $BH=dQ$. This means that we have found a mediating arrow, $H/d \: q \to r$, in $\Mat{\frPID}$ since $A(H/d)=AH/d=dP/d=P$ and similarly $B(H/d)=Q$. Uniqueness in $\Mat{\frPID}$ can also be translated in a straightforward way to uniqueness in $\Mat{\PID}$. Basically if $H'$ is another mediating morphism and $d'$ is the least common multiple of denominators in $H'$ then we must have $d'(H/d)=d'H'$ because of the universal property in $\Mat{\PID}$. Dividing both sides by $d'$ yields the required equality.
\qed\end{proof}

We are now able to show that
\begin{lemma}
$\Phi\: \Span{\VectR} \to \SVR$ is a PROP morphism.	
\end{lemma}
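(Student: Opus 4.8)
The plan is to verify in turn the conditions defining a PROP morphism. First I would check that $\Phi$ really lands in $\SVR$, i.e.\ that each span is sent to an $\frPID$-subspace: $\{\,(A\zz,B\zz)\mid\zz\in\frPID^z\,\}$ is the image of the $\frPID$-linear map $\zz\mapsto(A\zz,B\zz)$, hence a subspace of $\frPID^n\times\frPID^m$. Next, $\Phi$ must be well defined on isomorphism classes of spans: if $n\tl{A}z\tr{B}m$ and $n\tl{C}z\tr{D}m$ are identified in $\Span{\VectR}$ via an invertible $U\in\VectR[z,z]$ with $C=A\poi U$ and $D=B\poi U$ (cf.~\eqref{diag:isospan}), then since $U$ induces a bijection of $\frPID^z$ the sets $\{(A\zz,B\zz)\}$ and $\{(C\zz,D\zz)\}$ coincide. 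Then $\Phi$ is identity-on-objects and sends the identity span $n\tl{\id}n\tr{\id}n$ to the diagonal $\{(\xx,\xx)\mid\xx\in\frPID^n\}$, the identity of $\SVR$; and it is strict symmetric monoidal, as a direct unwinding of definitions shows that $\Phi$ takes the direct sum of spans to the (reindexed) direct sum of subspaces and the symmetry span at $(n,m)$ to $\{\,(\matrixTwoROneC{\xx}{\yy},\matrixTwoROneC{\yy}{\xx})\mid\xx\in\frPID^n,\yy\in\frPID^m\,\}$.

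The one step that requires care is preservation of composition. Composites in $\Span{\VectR}$ are computed by pullback: the composite of $n\tl{A}a\tr{B}p$ and $p\tl{C}b\tr{D}m$ is $n\tl{E\poi A}w\tr{F\poi D}m$, where $a\tl{E}w\tr{F}b$ is the pullback of $B$ and $C$ in $\VectR$. Unwinding, $\Phi$ of this composite is the set of pairs $(\xx,\yy)$ for which there is $\ww\in\frPID^w$ with $AE\ww=\xx$ and $DF\ww=\yy$, while the relational composite of the two images consists of the $(\xx,\yy)$ for which there are $\zz_1\in\frPID^a$, $\zz_2\in\frPID^b$ with $A\zz_1=\xx$, $D\zz_2=\yy$ and $B\zz_1=C\zz_2$. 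The inclusion of the former in the latter is immediate, taking $\zz_1=E\ww$, $\zz_2=F\ww$ and using that the pullback square commutes. For the reverse inclusion, given $\zz_1,\zz_2$ over $\frPID$ with $B\zz_1=C\zz_2$, I would need to produce $\ww\in\frPID^w$ with $E\ww=\zz_1$ and $F\ww=\zz_2$: the pullback square was formed in $\VectR=\Mat{\PID}$, but by Lemma~\ref{lem:pullbacksinmat} it remains a pullback in $\Mat{\frPID}$, and then, reading $\zz_1,\zz_2$ as morphisms $1\to a$ and $1\to b$ of $\Mat{\frPID}$ which form a cone over the cospan $a\tr{B}p\tl{C}b$, the universal property supplies the required mediating arrow $\ww\colon 1\to w$.

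The whole argument is routine bookkeeping except for this last point, which is exactly where span composition (a pullback over $\PID$) must be reconciled with relational composition over the field $\frPID$; Lemma~\ref{lem:pullbacksinmat} is proved precisely to make this reconciliation possible, so I expect it to be the only nontrivial ingredient. As an alternative packaging, one could present $\Phi$ as the composite $\Span{\VectR}\tr{\Span{I}}\Span{\Mat{\frPID}}\to\SVR$, where $\Span{I}$ is functorial by Lemma~\ref{lem:pullbacksinmat} and the second map is the standard span-to-relation functor on the abelian category of finite-dimensional $\frPID$-vector spaces; but the direct check above is self-contained and fits the paper's style.
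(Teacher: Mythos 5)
Your proposal is correct and its key step—proving preservation of composition by noting the forward inclusion from commutativity of the pullback square and obtaining the reverse inclusion from Lemma~\ref{lem:pullbacksinmat} together with the universal property of the pullback in $\Mat{\frPID}$ applied to the cone given by the two vectors—is exactly the argument in the paper. The additional routine checks you spell out (landing in subspaces, well-definedness on isomorphism classes of spans, identities and the monoidal/symmetry structure) are left implicit in the paper but are handled correctly.
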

\begin{proof}
We must verify that $\Phi$ preserves composition. In the diagram below let the centre square be a pullback diagram in $\VectR$.
\[
\xymatrix{
& & {r} \ar[dl]_{F_2'} \ar[dr]^{G_1'} \\
& {z_1} \ar[dl]_{F_1} \ar[dr]^{G_1} & & {z_2} \ar[dl]_{F_2} \ar[dr]^{G_2} \\
{n} & & {z} & & {m}
}
\]
By definition of composition in $\Span{\Mat{\PID}}$, $(\tl{F1} \tr{G_1}) \poi (\tl{F_2}\tr{G_2})
=\ \tl{F_1F_2'}\tr{G_2G_1'}$.

Now, by definition, if $(\xx,\zz)\in \Phi(\tl{F_1F_2'}\tr{G_2G_1'})$
then there exist $\ww$ with $\xx= F_1F_2'\ww$ and $\zz=G_2G_1'\ww$.
Therefore
$(\xx,\zz)\in {\Phi(\tl{F1} \tr{G_1})}\poi
{\Phi(\tl{F_2}\tr{G_2})}$ by commutativity of the square.

Conversely, if $(\xx,\zz)\in \Phi(\tl{F_1} \tr{G_1})\poi
\Phi(\tl{F_2}\tr{G_2})$ then
for some $\yy$ we must have
$(\xx,\yy)\in \Phi(\tl{F_1} \tr{G_1})$ and
$(\yy,\zz)\in \Phi(\tl{F_2}\tr{G_2})$.
Thus there exists $\uu$ with
$\xx=F_1\uu$ and $\yy=G_1\uu$
and there exists $\vv$ with $\yy=F_2\vv$
and $\zz=G_2\vv$.
By Lemma~\ref{lem:pullbacksinmat}, the square is also a pullback in $\Mat{\frPID}$ and then it translates to a pullback diagram in $\VectSpFr$. It follows the existence of $\ww$ with $F_2'\ww=\uu$
and $G_1'\ww=\vv$: thus $(\xx,\zz)\in \Phi((\tl{F1} \tr{G_1})\poi (\tl{F_2}\tr{G_2}))$.
This completes the proof.
\qed\end{proof}

The proof that also $\Psi$ is a functor will rely on the following lemma.

\begin{lemma}\label{lemma:pushoutinvect}
Let the following be a pushout diagram in $\FRMod{\frPID}$.
\[
\xymatrix@=15pt{
 {U} \ar[d]_f \ar[r]^g &  {W} \ar[d]^q\\
{V}\ar[r]_p & {T}
}
\]	
Suppose that there exist $\vv\in V$, $\ww\in W$ such that $p\vv=q\ww$.
Then there exists $\uu\in U$ with $f\uu=\vv$ and $g\uu=\ww$.
\end{lemma}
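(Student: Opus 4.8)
The plan is to compute the pushout $T$ explicitly using the linear structure of $\FRMod{\frPID}$. Since $\frPID$ is a field, every $\frPID$-module is free, so $\FRMod{\frPID}$ is the category of finite-dimensional $\frPID$-vector spaces, which is abelian; in particular its pushouts are computed as in $\RMod{\frPID}$. First I would record the standard concrete description: the pushout of $V \xleftarrow{f} U \xrightarrow{g} W$ is $(V \oplus W)/K$, where $K$ is the image of the linear map $U \to V \oplus W$, $\uu \mapsto (f\uu, -g\uu)$, with coprojections $p\colon \vv \mapsto [(\vv,0)]$ and $q\colon \ww \mapsto [(0,\ww)]$. The point worth emphasising is that $K$, being the image of a linear map, is already a subspace --- so, unlike the situation in a general variety of algebras, there is no ``generated-by''/chain closure to unwind.

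Given that, the conclusion is immediate. From $p\vv = q\ww$ we get $[(\vv,0)] = [(0,\ww)]$ in $T$, hence $(\vv,-\ww) = (\vv,0)-(0,\ww) \in K$; by definition of $K$ there is $\uu \in U$ with $f\uu = \vv$ and $-g\uu = -\ww$, i.e.\ $g\uu = \ww$, which is exactly what is required.

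The one point that needs care --- and the only place fieldness is really used --- is verifying that $(V\oplus W)/K$ genuinely computes the pushout \emph{in} $\FRMod{\frPID}$: this quotient is again a finite-dimensional $\frPID$-vector space, hence lies in $\FRMod{\frPID}$, precisely because $\frPID$ is a field. (For a general PID this breaks down, as the remark following the lemma notes, since $\FRMod{\PID}$ is not closed under the relevant quotients.) If one preferred to avoid invoking the explicit pushout formula, the alternative would be to write an arbitrary element of $T$ as a finite sum $\sum_i (p(v_i) + q(w_i))$, turn the equation $p\vv = q\ww$ into a finite zig-zag $\uu_1, \dots, \uu_k$ with $f\uu_1 = \vv$, $g\uu_1 = g\uu_2$, $f\uu_2 = f\uu_3$, \dots, $g\uu_k = \ww$, and then collapse the chain by induction on $k$, using that $f$ and $g$ send the alternating sum $\uu_1 - \uu_2 + \uu_3 - \cdots$ to $\vv$ and $\ww$ respectively. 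This is the argument one uses for arbitrary algebras, but in the linear setting it is heavier than necessary.
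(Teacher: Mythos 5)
Your proof is correct, and its key step is genuinely more direct than the one in the paper, even though both start from the same concrete model of the pushout. The paper also describes the pushout as the quotient of $V\oplus W$ by the span of the pairs coming from $U$, but it then extracts from $p\vv=q\ww$ a zig-zag chain $\uu_1,\dots,\uu_k$ with $f\uu_1=\vv$, $g\uu_1=g\uu_2$, $f\uu_2=f\uu_3$, \dots, $g\uu_k=\ww$, and collapses it by induction using the alternating sum $\uu_1-\uu_2+\uu_3$; this is the generic argument one would use when the identification is only \emph{generated} by the basic relations. Your observation that the relevant set of pairs is the image of the linear map $\uu\mapsto(f\uu,-g\uu)$, hence already a subspace, removes the need for any chain or induction: $(\vv,-\ww)\in K$ immediately produces the mediating $\uu$. (You are also more careful with the sign convention than the paper, whose generating set is written as $\{(f\uu,g\uu)\}$; the paper's collapse step is really the same linearity trick applied piecewise.) Your handling of where the field hypothesis enters is the right one and matches the paper's remark: over a field the quotient is again free, so the pushout in $\FRMod{\frPID}$ coincides with the one computed in $\RMod{\frPID}$, whereas over a general PID the pushout in $\FRMod{\PID}$ is obtained by self-duality and the lemma's conclusion fails. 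The only implicit step --- transporting the conclusion from the canonical quotient model to the arbitrary given pushout square via the unique isomorphism commuting with the coprojections --- is harmless and is equally implicit in the paper's proof.
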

\begin{proof}
Pushouts in $\FRMod{\frPID} \cong \RMod{\frPID}$ can be constructed by quotienting the vector space $V+W$ by the subspace generated by $\{\,(f\uu,g\uu)\,|\,\uu\in U\,\}$. Thus,
if $p(\vv) = q(\ww)$ then there exists a chain $\uu_1,\uu_2,\dots,\uu_k$
with $f(\uu_1)=\vv$, $g(\uu_1)=g(\uu_2)$, $f(\uu_2)=f(\uu_3)$, \dots, $f(\uu_{k-1})=f(\uu_{k-1})$ and $g(\uu_k)=\ww$.
If $k=1$ then we are finished. Otherwise, to construct an inductive argument we need to consider a chain $\uu_1, \uu_2, \uu_3$ with $f(\uu_1)=\vv$,
$g(\uu_1)=g(\uu_2)$, $f(\uu_2)=f(\uu_3)$ and $g(\uu_3)=\ww$.
Now $f(\uu_1-\uu_2+\uu_3)=f(\uu_1)-f(\uu_2)+f(\uu_3)=\vv$ and
$g(\uu_1-\uu_2+\uu_3)=g(\uu_1)-g(\uu_2)+g(\uu_3)=\ww$, so we have reduced the size of the chain to one.
\qed\end{proof}

\begin{lemma}
$\Psi\: \Cospan{\VectR} \to \SVR$ is a PROP morphism.
\end{lemma}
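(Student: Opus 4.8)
The plan is to mirror the proof just given for $\Phi$, replacing pullbacks by pushouts throughout. First I would check that $\Psi$ is well defined on isomorphism classes of cospans: if $n \tr{A} z \tl{B} m$ and $n \tr{C} w \tl{D} m$ are related by an invertible $U \in \VectR[z,w]$ with $UA = C$ and $UB = D$, then $A\xx = B\yy$ iff $UA\xx = UB\yy$ iff $C\xx = D\yy$, so the two cospans have the same image. Preservation of identities, of the monoidal product and of symmetries is immediate from the definition: the identity cospan $n \tr{\id} n \tl{\id} n$ is sent to the diagonal subspace $\{(\xx,\xx)\}$, direct sums of cospans to direct sums of subspaces, and the symmetry cospan to the symmetry subspace.

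The substance of the argument is preservation of composition. Take composable cospans $n \tr{A} z_1 \tl{B} z$ and $z \tr{C} z_2 \tl{D} m$, whose composite in $\Cospan{\VectR}$ is $n \tr{B'A} p \tl{C'D} m$, where $z_1 \tr{B'} p \tl{C'} z_2$ is a pushout of $B$ and $C$ in $\VectR$. One inclusion is easy: if $(\xx,\yy) \in \Psi(n \tr{A} z_1 \tl{B} z)\poi\Psi(z \tr{C} z_2 \tl{D} m)$, there is $\zz\in\frPID^z$ with $A\xx = B\zz$ and $C\zz = D\yy$, whence $B'A\xx = B'B\zz = C'C\zz = C'D\yy$ by commutativity of the pushout square, so $(\xx,\yy)$ lies in $\Psi(n \tr{B'A} p \tl{C'D} m)$.

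For the converse, suppose $B'A\xx = C'D\yy$. By Lemma~\ref{lem:pullbacksinmat}, $I \colon \Mat{\PID}\to\Mat{\frPID}$ preserves pushouts, so $z_1 \tr{B'} p \tl{C'} z_2$ is a pushout of $B$ and $C$ in $\Mat{\frPID}$, hence also in $\FRMod{\frPID}$ via the equivalence $\Mat{\frPID}\simeq\FRMod{\frPID}$. Setting $\vv = A\xx$ and $\ww = D\yy$ we have $B'\vv = C'\ww$, so Lemma~\ref{lemma:pushoutinvect} supplies $\zz\in\frPID^z$ with $B\zz = \vv = A\xx$ and $C\zz = \ww = D\yy$; thus $(\xx,\yy)$ lies in the relational composite. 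Equality of the two subspaces follows, completing the verification that $\Psi$ is a PROP morphism.

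The main obstacle is exactly this converse inclusion, and it is the reason Lemmas~\ref{lem:pullbacksinmat} and~\ref{lemma:pushoutinvect} were prepared: the pushout defining composition of cospans lives in $\VectR = \Mat{\PID}$, but the amalgamation property used — that every pair of elements with a common image in the apex comes from a single element — is only guaranteed for pushouts of $\frPID$-vector spaces, so one must transport the pushout along $I$ first. Everything else is routine, and the whole proof is the literal dual of the one for $\Phi$; alternatively one could derive functoriality of $\Psi$ from that of $\Phi$ via the transpose isomorphism $\tra$ together with the orthogonal-complement duality on subspaces, but the direct argument is shorter and fits the pattern of the section.
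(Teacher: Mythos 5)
Your proof is correct and follows essentially the same route as the paper: the only substantive point is preservation of composition, where you, like the paper, use Lemma~\ref{lem:pullbacksinmat} to transport the defining pushout from $\Mat{\PID}$ to $\frPID$-vector spaces and then invoke the amalgamation property of Lemma~\ref{lemma:pushoutinvect} for the nontrivial inclusion, the other inclusion being commutativity of the pushout square. The extra checks you include (well-definedness on isomorphism classes, identities, monoidal product, symmetries) are routine and omitted in the paper, but harmless.
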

\begin{proof}
We must verify that $\Psi$ preserves composition. Let the square in the diagram below be a pushout in $\Mat{\PID}$. By definition of composition in $\Cospan{\Mat{\PID}}$
we have $(\tr{P_1}\tl{Q_1})\poi(\tr{P_2}\tl{Q_2})
=\ \tr{R_1P_1}\tl{R_2Q_2}$.
\[
\xymatrix{
{n} \ar[dr]_{P_1} & & {z} \ar[dl]^{Q_1} \ar[dr]_{P_2} & & {m} \ar[dl]^{Q_2} \\
& {z_1} \ar[dr]_{R_1} & & {z_2}\ar[dl]^{R_2} \\
& & {r}
}
\]
Consider $(\xx,\zz)\in \Psi(\tr{R_1P_1} \tl{R_2Q_2} )$.
Then $R_1P_1 \xx  = R_2 Q_2 \zz = \yy \in \frPID^r$.
Since the pushout diagram maps to a pushout diagram in $\VectSpFr$, we can use the conclusions of Lemma~\ref{lemma:pushoutinvect} to obtain $\yy\in \frPID^{z}$ such that $Q_1\yy=P_1\xx$ and $P_2\yy=Q_2\zz$. In other words, we have
$(\xx,\yy)\in\Psi(\tr{P_1}\tl{Q_1})$
and
$(\yy,\zz)\in\Psi(\tr{P_2}\tl{Q_1})$, meaning that $(\xx,\zz) \in \Psi(\tr{P_1}\tl{Q_1}) \poi\Psi(\tr{P_2}\tl{Q_1})$.

Conversely if $(\xx,\zz)\in\Psi(\tr{P_1}\tl{Q_1})\poi\Psi(\tr{P_2}\tl{Q_2})$ then $\exists \yy\in\frPID^{z}$ such that $(\xx,\yy)\in\Psi(\tr{P_1}\tl{Q_1})$
and $(\yy,\zz)\in\Psi(\tr{P_2}\tl{Q_2})$. It follows that $R_1P_1\xx=R_1Q_1\yy=R_2P_2\yy=R_2Q_2\zz$ and thus $(\xx,\zz)\in \Psi(\tr{R_1P_1} \tl{R_2Q_2} )$ as required.
\qed\end{proof}

\begin{remark} The proof of Lemma~\ref{lemma:pushoutinvect} relies on the fact that, for $\frPID$ a field, pushouts in $\FRMod{\frPID}$ coincide with those in $\RMod{\frPID}$. It would not work for an arbitrary PID $\PID$: $\FRMod{\PID}$ has pushouts for purely formal reasons, because it has pullbacks and is self-dual. However, differently from pullbacks (for which one can use, as we do in Section~\ref{sec:completeness}, that submodules of a free $\PID$-module are free), pushouts generally do not coincide with those calculated in $\RMod{\PID}$. This asymmetry is the reason why proving functoriality of $\Psi$ requires more work than for $\Phi$.
\end{remark}
We now verify some properties of~\eqref{eq:bottomface}.
\begin{lemma}
\label{lemma:bottomfacecommutes}
\eqref{eq:bottomface} commutes.
\end{lemma}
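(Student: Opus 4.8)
The plan is to exploit the universal property of the coproduct $\VectR+\VectRop$ in $\PROP$: a PROP morphism out of it is completely determined by its composites with the two injections, so $[\kappa_1,\kappa_2]\poi\Phi$ and $[\iota_1,\iota_2]\poi\Psi$ coincide precisely when $\kappa_i\poi\Phi = \iota_i\poi\Psi$ for $i=1,2$. Since $\kappa_i$, $\iota_i$, $\Phi$ and $\Psi$ are all given by explicit formulas, each of these two equalities reduces to comparing two concretely described subspaces of $\frPID^n\times\frPID^m$, and the comparison will be immediate.

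Concretely, for $i=1$ I would take a matrix $A\:n\to m$ in $\VectR$. Its image $\kappa_1(A)$ is the span $n\tl{\id}n\tr{A}m$, so unfolding the definition of $\Phi$ gives $\Phi(\kappa_1(A)) = \{\,(\xx,\yy)\mid\exists\zz.\ \zz=\xx\wedge A\zz=\yy\,\} = \{\,(\xx,A\xx)\mid\xx\in\frPID^n\,\}$, the graph of $A$. On the other side $\iota_1(A)$ is the cospan $n\tr{A}m\tl{\id}m$, and $\Psi(\iota_1(A)) = \{\,(\xx,\yy)\mid A\xx=\yy\,\}$, which is the same subspace. The case $i=2$ is symmetric: for $A$ regarded as an arrow of $\VectRop$, i.e.\ a matrix $A\:m\to n$ in $\VectR$, both $\Phi(\kappa_2(A)) = \Phi(n\tl{A}m\tr{\id}m)$ and $\Psi(\iota_2(A)) = \Psi(n\tr{\id}n\tl{A}m)$ evaluate to $\{\,(A\yy,\yy)\mid\yy\in\frPID^m\,\}$. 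Hence $\kappa_i\poi\Phi = \iota_i\poi\Psi$ for both $i$, and the face commutes.

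I do not expect any genuine obstacle here: the whole argument is a matter of unwinding the definitions of the four morphisms involved. The only point that needs a little care is the bookkeeping of arities and variance when passing through $\VectRop$ — making sure that the span produced by $\kappa_2$ and the cospan produced by $\iota_2$ are built from the correctly shaped matrix in the correct direction — but once this is laid out, the two subspaces visibly coincide.
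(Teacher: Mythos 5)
Your proof is correct and follows essentially the same route as the paper: reduce commutativity to the two coproduct injections and then unfold the definitions of $\kappa_i$, $\iota_i$, $\Phi$ and $\Psi$, each side yielding the graph of the matrix in question. The only (inessential) difference is that the paper dispatches the second injection by a symmetry remark rather than computing it explicitly as you do.
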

\begin{proof}
It suffices to show that it commutes on the two injections into
$\VectR + \VectRop$. This means that we have to show, for any
$A \: n\to m$ in $\Mat{\PID}$, that
\[
\Phi(\tl{\id}\tr{A}) = \Psi(\tr{A}\tl{\id})
\]
and
\[
\Phi(\tl{A}\tr{\id}) =
\Psi(\tr{\id}\tl{A}).
\]
These are clearly symmetric, so it is enough to check one.
But this follows directly from the definition of $\Phi$ and $\Psi$:
\[
\Phi(\tl{\id}\tr{A}) =
\{\, (\xx,\yy) \,|\, A \xx = \yy \,\} = \Psi(\tr{A}\tl{\id})
\]
\qed\end{proof}

%
%

\begin{lemma}
\label{lemma:arbitraryPROP}
Given an arbitrary PROP $\mathbb{X}$ and a commutative diagram
\begin{equation}
\label{eq:arbitrary}
\tag{$\dag$}
\raise15pt\hbox{$
\xymatrix@C=40pt{
{\VectR + \VectRop} \ar[r]^-{[\kappa_1,\,\kappa_2]} \ar[d]_{[\iota_1,\,\iota_2]} & {\Span{\Mat \PID}} \ar[d]^{\Gamma} \\
{\Cospan{\Mat \PID}} \ar[r]_-{\Delta} & {\mathbb{X}}
}$}
\end{equation}
consider the following diagram in $\Mat\PID$:
\begin{equation*}
\label{eq:square}
\tag{$\star$}
\raise10pt\hbox{$
\xymatrix{
{} \ar[r]^{G} \ar[d]_{F} & {} \ar[d]^Q \\
{} \ar[r]_{P} & {}
}$}
\end{equation*}
\begin{enumerate}[(i)]
\item if \eqref{eq:square} is a pushout diagram then
$\Gamma(\tl{F}\tr{G})=\Delta(\tr{P}\tl{Q})$.
\item if \eqref{eq:square} is a pullback diagram then
$\Gamma(\tl{F}\tr{G})=\Delta(\tr{P}\tl{Q})$.
\item if $\tl{F_1}\tr{G_1}$ and
$\tl{F_2}\tr{G_2}$ have the same pushout cospan
in $\Mat{\PID}$ then
$\Gamma(\tl{F_1}\tr{G_1})=\Gamma(\tl{F_2}\tr{G_2})$.
\item if $\tr{P_1}\tl{Q_1}$ and
$\tr{P_2}\tl{Q_2}$ have
the same pullback span in $\Mat{\PID}$ then
${\Delta(\tr{P_1}\tl{Q_1})}=
\Delta(\tr{P_2}\tl{Q_2})$.
\end{enumerate}
\end{lemma}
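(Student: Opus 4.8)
The plan is to deduce parts (iii) and (iv) from parts (i) and (ii), and to prove (i)--(ii) by a short diagram chase, the key ingredients being the elementary factorisations of spans and cospans together with the commutativity of~\eqref{eq:arbitrary}.

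First I would unpack~\eqref{eq:arbitrary} on the two coproduct injections of $\VectR+\VectRop$. Since $\Gamma$ and $\Delta$ are PROP morphisms (hence functors) and the square commutes, the definitions of $\kappa_1,\kappa_2,\iota_1,\iota_2$ give, for every $\PID$-matrix $M$,
\[
\Gamma(\tl{\id}\tr{M})=\Delta(\tr{M}\tl{\id})\qquad\text{and}\qquad\Gamma(\tl{M}\tr{\id})=\Delta(\tr{\id}\tl{M}).
\]
Next I would record the elementary factorisations valid in $\Span{\Mat\PID}$ and $\Cospan{\Mat\PID}$: any span $\tl{F}\tr{G}$ equals $(\tl{F}\tr{\id})\poi(\tl{\id}\tr{G})$, the intermediate pullback being trivial, and dually any cospan $\tr{P}\tl{Q}$ equals $(\tr{P}\tl{\id})\poi(\tr{\id}\tl{Q})$, the intermediate pushout being trivial.

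For (i), assume~\eqref{eq:square} is a pushout, so that $\tr{P}\tl{Q}$ is the pushout of the span $\tl{F}\tr{G}$. Using functoriality of $\Gamma$, then the two displayed identities, then functoriality of $\Delta$, I would compute
\[
\Gamma(\tl{F}\tr{G})=\Gamma(\tl{F}\tr{\id})\poi\Gamma(\tl{\id}\tr{G})=\Delta(\tr{\id}\tl{F})\poi\Delta(\tr{G}\tl{\id})=\Delta\bigl((\tr{\id}\tl{F})\poi(\tr{G}\tl{\id})\bigr).
\]
The composition of the cospans $\tr{\id}\tl{F}$ and $\tr{G}\tl{\id}$ in $\Cospan{\Mat\PID}$ is computed via the pushout of $\tl{F}\tr{G}$, which by hypothesis is $\tr{P}\tl{Q}$; hence $\Gamma(\tl{F}\tr{G})=\Delta(\tr{P}\tl{Q})$. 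Part (ii) is the mirror image: when~\eqref{eq:square} is a pullback, factorise $\tr{P}\tl{Q}$ as $(\tr{P}\tl{\id})\poi(\tr{\id}\tl{Q})$, carry each factor across~\eqref{eq:arbitrary} in the opposite direction, recombine via functoriality of $\Gamma$, and observe that the resulting composite span in $\Span{\Mat\PID}$ is the pullback of $\tr{P}\tl{Q}$, namely $\tl{F}\tr{G}$.

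Finally, (iii) is immediate from (i): if $\tl{F_1}\tr{G_1}$ and $\tl{F_2}\tr{G_2}$ have the same pushout cospan $\tr{P}\tl{Q}$, then each of them, completed with its pushout injections, forms a pushout square, so (i) gives $\Gamma(\tl{F_i}\tr{G_i})=\Delta(\tr{P}\tl{Q})$ for $i=1,2$, whence $\Gamma(\tl{F_1}\tr{G_1})=\Gamma(\tl{F_2}\tr{G_2})$; here one uses that $\Delta$, being a functor on $\Cospan{\Mat\PID}$, does not distinguish isomorphic cospans. Part (iv) follows dually from (ii), using that $\Gamma$ does not distinguish isomorphic spans. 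The argument involves no real difficulty; the only point that needs care is the bookkeeping --- tracking the direction in which each matrix is read when viewed in $\VectR$ versus $\VectRop$, and the observation, recalled above, that composing $\tr{\id}\tl{F}$ with $\tr{G}\tl{\id}$ amounts precisely to forming the pushout of $\tl{F}\tr{G}$ (and dually for spans).
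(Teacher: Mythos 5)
Your proposal is correct and follows essentially the same route as the paper: factorise the span (resp. cospan) through the coproduct injections $\kappa_1,\kappa_2$ (resp. $\iota_1,\iota_2$), transport each factor across the commuting square~\eqref{eq:arbitrary}, recombine by functoriality of $\Delta$ (resp. $\Gamma$), and read off the composite as the pushout (resp. pullback); parts (iii)--(iv) then follow from (i)--(ii) exactly as in the paper.
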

\begin{proof}~
\begin{enumerate}[(i)]
\item Suppose that $\tr{P}\tl{Q}$ is the cospan obtained by pushing out
$\tl{F}\tr{G}$ in $\Mat\PID$. Then
\begin{align*}
\Gamma(\tl{F}\tr{G}) &= \Gamma(\kappa_2 F \poi \kappa_1 G) \\
&= \Gamma(\kappa_2 F )\poi \Gamma(\kappa_1 G) \\
&= \Delta(\iota_2 F)\poi \Delta(\iota_1 G) \\
&= \Delta( \iota_2 F \poi \iota_1 G) \\
&= \Delta( \tr{P}\tl{Q} ).
\end{align*}
\item Suppose that $\tl{F}\tr{G}$ is the span obtained by pulling back
$\tr{P}\tl{Q}$. Then, reasoning in a similar way to (i), we get $\Delta(\tr{P}\tl{Q})= \Gamma(\tl{F}\tr{G})$.
\item Suppose that $\tr{P}\tl{Q}$ is the cospan obtained by pushing out
$\tl{F_1}\tr{G_1}$ and $\tl{F_2}\tr{G_2}$.
Using (i) we get $\Gamma(\tl{F_1}\tr{G_1})=\Delta(\tr{P}\tl{Q})=\Gamma(\tl{F_2}\tr{G_2})$.
\item The proof of (iv) is similar and uses (ii). \qed
\end{enumerate}
\end{proof}

\begin{lemma}
\label{lemma:pullbackpsi}
The following are equivalent:
\begin{enumerate}[(i)]
\item $n \tr{P_1} z_1 \tl{Q_1} m$ and $n \tr{P_2} z_2 \tl{Q_2}m$ have the same pullback in $\Mat{\PID}$.
\item $\Psi(\tr{P_1}\tl{Q_1})=
\Psi(\tr{P_2}\tl{Q_2})$.
\end{enumerate}
\end{lemma}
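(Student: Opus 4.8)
The plan is to observe that $\Psi$ records, up to passing to a basis, the pullback of its argument \emph{computed over the field $\frPID$}, and to use Lemma~\ref{lem:pullbacksinmat} as the bridge to the pullback over $\PID$. For a cospan $n\tr{P}z\tl{Q}m$ in $\Mat{\PID}$ I will work with the $\PID$-module $W(P,Q)\df\{(\xx,\yy)\in\PID^{n}\times\PID^{m}\mid P\xx=Q\yy\}$, i.e. the kernel of $[\,P\mid{-Q}\,]\: n+m\to z$. Since $\PID$ is a PID this module is free, and --- as recalled in Section~\ref{sec:completeness} --- the pullback span of $\tr{P}\tl{Q}$ in $\Mat{\PID}$ is a matrix $\binom{F}{G}\: r\to n+m$ whose columns form a basis of $W(P,Q)$ (so $\binom{F}{G}$ is monic with image $W(P,Q)$, and $F,G$ arise by postcomposition with the two projections). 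Unwinding the definition of $\Psi$ and applying Lemma~\ref{lem:pullbacksinmat}, $\Psi(\tr{P}\tl{Q})\subseteq\frPID^{n}\times\frPID^{m}$ is precisely the pullback of $IP,IQ$ in $\Mat{\frPID}$, realised as the image of $\binom{IF}{IG}$; equivalently, it is the $\frPID$-linear span of $W(P,Q)$.

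For (i)$\Rightarrow$(ii) the argument will be short: if the pullback spans $\tl{F_1}\tr{G_1}$ and $\tl{F_2}\tr{G_2}$ in $\Mat{\PID}$ are isomorphic, say through an invertible $U$ with $F_1=F_2U$ and $G_1=G_2U$, then $\binom{IF_1}{IG_1}=\binom{IF_2}{IG_2}\,(IU)$ with $IU$ invertible, so the two monic pairs have the same image and $\Psi(\tr{P_1}\tl{Q_1})=\Psi(\tr{P_2}\tl{Q_2})$.

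The direction (ii)$\Rightarrow$(i) is the crux, and its key step is to recover the integral module from the subspace: $W(P_i,Q_i)=\Psi(\tr{P_i}\tl{Q_i})\cap(\PID^{n}\times\PID^{m})$. This holds because, for $\xx\in\PID^{n}$ and $\yy\in\PID^{m}$, both $P_i\xx$ and $Q_i\yy$ lie in $\PID^{z_i}$, so they agree over $\frPID$ iff they agree over $\PID$. Granting this, the hypothesis $\Psi(\tr{P_1}\tl{Q_1})=\Psi(\tr{P_2}\tl{Q_2})$ forces $W(P_1,Q_1)=W(P_2,Q_2)\dfop W$. Then $\binom{F_1}{G_1}$ and $\binom{F_2}{G_2}$ are two matrices whose columns are bases of the same free module $W$; hence they have the same width $r=\mathrm{rank}(W)$ and are related by an invertible change of basis $U\in\Mat{\PID}[r,r]$, i.e. $F_1=F_2U$ and $G_1=G_2U$, which is exactly an isomorphism of pullback spans. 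So $(P_1,Q_1)$ and $(P_2,Q_2)$ have the same pullback in $\Mat{\PID}$.

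I expect the only genuine obstacle to be the bookkeeping behind the dictionary set up in the first paragraph: identifying the $\PID$-pullback with a basis matrix of the kernel module $W(P,Q)$, and identifying $\Psi(\tr{P}\tl{Q})$ with the $\frPID$-span of the same module. Both facts rest on $\PID$ being a principal ideal domain (submodules of free modules are free, and matrix ranks are preserved under passage to $\frPID$) together with Lemma~\ref{lem:pullbacksinmat}; the rest is standard linear algebra over $\PID$ and over $\frPID$.
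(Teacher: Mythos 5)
Your proposal is correct, but it reaches the statement by a more concrete route than the paper, in both directions. The pivotal step is the same: from equality of the subspaces you recover equality of their integral points, $W(P_1,Q_1)=\Psi(\tr{P_1}\tl{Q_1})\cap(\PID^n\times\PID^m)=\Psi(\tr{P_2}\tl{Q_2})\cap(\PID^n\times\PID^m)=W(P_2,Q_2)$, which is exactly the paper's observation $(\star)$ that for $\xx\in\PID^n$, $\yy\in\PID^m$ one has $P_1\xx=Q_1\yy$ iff $P_2\xx=Q_2\yy$. From there the paths diverge. The paper gets (i)$\Rightarrow$(ii) for free from the formal Lemma~\ref{lemma:arbitraryPROP} together with commutativity of the bottom face (Lemma~\ref{lemma:bottomfacecommutes}), and for (ii)$\Rightarrow$(i) it applies $(\star)$ columnwise to see that each pullback span is a cone over the other cospan, so the universal property produces mutually inverse mediating maps and hence isomorphic spans, with no bases chosen. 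You instead trade the universal property for the explicit description of the $\Mat{\PID}$-pullback as a basis matrix of the free module $W(P,Q)=\Ker{P\,|\,{-Q}}$ (which the paper does set up in Section~\ref{sec:completeness}) and of $\Psi(\tr{P}\tl{Q})$ as the $\frPID$-span of $W(P,Q)$; then (ii)$\Rightarrow$(i) reduces to the fact that two basis matrices of the same free $\PID$-module differ by an invertible change of basis, and (i)$\Rightarrow$(ii) to the fact that isomorphic monic pairs have the same column span over $\frPID$. Both arguments are sound: the paper's is shorter because it reuses lemmas it needs anyway and stays at the level of universal properties, while yours makes the module-theoretic content (freeness of kernels over a PID, clearing denominators, Lemma~\ref{lem:pullbacksinmat}) explicit and re-proves the easy direction directly rather than citing the formal machinery.
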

\begin{proof}
The conclusions of
Lemmas~\ref{lemma:bottomfacecommutes} and~\ref{lemma:arbitraryPROP}
give that (i) $\Rightarrow$ (ii). It thus suffices to show that
(ii) $\Rightarrow$ (i). Indeed, suppose that
$\Psi(\tr{P_1}\tl{Q_1})=
\Psi(\tr{P_2}\tl{Q_2})$. In particular
on elements $\xx\in\PID^n$, $\yy\in\PID^m$ we have $(\star)$
$P_1\xx=Q_1\yy$ if and only if $P_2\xx = Q_2\yy$. Compute the following
pullbacks in $\Mat{\PID}$:
\[
\xymatrix{
{r_1} \ar[d]_{G_1} \ar[r]^{F_1} & {m} \ar[d]^{Q_1} \\
{n} \ar[r]_{P_1} & {z_1}
}
\qquad
\xymatrix{
{r_2} \ar[d]_{G_2} \ar[r]^{F_2} & {m} \ar[d]^{Q_2} \\
{n} \ar[r]_{P_2} & {z_2}
}
\]
By $(\star)$ we can conclude that $P_1G_2=Q_1F_2$ and $P_2G_1=Q_2F_1$. This, using the universal property of pullbacks, implies that the spans $\tl{G_1}\tr{F_1}$ and
$\tl{G_2}\tr{F_2}$ are isomorphic. \qed
\end{proof}

\begin{lemma}
\label{lemma:pushoutphi}
The following are equivalent:
\begin{enumerate}[(i)]
\item $n \tl{F_1} z_1 \tr{G_1} m$ and
$n \tl{F_2} z_2 \tr{G_2} m$ have the same pushout
in $\Mat{\PID}$
\item $\Phi(\tl{F_1}\tr{G_1})=
\Phi(\tl{F_2}\tr{G_2})$.
\end{enumerate}
\end{lemma}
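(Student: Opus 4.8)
The plan is to follow, in dualised form, the pattern of the proof of Lemma~\ref{lemma:pullbackpsi}: everywhere one reads ``pullback'' and ``$\Psi$'' there, one should read ``pushout'' and ``$\Phi$'' here. The implication (i) $\Rightarrow$ (ii) is then purely formal: diagram~\eqref{eq:bottomface} is an instance of~\eqref{eq:arbitrary} with $\mathbb{X} = \SVR$, $\Gamma = \Phi$, $\Delta = \Psi$, and it commutes by Lemma~\ref{lemma:bottomfacecommutes}; hence Lemma~\ref{lemma:arbitraryPROP}(iii) applies and states exactly that two spans sharing a pushout cospan in $\Mat{\PID}$ are identified by $\Phi$.

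For the converse (ii) $\Rightarrow$ (i), assume $\Phi(\tl{F_1}\tr{G_1}) = \Phi(\tl{F_2}\tr{G_2})$, and for $i\in\{1,2\}$ let $n \tr{P_i} r_i \tl{Q_i} m$ be the pushout cospan of $\tl{F_i}\tr{G_i}$ \emph{computed in} $\Mat{\PID}$, so that $F_i\poi P_i = G_i\poi Q_i$. The crux is to establish the two ``cross'' identities
\[
F_2\poi P_1 = G_2\poi Q_1 \qquad\text{and}\qquad F_1\poi P_2 = G_1\poi Q_2
\]
in $\Mat{\PID}$. Granting them, the universal property of the pushout $\tr{P_2}\tl{Q_2}$ applied to the cocone $(P_1,Q_1)$ gives a unique $H\: r_2\to r_1$ with $P_2\poi H = P_1$, $Q_2\poi H = Q_1$, and symmetrically a unique $K\: r_1\to r_2$; the uniqueness clauses then force $H\poi K$ and $K\poi H$ to be identities, so the two pushout cospans are isomorphic, which is (i). To prove the first cross identity --- an equality of $r_1\times z_2$ matrices --- I would verify it column by column: given a standard basis vector $\ww$ of $\PID^{z_2}$, put $\xx = F_2\ww\in\PID^n$ and $\yy = G_2\ww\in\PID^m$; since $(\xx,\yy)\in\Phi(\tl{F_2}\tr{G_2}) = \Phi(\tl{F_1}\tr{G_1})$, there is $\zz\in\frPID^{z_1}$ with $F_1\zz=\xx$ and $G_1\zz=\yy$, and then $(P_1F_2)\ww = P_1F_1\zz = Q_1G_1\zz = (Q_1G_2)\ww$, where the middle step uses the pushout identity $P_1F_1 = Q_1G_1$ for the span $\tl{F_1}\tr{G_1}$. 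The second cross identity is the mirror image.

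The one subtlety worth flagging is that the witness $\zz$ furnished by the hypothesis lives a priori in $\frPID^{z_1}$ and may carry genuine denominators. We never clear them: the computation just displayed runs over $\frPID$, but because the vectors $\xx,\yy$ were chosen over $\PID$ and $P_1F_1 = Q_1G_1$ already holds over $\PID$, the output $(P_1F_2)\ww = (Q_1G_2)\ww$ is an identity of $\PID$-vectors. This is precisely the reason to argue via the universal property of pushouts inside $\Mat{\PID}$ directly, rather than transporting the (visibly isomorphic) pushouts along $I\:\Mat{\PID}\to\Mat{\frPID}$ and then trying to descend an $\frPID$-isomorphism of cospans back to $\PID$ --- that descent is where I would expect the only real obstacle to lie, and the column-by-column argument sidesteps it altogether.
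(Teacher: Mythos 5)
Your proof is correct, and its second half takes a genuinely different route from the paper's. The direction (i) $\Rightarrow$ (ii) is identical: both invoke Lemma~\ref{lemma:bottomfacecommutes} and Lemma~\ref{lemma:arbitraryPROP}. For (ii) $\Rightarrow$ (i), the paper does \emph{not} argue directly with $\Phi$: it pushes out both spans, uses Lemma~\ref{lemma:arbitraryPROP} to convert the hypothesis into $\Psi(\tr{P_1}\tl{Q_1})=\Psi(\tr{P_2}\tl{Q_2})$, applies Lemma~\ref{lemma:pullbackpsi} to get that the two pushout cospans have the same pullback span, and then concludes that both cospans are the pushout of that common span, hence isomorphic. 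Your argument instead dualises the pointwise proof of Lemma~\ref{lemma:pullbackpsi} directly at the level of $\Phi$: you verify the cross identities $P_1F_2=Q_1G_2$ and $P_2F_1=Q_2G_1$ column by column, correctly observing that the existential witness $\zz\in\frPID^{z_1}$ supplied by $\Phi$-membership need not be cleared of denominators, since the identity being checked is between $\PID$-matrices and the embedding $\Mat{\PID}\to\Mat{\frPID}$ is faithful; the pushout universal property in $\Mat{\PID}$ then yields mutually inverse mediating maps. What each buys: the paper's route reuses already-proved lemmas and keeps the new computation minimal, but leans on the extra (and not entirely innocent) step that a pushout cospan is recovered as the pushout of its pullback span; your route is self-contained and elementary, avoids that step entirely, at the price of repeating, in dual form, the column-wise reasoning already carried out for $\Psi$.
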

\begin{proof}
The conclusions of
Lemmas~\ref{lemma:bottomfacecommutes} and~\ref{lemma:arbitraryPROP} again
give us that (i) $\Rightarrow$ (ii). It thus suffices to show that
(ii) $\Rightarrow$ (i).
Assume $\Phi(\tl{F_1}\tr{G_1})=
\Phi(\tl{F_2}\tr{G_2})$.
Compute the following
pushouts in $\Mat{\PID}$:
\[
\xymatrix{
{z_1} \ar[d]_{G_1} \ar[r]^{F_1} & {n} \ar[d]^{Q_1} \\
{m} \ar[r]_{P_1} & {r_1}
}
\qquad
\xymatrix{
{z_2} \ar[d]_{G_2} \ar[r]^{F_2} & {n} \ar[d]^{Q_2} \\
{m} \ar[r]_{P_2} & {r_2}
}
\]
By the conclusion of Lemma~\ref{lemma:arbitraryPROP}, we have
$\Psi(\tr{P_1}\tl{Q_1})=
\Psi(\tr{P_2}\tl{Q_2})$.
Applying the conclusion of Lemma~\ref{lemma:pullbackpsi},
$\tr{P_1}\tl{Q_1}$ and
$\tr{P_2}\tl{Q_2}$ have the same pullback span.
Call this span $\tl{A}\tr{B}$. Then both
$\tr{P_1}\tl{Q_1}$ and
$\tr{P_2}\tl{Q_2}$ are the pushout cospan of
$\tl{A}\tr{B}$, thus they must be isomorphic.
\qed\end{proof}

\begin{lemma}
\label{lemma:phipsifull}
$\Phi \: \Span{\Mat{\PID}}\to\SVR$ and $\Psi \: \Cospan{\Mat{\PID}}\to\SVR$ are both full.	
\end{lemma}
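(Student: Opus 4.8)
The plan is to show that every subspace of $\frPID^n \times \frPID^m$ lies in the image of $\Phi$ and, separately, in the image of $\Psi$; since these subspaces are exactly the morphisms $n \to m$ of $\SVR$ and both functors are identity-on-objects, this yields fullness. The one point that needs care is the passage from $\frPID$-matrices to $\PID$-matrices: a subspace is naturally presented as an image, or as a kernel, of an $\frPID$-linear map with free (co)domain, and such a presentation has to be normalised so that the matrices involved have entries in $\PID$. This is done by clearing denominators, and one has to observe that multiplying a matrix by a nonzero scalar of $\PID$ changes neither the induced image nor the induced kernel over $\frPID$.

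For $\Phi$, I would take an arbitrary subspace $V \subseteq \frPID^n \times \frPID^m$ and set $z = \dim V$. A basis $v_1,\dots,v_z$ of $V$, with each $v_i = (a_i,b_i)$ split according to $a_i \in \frPID^n$, $b_i \in \frPID^m$, assembles into an $\frPID$-linear map $\binom{A'}{B'} \: \frPID^z \to \frPID^n \times \frPID^m$ whose image is exactly $V$; here $A' \: \frPID^z \to \frPID^n$ and $B' \: \frPID^z \to \frPID^m$. Letting $d \in \PID$ be a common multiple of all denominators occurring in $A'$ and $B'$, the matrices $A \df dA'$ and $B \df dB'$ have entries in $\PID$, so $n \tl{A} z \tr{B} m$ is a span in $\Mat{\PID}$. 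Since $d$ is invertible in $\frPID$, the map $\binom{A}{B} = d\cdot\binom{A'}{B'}$ has the same image as $\binom{A'}{B'}$, namely $V$, and by the definition of $\Phi$ this image equals $\Phi(n \tl{A} z \tr{B} m)$; hence $\Phi(n \tl{A} z \tr{B} m) = V$.

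For $\Psi$, I would argue dually. An arbitrary $V \subseteq \frPID^n \times \frPID^m$ is the kernel of some surjective $\frPID$-linear map $(C' \mid {-D'}) \: \frPID^n \times \frPID^m \to \frPID^z$ with $z = (n+m) - \dim V$, where $C' \: \frPID^n \to \frPID^z$ and $D' \: \frPID^m \to \frPID^z$. Clearing denominators as before gives $A \df dC'$, $B \df dD'$ with entries in $\PID$ and a cospan $n \tr{A} z \tl{B} m$ in $\Mat{\PID}$. Because $d \neq 0$ in $\frPID$, one has $A\xx = B\yy$ if and only if $C'\xx = D'\yy$, so $\Psi(n \tr{A} z \tl{B} m) = \{\,(\xx,\yy) \mid A\xx = B\yy\,\} = \ker(C'\mid{-D'}) = V$.

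There is no genuine obstacle in this argument; the only subtlety is the denominator-clearing step and the observation that it is harmless, i.e.\ that scaling a matrix by a nonzero element of $\PID$ induces the same span/cospan up to the subspace it determines. As an alternative, fullness of $\Psi$ could be deduced from fullness of $\Phi$ using the self-duality of $\Mat{\PID}$ under transpose together with the analogous duality on $\SVR$ (which swaps $\Phi$-images with $\Psi$-images), but the direct argument above is just as short.
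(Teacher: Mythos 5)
Your argument is correct. The $\Phi$ half is essentially the paper's own proof: pick a basis of the subspace, clear denominators to land in $\Mat{\PID}$, and read off the span. Where you genuinely diverge is the $\Psi$ half. The paper does not exhibit a cospan directly: it takes the span $\tl{S_1}\tr{S_2}$ already produced for $\Phi$, pushes it out in $\Mat{\PID}$ to get a cospan $\tr{R_1}\tl{R_2}$, and invokes the compatibility result (Lemma~\ref{lemma:arbitraryPROP}(i), together with commutativity of the bottom face, Lemma~\ref{lemma:bottomfacecommutes}) to conclude $\Psi(\tr{R_1}\tl{R_2})=\Phi(\tl{S_1}\tr{S_2})=S$. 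You instead present the subspace $V$ as the kernel of an $\frPID$-linear map $(C'\mid{-D'})\:\frPID^{n+m}\to\frPID^z$ with $z=(n+m)-\dim V$ (which exists since $\frPID$ is a field), clear denominators, and check from the definition of $\Psi$ that the resulting cospan maps to $V$; the observation that $A\xx=B\yy$ iff $C'\xx=D'\yy$ for $d\neq 0$ is exactly right. Your route is self-contained and avoids computing any pushout in $\Mat{\PID}$ and any appeal to the earlier lemmas, at the price of a second (dual) linear-algebra construction; the paper's route is shorter in context because those lemmas are already in place and are needed anyway for the pushout theorem that follows. Your closing remark that fullness of $\Psi$ could alternatively be deduced from fullness of $\Phi$ via transpose duality is also sound, and is in spirit closest to what the paper actually does.
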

\begin{proof}
Take any subspace
$S \: n\to m$ in $\SVR$. Picking any finite basis (say, of size $r$) for this subspace and multiplying out fractions gives us a finite set of elements in $\PID^{n+m}$.
In the obvious way, this yields
\[
n \tl{S_1} r \tr{S_2} m
\]	
in $\Span{\Mat{\PID}}$ with $\Phi(\tl{S_1}\tr{S_2})=S$. Thus $\Phi$ is full.
Let $\tr{R_1}\tl{R_2}$ be the cospan obtained from pushing out
$\tl{S_1}\tr{S_2}$ in $\Mat{\PID}$. By the conclusion of Lemma~\ref{lemma:arbitraryPROP}, $\Psi(\tr{R_1}\tl{R_2})=\Phi(\tl{S_1}\tr{S_2})=S$, which shows that $\Psi$ is full. \qed\end{proof}

\begin{theorem}
\eqref{eq:bottomface} is a pushout in $\PROP$.
\end{theorem}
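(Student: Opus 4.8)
The plan is to verify the universal property of the pushout directly, leveraging the lemmas just proved. So I would fix an arbitrary PROP $\mathbb{X}$ and PROP morphisms $\Gamma \: \Span{\Mat{\PID}} \to \mathbb{X}$, $\Delta \: \Cospan{\Mat{\PID}} \to \mathbb{X}$ making $\eqref{eq:arbitrary}$ commute, and exhibit a \emph{unique} PROP morphism $\Xi \: \SVR \to \mathbb{X}$ with $\Xi \poi \Phi = \Gamma$ and $\Xi \poi \Psi = \Delta$.

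The definition of $\Xi$ essentially writes itself: $\Xi$ is the identity on objects, and on a subspace $S \: n \to m$ one sets $\Xi(S) \df \Gamma(\tl{S_1}\tr{S_2})$, where $n \tl{S_1} r \tr{S_2} m$ is any span with $\Phi(\tl{S_1}\tr{S_2}) = S$ — such a span exists since $\Phi$ is full (Lemma~\ref{lemma:phipsifull}). The only genuinely delicate point, and the step I expect to be the crux, is well-definedness of this assignment; but it has already been handled by the preceding lemmas: if $\Phi(\tl{S_1}\tr{S_2}) = \Phi(\tl{S'_1}\tr{S'_2})$ then by Lemma~\ref{lemma:pushoutphi} the two spans share a pushout in $\Mat{\PID}$, hence by Lemma~\ref{lemma:arbitraryPROP}(iii) we get $\Gamma(\tl{S_1}\tr{S_2}) = \Gamma(\tl{S'_1}\tr{S'_2})$.

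Granting well-definedness, everything else is routine. The identity $\Xi \poi \Phi = \Gamma$ holds by construction. That $\Xi$ is a PROP morphism follows because $\Phi$ is a full identity-on-objects symmetric strict monoidal functor and $\Gamma = \Xi\poi\Phi$ is one: e.g.\ for composable $S,T$ pick witnessing spans $s,t$, observe that $s\poi t$ witnesses $S\poi T$ since $\Phi$ preserves composition, and compute $\Xi(S\poi T) = \Gamma(s\poi t) = \Gamma(s)\poi\Gamma(t) = \Xi(S)\poi\Xi(T)$; identities, $\tns$ and symmetries are analogous. For $\Xi\poi\Psi = \Delta$: given a cospan $n \tr{R_1} z \tl{R_2} m$, take its pullback span $\tl{A}\tr{B}$ in $\Mat{\PID}$; Lemma~\ref{lemma:arbitraryPROP}(ii) applied to the bottom face itself, which commutes by Lemma~\ref{lemma:bottomfacecommutes}, gives $\Phi(\tl{A}\tr{B}) = \Psi(\tr{R_1}\tl{R_2})$, while the same lemma applied to $\eqref{eq:arbitrary}$ gives $\Gamma(\tl{A}\tr{B}) = \Delta(\tr{R_1}\tl{R_2})$, whence $\Xi(\Psi(\tr{R_1}\tl{R_2})) = \Gamma(\tl{A}\tr{B}) = \Delta(\tr{R_1}\tl{R_2})$.

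Uniqueness is then immediate: any $\Xi'$ with $\Xi'\poi\Phi = \Gamma$ must satisfy $\Xi'(S) = \Gamma(\tl{S_1}\tr{S_2}) = \Xi(S)$ for every $S = \Phi(\tl{S_1}\tr{S_2})$, and $\Phi$ is full. Thus $\eqref{eq:bottomface}$ is a pushout. In summary, the theorem is a short diagram chase whose only non-formal ingredient is the well-definedness of $\Xi$, which has been reduced to Lemmas~\ref{lemma:pushoutphi} and~\ref{lemma:arbitraryPROP}; all the substantive work lies in those earlier results.
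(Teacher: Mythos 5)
Your proposal is correct and follows essentially the same route as the paper's proof: the mediating morphism is defined via fullness of $\Phi$, well-definedness is reduced to Lemma~\ref{lemma:pushoutphi} together with Lemma~\ref{lemma:arbitraryPROP}, functoriality is checked by choosing witnessing spans, the identity on the cospan side is obtained by pulling back and applying Lemma~\ref{lemma:arbitraryPROP} (using commutativity of the bottom face), and uniqueness follows from fullness. The only (harmless) difference is that you spell out slightly more explicitly which instance of Lemma~\ref{lemma:arbitraryPROP} yields $\Phi(\tl{A}\tr{B})=\Psi(\tr{R_1}\tl{R_2})$, a step the paper leaves implicit.
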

\begin{proof}
Suppose that we have a commutative diagram of PROP morphisms as in~\eqref{eq:arbitrary}.
By the conclusions of Lemma~\ref{lemma:phipsifull} it suffices to show that there exists a PROP morphism $\Theta \: \SVR\to\mathbb{X}$ with $\Theta\Phi = \Gamma$ and $\Theta\Psi=\Delta$ -- uniqueness is automatic by fullness of $\Phi$ (or of $\Psi$).

Given a subspace $S \: n\to m$, by Lemma~\ref{lemma:phipsifull} there
exists a span $\tl{S_1}\tr{S_2}$ with
$\Phi(\tl{S_1}\tr{S_2})=S$.
We let $\Theta(S)=\Gamma(\tl{S_1}\tr{S_2})$.
This is well-defined: if $\tl{S_1'}\tr{S_2'}$ is another
span with $\Phi(\tl{S_1'}\tr{S_2'})=S$ then applying the conclusions of Lemma~\ref{lemma:pushoutphi} gives us that $\tl{S_1}\tr{S_2}$ and $\tl{S_1'}\tr{S_2'}$ have the same pushout in $\Mat{\PID}$. Now the conclusions of Lemma~\ref{lemma:arbitraryPROP} give us that $\Gamma(\tl{S_1}\tr{S_2})=\Gamma(\tl{S_1'}\tr{S_2'})$.
This  argument also shows that, generally, $\Theta\Phi=\Gamma$.
Finally, $\Theta$ preserves composition:
\begin{align*}
\Theta(R\poi S) &= \Theta(\Phi(\tl{R_1}\tr{R_2})\poi\Phi(\tl{S_1}\tr{S_2})) \\
&= \Theta(\Phi((\tl{R_1}\tr{R_2})\poi (\tl{S_1}\tr{S_2}))) \\
&= \Gamma((\tl{R_1}\tr{R_2})\poi (\tl{S_1}\tr{S_2})) \\
&= \Gamma(\tl{R_1}\tr{R_2})\poi\Gamma(\tl{S_1}\tr{S_2})\\
&= \Theta(R)\poi\Theta(S).
\end{align*}

It is also easy to show that
$\Theta\Psi = \Delta$: given a cospan $\tr{F}\tl{G}$ let
$\tl{P}\tr{Q}$ be its pullback span in $\Mat\PID$.
Using the conclusions of Lemma~\ref{lemma:arbitraryPROP},
${\Delta(\tr{F}\tl{G})} = \Gamma(\tl{P}\tr{Q})
=\Theta\Phi(\tl{P}\tr{Q})=\Theta\Psi(\tr{F}\tl{G})$.
\qed\end{proof}

\begin{remark} It is interesting to notice that, if one tries to glue in the same way spans and cospans of $\F$ (the PROP of functions, as considered in Section~\ref{sec:background}), the resulting pushout object is the terminal PROP:
$$\xymatrix{
\F + \Fop \ar[r] \ar[d] & \Span{\F}\ar[d] \\
\Cospan{\F} \ar[r] & \mathbbm{1}}$$
Syntactically, this corresponds to the observation that summing the SMTs of bialgebras and of separable Frobenius algebras (defined on the same monoid-comonoid pair) one obtains the trivial theory.
\end{remark}

\subsection{The Cube: Rear Faces}\label{sec:cubeback}

To complete the proof of Theorem~\ref{th:IBR=SVR}, it remains to show that the rear faces of the cube~\eqref{eq:cube} commute.
\begin{equation}\label{backwardcube}
\tag{Rear}
\raise20pt
\hbox{$
\xymatrix@C=35pt{
{\IBRb} \ar[d]_{\sem{\IBRb}} & {\ABR + \ABRop} \ar[l]_{[\tau_1,\tau_2]} \ar[d]|{\sem{\ABR} + \sem{\ABR}^{\op} } \ar[r]^-{[\sigma_1,\sigma_2]} & {\IBRw} \ar[d]^{\sem{\IBRw}} \\
\Cospan{\VectR} & \VectR + \VectRop \ar[r]_{[\kappa_1,\kappa_2]} \ar[l]^{[\iota_1,\iota_2]} & \Span{\VectR}
}$}
\end{equation}
For this purpose, we give an explicit description of the isomorphisms $\IBRw \to \Span{\VectR}$ and $\IBRb \to \Cospan{\VectR}$, whose existence has been shown in Sections~\ref{sec:completeness}-\ref{sec:IBRbCospan}, in the same inductive way as $\sem{\ABR}$ is defined.

The two isomorphisms are noted in~\eqref{backwardcube} with $\sem{\IBRw}$ and $\sem{\IBRw}$ respectively.
For the definition of $\kappa_1$, $\kappa_2$, $\iota_1$ and $\iota_2$ see the beginning of Section~\ref{sec:cubebottom}.
The PROP morphisms $\sigma_1 \: \ABR \to \IBRw$, $\sigma_2 \: \ABRop \to \IBRw$ and $\tau_1 \: \ABR \to \IBRb$, $\tau_2 \: \ABRop \to \IBRb$ have been introduced by Definition~\ref{def:IBRw} and \ref{def:IBRb} respectively.

\paragraph{An inductive presentation of $\sem{\IBRw}$} The PROP morphism $\sem{\IBRw} \: \IBRw \to \Span{\VectR}$ is defined by induction on circuits of $\IBRw$, where $c \in \Sigma_{\ABR}$ means that $c$ is a generator in the signature of $\ABR$, and similarly for $c \in \Sigma_{\ABRop}$.
\begin{align*}
   c  \mapsto \left\{
	\begin{array}{ll}
        \kappa_1(\sem{\AB}(c')) & \text{ if } c = \sigma_1(c') \text{ and }c'\in \Sigma_{\ABR} \\
        \kappa_2( \sem{\AB}^{\op} (c'))  & \text{ if } c = \sigma_2(c') \text{ and }c'\in \Sigma_{\ABRop} \\
	\sem{\IBRw}(c_1) \poi \sem{\IBRw}(c_2) & \text{ if } c=c_1 \poi c_2\\
	\sem{\IBRw}(c_1) \tns \sem{\IBRw}(c_2) & \text{ if } c=c_1 \tns c_2\\
        \end{array}
\right.
 \end{align*}
 The mapping is well-defined as all the equations of $\IBRw$ are sound w.r.t. $\sem{\IBRw}$. It is clear by definition that $\sem{\IBRw}$ makes the rightmost square in \eqref{backwardcube} commute. It remains to show the following result.

\begin{proposition}\label{prop:semanticsIBRwIso}  $\sem{\IBRw}$ is an isomorphism of PROPs.
\end{proposition}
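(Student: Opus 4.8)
The plan is to recognise the inductively defined $\sem{\IBRw}$ as the isomorphism $\IBRw \cong \Span{\VectR}$ already produced in Theorem~\ref{th:Span=IBw}. Since $\sem{\IBRw}$ is identity-on-objects it is an isomorphism of PROPs as soon as it is full and faithful, and I would establish this by checking that $\sem{\IBRw}$ coincides with that known isomorphism; I would also record a self-contained argument that does not use the internal structure of the proof of Theorem~\ref{th:Span=IBw}.

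For the main argument, recall that the isomorphism of Theorem~\ref{th:Span=IBw} is the composite
\[
\IBRw \;\xrightarrow{\ \cong\ }\; \Span{\ABR} = \ABRop \poi \ABR \;\xrightarrow{\ \Span{\sem{\ABR}}\ }\; \Span{\VectR},
\]
where the first map is the identity-on-generators isomorphism witnessed by the soundness and completeness of $\IBRw$ with respect to $\Span{\ABR}$ --- it sends $\sigma_1(g)$ to the span $(\tl{\id}\tr{g})$ and $\sigma_2(h)$ to $(\tl{h}\tr{\id})$ --- and the second is induced by $\sem{\ABR}\colon\ABR\cong\VectR$. Composing, a generator $g$ of $\ABR$ is carried to $\kappa_1(\sem{\ABR}(g))$ and a generator $h$ of $\ABRop$ to $\kappa_2(\sem{\ABR}^{\op}(h))$, which are precisely the defining clauses of $\sem{\IBRw}$. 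Since the generators of $\ABR+\ABRop$ generate $\IBRw$ and both maps are PROP morphisms, the two identity-on-objects morphisms coincide; hence $\sem{\IBRw}$ is an isomorphism.

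The self-contained alternative argues fullness and faithfulness directly. One first checks that $\kappa_1\colon\VectR\to\Span{\VectR}$ and $\kappa_2\colon\VectRop\to\Span{\VectR}$ are PROP morphisms, so that $\sem{\IBRw}\comp\sigma_1 = \kappa_1\comp\sem{\ABR}$ and $\sem{\IBRw}\comp\sigma_2 = \kappa_2\comp\sem{\ABR}^{\op}$ (both sides being PROP morphisms agreeing on generators). Given $c\in\IBRw[n,m]$, Corollary~\ref{cor:factorisationIBRw} yields a factorisation $c = \sigma_2(c_1)\poi\sigma_1(c_2)$ with $c_1\in\ABRop[n,z]$ and $c_2\in\ABR[z,m]$, and a routine computation (the relevant pullback being of two identities) gives $\sem{\IBRw}(c) = (n\tl{A}z\tr{B}m)$ with $A=\sem{\ABR}^{\op}(c_1)$ and $B=\sem{\ABR}(c_2)$. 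Fullness is then immediate from the surjectivity of $\sem{\ABR}$ and $\sem{\ABR}^{\op}$ (Proposition~\ref{prop:ab=vect}). For faithfulness, if $\sem{\IBRw}(c)=\sem{\IBRw}(d)$ the associated spans lie in the same isomorphism class of $\Span{\VectR}$; since the objects of $\VectR$ are natural numbers and its isomorphisms are invertible square matrices, the two apexes coincide and the span isomorphism is an invertible $U$ of the form displayed in~\eqref{diag:isospan}, whence $c=d$ by Lemma~\ref{lemma:mirror}.

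I expect no genuine obstacle: the substance is already contained in Theorem~\ref{th:Span=IBw}, Corollary~\ref{cor:factorisationIBRw} and Lemma~\ref{lemma:mirror}. The only care needed is bookkeeping --- verifying the identities $\sem{\IBRw}\comp\sigma_i = \kappa_i\comp(\cdot)$ on generators, translating between the ``circuit of a matrix'' notation underlying Lemma~\ref{lemma:mirror} and the morphisms $\sigma_1,\sigma_2,\sem{\ABR}$, and invoking the invariant basis number property of $\VectR$ to conclude that isomorphic spans over $\VectR$ share their apex.
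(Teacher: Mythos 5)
Your proposal is correct, and your ``self-contained alternative'' is essentially the paper's own proof: fullness is obtained from fullness of $\sem{\ABR}$ together with the observation that $\kappa_2(A)\poi\kappa_1(B)$ composes (by pulling back two identities) to the span $n\tl{A}z\tr{B}m$, and faithfulness uses Corollary~\ref{cor:factorisationIBRw} to factorise both circuits, reads off that equal arrows of $\Span{\VectR}$ are witnessed by an invertible $U$ as in~\eqref{diag:isospan}, and concludes by Lemma~\ref{lemma:mirror}. Your primary argument is a repackaging the paper does not use: you identify $\sem{\IBRw}$ with the composite of the isomorphism of Theorem~\ref{th:Span=IBw} and $\Span{\sem{\ABR}}$ by checking agreement on generators. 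This is legitimate, but note that Theorem~\ref{th:Span=IBw} as stated only asserts that \emph{some} isomorphism exists; to know it acts on generators as you claim (sending $\sigma_1(g)$ to $\tl{\id}\tr{g}$, etc.) you must appeal to the structure of its proof, namely that both $\IBRw$ and $\ABRop\poi\ABR$ are quotients of $\ABR+\ABRop$ by provably equivalent equations --- and that soundness/completeness machinery is exactly what powers the direct fullness-and-faithfulness check, so the shortcut buys conciseness rather than a genuinely lighter argument. Two small points in your favour: you explicitly justify (via the invariant basis number property of $\VectR$) that the apexes of the two factorisations agree, which the paper glosses over by writing the same $z$ for both, and your computation of $\sem{\IBRw}\comp\sigma_1=\kappa_1\comp\sem{\ABR}$ avoids the paper's typo where $\kappa_2$ appears in place of $\kappa_1$ in the fullness derivation.
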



\begin{proof}
For fullness, let $n \tl{A} z \tr{B} m$ be an arrow in $\Span{\VectR}$. By fullness of $\sem{\ABR}$ there are circuits $c_1 \in \ABR[z,n]$ and $c_2 \in \ABR[z,m]$ such that $\sem{\ABR}(c_1) = A$ and $\sem{\ABR}(c_2) = B$. The following derivation shows that $n \tl{A} z \tr{B} m$ is targeted by $\sigma_2(c_1^{\star}) \poi \sigma_1(c_2) \in \IBR[n,m]$.
\begin{align*}
\sem{\IBRw}(\sigma_2(c_1^{\star}) \poi \sigma_1(c_2)) &= \sem{\IBRw}(\sigma_2(c_1^{\star})) \poi \sem{\IBRw}(\sigma_1(c_2)) \\
&=
\kappa_2(\sem{\ABR}^{op}(c_1^{\star})) \poi \kappa_2(\sem{\ABR}(c_2)) \\
&=
\kappa_2(A \: n \to z) \poi \kappa_2(B \: z \to m) \\
&=
(n \tl{A}z \tr{\id} z) \poi (z \tl{\id}z \tr{B} m) \\
&=
n \tl{A}z \tr{B} m.
 \end{align*}
 It remains to show faithfulness. For this purpose, let $c \in \IBRw[n,m]$ and $c' \in \IBRw[n,m]$ be circuits and suppose that $\sem{\IBRw}(c)  = \sem{\IBRw}(c')$. By Corollary~\ref{cor:factorisationIBRw} it follows that
\begin{eqnarray*}
\sem{\IBRw}(c) = n \tl{\sem{\ABR}(c_1^{\star})} z \tr{\sem{\ABR}(c_2)} m \\
\sem{\IBRw}(c') = n \tl{\sem{\ABR}(c_1'^{\star})} z \tr{\sem{\ABR}(c_2')} m
\end{eqnarray*}
for circuits $c_1, c_1'$ of $\ABRop$ and $c_2,c_2'$ of $\ABR$ such that $c = \sigma_2(c_1);\sigma_1(c_2)$ and $c' = \sigma_2(c_1');\sigma_1(c_2')$. Since $\sem{\IBRw}(c)  = \sem{\IBRw}(c')$ are the same arrow of $\Span{\VectR}$, that means they are isomorphic spans: thus there is an invertible matrix $U \in \VectR[z,z]$ making the following diagram commute.
\begin{eqnarray*}
\vcenter{
\xymatrix@C=25pt@R=15pt{ && \ar[drr]^{\sem{\ABR}(c_2)} z  \ar[dll]_{\sem{\ABR}(c_1^{\star})}&& \\
 n && \ar[ll]^{\sem{\ABR}(c_1'^{\star})} z \ar[u]^>>>>U \ar[rr]_{\sem{\ABR}(c_2)}&& m }
 }
\end{eqnarray*}
Then by Lemma~\ref{lemma:mirror} we have that $c$ and $c'$ are equal as circuits of $\IBRw$.
 \qed
 \end{proof}

%

\paragraph{An inductive presentation of $\sem{\IBRb}$} Similarly to what we did for $\IBRw$, we define a PROP morphism $\sem{\IBRb} \: \IBRb \to \Cospan{\VectR}$ by induction on circuits of $\IBRb$ as follows:
\begin{align*}
   c  \mapsto \left\{
	\begin{array}{ll}
        \iota_1(\sem{\AB}(c')) & \text{ if } c = \tau_1(c') \text{ and }c'\in \Sigma_{\ABR} \\
        \iota_2( \sem{\AB}^{\op} (c'))  & \text{ if } c = \tau_2(c') \text{ and }c'\in \Sigma_{\ABRop} \\
	\sem{\IBRb}(c_1) \poi \sem{\IBRb}(c_2) & \text{ if } c=c_1 \poi c_2\\
	\sem{\IBRb}(c_1) \tns \sem{\IBRb}(c_2) & \text{ if } c=c_1 \tns c_2\\
        \end{array}
\right.
 \end{align*}
 The mapping is well-defined as all the equations of $\IBRb$ are sound w.r.t. $\sem{\IBRb}$. Also, $\sem{\IBRb}$ clearly makes the leftmost part of~\eqref{backwardcube} commute.

\begin{proposition} $\sem{\IBRb}$ is an isomorphism of PROPs. \end{proposition}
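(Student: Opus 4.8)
The plan is to mirror the proof of Proposition~\ref{prop:semanticsIBRwIso}, transporting everything along the photographic-negative isomorphism $\pn \: \IBRb \to \IBRw$ and the transpose isomorphism $\tra \: \Span{\VectR} \to \Cospan{\VectR}$. Concretely, I would first observe that the composite $\tra \comp \sem{\IBRw} \comp \pn \: \IBRb \to \Cospan{\VectR}$ is an isomorphism of PROPs, being a composite of three isomorphisms (Proposition~\ref{prop:semanticsIBRwIso}, together with the two isomorphisms established in Section~\ref{sec:IBRbCospan}). The whole task then reduces to checking that this composite agrees with $\sem{\IBRb}$ as defined by the inductive clauses above. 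Since both maps are strict symmetric monoidal functors that are identity on objects, and since $\sem{\IBRb}$ is defined inductively on the syntax, it suffices to check agreement on the generators $\Bcounit$, $\Bunit$, $\Wunit$, $\Wcounit$, $\Bmult$, $\Bcomult$, $\Wmult$, $\Wcomult$, $\scalar$ and $\coscalar$ of $\ABR + \ABRop$; the cases of $\poi$ and $\tns$ follow because all maps in sight are monoidal functors.

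For the generator check, I would recall that $\pn$ swaps black and white and reverses scalars, that $\sem{\IBRw}$ sends a generator $\sigma_1(c')$ with $c' \in \Sigma_{\ABR}$ to $\kappa_1(\sem{\ABR}(c'))$ and $\sigma_2(c')$ with $c' \in \Sigma_{\ABRop}$ to $\kappa_2(\sem{\ABR}^{\op}(c'))$, and that $\tra$ sends a span $n \tl{A} z \tr{B} m$ to the cospan $n \tr{A^T} z \tl{B^T} m$. Tracking a representative generator, say $\tau_1(\Wmult) \in \IBRb$: applying $\pn$ gives $\sigma_1(\Bmult)$ wait---$\pn$ sends $\Wmult$ to $\Bmult$, which is a generator of $\ABRop$, so this lands as $\sigma_2$ of a generator of $\ABRop$; then $\sem{\IBRw}$ gives $\kappa_2(\sem{\ABR}^{\op}(\Bmult)) = \kappa_2\big({\tiny \matrixOneOne}\big)$, a span $n \tl{\id} z \tr{{\tiny\matrixOneOne}} m$---sorry, $\kappa_2(A)= (n \tl{A} m \tr{\id} m)$; then $\tra$ transposes both legs. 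On the other side, $\sem{\IBRb}(\tau_1(\Wmult)) = \iota_1(\sem{\ABR}(\Wmult)) = \iota_1\big({\tiny\matrixOneOneFlat}\big) = (n \tr{{\tiny\matrixOneOneFlat}} m \tl{\id} m)$, and one checks these two cospans coincide using $\big({\tiny\matrixOneOne}\big)^T = {\tiny\matrixOneOneFlat}$. I would do this for each generator type; each is a short matrix-transpose bookkeeping exercise of exactly the kind appearing throughout Section~\ref{sec:IBRbCospan}, and the cases are dual in pairs (black/white, unit/counit), so the genuine work is small.

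The main obstacle, such as it is, is purely bookkeeping: one must be careful that the two maps $\kappa_1,\kappa_2$ (used in defining $\sem{\IBRw}$) and $\iota_1,\iota_2$ (used in defining $\sem{\IBRb}$) are related exactly by transpose --- explicitly $\tra \comp \kappa_1 = \iota_1 \comp (\cdot)^T$ and $\tra \comp \kappa_2 = \iota_2 \comp (\cdot)^T$ on arrows of $\VectR$, which follows directly from the definitions at the start of Section~\ref{sec:cubebottom} --- and that $\pn$ followed by the $\ABR$-semantics corresponds to transposition of the matrix semantics (because $\pn$ swaps a monoid structure with its comonoid dual, which matrix-wise is transposition). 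Once these two compatibilities are recorded, the generator-by-generator verification becomes automatic. An alternative, even shorter route: simply note that $\sem{\IBRb}$ and $\tra \comp \sem{\IBRw} \comp \pn$ agree on generators by inspection, hence on all of $\IBRb$, hence $\sem{\IBRb}$ equals an isomorphism and is therefore itself an isomorphism of PROPs. \qed
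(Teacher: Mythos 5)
Your proposal is correct and follows essentially the same route as the paper: the paper's proof likewise reduces the statement, via~\eqref{eq:IsoCospan}, to showing $\sem{\IBRb} = \pn \poi \sem{\IBRw} \poi \tra$ and verifies this by induction with a generator check (the paper samples $\Bmult$, you sample $\Wmult$, with the same transpose bookkeeping). The only blemish is the momentary confusion about whether $\Bmult$ lands via $\sigma_1$ or $\sigma_2$, which you correct, so nothing substantive is missing.
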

\begin{proof} Following~\eqref{eq:IsoCospan}, it suffices to show that $\sem{\IBRb} = \pn \poi \sem{\IBRw} \poi \tra$. This can be easily verified by induction on $c \in \IBRb$. For instance, $\sem{\IBRb}$ maps $\Bmult \: 2 \to 1$ into $2 \tr{\id} 2 \tl{\tiny \matrixOneOne} 1$. Instead $\pn \poi \sem{\IBRw} \poi \tra$ maps $\Bmult$ first to $\Wmult$, then to $2 \tl{\id} 2 \tr{\tiny \matrixOneOneFlat} 1$ and finally to $2 \tr{\id} 2 \tl{\tiny \matrixOneOne} 1$. \qed \end{proof}

%
%

\subsection{The Cube Rebuilt}\label{sec:cuberebuilt}

The results of the previous two sections conclude the proof of Theorem~\ref{th:IBR=SVR}. We are now in position to patch together all the faces of the cube \eqref{eq:cube}. This will also give us an inductive presentation of the isomorphism $\sem{\IBR} \: \IBR \to \SVR$.
\begin{equation*}
\raise30pt\hbox{$
\xymatrix@=35pt{
& {\ABR + \ABRop} \ar[dr]^{[\varphi_1,\varphi_2]} \ar[dd]_(.7){\sem{\ABR}+\sem{\ABR}^{\op}}|{\hole}
\ar[dl]_{[\tau_1,\tau_2]} \ar[rr]^{[\sigma_1,\sigma_2]} & & {\IBRw} \ar[dl]_{\Theta} \ar[dd]^{\sem{\IBRb}} \\
{\IBRb} \ar[rr]^(.7){\Lambda} \ar[dd]_{\sem{\IBRb}}  & & {\IBR} \ar@{.>}[dd]^(.3){\sem{\IBR}} \\
& {\VectR+ \VectRop}  \ar[dr]^{[\psi_1,\psi_2]}  \ar[dl]_{[\iota_1,\iota_2]} \ar[rr]^(.4){[\kappa_1,\kappa_2]}|(.53){\hole} & & {\Span {\VectR}} \ar[dl]_{\Phi} \\
{\Cospan {\VectR}} \ar[rr]^{\Psi} & & {\SVR}
}$}
\end{equation*}
Above we draw the PROP morphism $[\psi_1,\psi_2] \: \VectR + \VectRop \to \SVR$ defined by commutativity of the bottom face. Commutativity of all the faces yields commutativity of the ``section'':
 \begin{equation}\label{eq:cubesection}
 \tag{Sec}
 \vcenter{
 \xymatrix@=15pt{
 & {\ABR + \ABRop} \ar[dr]^{[\varphi_1,\varphi_2]} \ar[dd]_{\sem{\ABR}+\sem{\ABR}^{\op}}
& &  \\
& & {\IBR} \ar@{.>}[dd]^{\sem{\IBR}} \\
& {\VectR+ \VectRop}  \ar[dr]_{[\psi_1,\psi_2]} & & \\
 & & {\SVR}
 }
 }
 \end{equation}
Diagram~\eqref{eq:cubesection} provides us a recipe for an inductive presentation of $\sem{\IBR}$, for circuits of $\IBR$, similarly to what we previously did for $\sem{\IBRb}$ and $\sem{\IBRw}$:
\begin{align*}
   c  \mapsto \left\{
	\begin{array}{ll}
        \psi_1(\sem{\ABR}(c')) & \text{ if } c = \varphi_1(c') \text{ and }c'\in \Sigma_{\ABR} \\
        \psi_2( \sem{\ABR}^{\op} (c'))  & \text{ if } c = \varphi_2(c') \text{ and }c'\in \Sigma_{\ABRop} \\
	\sem{\IBR}(c_1) \poi \sem{\IBR}(c_2) & \text{ if } c=c_1 \poi c_2\\
	\sem{\IBR}(c_1) \tns \sem{\IBR}(c_2) & \text{ if } c=c_1 \tns c_2\\
        \end{array}
\right.
 \end{align*}

By observing the definition of $\sem{\ABR}$ and $[\iota_1,\iota_2] \poi \Psi$ (or, equivalently, $[\kappa_1,\kappa_2] \poi \Phi$), one can compute the value of $\sem{\IBR}$ on the generators in $\Sigma_{\ABR}$ as follows:

\begin{multicols}{2}\noindent
\begin{equation*}
\Bcomult \longmapsto [(%
\small{\matrixOne},\tiny{\left(\begin{array}{c}
                \!\!  1 \!\!\\
                 \!\! 1 \!\!
                \end{array}\right)})]
\end{equation*}
\begin{equation*}
\Wmult  \longmapsto  [\tiny{(\left(%
                \begin{array}{c}
                \!\!\!  0 \!\!\!\\
                \!\!\!  1 \!\!\!
                \end{array}\right)},\small{\matrixOne}),(\tiny{\left(%
                \begin{array}{c}
                \!\!\!  1 \!\!\!\\
                \!\!\!  0 \!\!\!
                \end{array}\right)},\small{\matrixOne})]
\end{equation*}
\end{multicols}
\smallskip
\begin{multicols}{3}\noindent
\begin{equation*}
\Bcounit \longmapsto [(\small{\matrixOne,\matrixNull})]
\end{equation*}
\begin{equation*}
\Wunit \longmapsto [(\small{\matrixNull,\matrixZero})]
\end{equation*}
\begin{equation*}
\scalar \longmapsto [(\small{\matrixOne},\small{\left(%
                \begin{array}{c}
                \!\!  k \!\!
                \end{array}\right)})]
\end{equation*}
\end{multicols}
\smallskip
A generator $c$ in $\Sigma_{\ABRop}$ is mapped to the inverse relation of $\sem{\IBR}(\coc{c})$. In the above definition, notation $[(\xx_1,\yy_1), \dots, (\xx_z,\yy_z)]$ for an arrow in $\SVR[n,m]$ indicates the subspace of $\frPID^n \times \frPID^m$ spanned by pairs $(\xx_1,\yy_1), \dots, (\xx_z,\yy_z)$ of vectors, where each $\xx_i$ is in $\PID^n$ and each $\yy_i$ is in $\PID^m$. Also, $\matrixNull$ denotes the unique element of the space of dimension $0$.

\section{Example: Interacting Hopf Algebras for Rational Subspaces}\label{sec:instances}

In this concluding section, we exhibit a simple, yet important, example of our construction: the axiomatisation $\IH{\Z}$ for the PROP of rational subspaces. As in general case, we begin by describing the sub-theory of integer matrices.

\paragraph{The theory of integer matrices} By Proposition~\ref{prop:ab=vect}, the PROP $\Mat{\Z}$ of integer matrices is presented by the axioms \eqref{eq:wmonunitlaw}-\eqref{eq:scalarsum} of $\HA{\Z}$. In fact, a finite axiomatisation is possible: let us denote by $\HA{}$ the PROP freely generated by the SMT with signature $\{\antipode ,\Bcounit , \Bcomult , \Wunit , \Wmult \}$ and equations:
 \begin{multicols}{3}\noindent
\begin{equation*}
\lower9pt\hbox{$\includegraphics[height=.9cm]{graffles/Wunitlaw.pdf}$}
\!\!\!
=\!
\lower5pt\hbox{$\includegraphics[height=.6cm]{graffles/idcircuit.pdf}$}
\end{equation*}
\begin{equation*}
\lower5pt\hbox{$\includegraphics[height=.6cm]{graffles/Wmult.pdf}$}
\!
=
\!\!\!\!
\lower11pt\hbox{$\includegraphics[height=1cm]{graffles/Wcomm.pdf}$}
\end{equation*}
\begin{equation*}
\lower12pt\hbox{$\includegraphics[height=1cm]{graffles/Wassocl.pdf}$}
\!\!\!
=
\!\!\!
\lower12pt\hbox{$\includegraphics[height=1cm]{graffles/Wassocr.pdf}$}
\end{equation*}
\end{multicols}
\begin{multicols}{3}\noindent
\begin{equation*}
\lower9pt\hbox{$\includegraphics[height=.9cm]{graffles/Bcounitlaw.pdf}$}
\!\!\!
=
\!
\lower5pt\hbox{$\includegraphics[height=.6cm]{graffles/idcircuit.pdf}$}
\end{equation*}
\begin{equation*}
\lower5pt\hbox{$\includegraphics[height=.6cm]{graffles/Bcomult.pdf}$}
\!
=
\!\!\!
\lower11pt\hbox{$\includegraphics[height=1cm]{graffles/Bcomm.pdf}$}
\end{equation*}
\begin{equation*}
\lower11pt\hbox{$\includegraphics[height=1cm]{graffles/Bcoassocl.pdf}$}
\!\!\!
=
\!\!\!
\lower11pt\hbox{$\includegraphics[height=1cm]{graffles/Bcoassocr.pdf}$}
\end{equation*}
\end{multicols}
\begin{multicols}{4}
\noindent
\begin{equation*}
\lower3pt\hbox{$
\lower5pt\hbox{$\includegraphics[height=.6cm]{graffles/lunitsl.pdf}$}
=
\lower5pt\hbox{$\includegraphics[height=.6cm]{graffles/lunitsr.pdf}$}
$}
\end{equation*}\noindent\begin{equation*}
\lower3pt\hbox{$
\lower5pt\hbox{$\includegraphics[height=.6cm]{graffles/runitsl.pdf}$}
=
\lower5pt\hbox{$\includegraphics[height=.6cm]{graffles/runitsr.pdf}$}
$}
\end{equation*}
\begin{equation*}
\lower5pt\hbox{$\includegraphics[height=.6cm]{graffles/bialgl.pdf}$}
=
\lower10pt\hbox{$\includegraphics[height=.9cm]{graffles/bialgr.pdf}$}
\end{equation*}
\begin{equation*}
\lower4pt\hbox{$
\lower4pt\hbox{$\includegraphics[height=.5cm]{graffles/unitsl.pdf}$}
= \lower4pt\hbox{$\includegraphics[height=.5cm]{graffles/idzerocircuit.pdf}$}
$}
\end{equation*}
\end{multicols}
\begin{multicols}{4}\noindent
\begin{equation*}
\lower11pt\hbox{$\includegraphics[height=1cm]{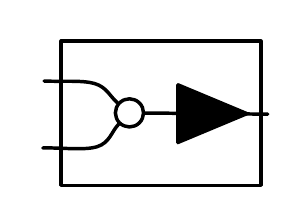}$}
\!\!
=
\!\!
\lower11pt\hbox{$\includegraphics[height=1cm]{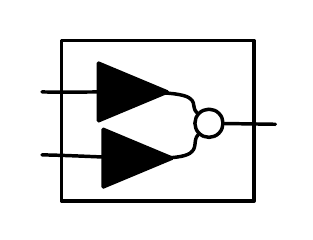}$}
\end{equation*}
\begin{equation*}
\lower10pt\hbox{$\includegraphics[height=.9cm]{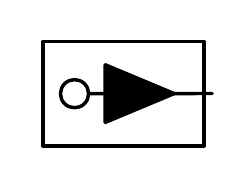}$}
\!\!
=
\!\!
\lower10pt\hbox{$\includegraphics[height=.9cm]{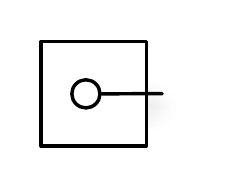}$}
\end{equation*}
\begin{equation*}
\lower11pt\hbox{$\includegraphics[height=1cm]{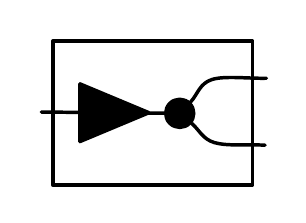}$}
\!\!
=
\!\!
\lower11pt\hbox{$\includegraphics[height=1cm]{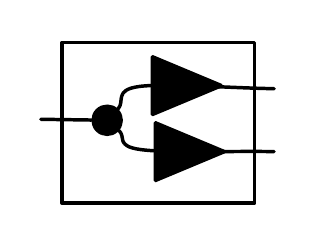}$}
\end{equation*}
\begin{equation*}
\lower10pt\hbox{$\includegraphics[height=.9cm]{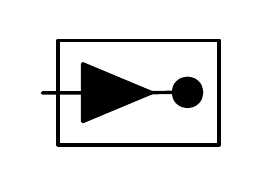}$}
\!\!
=
\!\!
\lower10pt\hbox{$\includegraphics[height=.9cm]{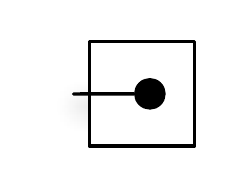}$}
\end{equation*}
\end{multicols}\vspace{-.3cm}
\begin{equation*}
  \lower11pt\hbox{$\includegraphics[height=1cm]{graffles/hopfr.pdf}$}
  = \lower8pt\hbox{$\includegraphics[height=.8cm]{graffles/hopfc.pdf}$}
   \end{equation*}

\begin{proposition}\label{prop:finitePresHAZ} $\HA{} \cong \HA{\Z}$.
\end{proposition}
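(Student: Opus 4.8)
The plan is to exhibit mutually inverse, identity-on-objects PROP morphisms $F \: \HA{} \to \HA{\Z}$ and $G \: \HA{\Z} \to \HA{}$; since $\HA{\Z} \cong \Mat{\Z}$ by Proposition~\ref{prop:ab=vect}, this exhibits the finitely presented $\HA{}$ as another presentation of the same PROP. As both morphisms are identity on objects and determined by their action on generators, it suffices to check well-definedness of each and that both composites fix the generators.

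First I would define $F$ by sending each generator of $\HA{}$ to the corresponding generator of $\HA{\Z}$, with $\antipode \mapsto \scalarminusone$. Well-definedness is immediate: the monoid, comonoid and bialgebra equations of $\HA{}$ are literally among \eqref{eq:wmonunitlaw}--\eqref{eq:bwbone}; the four equations expressing that $\antipode$ slides through $\Wmult$, $\Wunit$, $\Bcomult$, $\Bcounit$ are the instances $k=-1$ of \eqref{eq:scalarwmult}--\eqref{eq:scalarbcounit}; and the Hopf law is \eqref{eq:hopf}, already known to hold in $\HA{\Z}$.

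Next I would define $G$: it fixes the shared generators $\Wmult$, $\Wunit$, $\Bcomult$, $\Bcounit$ and sends each scalar $\scalar$ to the ``multiplication by $k$'' circuit $\lceil k\rceil \: 1 \to 1$ built by the evident induction --- $\lceil 0\rceil \df \Bcounit \poi \Wunit$, $\lceil n{+}1\rceil \df \Bcomult \poi (\lceil n\rceil \tns \id) \poi \Wmult$ for $n \ge 0$, and $\lceil {-}n\rceil \df \antipode \poi \lceil n\rceil$ for $n > 0$. Well-definedness requires checking that these images satisfy \eqref{eq:unitscalar}--\eqref{eq:scalarsum} in $\HA{}$: \eqref{eq:unitscalar} ($\lceil 1\rceil = \id$) and \eqref{eq:zeroscalar} ($\lceil 0\rceil = \Bcounit\poi\Wunit$) are immediate from the (co)unit laws; \eqref{eq:scalarsum} ($\lceil k_1{+}k_2\rceil = \Bcomult\poi(\lceil k_1\rceil \tns \lceil k_2\rceil)\poi\Wmult$) follows from (co)associativity when $k_1, k_2$ have the same sign and from the Hopf law when they have opposite signs; and \eqref{eq:scalarwmult}--\eqref{eq:scalarbcounit} together with \eqref{eq:scalarmult} follow by induction on $k$ (resp.\ on $k_2$) from the bialgebra axioms, (co)commutativity and \eqref{eq:scalarsum}. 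Finally, $F \poi G$ and $G \poi F$ fix every generator: the only point needing argument is $G \poi F$ on scalars, where one must see that the circuit $\lceil k\rceil$, now read inside $\HA{\Z}$, equals $\scalar$ --- which is exactly what \eqref{eq:unitscalar}, \eqref{eq:zeroscalar}, \eqref{eq:scalarsum} and the Hopf law give by the same induction. Hence $F$ and $G$ are mutually inverse and $\HA{} \cong \HA{\Z}$.

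The main obstacle is the purely diagrammatic verification inside $\HA{}$ that $\lceil\cdot\rceil \: \Z \to \HA{}[1,1]$ respects the ring operations and commutes with the (co)monoid generators --- above all multiplicativity $\lceil k_1\rceil \poi \lceil k_2\rceil = \lceil k_1 k_2\rceil$, which is where one genuinely needs both commutativity and cocommutativity on top of the bialgebra and Hopf laws; the remaining checks are bookkeeping. This is essentially the folklore fact that the commutative and cocommutative Hopf algebra axioms present $\Mat{\Z}$ (cf.\ \cite{Fiore2013}).
\qed
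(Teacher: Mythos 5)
Your proposal is correct and takes essentially the same route as the paper: your $G$ with the recursively defined circuits $\lceil k\rceil$ is exactly the paper's morphism $\alpha \: \HA{\Z} \to \HA{}$, with the same deferred inductive verification that the scalar axioms \eqref{eq:unitscalar}--\eqref{eq:scalarsum} hold for these images. The only difference is packaging: you exhibit the explicit inverse $F$ and check the composites on generators, while the paper argues that $\alpha$ is full (by construction) and faithful (since every axiom of $\HA{}$ holds in $\HA{\Z}$), which amounts to the same thing.
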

\begin{proof}
We define a PROP morphism $\alpha \: \HA{\Z} \to \HA{}$ inductively as follows. It is the identity on $\Bcounit$, $\Bcomult$, $\Wunit$ and $\Wmult$. For $k \in \Z$, $\alpha(\scalar)$ is given by:
\begin{align*}
 \lower4pt\hbox{$\includegraphics[height=15pt]{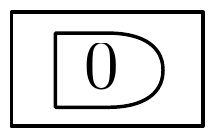}$} \mapsto \lower4pt\hbox{$\includegraphics[height=15pt]{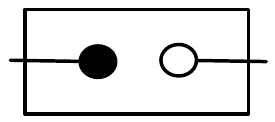}$} %
 \ \ \ \ &&
  \lower10pt\hbox{$\includegraphics[height=25pt]{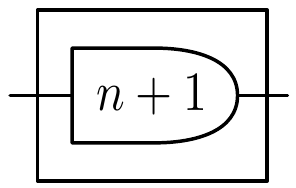}$} \mapsto \lower15pt\hbox{$\includegraphics[height=35pt]{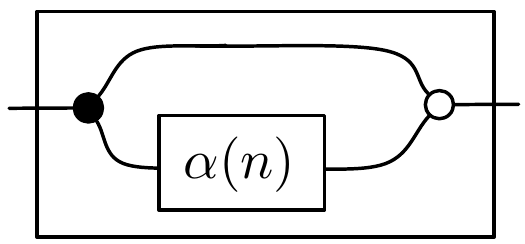}$} \ \ \ \  &&
  \lower10pt\hbox{$\includegraphics[height=25pt]{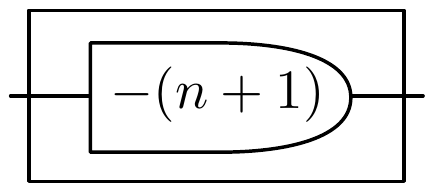}$} \mapsto \lower10pt\hbox{$\includegraphics[height=25pt]{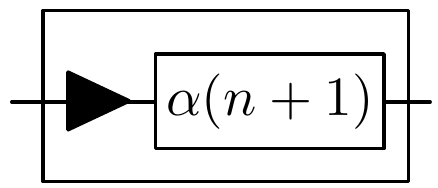}$}
\end{align*}
Finally, we put $\alpha(c_1 \tns c_2) = \alpha(c_1) \tns \alpha(c_2)$ and $\alpha(c_1 \poi c_2) = \alpha(c_1) \poi \alpha(c_2)$.
An inductive argument confirms that $\alpha$ is well-defined, in the sense that it preserves equality of circuits in $\HA{\Z}$. Fullness is clear by construction. For faithfulness, just observe that all axioms of $\HA{}$ are also axioms of $\HA{\Z}$. \qed
\end{proof}

A pleasant example of graphical reasoning in $\HA{}$ is the derivation showing that the antipode $\antipode$ is involutive:
$$\includegraphics[height=2.2cm]{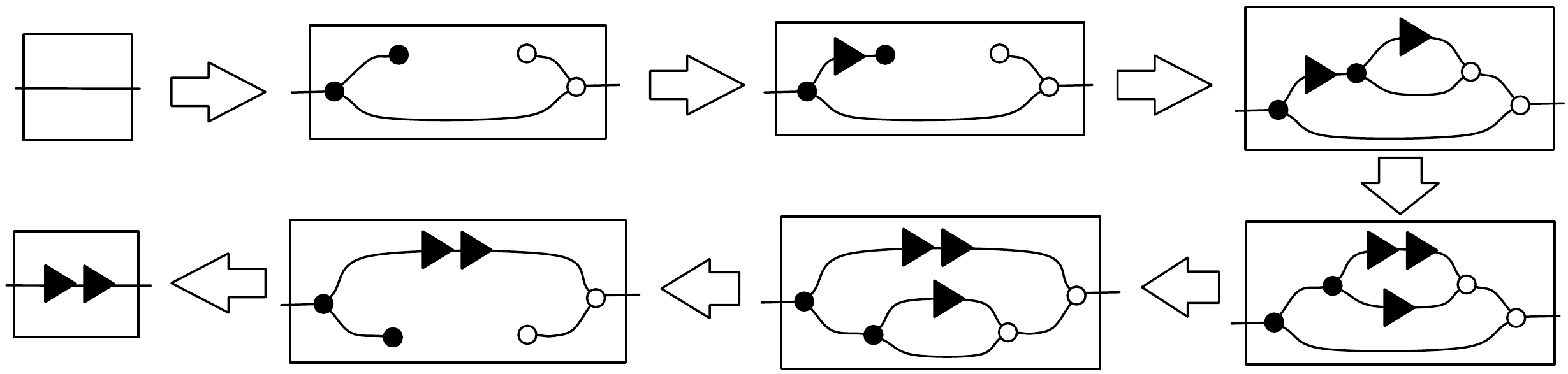}$$

\paragraph{The theory of rational subspaces} By Theorem~\ref{th:IBR=SVR}, $\IH{\Z}$ is isomorphic to the PROP $\SVH{\Q}$ of subspaces over the field $\Q$ of rational numbers. In view of Proposition~\ref{prop:finitePresHAZ}, we can give an alternative presentation of $\IH{\Z}$ based on the finite signature of $\HA{}+\HA{}^{\op}$: in axioms \eqref{eq:lcmIH}-\eqref{eq:lcmopIH}, $\scalar$ and $\coscalar$ become notational conventions for $\alpha(\scalar)$ and $\alpha^{\op}(\coscalar)$, respectively.

For a glimpse of the graphical reasoning in $\IH{\Z}$, we give a combinatorial circuit description of the subspaces of the 2-dimensional rational space (where $k_1,k_2$ are non-zero integers):
\begin{align}\label{eq:subspacesof2}
\spaceFull && \spaceZero && \spaceXaxis && \spaceYaxis && \spacekonektwo.
\end{align}
The circuit $\spaceFull$ denotes (via $\sem{\IH{\Z}}$) the full space $[ \tiny{\left(\begin{array}{c}
                \!\! 1 \!\! \\
                \!\! 0 \!\!
                \end{array}\right),\left(\begin{array}{c}
                \!\! 0 \!\! \\
                \!\! 1 \!\!
                \end{array}\right)}]$ and $\spaceZero$ the $0$-dimensional subspace $\{\tiny{\left(\begin{array}{c}
                \!\! 0 \!\! \\
                \!\! 0 \!\!
                \end{array}\right)}\}$. The remaining subspaces, all of dimension $1$, are conventionally represented as lines through the origin on the $2$-dimensional cartesian coordinate system. Three kinds of circuit suffice to represent all of them: $\spaceXaxis$ denotes the $x$-axis; $\spaceYaxis$ denotes the $y$-axis; for $k_1, k_2 \neq 0$, $\spacekonektwo$ denotes the line with slope $\frac{k_2}{k_1}$.

Conversely, using the modular structure of $\IH{\Z}$ it is easy to check that the above combinatorial analysis~\eqref{eq:subspacesof2} covers all the $1 \to 1$ circuits.

Notice that $\IH{\Z}[1,1]$ contains within its structure all of rational arithmetic:
$0$ can be identified with \spaceXaxis, and $\frac{k_2}{k_1}$, for $k_1\neq 0$, with \spacekonektwo. Multiplication
$\cdot\: \IH{\Z}[1,1]\times\IH{\Z}[1,1]\to \IH{\Z}[1,1]$
is composition $x\cdot y = x\poi y$, addition $+\: \IH{\Z}[1,1]\times\IH{\Z}[1,1]\to \IH{\Z}[1,1]$
is defined \[x+y = \Bcomult \poi (x \oplus y) \poi \Wmult.\]
Multiplication is associative
but not commutative in general: of course, it \emph{is} commutative when restricted to rationals.
Associativity and commutativity of addition follow from associativity and commutativity in $\bcom$ and $\wmon$.

\bibliographystyle{elsarticle-num}
\bibliography{catBib3}

\newpage

\appendix

\section{The Frobenius Laws in $\IBRw$}\label{AppFrob}

The Frobenius axioms both for the white --- \eqref{eq:WFrob} --- and for the black structure --- \eqref{eq:BFrob} --- make valid any deformation of the internal topology of circuits of $\IBRw$, as long as the connections between boundaries are preserved. We list here some useful laws of that kind. In describing the various derivation steps, we occasionally use the notation $(n)^{op}$, which means the counterpart in $\ABRop$ of a valid equation $(n)$ in $\ABR$.
\begin{equation}\label{eq:Bfrobcomult}\tag{F1}
\lower11pt\hbox{$\includegraphics[height=1cm]{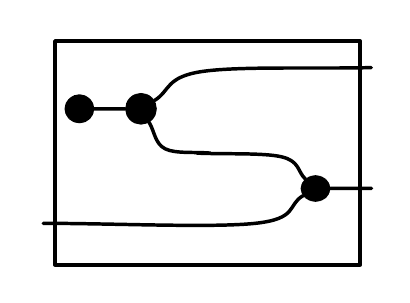}$}
\eql{\eqref{eq:BFrob}}
\lower11pt\hbox{$\includegraphics[height=1cm]{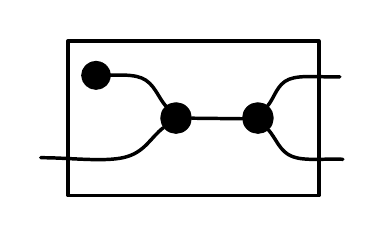}$}
\eql{\eqref{eq:bcomoncomm},\eqref{eq:bcomonunitlaw}}
\lower6pt\hbox{$\includegraphics[height=.6cm]{graffles/Bcomult.pdf}$}
\eql{\eqref{eq:bcomonunitlaw}}
\lower11pt\hbox{$\includegraphics[height=1cm]{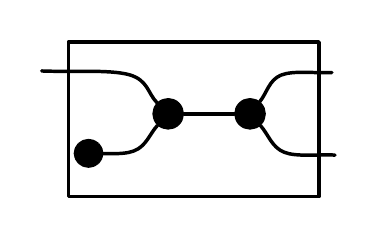}$}
\eql{\eqref{eq:BFrob}}
\lower11pt\hbox{$\includegraphics[height=1cm]{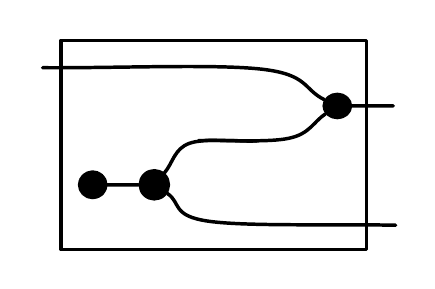}$}
\end{equation}
\begin{equation}\label{eq:Bsnake}\tag{F2}
\lower11pt\hbox{$\includegraphics[height=1cm]{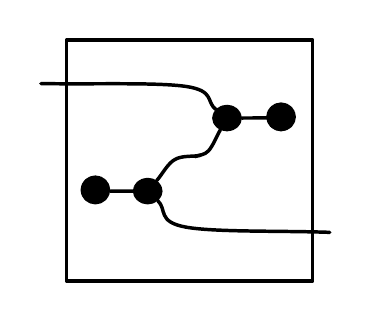}$}
\!\!\!\!\eql{\eqref{eq:BFrob}}\!\!\!\!
\lower10pt\hbox{$\includegraphics[height=1cm]{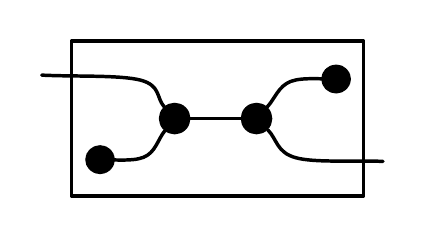}$}
\!\!\!\!\eql{\eqref{eq:bcomoncomm},\eqref{eq:bcomonunitlaw},\eqref{eq:bcomonunitlaw}$^{op}$}\!\!\!\!
\lower4pt\hbox{$\includegraphics[height=.6cm]{graffles/idcircuit.pdf}$}
\!\!\!\!\eql{\eqref{eq:bcomonunitlaw},\eqref{eq:bcomoncomm}$^{op}$,\eqref{eq:bcomonunitlaw}$^{op}$}\!\!\!\!
\lower10pt\hbox{$\includegraphics[height=1cm]{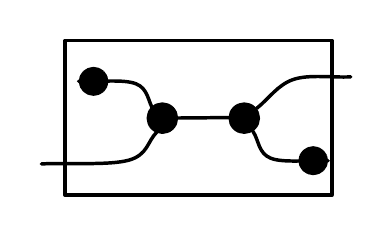}$}
\!\!\!\!\eql{\eqref{eq:BFrob}}\!\!\!\!
\lower11pt\hbox{$\includegraphics[height=1cm]{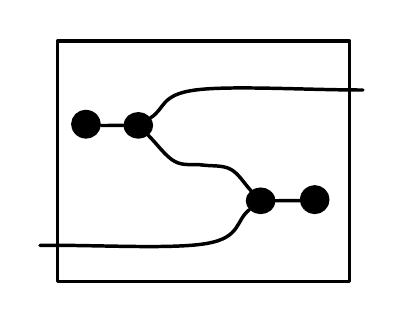}$}
\end{equation}

The following laws are derived analogously. The ones involving the white structure use the white Frobenius axiom~\eqref{eq:WFrob}.

\begin{multicols}{2}\noindent
\begin{equation}\label{eq:Bfrobmult}\tag{F3}
\lower11pt\hbox{$\includegraphics[height=1cm]{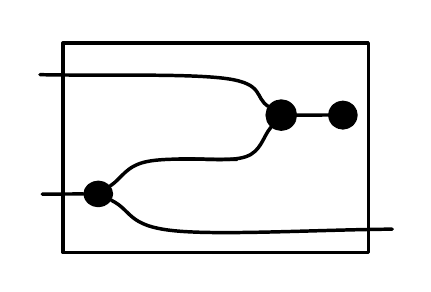}$} =
\lower6pt\hbox{$\includegraphics[height=.6cm]{graffles/Bmult.pdf}$} =
\lower11pt\hbox{$\includegraphics[height=1cm]{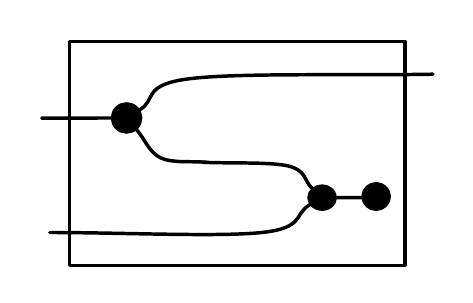}$}
\end{equation}
\begin{equation}\label{eq:Wsnake}\tag{F4}
\lower11pt\hbox{$\includegraphics[height=1cm]{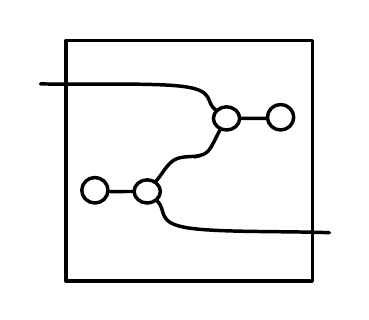}$} =
\lower6pt\hbox{$\includegraphics[height=.6cm]{graffles/idcircuit.pdf}$} =
\lower11pt\hbox{$\includegraphics[height=1cm]{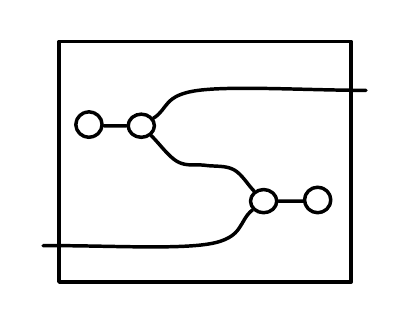}$}
\end{equation}
\end{multicols}
\begin{multicols}{2}\noindent
\begin{equation}\label{eq:Wfrobcomult}\tag{F5}
\lower11pt\hbox{$\includegraphics[height=1cm]{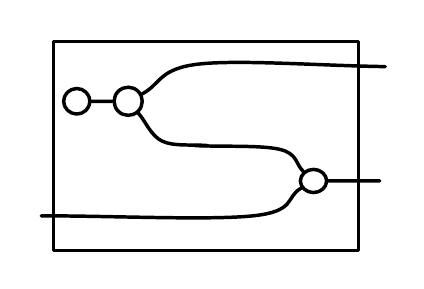}$} =
\lower6pt\hbox{$\includegraphics[height=.6cm]{graffles/Wcomult.pdf}$} =
\lower11pt\hbox{$\includegraphics[height=1cm]{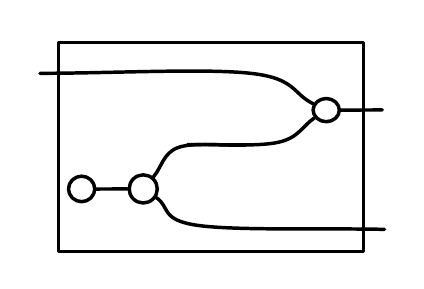}$}
\end{equation}
\begin{equation}\label{eq:Wfrobmult}\tag{F6}
\lower11pt\hbox{$\includegraphics[height=1cm]{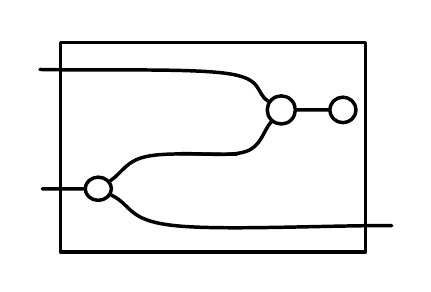}$} =
\lower11pt\hbox{$\includegraphics[height=1cm]{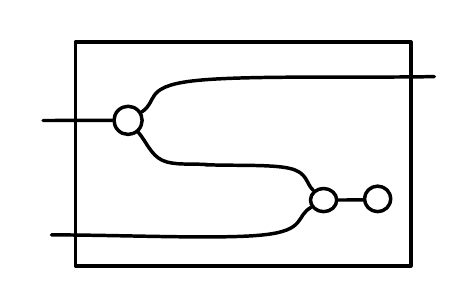}$} =
\lower6pt\hbox{$\includegraphics[height=.6cm]{graffles/Wmult.pdf}$}
\end{equation}
\end{multicols}

\noindent For later reference, we also record the following derivation.
\begin{equation}\label{eq:wccantipodesquare}\tag{F7}
\lower12pt\hbox{$\includegraphics[height=1.1cm]{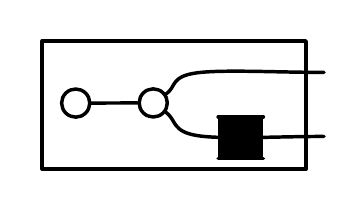}$} \eql{\eqref{eq:scalarwunit}}
\lower12pt\hbox{$\includegraphics[height=1.1cm]{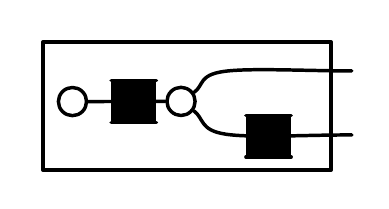}$} \eql{\eqref{eq:scalarwmult}$^{\op}$}
\lower12pt\hbox{$\includegraphics[height=1.1cm]{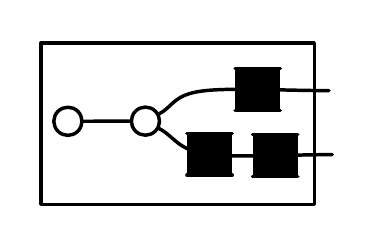}$}
\eql{\eqref{eq:scalarmult}}
\lower12pt\hbox{$\includegraphics[height=1.1cm]{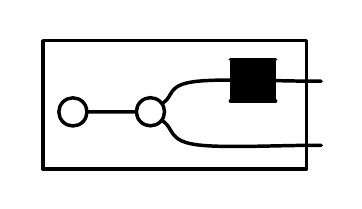}$}
\end{equation}
The same equation reflected about the $y$-axis and the black counterparts are proven analogously.
\begin{multicols}{3}\noindent
\begin{equation}\label{eq:lwccantipodesquare}\tag{F8}
\lower8pt\hbox{$\includegraphics[height=.8cm]{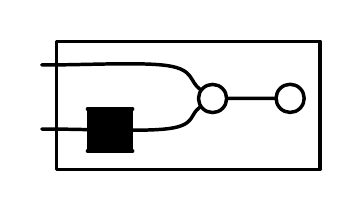}$} \!\!\!=\!\!\!
\lower8pt\hbox{$\includegraphics[height=.8cm]{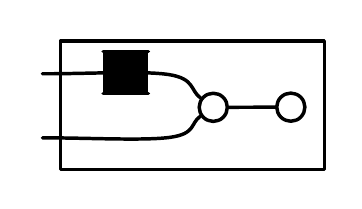}$}
\end{equation}
\begin{equation}\label{eq:bccantipodesquare}\tag{F9}
\lower8pt\hbox{$\includegraphics[height=.8cm]{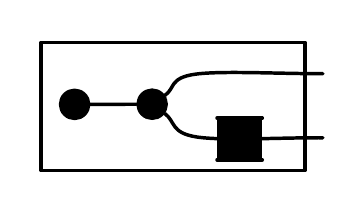}$} \!\!\!=\!\!\!
\lower8pt\hbox{$\includegraphics[height=.8cm]{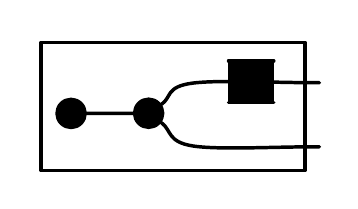}$}
\end{equation}
\begin{equation}\label{eq:lbccantipodesquare}\tag{F10}
\lower8pt\hbox{$\includegraphics[height=.8cm]{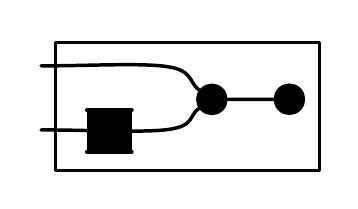}$} \!\!\!=\!\!\!
\lower8pt\hbox{$\includegraphics[height=.8cm]{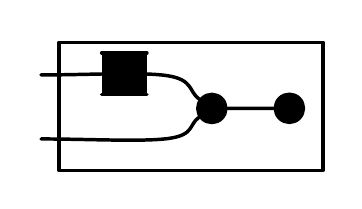}$}
\end{equation}
\end{multicols}

\section{Derived Laws of $\IBRw$}\label{AppDerLaws}

In this section we supply the equational proofs of the laws stated in Section~\ref{sec:ibrw}. We begin with the derivations of \eqref{eq:lccb} and \eqref{eq:uniqueantipode}.
\begin{equation*}
\lower9pt\hbox{$\includegraphics[height=.8cm]{graffles/wccantipodel.pdf}$} \eql{\eqref{eq:rcc}}
\lower9pt\hbox{$\includegraphics[height=.8cm]{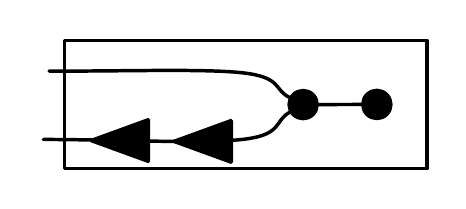}$} \eql{\eqref{eq:scalarmult}}
\lower8pt\hbox{$\includegraphics[height=.7cm]{graffles/bccl.pdf}$}
\end{equation*}
\begin{equation*}
\lower4pt\hbox{$\includegraphics[height=.5cm]{graffles/antipode.pdf}$} \eql{\eqref{eq:lcm}}
\lower9pt\hbox{$\includegraphics[height=.8cm]{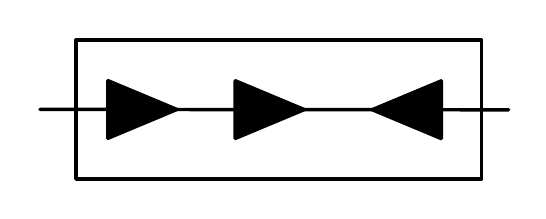}$} \eql{\eqref{eq:scalarmult}}
\lower4pt\hbox{$\includegraphics[height=.5cm]{graffles/antipodeop.pdf}$}
\end{equation*}
The derivation of \eqref{eq:rccb} is analogous to the one of \eqref{eq:lccb}, with \eqref{eq:lcc} used in place of \eqref{eq:rcc}. Now that \eqref{eq:uniqueantipode} has been proven, we follow the convention to write $\antipodesquare$ for both $\antipode$ and $\antipodeop$. We give next the derivation for~\eqref{eq:QFrob}:
\begin{equation*}
\lower60pt\hbox{$\includegraphics[height=4.8cm]{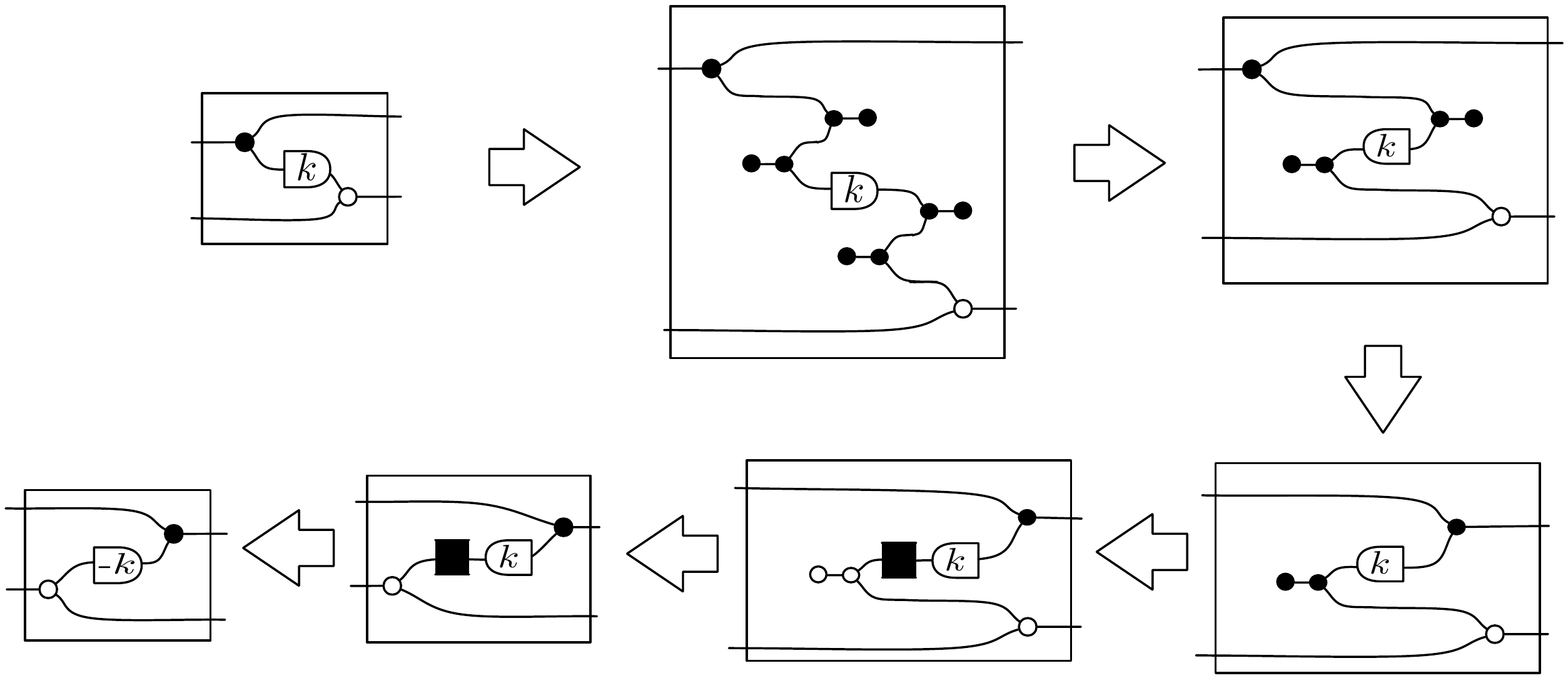}$}
\end{equation*}
The first step uses twice \eqref{eq:Bsnake}. The successive steps use: \eqref{eq:BccscalarAxiomOne}, \eqref{eq:scalarbcounit}, \eqref{eq:Bfrobmult}, \eqref{eq:lccb} and \eqref{eq:wccantipodesquare}, \eqref{eq:Wfrobcomult}, \eqref{eq:scalarmult}.

We show below the proof of \eqref{eq:coscalarwunit}, where $l \neq 0$. The ones for \eqref{eq:scalarwcounit} is symmetric.
\begin{equation*}
\lower6pt\hbox{$\includegraphics[height=.7cm]{graffles/Wunitcoscalarl.pdf}$} \eql{\eqref{eq:scalarwunit}}
\lower7pt\hbox{$\includegraphics[height=.8cm]{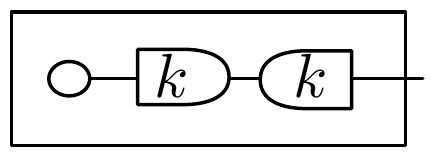}$}
 \eql{\eqref{eq:lcm}}
\lower6pt\hbox{$\includegraphics[height=.7cm]{graffles/Wunit.pdf}$}
\end{equation*}
Next we give the derivation of \eqref{eq:coscalarbcomult}, where $l \neq 0$. The one of \eqref{eq:scalarbmult} is analogous.
\begin{equation*}
\lower9pt\hbox{$\includegraphics[height=.9cm]{graffles/coscalarBcomult_der1.pdf}$} \!\!\!\eql{\eqref{eq:Bfrobcomult}}\!\!\!
\lower10pt\hbox{$\includegraphics[height=1cm]{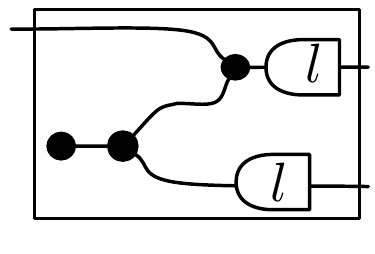}$} \!\!\!\eql{\eqref{eq:BccscalarAxiomTwo}}\!\!\!
\lower10pt\hbox{$\includegraphics[height=1cm]{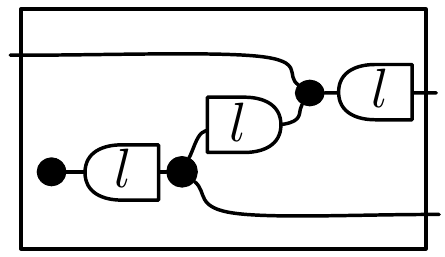}$} \!\!\!\eql{\eqref{eq:scalarbcomult}$^{\op}$}\!\!\!
\lower10pt\hbox{$\includegraphics[height=1cm]{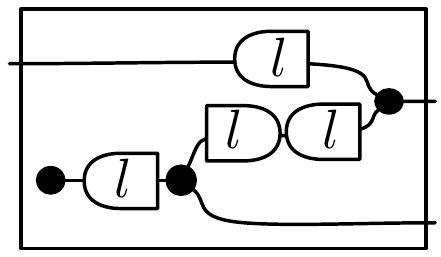}$} \!\!\eql{\eqref{eq:lcm}}\!\!
\lower10pt\hbox{$\includegraphics[height=1cm]{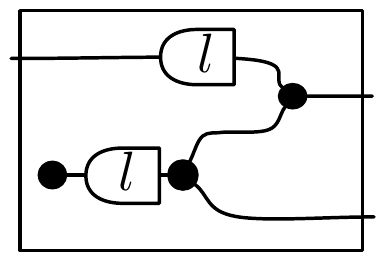}$} \!\!\!\eql{\eqref{eq:scalarbcounit}$^{\op}$}\!\!\!
\lower10pt\hbox{$\includegraphics[height=1cm]{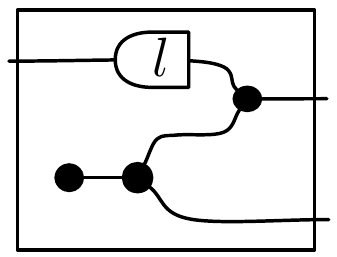}$} \!\!\!\eql{\eqref{eq:Bfrobcomult}}\!\!\!
\lower9pt\hbox{$\includegraphics[height=.9cm]{graffles/coscalarBcomult_der7.pdf}$}
\end{equation*}
We now consider the task of deriving law \eqref{eq:papillon}. For the first half:
\begin{center}
\includegraphics[height=4cm]{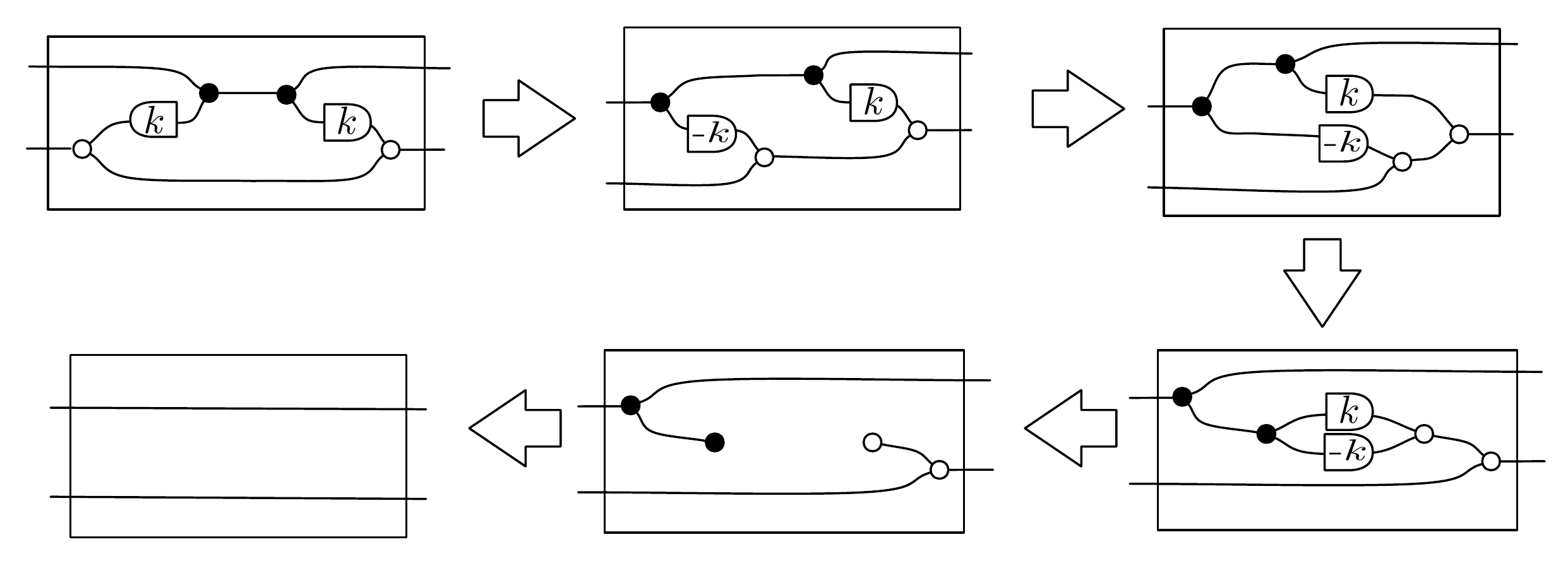}
\end{center}
The sequence of equations that are used is the following: \eqref{eq:QFrob}, axiom of SMCs, \eqref{eq:bcomonassoc} and \eqref{eq:wmonassoc}, \eqref{eq:scalarsum} and \eqref{eq:zeroscalar}, \eqref{eq:bcomonunitlaw} and \eqref{eq:wmonunitlaw}.
The second half of \eqref{eq:papillon} is derived analogously as follows.
\begin{center}
\includegraphics[height=4cm]{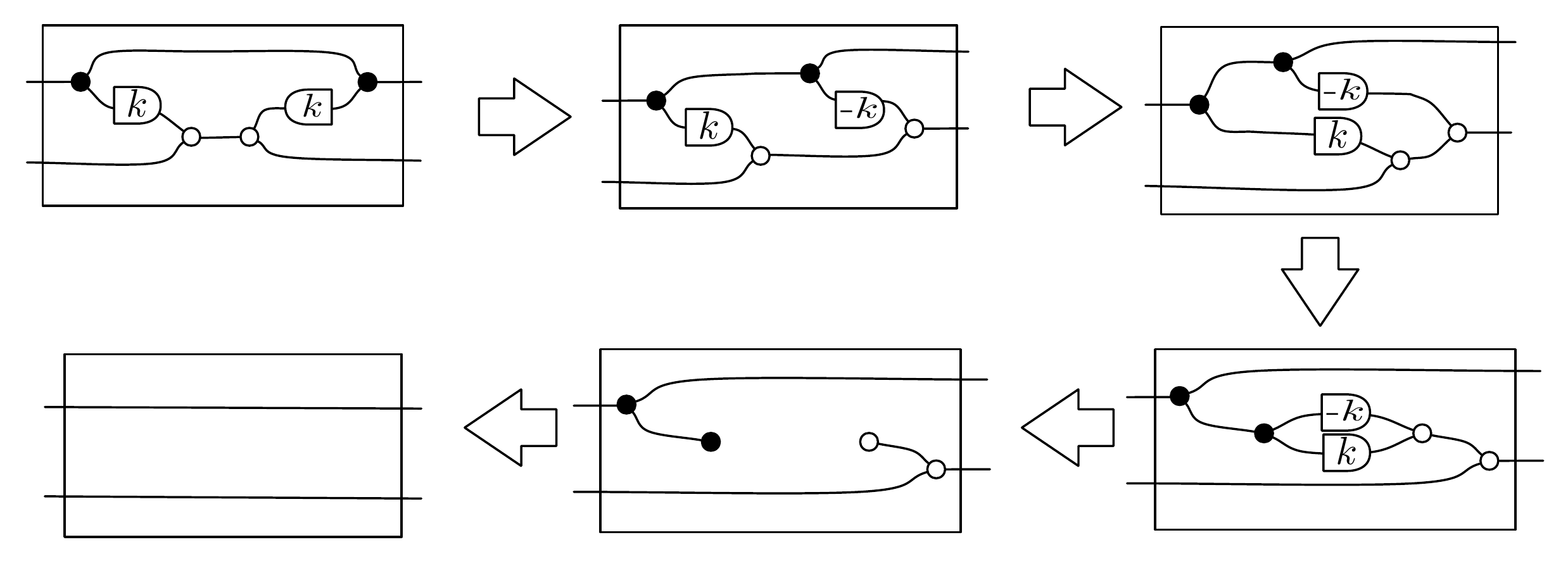}
\end{center}
In order to show the validity of \eqref{eq:wunitcancelbcomult}, we proceed by induction on the coarity $n \geq 1$ of the circuit, i.e., the number of gates on the right boundary. For the case $n = 1$, we have the following derivation, where $l \neq 0$.
\begin{equation}\label{eq:wcounitcancelbcomultder}
\lower40pt\hbox{$\includegraphics[height=3cm]{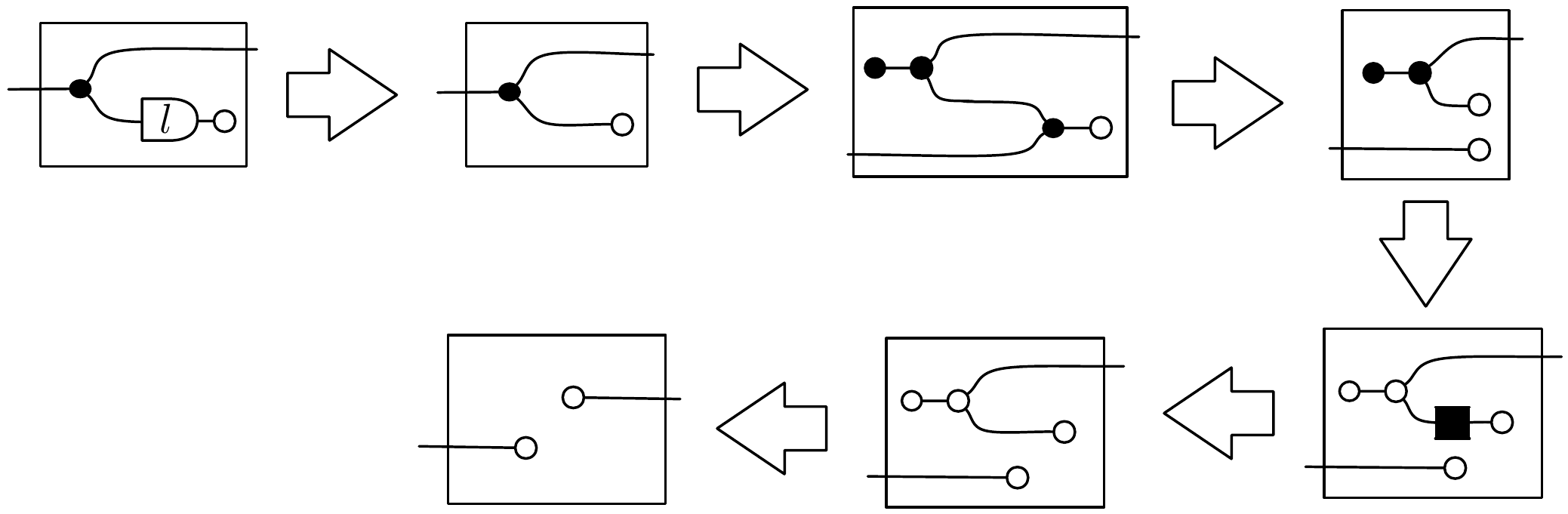}$}
\end{equation}
The sequence of applied laws is: \eqref{eq:scalarwcounit}, \eqref{eq:Bfrobcomult}, \eqref{eq:unitsr}$^{op}$, \eqref{eq:lccb}, \eqref{eq:scalarwunit}, \eqref{eq:wmonunitlaw}$^{op}$. The inductive case is handled as follows.
\begin{equation*}
\lower12pt\hbox{$\includegraphics[height=1.1cm]{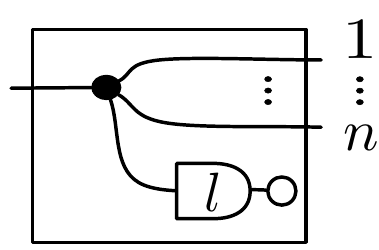}$} \eql{\eqref{eq:scalarwcounit}}
\lower12pt\hbox{$\includegraphics[height=1.1cm]{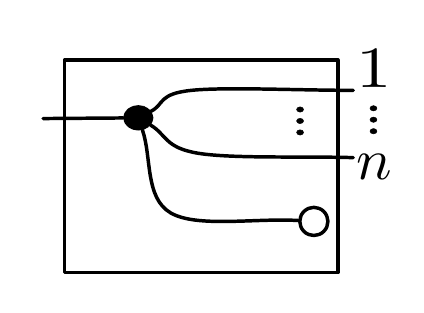}$} \eql{\eqref{eq:bcomonassoc}}
\lower12pt\hbox{$\includegraphics[height=1.1cm]{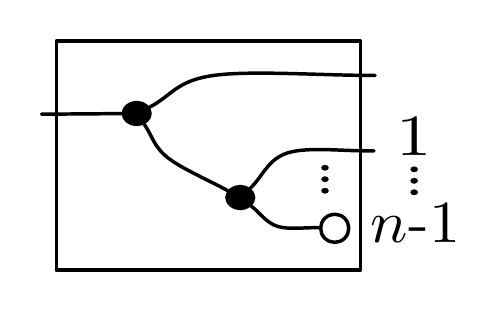}$}
\eql{Ind. hyp.}
\lower12pt\hbox{$\includegraphics[height=1.1cm]{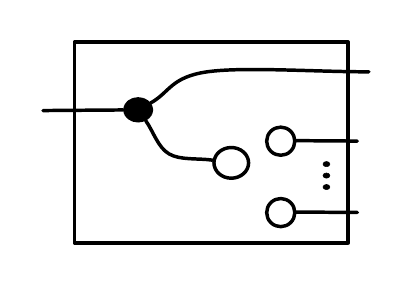}$}
\eql{\eqref{eq:wcounitcancelbcomultder}}
\lower10pt\hbox{$\includegraphics[height=1cm]{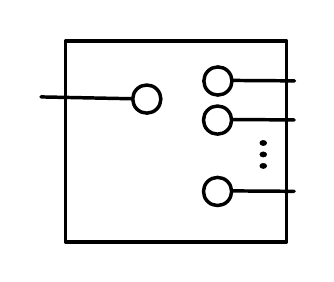}$}
\end{equation*}

\noindent Finally, we show the derivation for \eqref{eq:Bsep}. The sequence of applied laws is \eqref{eq:wbone}, \eqref{eq:bcomonunitlaw}+\eqref{eq:bcomonunitlaw}$^{\op}$, \eqref{eq:scalarsum}+\eqref{eq:scalarsum}$^{\op}$, \eqref{eq:bcomonassoc}+\eqref{eq:bcomonassoc}$^{\op}$, \eqref{eq:papillon}.
\begin{equation*}
\includegraphics[height=2cm]{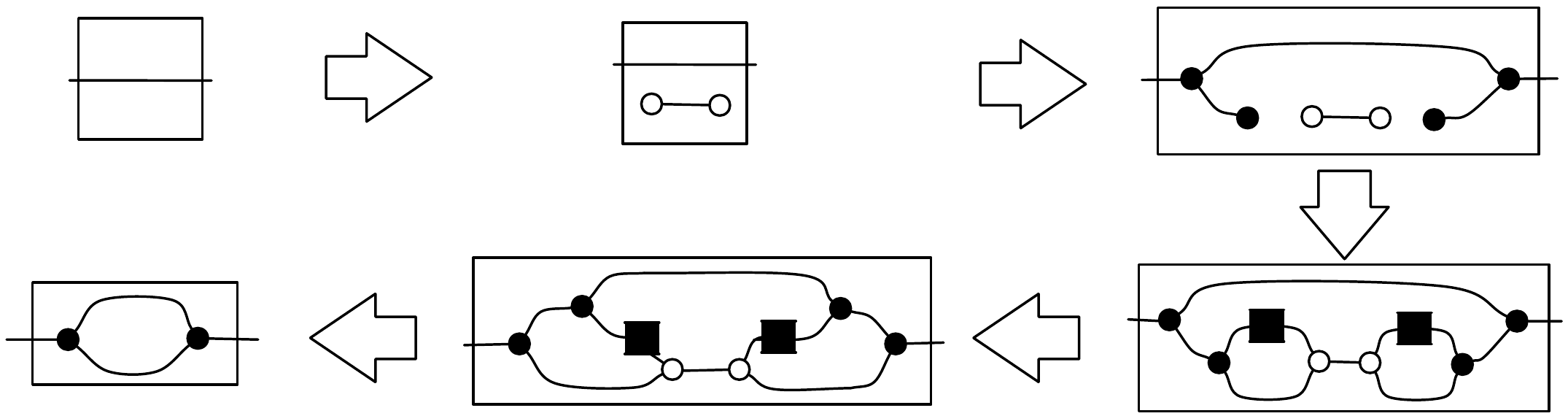}
\end{equation*} 

\section{Shaping the Compact Closed Structure of $\IBRw$}\label{AppCC}

We give more detailed proofs to the statements of Section~\ref{sec:cc}.

\begin{proof}[Proposition \ref{prop:snakecc}] We give the argument proving the left side of \eqref{eq:gensnake} --- the proof for the right side is completely symmetric. We proceed by induction on $n$. For the case $n = 1$, the statement is given by \eqref{eq:Bsnake}. For the inductive step, let $n = i+1$. In the sequel we show the equality
\begin{eqnarray}\label{eq:ccsnakeInd}
\lower15pt\hbox{$\includegraphics[height=1.3cm]{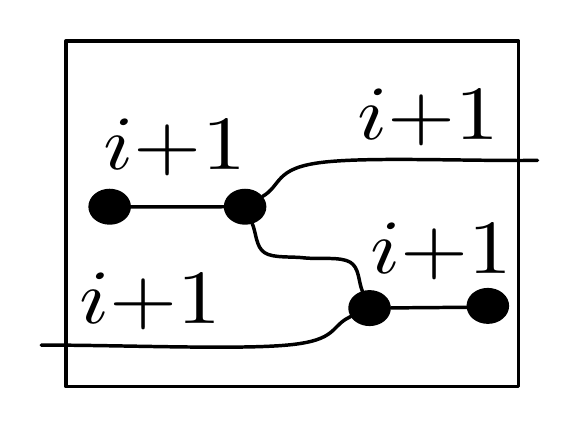}$}
&=&
\lower15pt\hbox{$\includegraphics[height=1.3cm]{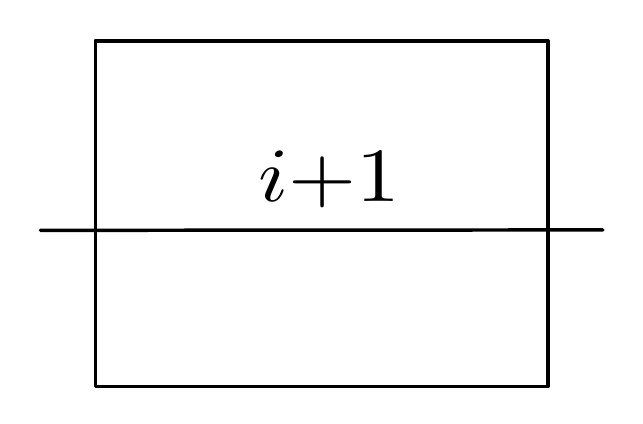}$} \end{eqnarray}
yielding the left side of \eqref{eq:gensnake}. For this purpose, it will be useful the following equation, allowing to ``move'' the compact closed structure past the symmetries of $\IBRw$.
\begin{eqnarray}\label{eq:moveccpastsym}
\lower11pt\hbox{$\includegraphics[height=1cm]{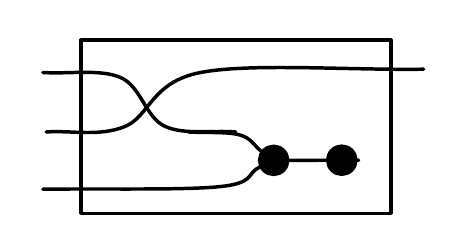}$}
&=&
\lower11pt\hbox{$\includegraphics[height=1cm]{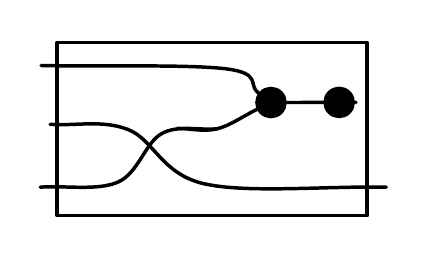}$} \end{eqnarray}
Its derivation in $\IBRw$ is the following.
\begin{eqnarray*}
\lower11pt\hbox{$\includegraphics[height=1cm]{graffles/swapccl.pdf}$}
=
\lower11pt\hbox{$\includegraphics[height=1cm]{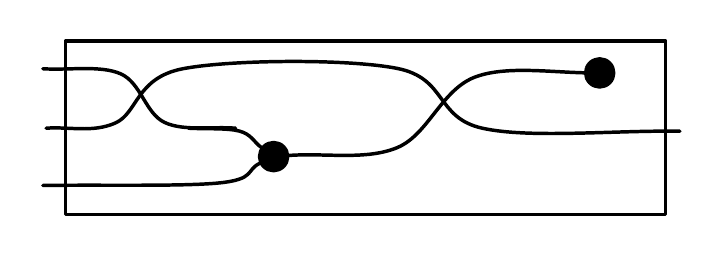}$}
=
\lower11pt\hbox{$\includegraphics[height=1cm]{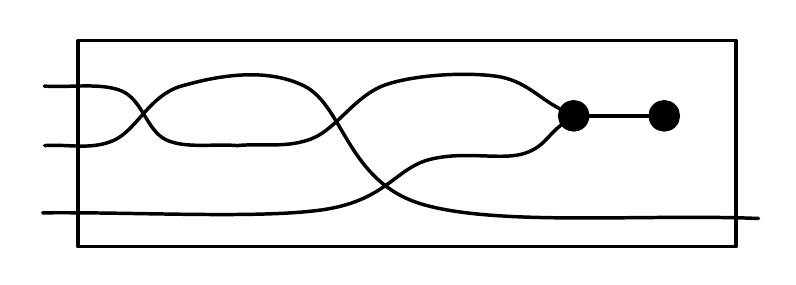}$}
=
\lower11pt\hbox{$\includegraphics[height=1cm]{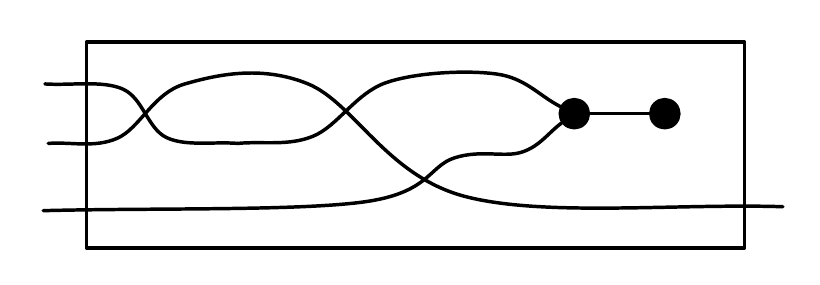}$}
=
\lower11pt\hbox{$\includegraphics[height=1cm]{graffles/swapccr.pdf}$} \end{eqnarray*}
The first and the second equality holds by naturality of symmetry, applied as on the left and on the right below, respectively.
\[
\xymatrix@=30pt{
1 \tns 1 \ar[d]_{\IdBcounitc} \ar[rr]^{\sigma_{1,1} = \symNet } && 1 \tns 1 \ar[d]^{\BcounitId} \\
1 \ar[rr]^{\sigma_{1,0} = \id_1} && 1
}
\qquad \qquad
\xymatrix@=30pt{
1 \tns 2 \ar[rr]^{\sigma_{1,2} = \symNetTwoOne} \ar[d]_{\id_1 \tns \Bmult} && 2 \tns 1 \ar[d]^{\Bmult \tns \id_1} \\
1 \tns 1 \ar[rr]^{\sigma_{1,1} = \symNet} && 1 \tns 1
}
\]
The third equality applies the axiom $\sigma_{1,2} = (\sigma_{1,1} \tns \id_1) \poi (\id_1 \tns \sigma_{1,1})$ of symmetric monoidal categories (SMCs). Finally, the fourth equality applies the axiom $\sigma_{1,1} \poi \sigma_{1,1} = \id_1$ of SMCs. We are now ready to show the derivation of \eqref{eq:ccsnakeInd}. The circuit on the left side of \eqref{eq:ccsnakeInd} has the following shape.
\begin{center}
\includegraphics[height=2.6cm]{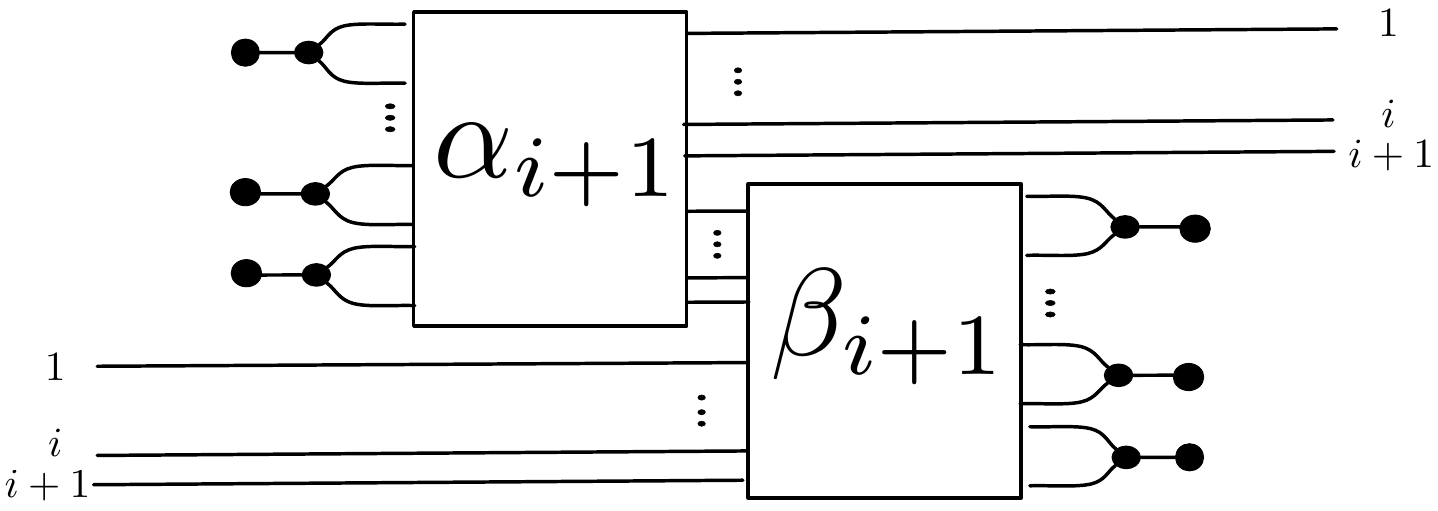}
\end{center}
By definition, port $1$ of the bottommost circuit $\rccB$ (call it $c_l$) connects to port $i+1$ on the right boundary and port $2$ connects to port $1$ of the bottommost circuit $\lccB$ (call it $c_r$). The other port of $c_r$ connects instead to port $i+1$ on the left boundary. By iteratively applying \eqref{eq:moveccpastsym} to $c_r$, we can move it towards the middle of the circuit, past all the symmetries in $\beta_{i+1}$. The resulting circuit is the following:
\begin{center}
\includegraphics[height=3cm]{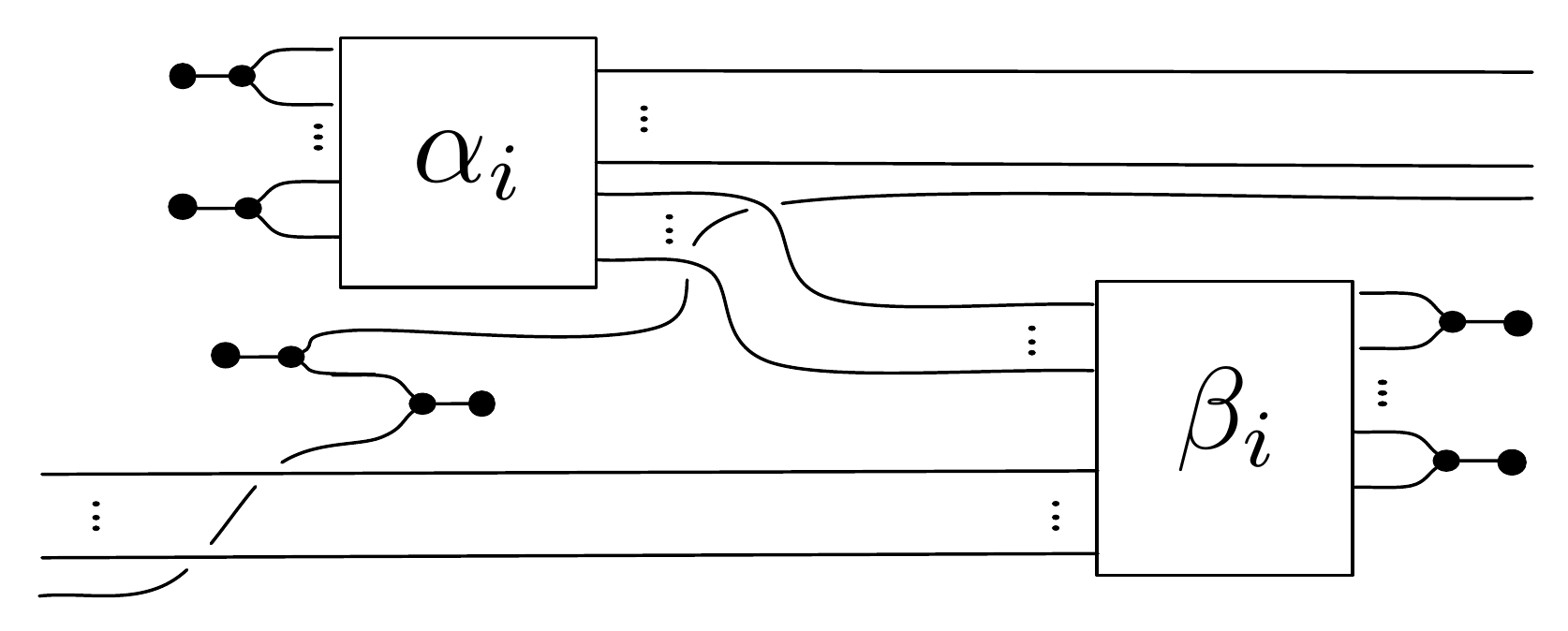}
\end{center}
Note that, now that we isolated $c_l$ and $c_r$, the circuits $\alpha_{i+1}$ and $\beta_{i+1}$ become by definition $\alpha_{i}$ and $\beta_i$ --- observe that the application of \eqref{eq:moveccpastsym} does not affect the arity of the symmetries in the circuit. We are now in position to apply \eqref{eq:Bsnake}:
\begin{center}
\includegraphics[height=3cm]{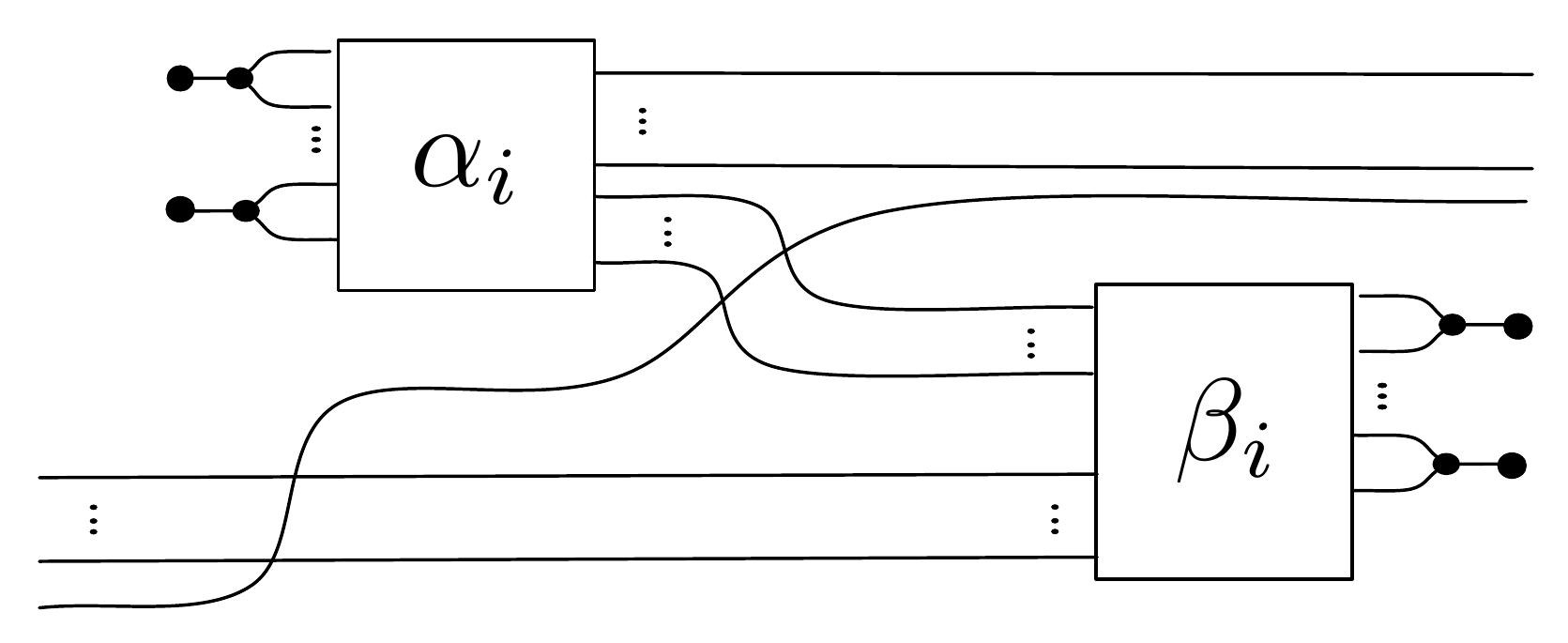}
\end{center}
We can then use again \eqref{eq:moveccpastsym} to move the identity circuit in the middle towards the bottom.
\begin{center}
\includegraphics[height=3cm]{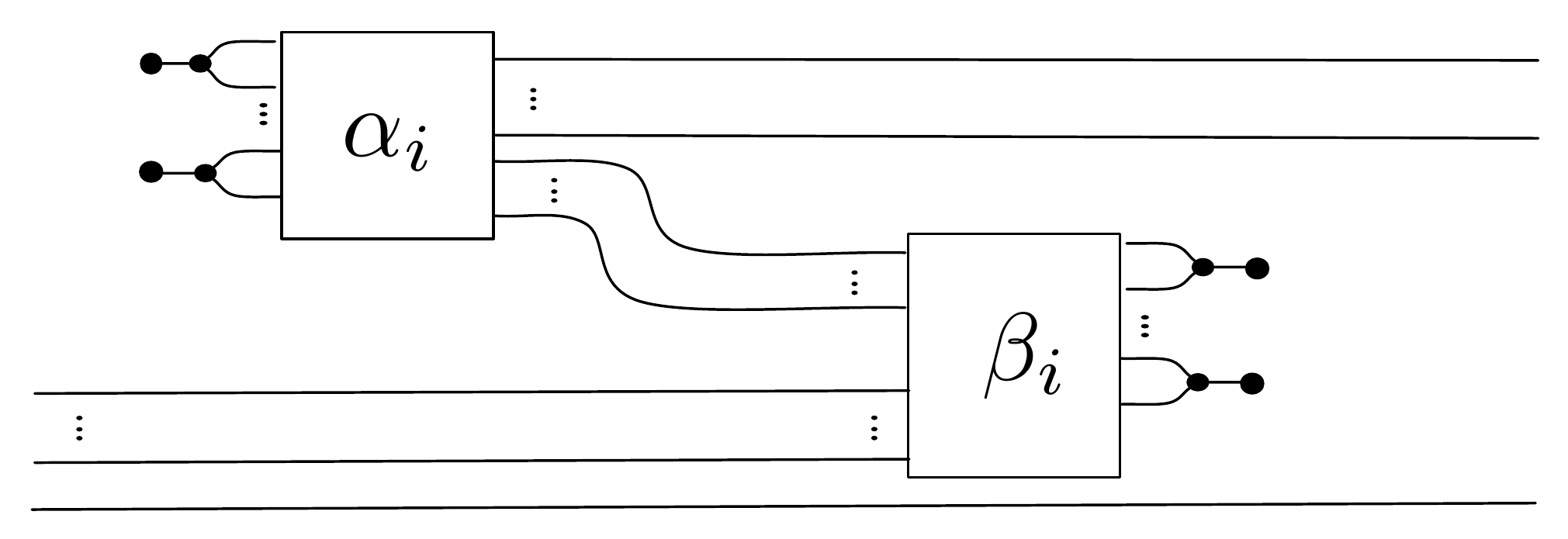}
\end{center}
It is now possible to apply the inductive hypothesis on $i$, obtaining as a result the desired identity circuit as on the right side of \eqref{eq:ccsnakeInd}.
\qed\end{proof}

\begin{proof}[Proposition \ref{prop:star=refl}]
The proof is by induction on $c \in \IBRw$. First we give the derivations for the four base cases of white/black unit/counit.
\begin{equation*}
\lower16pt\hbox{$\includegraphics[height=1.3cm]{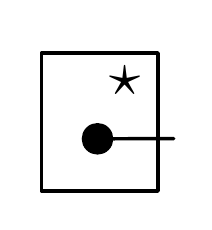}$}
\eql{Def. $\coc{(\cdot)}$}
\lower15pt\hbox{$\includegraphics[height=1.2cm]{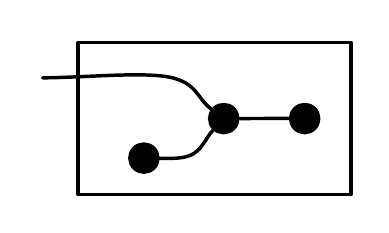}$}
\eql{\eqref{eq:bcomonunitlaw}$^{\op}$}
\ \lower8pt\hbox{$\includegraphics[height=.7cm]{graffles/Bcounit.pdf}$} \end{equation*}
\begin{equation*}
\lower15pt\hbox{$\includegraphics[height=1.3cm]{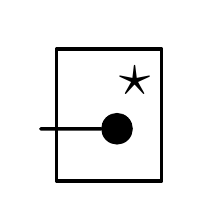}$}
\eql{Def.$\coc{(\cdot)}$}
\lower15pt\hbox{$\includegraphics[height=1.2cm]{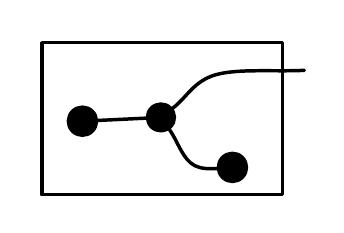}$}
\eql{\eqref{eq:bcomonunitlaw}}
\ \lower8pt\hbox{$\includegraphics[height=.7cm]{graffles/Bunit.pdf}$} \end{equation*}
\begin{equation*}
\lower15pt\hbox{$\includegraphics[height=1.3cm]{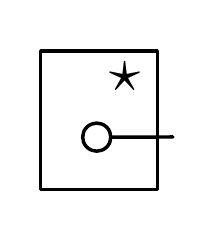}$}
\!\eql{Def.$\coc{(\cdot)}$}\!
\lower15pt\hbox{$\includegraphics[height=1.2cm]{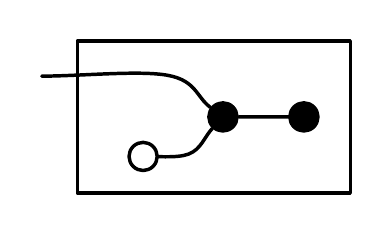}$}
\!\eql{\eqref{eq:lccb},\eqref{eq:lwccantipodesquare}}\!
\lower15pt\hbox{$\includegraphics[height=1.2cm]{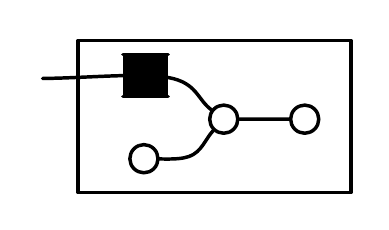}$}
\!\eql{\eqref{eq:wmonunitlaw}}\!
\lower15pt\hbox{$\includegraphics[height=1.2cm]{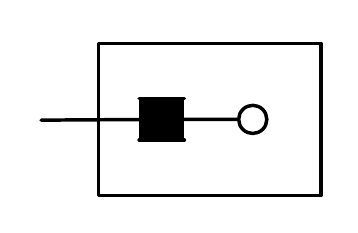}$}
\!\eql{\eqref{eq:scalarwunit}}\!
\ \lower8pt\hbox{$\includegraphics[height=.7cm]{graffles/Wcounit.pdf}$} \end{equation*}
\begin{equation*}
\lower15pt\hbox{$\includegraphics[height=1.3cm]{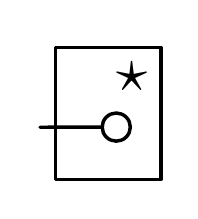}$}
\!\eql{Def.$\coc{(\cdot)}$}\!
\lower15pt\hbox{$\includegraphics[height=1.2cm]{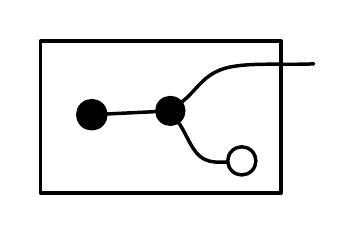}$}
\!\eql{\eqref{eq:rccb}}\!
\lower15pt\hbox{$\includegraphics[height=1.2cm]{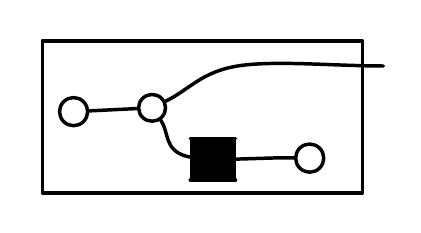}$}
\!\eql{\eqref{eq:scalarwunit}$^{\op}$}\!
\lower15pt\hbox{$\includegraphics[height=1.2cm]{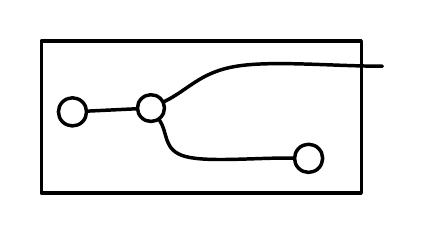}$}
\!\eql{\eqref{eq:wmonunitlaw}$^{\op}$}\!
\ \lower8pt\hbox{$\includegraphics[height=.7cm]{graffles/Wunit.pdf}$} \end{equation*}

We now consider the base cases $\scalar$ and $\coscalar$, for $k \in \PID$.
\begin{equation*}
\lower9pt\hbox{$\includegraphics[height=.8cm]{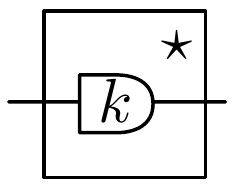}$}
\!\eql{Def. $\coc{\cdot}$}\!
\lower14pt\hbox{$\includegraphics[height=1.3cm]{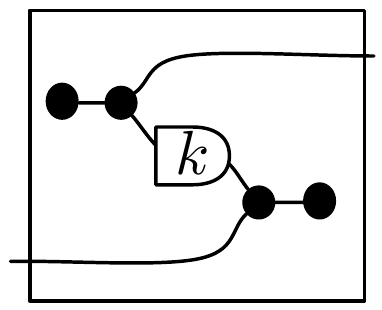}$}
\!\eql{\eqref{eq:BccscalarAxiomOne}}\!
\lower14pt\hbox{$\includegraphics[height=1.3cm]{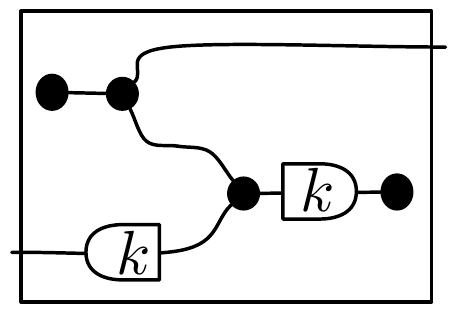}$}
\!\eql{\eqref{eq:scalarbcounit}}\!
\lower14pt\hbox{$\includegraphics[height=1.3cm]{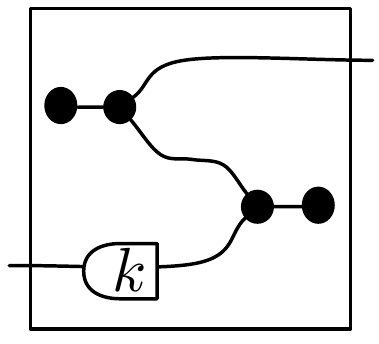}$}
\!\eql{\eqref{eq:Bsnake}}\!
\ \lower9pt\hbox{$\includegraphics[height=.9cm]{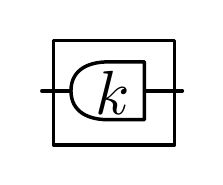}$}
\end{equation*}
\begin{equation*}
\lower9pt\hbox{$\includegraphics[height=.8cm]{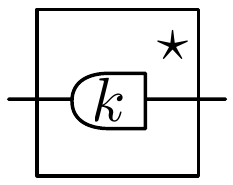}$}
\!\eql{Def. $\coc{\cdot}$}\!
\lower14pt\hbox{$\includegraphics[height=1.3cm]{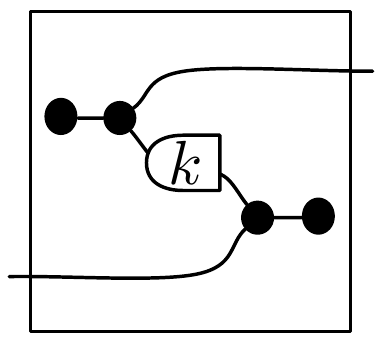}$}
\!\eql{\eqref{eq:BccscalarAxiomTwo}}\!
\lower14pt\hbox{$\includegraphics[height=1.3cm]{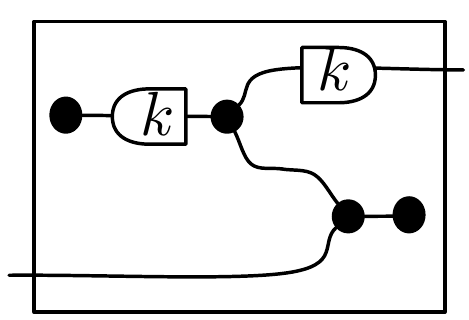}$}
\!\eql{\eqref{eq:scalarbcounit}$^{\op}$}\!
\lower14pt\hbox{$\includegraphics[height=1.3cm]{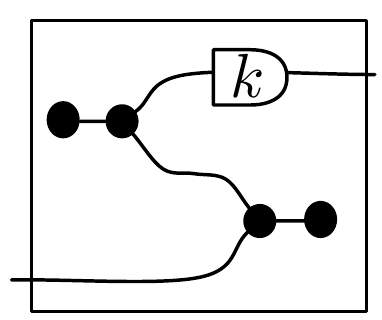}$}
\!\eql{\eqref{eq:Bsnake}}\!
\ \lower9pt\hbox{$\includegraphics[height=.9cm]{graffles/scalar.pdf}$}
\end{equation*}

We also provide the derivation for the base case $\Bcomult$.
\begin{center}
\includegraphics[height=5.5cm]{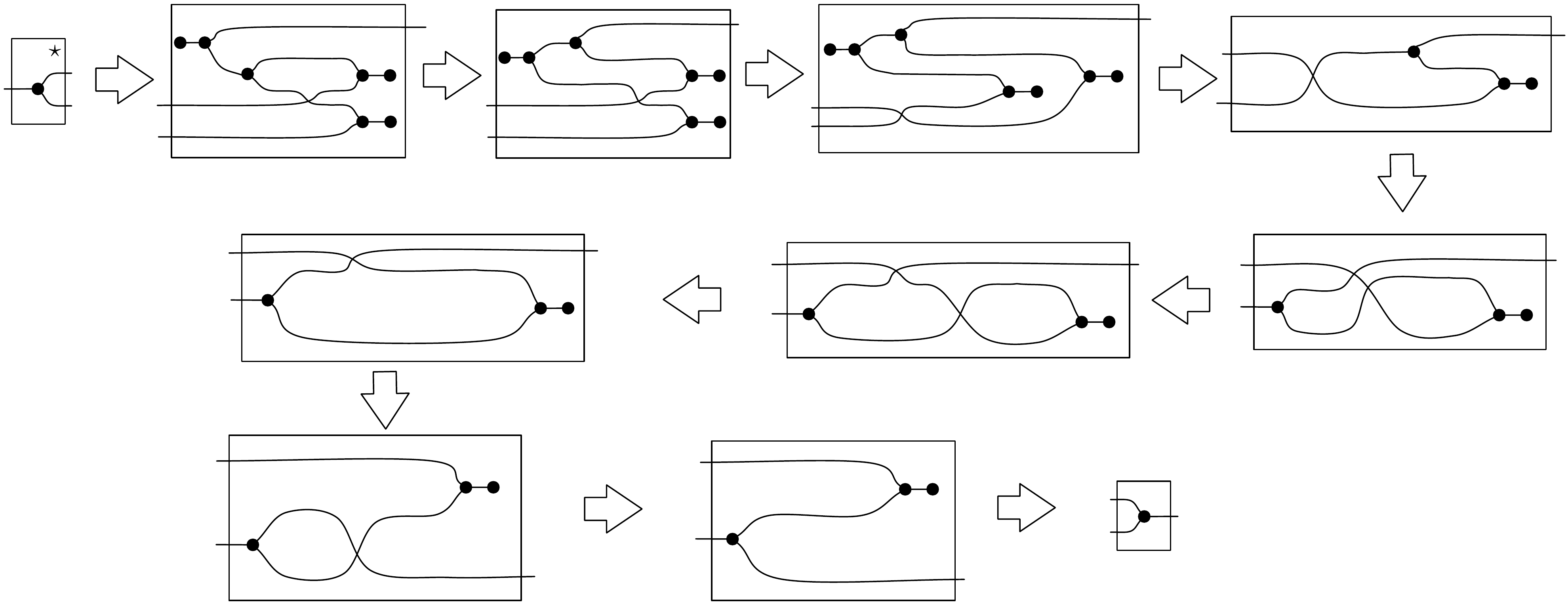}
\end{center}
The sequence of applied laws is: definition of $\coc{(\cdot)}$, \eqref{eq:bcomonassoc}, \eqref{eq:moveccpastsym}, \eqref{eq:Bsnake}, naturality of symmetry, axiom of SMCs, \eqref{eq:bcomoncomm}$^{\op}$, \eqref{eq:moveccpastsym}, \eqref{eq:bcomoncomm}, \eqref{eq:Bfrobmult}.

The remaining base cases of generators $\Bcomult$, $\Wmult$ and $\Wcomult$ are handled in an analogous way by using the Frobenius laws derived in~\ref{AppFrob}. The proof is concluded by examining the two inductive cases. For sequential composition:
   \begin{equation*}    \lower10pt\hbox{$\includegraphics[height=1cm]{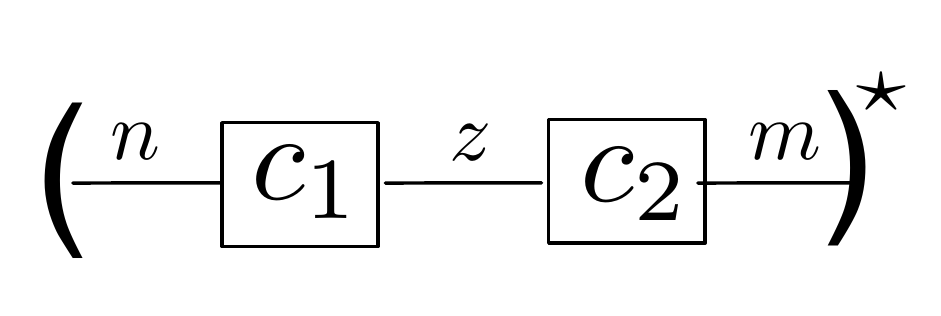}$}
  \!\eql{}\!
   \lower10pt\hbox{$\includegraphics[height=.9cm]{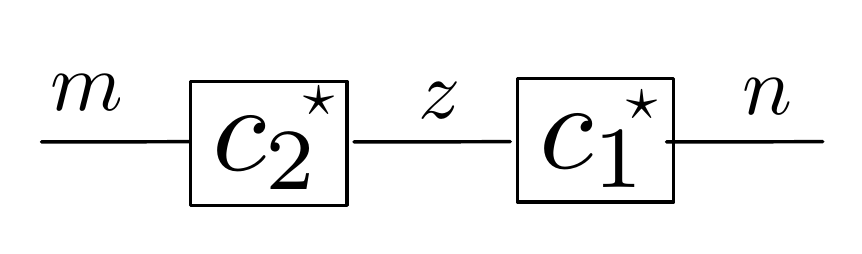}$}
  \!\eql{Ind. hyp.}\!
   \lower10pt\hbox{$\includegraphics[height=.9cm]{graffles/reflcompr.pdf}$}
  \!\eql{}\!
   \lower10pt\hbox{$\includegraphics[height=1cm]{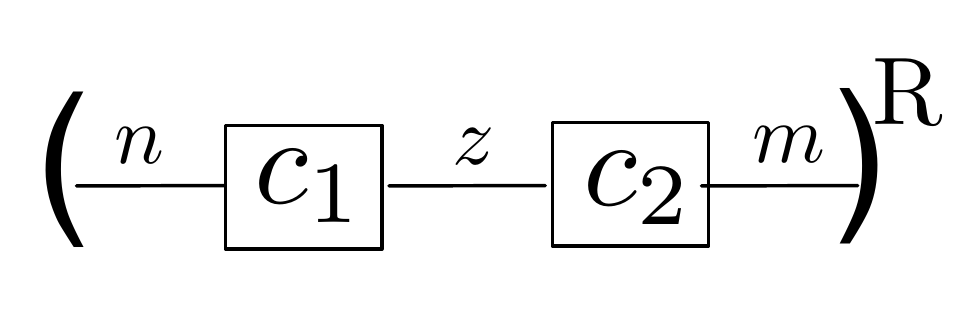}$}
   \end{equation*}
The derivation for the case of parallel composition $\tns$ is analogous.
\qed\end{proof}

\section{Derived Laws of $\IBR$}\label{AppDerLawsIH}

We verify the claim of Section~\ref{sec:cubetop}, by verifying that \eqref{eq:wbone}, \eqref{eq:BccscalarAxiomOne}, \eqref{eq:BccscalarAxiomTwo}, \eqref{eq:bbone}, \eqref{eq:WcccoscalarAxiomOne} and \eqref{eq:WcccoscalarAxiomTwo} are all derivable in $\IBR$. The following is the derivation of \eqref{eq:wbone}.
\begin{equation*}
\lower7pt\hbox{$\includegraphics[height=.7cm]{graffles/idzerocircuit.pdf}$}
\eql{\eqref{eq:bwbone}}
\lower8pt\hbox{$\includegraphics[height=.8cm]{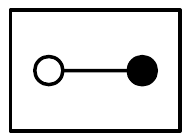}$}
\eql{\eqref{eq:BSepIBR}}
\lower9pt\hbox{$\includegraphics[height=.9cm]{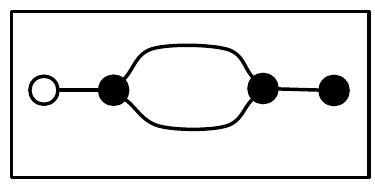}$}
\eql{\eqref{eq:lccb}}
\lower9pt\hbox{$\includegraphics[height=.9cm]{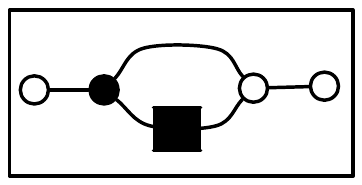}$}
\eql{\eqref{eq:scalarsum}}
  \lower8pt\hbox{$\includegraphics[height=.8cm]{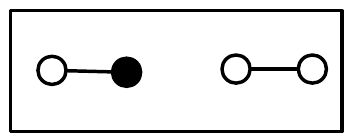}$}
\eql{\eqref{eq:bwbone}}
  \lower6pt\hbox{$\includegraphics[height=.6cm]{graffles/WBone.pdf}$}
\end{equation*}
The derivation of~\eqref{eq:bbone} is the ``photografic negative'' of the one of \eqref{eq:wbone}. We now show the derivations for \eqref{eq:BccscalarAxiomOne} and \eqref{eq:WcccoscalarAxiomOne}. For $l \neq 0$:
\begin{equation*}
\lower9pt\hbox{$\includegraphics[height=.8cm]{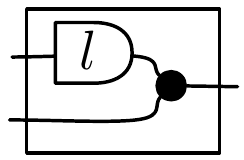}$}
\eql{\eqref{eq:lcmopIH}}
\lower9pt\hbox{$\includegraphics[height=.8cm]{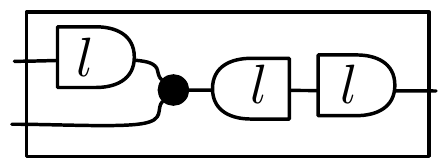}$}
\eql{\eqref{eq:scalarbcomult}$^{\op}$}
\lower9pt\hbox{$\includegraphics[height=.8cm]{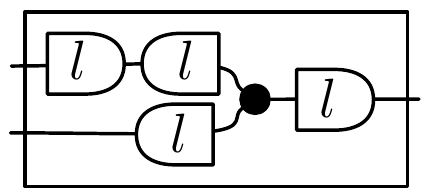}$}
\eql{\eqref{eq:lcmIH}}
\lower9pt\hbox{$\includegraphics[height=.8cm]{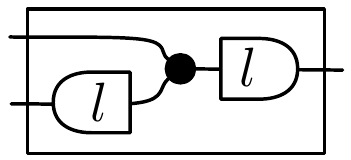}$}
\end{equation*}\noindent
\begin{equation*}
\lower9pt\hbox{$\includegraphics[height=.8cm]{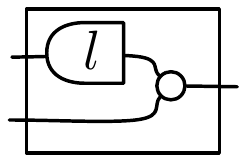}$}
\eql{\eqref{eq:lcmopIH}}
\lower9pt\hbox{$\includegraphics[height=.8cm]{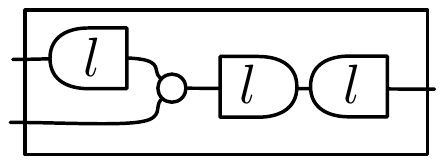}$}
\eql{\eqref{eq:scalarwmult}}
\lower9pt\hbox{$\includegraphics[height=.8cm]{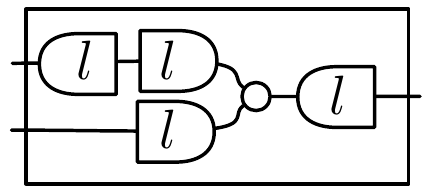}$}
\eql{\eqref{eq:lcmopIH}}
\lower9pt\hbox{$\includegraphics[height=.8cm]{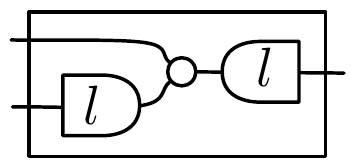}$}.
\end{equation*}
The zero cases:
\begin{equation*}
\lower9pt\hbox{$\includegraphics[height=.8cm]{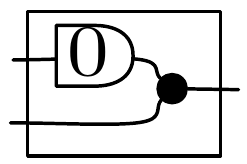}$}
\!\!\!\!\eql{\eqref{eq:zeroscalar}}\!\!\!\!
\lower9pt\hbox{$\includegraphics[height=.8cm]{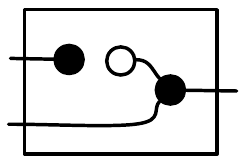}$}
\!\!\!\!\eql{\eqref{eq:Bfrobmult},\eqref{eq:lccb}}\!\!\!\!
\lower12pt\hbox{$\includegraphics[height=1.1cm]{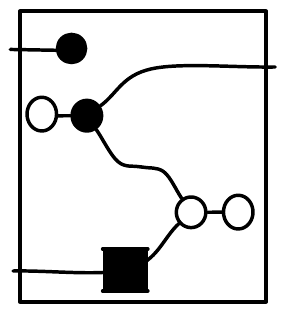}$}
\!\!\!\!\eql{\eqref{eq:unitsr}}\!\!\!\!
\lower10pt\hbox{$\includegraphics[height=.9cm]{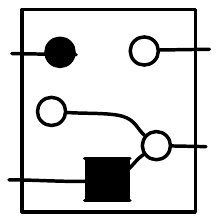}$}
\!\!\!\!\eql{\eqref{eq:wmonunitlaw},\eqref{eq:scalarwunit}}\!\!\!\!
\lower9pt\hbox{$\includegraphics[height=.8cm]{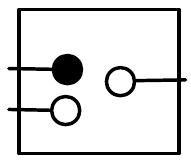}$}
\!\!\!\!\eql{\eqref{eq:bcomonunitlaw}}\!\!\!\!
\lower8pt\hbox{$\includegraphics[height=.7cm]{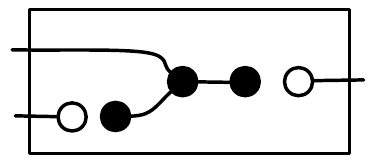}$}
\!\!\!\!\eql{\eqref{eq:zeroscalar},\eqref{eq:zeroscalar}$^{\op}$}\!\!\!\!
\lower9pt\hbox{$\includegraphics[height=.8cm]{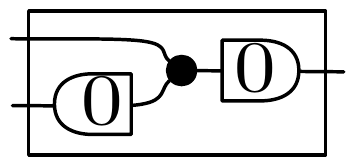}$}
\end{equation*}\noindent
\begin{equation*}
\lower9pt\hbox{$\includegraphics[height=.8cm]{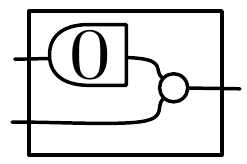}$}
\!\!\!\!\eql{\eqref{eq:zeroscalar}$^{\op}$}\!\!\!\!
\lower9pt\hbox{$\includegraphics[height=.8cm]{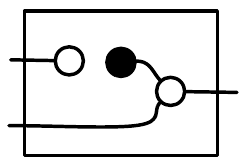}$}
\!\!\!\!\eql{\eqref{eq:Wfrobmult},\eqref{eq:rcc}}\!\!\!\!
\lower12pt\hbox{$\includegraphics[height=1.1cm]{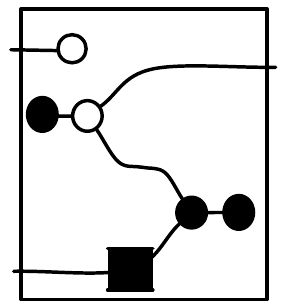}$}
\!\!\!\!\eql{\eqref{eq:unitsl}$^{\op}$}\!\!\!\!
\lower10pt\hbox{$\includegraphics[height=.9cm]{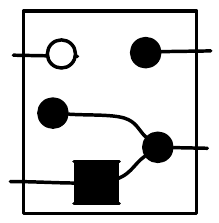}$}
\!\!\!\!\eql{\eqref{eq:bcomonunitlaw},\eqref{eq:scalarbcounit}$^{\op}$}\!\!\!\!
\lower9pt\hbox{$\includegraphics[height=.8cm]{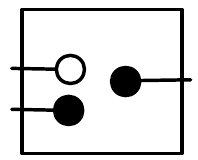}$}
\!\!\!\!\eql{\eqref{eq:wmonunitlaw},\eqref{eq:zeroscalar},\eqref{eq:zeroscalar}$^{\op}$}\!\!\!\!
\lower9pt\hbox{$\includegraphics[height=.8cm]{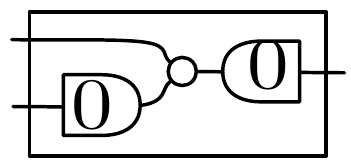}$}
\end{equation*}
The other two equations \eqref{eq:BccscalarAxiomTwo} and \eqref{eq:WcccoscalarAxiomTwo} are proven symmetrically. 

\end{document}